\renewcommand{\vec}[1]{\boldsymbol{#1}}
\let\oldhat\hat
\renewcommand{\hat}[1]{\oldhat{\boldsymbol{#1}}}
\newcommand{\de}{\ensuremath{\partial}}
\newcommand{\field}[1]{\mathbb{#1}}
\newcommand{\ip}[2]{\ensuremath{ \left< \left. #1 \right| #2 \right> } }
\newcommand{\bra}[1]{\ensuremath{ \left< #1 \right| } }
\newcommand{\ket}[1]{\ensuremath{ \left| #1 \right> } }
\DeclareMathOperator{\ran}{ran}
\newtheorem{definition}{Definition}		
\newtheorem{remark}{Remark}
\newtheorem{theorem}{Theorem}
\newtheorem{lemma}{Lemma}
\newtheorem{proposition}{Proposition}
\numberwithin{lemma}{section}
\numberwithin{example}{section}
\numberwithin{proposition}{section}
\numberwithin{theorem}{section}
\numberwithin{corollary}{section}
\numberwithin{remark}{section}
\numberwithin{definition}{section}
\numberwithin{assumption}{section}
\let\diag\relax
\DeclareMathOperator{\diag}{diag}
\let\mod\relax
\DeclareMathOperator{\mod}{mod}
\begin{document}


\title[Existence of Magic Angle for Twisted Bilayer Graphene]{Existence of the first magic angle for the chiral model of bilayer graphene}



\author{Alexander B. Watson}
\email[]{watso860@umn.edu}
\affiliation{School of Mathematics, University of Minnesota Twin Cities, U.S.A.}

\author{Mitchell Luskin}
\email[]{luskin@umn.edu}
\affiliation{School of Mathematics, University of Minnesota Twin Cities, U.S.A.}


\date{\today}

\begin{abstract}
We consider the chiral model of twisted bilayer graphene introduced by Tarnopolsky-Kruchkov-Vishwanath (TKV). TKV have proved that for inverse twist angles $\alpha$ such that the effective Fermi velocity at the moir\'e $K$ point vanishes, the chiral model has a perfectly flat band at zero energy over the whole Brillouin zone. By a formal expansion, TKV found that the Fermi velocity vanishes at $\alpha \approx .586$. In this work, we give a proof that the Fermi velocity vanishes for at least one $\alpha$ between $.57$ and $.61$ by rigorously justifying TKV's formal expansion of the Fermi velocity over a sufficiently large interval of $\alpha$ values. The idea of the proof is to project the TKV Hamiltonian onto a finite dimensional subspace, and then expand the Fermi velocity in terms of explicitly computable linear combinations of modes in the subspace, while controlling the error. The proof relies on two propositions whose proofs are computer-assisted, i.e., numerical computation together with worst-case estimates on the accumulation of round-off error which show that round-off error cannot possibly change the conclusion of the computation.  The propositions give a bound below on the spectral gap of the projected Hamiltonian, an Hermitian $80 \times 80$ matrix whose spectrum is symmetric about $0,$ and verify that two real 18th order polynomials, which approximate the numerator of the Fermi velocity, take values with definite sign when evaluated at specific values of $\alpha$. Together with TKV's work our result proves existence of at least one perfectly flat band of the chiral model.
\end{abstract}


\pacs{}

\maketitle 



%
%

%


\section{Introduction} \label{sec:introduction}


\subsection{Outline} 
Twisted bilayer graphene (TBG) is formed by stacking one layer of graphene on top of another in such a way that the Bravais lattices of the layers  are twisted relative to each other. For generic twist angles, the atomic lattices will be incommensurate so that the resulting structure will not have periodic structure. Bistritzer-MacDonald (BM) \cite{Bistritzer2011} have introduced an approximate model (BM model) for the electronic states of TBG which is periodic over the scale of the bilayer moir\'e pattern, where the twist angle enters as a parameter. Using this model, BM showed that the Fermi velocity, the velocity of electrons at the Fermi level, vanishes at particular twist angles known as ``magic angles.'' The largest of these angles, known as the first magic angle, is at $\theta \approx 1.1$ degrees. Numerical computations on the BM model show the stronger result that at magic angles the Bloch band of the BM model at zero energy is approximately flat \emph{over the whole Brillouin zone} \cite{Bistritzer2011,San-Jose2012}. The flatness of the zero energy Bloch band is thought to be a critical ingredient for recently observed superconductivity of TBG \cite{Cao2018}, although the precise mechanism for superconductivity in TBG is not yet settled.

Aiming at a simplified model which explains the nearly-flat band of TBG, Tarnopolsky-Kruchkov-Vishwanath (TKV) \cite{Tarnopolsky2019} have introduced a simplification of the BM model which has an additional ``chiral'' symmetry, known as the chiral model. TKV showed analytically that at magic angles (of the chiral model, still defined by vanishing of the Fermi velocity), the chiral model has \emph{exactly} flat bands over the whole Brillouin zone. Using a formal perturbation theory (for the chiral model the natural parameter is the reciprocal of twist angle up to a constant) TKV have derived approximate values for the magic angles of the chiral model. It is worth noting that the first magic angles of the chiral model and the BM model are nearby, but the higher magic angles are not very close. 

Becker et al. have introduced a spectral characterization of magic angles of the TKV model where the role of a non-normal operator is emphasized (the operator $D^\alpha$ appearing in \eqref{eq:chiral_H}). Using this characterization, they have numerically computed precise values for the magic angles of the TKV model (see the discussion below \eqref{eq:Fermi_v_high_order}) \cite{becker2020mathematics}. In the same work, they also proved that the lowest band of the TKV model becomes exponentially close to flat even away from magic angles, as the natural small parameter tends to zero. The same authors have also investigated flat bands of the TKV model with more general interlayer coupling potentials, and the spectrum of other special cases of the BM model\cite{Becker2020}.

In this work we study the chiral model introduced by TKV and consider the problem of rigorously proving existence of the first magic angle.
We do this by justifying the formal perturbation theory of TKV to make a rigorous expansion of the Fermi velocity to high enough order, and over a large enough parameter range, so that we can prove existence of a zero. By numerically verifying that the resulting expansion attains a negative value and proving that the result continues to hold when the effect of round-off error is included (Proposition \ref{as:Fermi_v_zero}), we obtain existence of the magic angle (Theorem \ref{th:Fermi_v_zeros}). 

The proof of validity of the expansion is challenging because the reciprocal of the twist angle at the zero of the Fermi velocity is large relative to the spectral gap of the unperturbed Hamiltonian, which means that the magic angle falls outside of the interval of twist angles where the perturbation series for the Fermi velocity is obviously convergent. To overcome this difficulty, we start by representing the chiral model Hamiltonian in a basis which takes full advantage of model symmetries. Then, using a rigorous bound on the high frequency components of the error, we reduce the error analysis to analysis of the eigenvalues of the chiral model projected onto finitely many low frequencies. The final stage of the error analysis (Theorem \ref{th:error_theorem}) is to prove a proposition about the eigenvalues of the projected chiral model by a numerical computation that we prove continues to hold when the accumulation of round-off error is considered (Proposition \ref{as:PHP_gap}). We discuss the limitations of our methods, and in particular whether our methods might be generalized to the more general settings considered by Becker et al.\cite{becker2020mathematics,Becker2020} in Remarks \ref{rem:higher_angles}, \ref{rem:other_potentials}, and \ref{rem:BM_model}.

\subsection{Code availability}

We have made code for the numerical computations used in our proofs available at \texttt{github.com/abwats/magic\_angle}. We give references to specific scripts in the text.

\section{Statement of results} \label{sec:results}

\subsection{Tarnopolsky-Kruchkov-Vishwanath's chiral model}

The chiral model, like the Bistritzer-MacDonald model (B-M model) from which it is derived, is a formal continuum approximation to the atomistic tight-binding model of twisted bilayer graphene. The BM and chiral models aim to capture physics over the length-scale of the bilayer moir\'e pattern, which is, for small twist angles, much longer than the length-scale of the individual graphene layer lattices. Crucially, even when the graphene layers are incommensurate so that the bilayer is aperiodic on the atomistic scale, the chiral model and BM model are periodic (up to phases) with respect to the moir\'e lattice, so that they can be analyzed via Bloch theory.

We define the moir\'e lattice to be the Bravais lattice
\begin{equation}
    \Lambda = \left\{ m_1 \vec{a}_1 + m_2 \vec{a}_2 : (m_1,m_2) \in \field{Z}^2 \right\}
\end{equation}
generated by the moir\'e lattice vectors
\begin{equation}
    \vec{a}_1 = \frac{2 \pi}{3} \begin{pmatrix} \sqrt{3} , 1 \end{pmatrix}, \quad \vec{a}_2 = \frac{2 \pi}{3} \begin{pmatrix} - \sqrt{3} , 1 \end{pmatrix},
\end{equation}
and denote a fundamental cell of the moir\'e lattice by $\Omega$.
The moir\'e reciprocal lattice is the Bravais lattice
\begin{equation}
    \Lambda^* = \left\{ n_1 \vec{b}_1 + n_2 \vec{b}_2 : (n_1,n_2) \in \field{Z}^2 \right\}
\end{equation}
generated by the moir\'e reciprocal lattice vectors defined by $\vec{a}_i \cdot \vec{b}_j = 2 \pi \delta_{ij}$, given explicitly by
\begin{equation}
    \vec{b}_1 = \frac{1}{2} \begin{pmatrix} \sqrt{3}, 3 \end{pmatrix}, \quad \vec{b}_2 = \frac{1}{2} \begin{pmatrix} - \sqrt{3}, 3 \end{pmatrix}.
\end{equation}
We define $\vec{q}_1 = \begin{pmatrix} 0, - 1 \end{pmatrix}$, which is the (re-scaled) difference of the $K$ points (Dirac points) of each layer, and
\[\vec{q}_1 = \begin{pmatrix} 0, - 1 \end{pmatrix},\quad \vec{q}_2=\vec{q}_1+\vec{b}_1=\frac{1}{2} \begin{pmatrix} \sqrt{3}, 1 \end{pmatrix},\quad \vec{q}_3=\vec{q}_1+\vec{b}_2=\frac{1}{2} \begin{pmatrix} -\sqrt{3}, 1 \end{pmatrix}.
\]
We write $\Omega^*$ for a fundamental cell of the moir\'e reciprocal lattice, and refer to such a cell as the Brillouin zone.

Let $\phi := \frac{2 \pi}{3}$. Tarnopolsky-Kruchkov-Vishwanath's chiral Hamiltonian is defined as 
\begin{equation} \label{eq:chiral_H}
    H^\alpha = \begin{pmatrix} 0 & D^{\alpha\dagger} \\ D^\alpha & 0 \end{pmatrix}, \quad D^\alpha = \begin{pmatrix} - 2 i \overline{\de} & \alpha U(\vec{r}) \\ \alpha U(-\vec{r}) & - 2 i \overline{\de} \end{pmatrix},
\end{equation}
where $\overline{\de} = \frac{1}{2} ( \de_x + i \de_y )$, $U(\vec{r}) = e^{- i \vec{q}_1 \cdot \vec{r}} + e^{i \phi} e^{- i \vec{q}_2 \cdot \vec{r}} + e^{- i \phi} e^{- i \vec{q}_3 \cdot \vec{r}}$, $\dagger$ denotes the adjoint (Hermitian transpose), and $\alpha$ is a real parameter which we will take to be positive $\alpha \geq 0$ throughout (see \eqref{eq:alpha_ratio}). The chiral Hamiltonian $H^\alpha$ is an unbounded operator on $\mathcal{H} = L^2(\field{R}^2;\field{C}^4)$ with domain $H^1(\field{R}^2;\field{C}^4)$. We will write functions in $\mathcal{H}$ as
\begin{equation} \label{eq:psi_densities}
    \psi(\vec{r}) = \begin{pmatrix} \psi^A_1(\vec{r}), \psi^A_2(\vec{r}), \psi^B_1(\vec{r}), \psi^B_2(\vec{r}) \end{pmatrix},
\end{equation}
where $|\psi^\sigma_\tau(\vec{r})|^2$ represents the electron density near to the $K$ point (in momentum space) on sublattice $\sigma$ and on layer $\tau$. The diagonal terms of $D^\alpha$ arise from Taylor expanding the single layer graphene dispersion relation about the $K$ point of each layer, while the off-diagonal terms of $D^\alpha$ couple the $A$ and $B$ sublattices of layers $1$ and $2$. The chiral model is identical to the BM model except that inter-layer coupling between sublattices of the same type is turned off in the chiral model. The precise form of the interlayer coupling potential $U$ can be derived under quite general assumptions on the real space interlayer hopping \cite{Bistritzer2011,Catarina2019}. The parameter $\alpha$ is, up to unimportant constants, the ratio
\begin{equation} \label{eq:alpha_ratio}
    \alpha \sim \frac{ \text{interlayer hopping strength between $A$ and $B$ sublattices} }{ \text{twist angle} }.
\end{equation}
Although the limit $\alpha \rightarrow 0$ can be thought of as the limit of vanishing interlayer hopping strength at fixed twist, 
it is physically more interesting to view the limit as modeling increasing twist angle at a fixed interlayer hopping strength.

\subsection{Rigorous justification of TKV's formal expansion of the Fermi velocity and proof of existence of first magic angle} \label{sec:intro_TKVexpansion}

Bistritzer and MacDonald studied the effective Fermi velocity of electrons in twisted bilayer graphene modeled by the BM model, and computed values of the twist angle such that the Fermi velocity vanishes, which they called ``magic angles.'' One can similarly define an effective Fermi velocity for the chiral model, and refer to values of $\alpha$ such that the Fermi velocity vanishes as ``magic angles'' (although technically $\alpha$ is related to the reciprocal of the twist angle \eqref{eq:alpha_ratio}).

TKV proved the remarkable result that, at magic angles, the chiral model has a perfectly flat Bloch band at zero energy. Let $L^2_K$ denote the $L^2$ space on a single moir\'e cell $\Omega$ with moir\'e $K$ point Bloch boundary conditions. The starting point of TKV's proof is an expression for the Fermi velocity as a function of $\alpha$, $v(\alpha)$, as a functional of one of the Bloch eigenfunctions, $\psi^\alpha \in L^2_K$, of $H^\alpha$:
\begin{equation} \label{eq:Fermi_v}
    v(\alpha) := \frac{ | \ip{\psi^{\alpha*}(-\vec{r})}{\psi^\alpha(\vec{r})} | }{ | \ip{\psi^\alpha}{\psi^\alpha} | },
\end{equation}
where $\ip{.}{.}$ denotes the $L^2_K$ inner product. We give precise definitions of $L^2_K$, $\psi^\alpha$, and $v(\alpha)$ in Definition \ref{def:L2K_spaces}, Proposition \ref{prop:analytic_zeromodes}, and Definition \ref{def:Fermi_v}, respectively. We give a systematic formal derivation of why \eqref{eq:Fermi_v} is the effective Fermi velocity at the moir\'e $K$ point in Appendix \ref{sec:K_Dirac}.
To complete the proof, TKV showed that zeros of $v(\alpha)$ imply zeros of $\psi^\alpha$ at special ``stacking points'' of $\Omega$, and that such zeros of $\psi^\alpha$ allow for Bloch eigenfunctions with zero energy to be constructed for all $\vec{k}$ in the moir\'e Brillouin zone.

To derive approximate values for magic angles, TKV computed a formal perturbation series approximation of $\psi^\alpha$:
\begin{equation} \label{eq:TKV_expansion_psi}
    \psi^\alpha = \Psi^0 + \alpha \Psi^1 + ...
\end{equation}
and then substituted this expression into the functional for $v(\alpha)$ to obtain an expansion of $v(\alpha)$ in powers of $\alpha$:
\begin{equation} \label{eq:TKV_expansion_Fermi_v}
    v(\alpha) = \frac{ 1 - 3 \alpha^2 + \alpha^4 - \frac{111}{49} \alpha^6 + \frac{143}{294} \alpha^8 + ... }{ 1 + 3 \alpha^2 + 2 \alpha^4 + \frac{6}{7} \alpha^6 + \frac{107}{98} \alpha^8 + ... }.
\end{equation}
By setting $v(\alpha) = 0$ one obtains an approximation for the smallest magic angle: $\alpha \approx .586$.

Although TKV proved that flat bands occur at magic angles, they did not prove the existence of magic angles, and hence they did not prove the existence of flat bands. The contribution of the present work is to prove rigorous estimates on the error in the approximation \eqref{eq:TKV_expansion_psi} which are sufficiently high order and precise that, once substituted into \eqref{eq:Fermi_v}, they suffice to rigorously prove the existence of a zero of $v(\alpha)$, and hence, via TKV's proof, the existence of at least one perfectly flat band.

It turns out to be relatively straightforward to prove that the series \eqref{eq:TKV_expansion_psi} and \eqref{eq:TKV_expansion_Fermi_v} are uniformly convergent, and to derive precise bounds on the error in truncating the series, for $|\alpha| < \frac{1}{3}$; see Proposition \ref{prop:series_convergence}. The basic challenge, then, is to derive similar error bounds for $\alpha$ over an interval which includes the expected location of the first magic angle, at $\approx \frac{1}{\sqrt{3}}$. The first main theorem we will prove, roughly stated, is the following. See Theorem \ref{th:error_theorem_2} for the more precise statement. The theorem relies on existence of a spectral gap for an $80 \times 80$ Hermitian matrix which requires numerical computation for its proof, see Proposition \ref{as:PHP_gap}.
\begin{theorem} \label{th:error_theorem}
The $K$ point Bloch function $\psi^\alpha$ satisfies
\begin{equation} \label{eq:expansion}
    \psi^\alpha = \sum_{n = 0}^8 \alpha^n \Psi^n + \eta^\alpha
\end{equation}
where $\eta^\alpha \perp \sum_{n = 0}^8 \alpha^n \Psi^n$ with respect to the $L^2_K$ inner product, and 
\begin{equation}\label{eq:ub}
    \| \eta^\alpha \|_{L^2_K} \leq \frac{ 3 \alpha^9 }{ 15 - 20 \alpha }
    \quad \text{for all}\quad 0 \leq \alpha \leq \frac{7}{10}.
\end{equation}
\end{theorem}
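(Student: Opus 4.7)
The plan is to reduce the theorem to a uniform lower bound on the smallest positive singular value of $D^\alpha$. Writing $D^\alpha = D^0 + \alpha V$ where $V$ is the off-diagonal multiplication operator coming from the interlayer coupling $U$, the defining recursion $D^0 \Psi^0 = 0$ and $D^0 \Psi^n = -V \Psi^{n-1}$ ($n \geq 1$) for the TKV coefficients telescopes to the exact identity $D^\alpha S^N_\alpha = \alpha^{N+1} V \Psi^N$ for $S^N_\alpha := \sum_{n=0}^N \alpha^n \Psi^n$. Subtracting this from $D^\alpha \psi^\alpha = 0$ (Proposition \ref{prop:analytic_zeromodes}) and taking $N=8$ produces the inhomogeneous equation
\[
D^\alpha \eta^\alpha = -\alpha^9 V \Psi^8,
\]
in which the orthogonality $\eta^\alpha \perp S^8_\alpha$ simply fixes the normalization of the zero-mode $\psi^\alpha$. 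Decomposing $\eta^\alpha = \beta\,\psi^\alpha + \eta^\alpha_\perp$ with $\eta^\alpha_\perp \perp \ker D^\alpha$ and unpacking the orthogonality constraint — together with the lower bound $\|S^8_\alpha\| \geq \|\Psi^0\| - O(\alpha) > 0$ — one gets $|\beta|\,\|\psi^\alpha\| \leq C\|\eta^\alpha_\perp\|$ for an explicit $C$, so it is enough to control $\|\eta^\alpha_\perp\|$.

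The elementary bound $\|\eta^\alpha_\perp\| \leq \alpha^9 \|V\Psi^8\| / \sigma(\alpha)$ then reduces matters to a uniform lower bound on the smallest positive singular value $\sigma(\alpha)$ of $D^\alpha$ on $L^2_K$. The numerator $\|V\Psi^8\|$ is computable in closed form because $\Psi^8$ is an explicit finite linear combination of Fourier modes determined by the recursion. For $\sigma(\alpha)$, the strategy is to split $L^2_K = P L^2_K \oplus (I-P) L^2_K$, with $P$ the $L^2_K$-orthogonal projection onto the 80-dimensional subspace spanned by the lowest Fourier modes compatible with the $C_3$ and chiral symmetries of $H^\alpha$. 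On the high-frequency complement $(I-P) L^2_K$, the unperturbed diagonal operator $-2i\overline{\partial}\cdot\mathrm{id}$ has eigenvalues bounded below by an explicit cutoff frequency, and $V$ is a bounded multiplication with $\|U\|_\infty \leq 3$, so a Neumann series inverts $D^\alpha$ on this subspace with an explicit norm bound uniformly for $\alpha \leq 7/10$. On $P L^2_K$, the corresponding estimate is equivalent to the spectral gap above zero of the $80 \times 80$ Hermitian matrix representing $P H^\alpha P$, which is precisely the content of the computer-assisted Proposition \ref{as:PHP_gap}.

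A Schur-complement computation splices the two pieces into a single lower bound on $\sigma(\alpha)$ of the shape (low-frequency gap) $-$ (off-diagonal coupling)$\times$(high-frequency inverse norm); after multiplication by the explicit value of $\|V\Psi^8\|$ this collapses to the closed-form rational bound $3\alpha^9/(15-20\alpha)$ in \eqref{eq:ub}. The hard part will be keeping all of the constants sharp enough across the full interval $[0, 7/10]$: this range extends well beyond the naive convergence radius $|\alpha| < 1/3$ of Proposition \ref{prop:series_convergence}, so the high-frequency Neumann estimate, the numerically verified low-frequency gap, and the off-diagonal couplings $\|(I-P) D^\alpha P\|$ and $\|P D^\alpha (I-P)\|$ must all be controlled essentially optimally, and the 80-mode cutoff must be chosen large enough that the Schur-complement denominator stays strictly positive up to $\alpha = 7/10$.
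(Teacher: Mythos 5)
Your overall architecture is the paper's: telescope the recursion to get an inhomogeneous equation for $\eta^\alpha$ with right-hand side $-\alpha^{9}H^{1}\Psi^{8}$, split the resolvent estimate into a finite low-frequency block (computer-assisted gap) plus a high-frequency block (explicit bound $\mu-3\alpha$ with $\mu=7$ and $\|H^1\|=3$), and close with $\|H^{1}\Psi^{8}\|\le 3/20$. (Working with $D^\alpha$ and singular values rather than the self-adjoint $H^\alpha$ and eigenvalues is immaterial, since on $L^2_{K,1,1}$ the two formulations coincide.) The genuine gap is in your middle step: you reduce to a lower bound on $\sigma(\alpha)$, the smallest \emph{positive} singular value of $D^\alpha$ --- equivalently the gap of $H^\alpha$ above its \emph{exact} kernel --- and assert that this ``is precisely the content of Proposition \ref{as:PHP_gap}.'' It is not. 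That proposition bounds $g^\alpha$, the smallest absolute eigenvalue of $Q^{\alpha,\perp}P_\Xi H^\alpha P_\Xi Q^{\alpha,\perp}$ on $Q^{\alpha,\perp}P_\Xi L^2_{K,1}$, where $Q^{\alpha,\perp}$ projects off the \emph{explicitly known} polynomial $\psi^{8,\alpha}=\sum_{n=0}^{8}\alpha^n\Psi^n$, not off $\ker D^\alpha$. A gap relative to the exact kernel is not accessible from a finite computation without first knowing that $\psi^\alpha$ is close to $\psi^{8,\alpha}$ --- which is the theorem you are proving, so your reduction is circular as written. The paper sidesteps this by never decomposing $\eta^\alpha$ along $\psi^\alpha$ at all: it imposes $\eta^\alpha\perp\psi^{8,\alpha}$ and inverts the compression $Q^{\alpha,\perp}H^\alpha Q^{\alpha,\perp}$ on $(\operatorname{span}\psi^{8,\alpha})^\perp$ directly (Lemma \ref{lem:decompose}), which is exactly the operator the numerics address.

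Two quantitative remarks. First, your split $\eta^\alpha=\beta\psi^\alpha+\eta^\alpha_\perp$ yields $\|\eta^\alpha\|=\|\eta^\alpha_\perp\|\,\|\psi^\alpha\|/\|\psi^{8,\alpha}\|>\|\eta^\alpha_\perp\|$, so even granting $\sigma(\alpha)\ge 3/4-\alpha$ you would not land exactly on $3\alpha^{9}/(15-20\alpha)$ without additional bookkeeping; the paper's formulation gives $\|\eta^\alpha\|\le\alpha^{9}\|H^{1}\Psi^{8}\|/(3/4-\alpha)$ with no spare constant. Second, the subspace $\Xi$ is not just ``the lowest Fourier modes'': two extra pairs of modes at distance $7$ are adjoined precisely so that $\|P_\Xi H^{1}P_\Xi^{\perp}\|=1$ rather than $\sqrt{2}$; with a plain ball cutoff the denominator would be $3/4-\sqrt{2}\,\alpha$, which is negative before $\alpha=7/10$, and the argument would fail. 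You flag the need for sharp constants, but this particular choice is load-bearing, not optional polish.
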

The functions $\Psi^n$ for $0 \leq n \leq 8$ are derived recursively: see Appendix \ref{sec:TKV_expansion}. We stop at $8$th order in the expansion because this is the minimal order such that we can guarantee existence of a zero of $v(\alpha)$, but the functions $\Psi^n$ are well defined by a recursive procedure for arbitrary positive integers $n$, see Proposition \ref{prop:series_prop}. 

Substituting \eqref{eq:expansion} into the functional for the Fermi velocity \eqref{eq:Fermi_v} and using $\eta^\alpha \perp \sum_{n = 1}^8 \alpha^n \Psi^n$ we find
\begin{equation} \label{eq:Fermi_v_expansion}
    v(\alpha) = \frac{ v_{\mathcal{N}}(\alpha) }{ v_{\mathcal{D}}(\alpha) }
\end{equation}
where
\begin{equation} \label{eq:v_N}
\begin{split}
    v_{\mathcal{N}}(\alpha) := &\ip{ \sum_{n = 0}^8 \alpha^n \Psi^{n *}(-\vec{r}) }{ \sum_{n = 0}^8 \alpha^n \Psi^n(\vec{r}) } \\
    &+ \ip{ \eta^{\alpha *}(-\vec{r}) }{ \sum_{n = 0}^8 \alpha^n \Psi^n(\vec{r}) } + \ip{ \sum_{n = 0}^8 \Psi^{n *}(-\vec{r}) }{ \eta^\alpha(\vec{r}) } \\
    &+ \ip{ \eta^{\alpha *}(-\vec{r}) }{ \eta^\alpha(\vec{r}) },
\end{split}
\end{equation}
and
\begin{equation} \label{eq:v_D}
    v_{\mathcal{D}}(\alpha) := \ip{\sum_{n = 0}^8 \alpha^n \Psi^n }{ \sum_{n = 0}^8 \alpha^n \Psi^n } + \ip{ \eta^\alpha }{ \eta^\alpha }.
\end{equation}
where $\ip{.}{.}$ denotes the $L^2_K$ inner product and $\eta^\alpha$ satisfies \eqref{eq:ub}. The following is a straightforward calculation.  
\begin{proposition} \label{prop:Fermi_v_expansion}
The following identities hold:
\begin{equation} \label{eq:numerator_exp}
\begin{split}
	&\ip{ \sum_{n = 0}^8 \alpha^n \Psi^{n *}(-\vec{r}) }{ \sum_{n = 0}^8 \alpha^n \Psi^n(\vec{r}) }  \\
	&= 1 - 3 \alpha^2 + \alpha^4 - \frac{111}{49} \alpha^6 + \frac{143}{294} \alpha^8 - \frac{7536933}{11957764} \alpha^{10} \\
	&\quad + \frac{4598172331}{47460365316} \alpha^{12} - \frac{30028809212865451}{520327364608478700} \alpha^{14} + \frac{49750141858992227}{12487856750603488800} \alpha^{16},
\end{split}
\end{equation}
\begin{equation} \label{eq:denominator_exp}
\begin{split}
	&\ip{\sum_{n = 0}^8 \alpha^n \Psi^n }{ \sum_{n = 0}^8 \alpha^n \Psi^n } 	\\
	&= 1 + 3 \alpha^2 + 2 \alpha^4 + \frac{6}{7} \alpha^6 + \frac{107}{98} \alpha^8 + \frac{5119}{48412} \alpha^{10}    \\
	&\quad + \frac{62026511}{356844852} \alpha^{12} + \frac{355691470247}{113410497953025} \alpha^{14} + \frac{2481663780475871}{337509641908202400} \alpha^{16}.
\end{split}
\end{equation}
\end{proposition}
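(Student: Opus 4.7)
The plan is to reduce the identity to an explicit finite computation involving the perturbative functions $\Psi^0, \Psi^1, \ldots, \Psi^8$ constructed recursively in Proposition \ref{prop:series_prop}. Expanding the bilinear forms as double sums,
\begin{equation}
\ip{\sum_{m=0}^8 \alpha^m \Psi^{m*}(-\vec{r})}{\sum_{n=0}^8 \alpha^n \Psi^n(\vec{r})} = \sum_{k=0}^{16} \alpha^k \sum_{\substack{m+n=k \\ 0 \le m,n \le 8}} \ip{\Psi^{m*}(-\vec{r})}{\Psi^n(\vec{r})},
\end{equation}
and analogously for the denominator, the task becomes to evaluate each coefficient in $\alpha$ and verify that it equals the stated rational number.

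First I would exploit the Fourier structure built into the recursion. Each iteration applies the off-diagonal coupling built from $U(\pm\vec{r})$ together with the resolvent of the diagonal operator $-2i\overline{\de}$, and since $U$ and $U(-\vec{r})$ shift Fourier content by vectors of the form $\vec{q}_j$, successive $\Psi^n$ alternate between two mutually orthogonal symmetry sectors of $L^2_K$ indexed by the parity of $n$. This orthogonality forces $\ip{\Psi^m}{\Psi^n}=0$ and $\ip{\Psi^{m*}(-\vec{r})}{\Psi^n(\vec{r})}=0$ whenever $m+n$ is odd, which accounts for the absence of odd-order terms in \eqref{eq:numerator_exp} and \eqref{eq:denominator_exp}.

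Second, for each remaining pair $(m,n)$ with $m+n$ even, I would use the explicit Fourier series representation of $\Psi^n$ produced by the recursion: its coefficients are rational combinations of $1, e^{i\phi}, e^{-i\phi}$. The inner products then reduce to finite sums over shared Fourier modes, and after using $e^{\pm i\phi} = -\tfrac12 \pm \tfrac{\sqrt 3}{2}i$ every cross term collapses to a rational. Summing over pairs with $m+n = 2k$ produces the stated coefficient of $\alpha^{2k}$.

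The main obstacle is purely bookkeeping: the Fourier support of $\Psi^n$ grows with $n$, so at the top orders the pairwise inner products involve a large number of terms and very large rational coefficients (as is already visible in the denominators of \eqref{eq:numerator_exp}--\eqref{eq:denominator_exp}). In practice I would execute the recursion and evaluate the inner products in a symbolic computer algebra system with exact rational arithmetic, yielding the closed-form rationals in the statement. Because these are exact rational identities with no free parameters, no round-off or truncation-error analysis is required, in contrast to the proofs of the companion Propositions \ref{as:Fermi_v_zero} and \ref{as:PHP_gap}.
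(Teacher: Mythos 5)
Your proposal is essentially the paper's own proof: the paper likewise expands the double sum, kills the odd-order terms via the alternation of the $\Psi^n$ between the orthogonal momentum-space sublattice sectors $L^2_{K,1,1,A}$ and $L^2_{K,1,1,B}$, reduces the numerator to ordinary inner products using the identity $\chi^*(-\vec{r})=\chi(\vec{r})$ for chiral basis functions (which your Fourier-mode argument implicitly contains), and evaluates the remaining coefficients by exact symbolic computation in Sympy. The approach is correct and matches the paper.
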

We prove Proposition \ref{prop:Fermi_v_expansion} in Appendix \ref{sec:Fermi_v_expansion}. Na\"ively, the expansions \eqref{eq:numerator_exp} and \eqref{eq:denominator_exp} approximate the formal infinite series expansions of $\ip{ \sum_{n = 0}^\infty \alpha^n \Psi^{n *}(-\vec{r}) }{ \sum_{n = 0}^\infty \alpha^n \Psi^n(\vec{r}) }$ and $\ip{ \sum_{n = 0}^\infty \alpha^n \Psi^n }{ \sum_{n = 0}^\infty \alpha^n \Psi^n }$ up to terms of order $\alpha^9$. We prove in Proposition \ref{prop:approx_of_exp} that, because of some simplifications, expansions \eqref{eq:numerator_exp} and \eqref{eq:denominator_exp} agree with the infinite series up to terms of order $\alpha^{10}$. 

We are now in a position to state and prove our second result. This result also relies on a proposition which requires numerical computation for its proof: that one real 18th order polynomial in $\alpha$ attains a negative value, and another attains a positive value, when evaluated at specific values of $\alpha$, see Proposition \ref{as:Fermi_v_zero}.
\begin{theorem} \label{th:Fermi_v_zeros}
There exist positive numbers $\alpha_{min}$ and $\alpha_{max}$ with $.57 < \alpha_{min} < \alpha_{max} < .61$ such that the Fermi velocity $v(\alpha)$ defined by \eqref{eq:Fermi_v} has a zero $\alpha^*$ satisfying $\alpha_{min} \leq \alpha^* \leq \alpha_{max}$. 
\end{theorem}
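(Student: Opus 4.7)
The plan is to combine the expansion of Theorem \ref{th:error_theorem} with the closed-form formulas of Proposition \ref{prop:Fermi_v_expansion} to reduce the existence of a zero of $v(\alpha)$ to an intermediate value argument for an explicitly computable quantity. First I would observe that the denominator $v_{\mathcal{D}}(\alpha)$ is strictly positive on $[0, 7/10]$: it is a sum of two non-negative squared $L^2_K$ norms, and the first summand is the polynomial $Q(\alpha)$ appearing on the right of \eqref{eq:denominator_exp}, which equals $1 + O(\alpha^2)$ and so is clearly bounded away from zero on this interval. Hence the zeros of $v(\alpha)$ coincide with the zeros of the numerator $v_{\mathcal{N}}(\alpha)$, and it suffices to produce a sign change of $v_{\mathcal{N}}$ on $(0.57, 0.61)$.

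To separate the exactly computable part from the error, I would write $v_{\mathcal{N}}(\alpha) = P(\alpha) + R(\alpha)$, where $P(\alpha)$ is the degree-$16$ polynomial on the right of \eqref{eq:numerator_exp} and $R(\alpha)$ collects the three $\eta^\alpha$-dependent terms in \eqref{eq:v_N}. Since $\| \eta^{\alpha *}(-\vec{r}) \|_{L^2_K} = \| \eta^\alpha \|_{L^2_K}$ and $\| \sum_{n=0}^8 \alpha^n \Psi^n \|_{L^2_K} = \sqrt{Q(\alpha)}$, applying Cauchy-Schwarz together with the bound \eqref{eq:ub} of Theorem \ref{th:error_theorem} yields
\begin{equation}
|R(\alpha)| \;\leq\; \frac{6 \alpha^9 \sqrt{Q(\alpha)}}{15 - 20 \alpha} \;+\; \frac{9 \alpha^{18}}{(15 - 20 \alpha)^2} \quad \text{for all } 0 \leq \alpha \leq 7/10.
\end{equation}

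The next step is to convert the two-sided inequality $P(\alpha) - |R(\alpha)| \leq v_{\mathcal{N}}(\alpha) \leq P(\alpha) + |R(\alpha)|$ into a polynomial form amenable to rigorous numerical evaluation. Multiplying through by the positive quantity $(15-20\alpha)^2$ and replacing $\sqrt{Q(\alpha)}$ on $[0.57, 0.61]$ by an explicit low-degree polynomial majorant $S(\alpha) \geq \sqrt{Q(\alpha)}$, obtained by a short Taylor-plus-Lagrange-remainder calculation on $\sqrt{Q}$, I obtain two degree-$18$ polynomials $p_+(\alpha)$ and $p_-(\alpha)$ satisfying $p_-(\alpha) \leq (15-20\alpha)^2 v_{\mathcal{N}}(\alpha) \leq p_+(\alpha)$. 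Choosing sample points $\alpha_{min}, \alpha_{max} \in (0.57, 0.61)$ bracketing the formal magic angle $\alpha \approx .586$, Proposition \ref{as:Fermi_v_zero} -- a numerical computation with rigorously controlled round-off -- certifies $p_-(\alpha_{min}) > 0$ and $p_+(\alpha_{max}) < 0$. Since $(15 - 20\alpha)^2 > 0$ on the relevant interval, this gives $v_{\mathcal{N}}(\alpha_{min}) > 0 > v_{\mathcal{N}}(\alpha_{max})$.

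Continuity of $\alpha \mapsto v_{\mathcal{N}}(\alpha)$ on $[0, 7/10]$ follows from the analyticity of $\psi^\alpha$ in $\alpha$ (Proposition \ref{prop:analytic_zeromodes}) together with continuity of the $L^2_K$ inner product, so the intermediate value theorem produces the desired zero $\alpha^* \in [\alpha_{min}, \alpha_{max}] \subset (0.57, 0.61)$. The main obstacle I expect is the tightness of the sign certification: at $\alpha \approx 0.59$ the remainder bound $\tfrac{3\alpha^9}{15 - 20\alpha}$ is comparable in size to the expected magnitude of $v_{\mathcal{N}}$ near its root, so the majorant $S(\alpha)$ for $\sqrt{Q(\alpha)}$ and the final evaluation of $p_\pm$ at $\alpha_{min}, \alpha_{max}$ must both be carried out with carefully controlled precision; in particular the exact large-denominator rational coefficients appearing in \eqref{eq:numerator_exp} and \eqref{eq:denominator_exp} must be handled with interval arithmetic that guarantees round-off error cannot flip the sign of the evaluated polynomials.
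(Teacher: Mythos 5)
Your proposal is correct and follows essentially the same route as the paper: sandwich $v_{\mathcal{N}}(\alpha)$ between the explicit degree-$16$ polynomial of \eqref{eq:numerator_exp} plus/minus an error term obtained from Cauchy--Schwarz and the bound \eqref{eq:ub}, certify the resulting sign change on $(.57,.61)$ by a round-off-controlled numerical evaluation (Proposition \ref{as:Fermi_v_zero}), and conclude by continuity. The only cosmetic difference is that you bound $\| \sum_{n=0}^8 \alpha^n \Psi^n \|$ by $\sqrt{Q(\alpha)}$ and then majorize the square root by a polynomial, whereas the paper uses the triangle inequality with the explicit norms of Proposition \ref{prop:Psi_n_norms}, which yields the evaluable expression $\mathcal{E}(\alpha)$ directly.
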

\begin{proof}
Equation \eqref{eq:v_N} and Cauchy-Schwarz imply that
\begin{equation}
    \left| v_{\mathcal{N}}(\alpha) - \ip{ \sum_{n = 0}^8 \alpha^n \Psi^{n *}(-\vec{r}) }{ \sum_{n = 0}^8 \alpha^n \Psi^n(\vec{r}) } \right| \leq 2 \| \eta^\alpha \| \sum_{n = 0}^8 \alpha^n \left\| \Psi^n \right\| + \| \eta^\alpha \|^2.
\end{equation}
Using Theorem \ref{th:error_theorem} and Proposition \ref{prop:Psi_n_norms}, we see that $v_{\mathcal{N}}(\alpha)$ is bounded above by the polynomial 
\begin{equation} \label{eq:worst_case}
\begin{split}
    &1 - 3 \alpha^2 + \alpha^4 - \frac{111}{49} \alpha^6 + \frac{143}{294} \alpha^8 - \frac{7536933}{11957764} \alpha^{10} \\
    &\quad + \frac{4598172331}{47460365316} \alpha^{12} - \frac{30028809212865451}{520327364608478700} \alpha^{14} + \frac{49750141858992227}{12487856750603488800} \alpha^{16} \\
    &\quad + \mathcal{E}(\alpha),
\end{split}
\end{equation}
where
\begin{equation}
\begin{split}
    \mathcal{E}(\alpha) := \; &\frac{6 \alpha^9}{15 - 20\alpha} \left( 1 + \sqrt{3} \alpha + \sqrt{2} \alpha^2 + \frac{\sqrt{14}}{7} \alpha^3 + \frac{ \sqrt{258} }{ 42 } \alpha^4 + \frac{ \sqrt{1968837} }{3458} \alpha^5 \right.    \\
    &\quad \quad \quad \left. + \frac{ \sqrt{106525799} }{31122} \alpha^6 + \frac{ 2\sqrt{2129312323981473} }{ 624696345 } \alpha^7 + \frac{ \sqrt{183643119755214454} }{ 4997570760 } \alpha^8 \right)    \\
    &\quad \quad \quad \quad + \frac{9 \alpha^{18}}{(15-20\alpha)^2},
\end{split}
\end{equation}
where we use Proposition \ref{prop:Psi_n_norms} to calculate the term in brackets, for all $0 \leq \alpha \leq \frac{7}{10}$. On the other hand, $v(\alpha)$ is bounded below for all $0 \leq \alpha \leq \frac{7}{10}$ by the polynomial
\begin{equation} \label{eq:best_case}
\begin{split}
    &1 - 3 \alpha^2 + \alpha^4 - \frac{111}{49} \alpha^6 + \frac{143}{294} \alpha^8 - \frac{7536933}{11957764} \alpha^{10} \\
    &\quad + \frac{4598172331}{47460365316} \alpha^{12} - \frac{30028809212865451}{520327364608478700} \alpha^{14} + \frac{49750141858992227}{12487856750603488800} \alpha^{16} \\
    &\quad - \mathcal{E}(\alpha).
\end{split}
\end{equation}
We now claim the following.
\begin{proposition} \label{as:Fermi_v_zero}
Expression \eqref{eq:worst_case}, or equivalently the 18th order polynomial obtained by multiplying \eqref{eq:worst_case} by $(15 - 20 \alpha)^2$, is negative at $\alpha = .61$. Similarly, expression \eqref{eq:best_case} is positive at $\alpha = .57$.
\end{proposition}
Proposition \ref{as:Fermi_v_zero} obviously implies by continuity that \eqref{eq:worst_case}, \eqref{eq:best_case}, and $v_{\mathcal{N}}(\alpha)$, each have at least one zero in the interval $.57 < \alpha < .61$. We denote the largest zero of \eqref{eq:worst_case} in the interval by $\alpha_{max}$, and the smallest zero of \eqref{eq:best_case} in the interval by $\alpha_{min}$. Since the zeroes of $v_{\mathcal{N}}(\alpha)$ must lie between those of \eqref{eq:worst_case} and \eqref{eq:best_case} we are done.
\end{proof}






\begin{proof}[Proof of Proposition \ref{as:Fermi_v_zero} (computer assisted)]
We will first prove that \eqref{eq:worst_case} attains a negative value at $.61$, then explain the modifications necessary to prove that \eqref{eq:best_case} is positive at $.57$. Evaluating using double-precision floating point arithmetic we find that at $\alpha = .61$, \eqref{eq:worst_case} attains the negative value $-0.020263$ (five significant figures, this value was computed by running the script \texttt{compute\_expansion\_symbolically.py} in the Github repo).
It is straightforward to bound the numerical error which accumulates when evaluating an 18th order polynomial using floating point arithmetic. Even the simplest exact bound, which doesn't account for error cancellation, see e.g. equation (8) of Oliver\cite{Oliver1979}, yields an upper bound on the possible accumulated round-off error in evaluation of an $n$th order polynomial $\sum_{j = 0}^n p_j \alpha^j$, for $\alpha \in [-1,1]$, as $(n + 1) \left[ e^{(2 n +1)\epsilon} - 1\right] \sup_{0 \leq j \leq n} |p_j|$, where $\epsilon$ is ``machine epsilon'': roughly speaking, the maximum possible round-off error generated in a single arithmetic operation. Bounding the maximum coefficient in \eqref{eq:worst_case} by 1000, taking $n = 18$, and bounding $\epsilon$ by $3 \times 10^{-16}$ (which was easily attained working in Python on our machine), the maximum possible numerical error in the evaluation of \eqref{eq:worst_case} is $\approx 10^{-11}$, which is much smaller than $0.020263$. We conclude that the first claim of Proposition \ref{as:Fermi_v_zero} must hold. Regarding the second, evaluating at $\alpha = .57$ we find that \eqref{eq:best_case} equals $0.029138$ (5sf). The same argument as before now shows that accumulated round-off error in the evaluation cannot possible change the sign of \eqref{eq:best_case} at $\alpha = .57$.
\end{proof} 


\begin{figure}
\includegraphics[scale=.5]{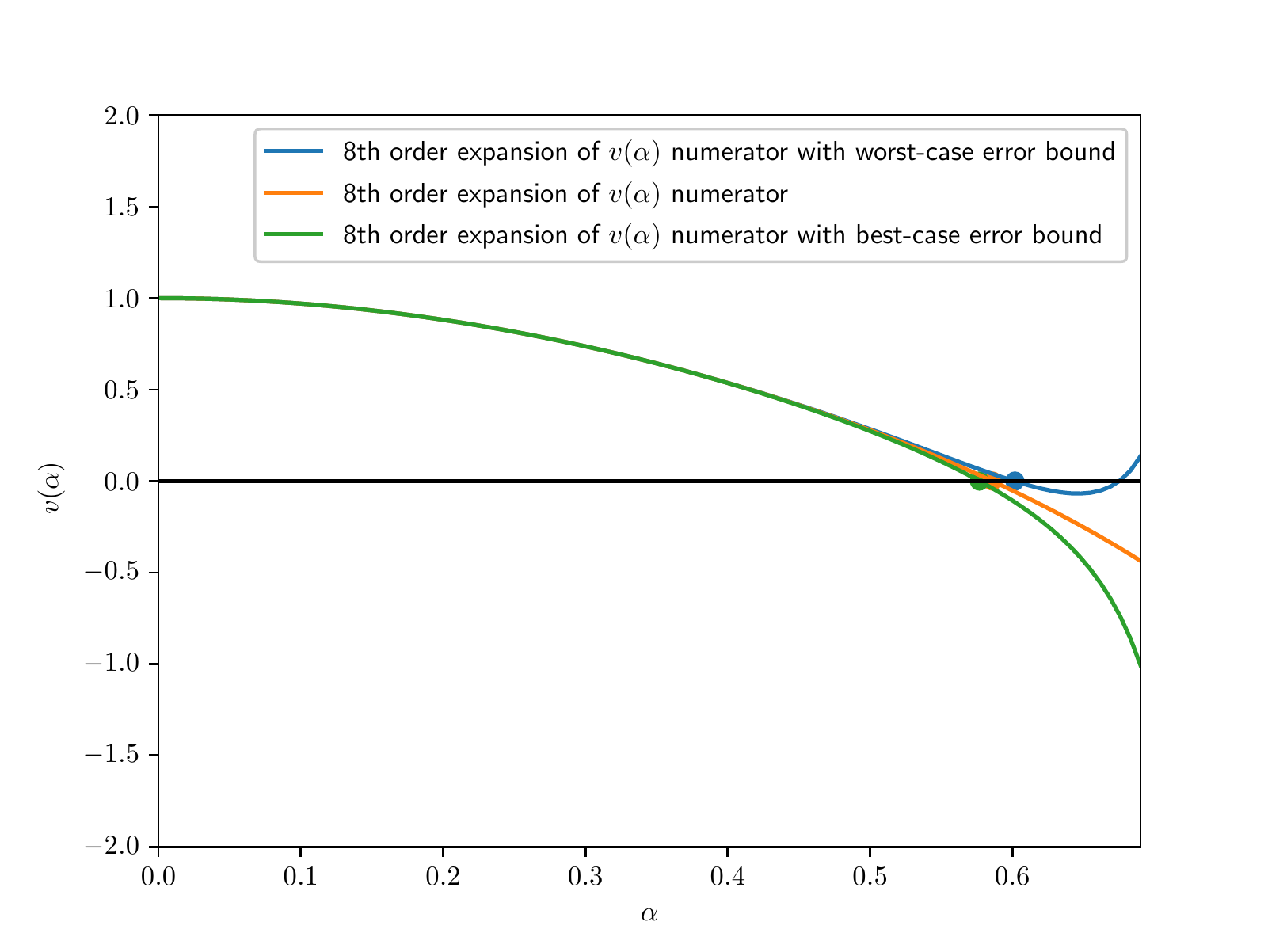}
\includegraphics[scale=.5]{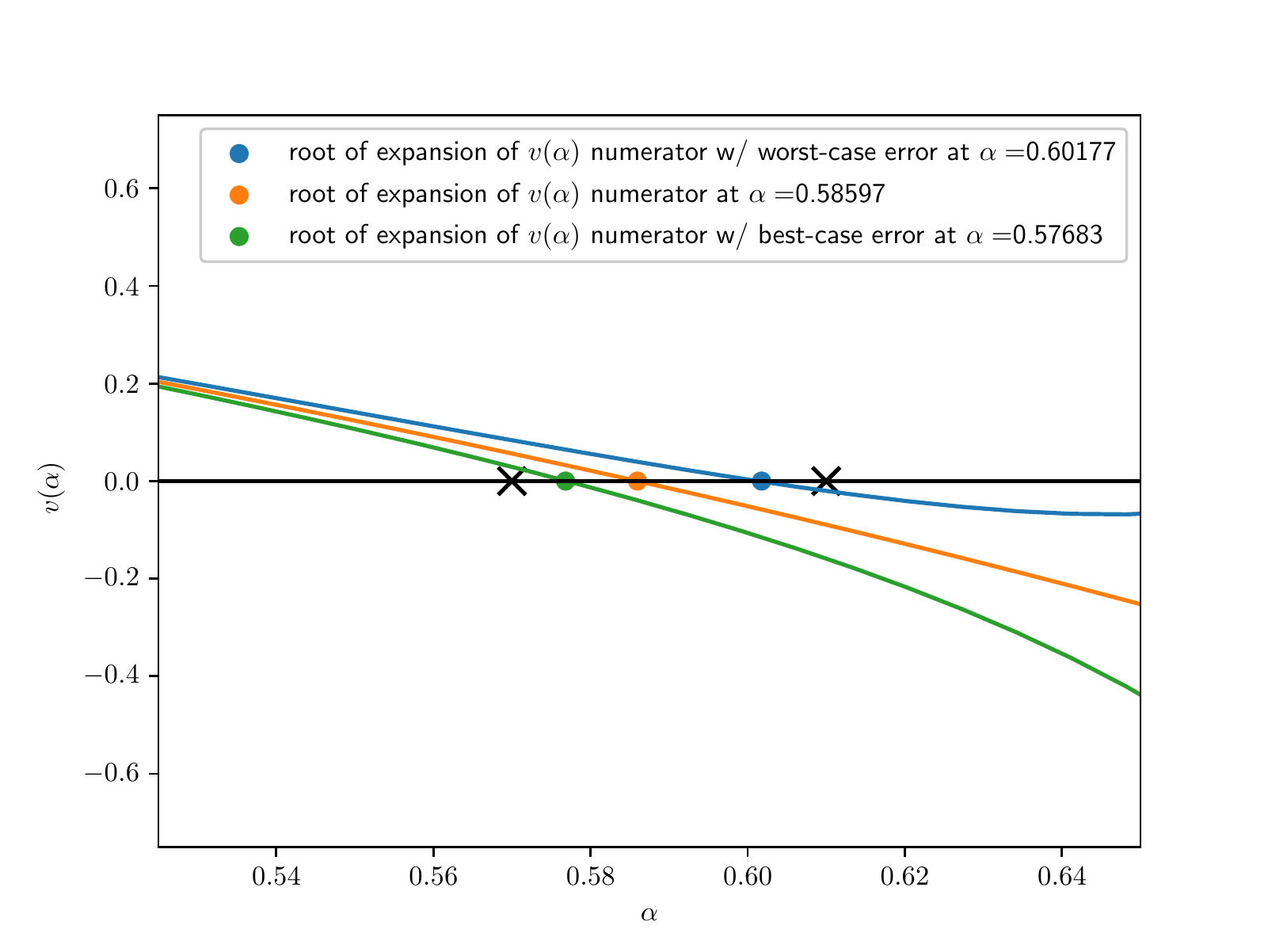}
\caption{
At left, plot of the numerator $v_{\mathcal{N}}(\alpha)$ of the Fermi velocity approximated by the 8th order TKV expansion \eqref{eq:TKV_expansion_Fermi_v} (orange), and of 8th order expansions with worst-case \eqref{eq:worst_case} (blue) and best-case \eqref{eq:best_case} (green) errors. At right, detail showing computed roots of these functions near to $\alpha = \frac{1}{\sqrt{3}}$. Numerically computing the zeroes of each curve yields $\alpha = 0.58597$ (5sf), $\alpha = 0.60177$ (5sf), and $\alpha = 0.57683$ (5sf), respectively. The values of $\alpha$ ($.57$ and $.61$) where we evaluated expressions \eqref{eq:best_case} (green line) and \eqref{eq:worst_case} (blue line) to prove that $v_{\mathcal{N}}(\alpha)$ has a zero between $.57$ and $.61$ are shown with black crosses.
}
\label{fig:check_zero}
\end{figure}
We do not attempt to rigorously estimate $\alpha_{min}$ and $\alpha_{max}$ precisely in this work, but numerically computing roots of the polynomials \eqref{eq:worst_case} and \eqref{eq:best_case} suggests $\alpha_{min} \approx 0.57683$ (5sf) and $\alpha_{max} \approx 0.60177$ (5sf) respectively, where (5sf) is an abbreviation for (five significant figures). Numerical computation of the first zero of $\ip{ \sum_{n = 0}^8 \alpha^n \Psi^{n*}(-\vec{r}) }{ \sum_{n = 0}^8 \alpha^n \Psi^n(\vec{r}) }$ gives $0.58597$ (5sf), see Figure \ref{fig:check_zero} (the zero values were computed by running the script \texttt{compute\_expansion\_symbolically.py} in the Github repo).

Using Proposition \ref{prop:H0_inv_H1} and the package Sympy\cite{10.7717/peerj-cs.103} for symbolic computation we can compute the formal expansion of $v(\alpha)$ up to arbitrarily high order in $\alpha$. In particular, we find the higher-order terms in the expansion \eqref{eq:TKV_expansion_Fermi_v} to be
\begin{equation} \label{eq:Fermi_v_high_order}
    v(\alpha) = \frac{ 1 - 3 \alpha^2 + \alpha^4 - \frac{111}{49} \alpha^6 + \frac{143}{294} \alpha^8 - \frac{10227257}{11957764} \alpha^{10} + \frac{6881137015}{47460365316} \alpha^{12} - \frac{130055941435858531}{520327364608478700} \alpha^{14} + ... }{ 1 + 3 \alpha^2 + 2 \alpha^4 + \frac{6}{7} \alpha^6 + \frac{107}{98} \alpha^8 + \frac{16011}{48412} \alpha^{10} + \frac{134058653}{356844852} \alpha^{12} + \frac{26407145691649}{226820995906050} \alpha^{14} + ... }.
\end{equation}
Truncating the numerator after order $\alpha^{40}$ and setting the numerator equal to zero yields $\alpha = 0.58566355838956$ (14sf) for the first zero of the Fermi velocity (to compute this value run \texttt{compute\_expansion\_symbolically.py} in the Github repo with $N = 40$). This is consistent with the numerical computation of Becker et al.\cite{becker2020mathematics}, who found $\alpha = 0.58566355838955$ truncated (not rounded) to 14 digits, by diagonalizing a non-normal but compact operator whose reciprocal eigenvalues correspond to magic angles. Note that we do not attempt to rigorously justify the series \eqref{eq:Fermi_v_high_order} to such large values of $\alpha$ and to such high order in this work, although see Remark \ref{rem:higher_terms}.

\begin{remark}[Higher magic angles] \label{rem:higher_angles}
The chiral model has been conjectured to have infinitely many magic angles \cite{becker2020mathematics}, but it isn't straightforward to extend our methods to prove existence of such higher magic angles. The problem is that calculating the perturbation series centered at $\alpha = 0$ requires diagonalizing the unperturbed operator $H^0$. In principle it might be possible to calculate the perturbation series to higher order in order to get an accurate approximation of the Fermi velocity near to the higher magic angles. However, this would require significantly more calculation compared with the present work, and we have no guarantee that the error can be made small enough to prove existence of another zero in that case. 
\end{remark}
\begin{remark}[More general interlayer hopping potentials] \label{rem:other_potentials}
The chiral model \eqref{eq:chiral_H} is an approximation to the full Hamiltonian of the twisted bilayer, even in the chiral limit where coupling between sublattices of the same type is turned off, because the interlayer hopping potential $U$ only allows for hopping between nearest neighbors in the momentum lattice (see Figure \ref{fig:H1_chi_G}). More general interlayer hopping potentials have been studied by Becker et al.\cite{Becker2020}. In principle, such models should be amenable to the analysis of this work, but longer-range hopping would lead to much more involved calculations, and the construction of the finite-dimensional subspace $\Xi$ of Proposition \ref{prop:mu_choice} would require more care: the fact that we can choose $\Xi$ so that $\| P_\Xi H^1 P^\perp_\Xi \| = 1$ depends on $H^1$ only coupling nearest-neighbors in the momentum lattice. Locality of hopping in the momentum space lattice has been exploited for efficient computation of density of states \cite{doi:10.1137/17M1141035} of twisted bilayers. 
\end{remark}
\begin{remark}[Generalization to BM model] \label{rem:BM_model}
Parts of our analysis should also apply to the full Bistritzer-MacDonald model. Specifically, one could study perturbation series for Bloch functions near to zero energy in powers of the inter-layer hopping strength, derive an equivalent expression for the Fermi velocity in terms of that series, and then study the zeroes of that series. However, there are various complications because of the lack of ``chiral'' symmetry. First, there is no reason for the continuation of the zero eigenvalue of the unperturbed operator to remain at zero. Second, the expression for the Fermi velocity in terms of the associated eigenfunction could be more complicated. Since zeros of the BM model Fermi velocity do not imply existence of flat bands for that model, we do not consider these complications in this work.
\end{remark}
\begin{remark}[Expanding to higher order] \label{rem:higher_terms}
Our methods could in principle be continued to justify the expansion of the Fermi velocity to arbitrarily high order and potentially over larger intervals of $\alpha$ values. However, these extensions aren't immediate: pushing the expansion to higher order or to a larger interval of $\alpha$ values would require a larger set $\Xi$ in Lemma \ref{lem:decompose}, and Proposition \ref{as:PHP_gap} would have to be re-proved for the new set $\Xi$. Note that the essential difficulty is justifying the perturbation series for large $\alpha$: the series are easily justified to all orders for $|\alpha| < \frac{1}{3}$, see Proposition \ref{prop:series_convergence}.
\end{remark}







\subsection{Structure of paper}

We review the symmetries, Bloch theory, and symmetry-protected zero modes of TKV's chiral model in Section \ref{sec:TKV_model}. We prove Theorem \ref{th:error_theorem} in Section \ref{sec:rigorous_expansion}, postponing most details of the proofs to the appendices. In Appendix \ref{sec:K_Dirac} we show why \eqref{eq:Fermi_v} corresponds to the effective Fermi velocity at the moir\'e $K$ point. In Appendix \ref{sec:chiral}, we construct an orthonormal basis, which we refer to as the chiral basis, which allows for efficient computation and analysis of TKV's formal expansion. We re-derive TKV's formal expansions in Appendix \ref{sec:TKV_expansion}. We give details of the proof of Theorem \ref{th:error_theorem} in Appendices \ref{sec:prop_proof} and \ref{sec:verify_gap_assump}. We prove Proposition \ref{prop:Fermi_v_expansion} in Appendix \ref{sec:Fermi_v_expansion}. In the supplementary material, we list the basis functions of the subspace onto which we project the TKV Hamiltonian, give the explicit forms of the higher-order corrections in the expansion \eqref{eq:expansion}, and present a derivation of the TKV Hamiltonian from the Bistritzer-MacDonald model.


\section{Symmetries, Bloch theory, and zero modes of TKV's chiral model} \label{sec:TKV_model}


\subsection{Symmetries of the TKV model}


In this section, we review the symmetries of the TKV model for the reader's convenience and to fix notation. Becker et al.\cite{becker2020mathematics} have given a group theoretical account of these symmetries, and further reviews can be found in the physics literature \cite{Zou2018,Po2018,Po2019}. Recall that $\phi = \frac{2 \pi}{3}$ and let $R_\phi$ denote the matrix which rotates vectors counter-clockwise by $\phi$, i.e.,
\begin{equation}
    R_\phi = \frac{1}{2} \begin{pmatrix} -1 & - \sqrt{3} \\ \sqrt{3} & -1 \end{pmatrix}.
\end{equation}
We define
\begin{definition}
For any $\vec{v} \in \Lambda$ we define a phase-shifted translation operator acting on functions $f \in \mathcal{H}$ by
\begin{equation} \label{eq:translate}
    \tau_{\vec{v}} f := \diag\left( 1,e^{i \vec{q}_1 \cdot \vec{v}},1,e^{i \vec{q}_1 \cdot \vec{v}} \right) \tilde{\tau}_{\vec{v}} f, \quad \tilde{\tau}_{\vec{v}} f(\vec{r}) = f(\vec{r} + \vec{v}).
\end{equation}
We define a phase-shifted version of the operator which rotates functions $f \in \mathcal{H}$ clockwise by $\phi$ by
\begin{equation} \label{eq:rotate}
    \mathcal{R} f := \diag\left( 1,1,e^{- i \phi},e^{- i \phi} \right) \tilde{\mathcal{R}} f, \quad \tilde{\mathcal{R}} f(\vec{r}) = f( R_\phi \vec{r} ).
\end{equation}
For any $f \in \mathcal{H}$ we finally define the ``chiral'' symmetry operator
\begin{equation} \label{eq:chiral}
    \mathcal{S} f := \diag\left(1,1,-1,-1\right) f.
\end{equation}
\end{definition}
We then have the following.
\begin{proposition}
The operators \eqref{eq:translate} and \eqref{eq:rotate} are symmetries in the sense that
\begin{equation} \label{eq:trans_sym}
    \left[ H^\alpha , \tau_{\vec{v}} \right] = H^\alpha\tau_{\vec{v}}-\tau_{\vec{v}}H^\alpha=0
\end{equation}
for all moir\'e lattice vectors $\vec{v} \in \Lambda$,
\begin{equation} \label{eq:rot_sym}
    \left[ H^\alpha , \mathcal{R} \right] = H^\alpha\mathcal{R}-\mathcal{R}H^\alpha = 0,
\end{equation}
and the operator \eqref{eq:chiral} is a ``chiral'' symmetry in the sense that
\begin{equation} \label{eq:chiral_sym}
    \left\{ H^\alpha , \mathcal{S} \right\} = H^\alpha\mathcal{S} +\mathcal{S} H^\alpha = 0.
\end{equation}
\end{proposition}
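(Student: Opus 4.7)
The plan is to verify each of the three identities by direct computation, using the block structure of $H^\alpha$ together with transformation rules for $\overline{\de}$ and $U$.

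The chiral anticommutation $\{H^\alpha,\mathcal{S}\} = 0$ is immediate. Writing
\begin{equation}
H^\alpha = \begin{pmatrix} 0 & D^{\alpha\dagger} \\ D^\alpha & 0 \end{pmatrix},\qquad \mathcal{S} = \begin{pmatrix} I_2 & 0 \\ 0 & -I_2 \end{pmatrix}
\end{equation}
as $2\times 2$ block matrices, direct multiplication gives $H^\alpha\mathcal{S} = \begin{pmatrix} 0 & -D^{\alpha\dagger} \\ D^\alpha & 0 \end{pmatrix}$ and $\mathcal{S} H^\alpha = \begin{pmatrix} 0 & D^{\alpha\dagger} \\ -D^\alpha & 0 \end{pmatrix}$, whose sum vanishes; no analytical input is required.

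For the translation identity $[H^\alpha,\tau_{\vec{v}}] = 0$, the essential input is a transformation law for $U$. Using $\vec{q}_2 = \vec{q}_1+\vec{b}_1$, $\vec{q}_3 = \vec{q}_1+\vec{b}_2$, together with the duality relation $\vec{a}_i\cdot\vec{b}_j = 2\pi\delta_{ij}$, one first shows that $e^{-i\vec{q}_j\cdot\vec{v}} = e^{-i\vec{q}_1\cdot\vec{v}}$ for every $\vec{v}\in\Lambda$ and $j = 1,2,3$, and hence that $U(\vec{r}+\vec{v})$ and $U(-\vec{r}-\vec{v})$ differ from $U(\vec{r})$ and $U(-\vec{r})$ only by explicit phase factors depending on $\vec{q}_1\cdot\vec{v}$. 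With this lemma, the commutator is verified by expanding $D^\alpha\tau_{\vec{v}}\psi^A$ and $\tau_{\vec{v}} D^\alpha\psi^A$ entry by entry: the diagonal operators $-2i\overline{\de}$ commute with unmodified translation, while the off-diagonal $\alpha U(\pm\vec{r})$ terms pick up precisely the phases needed to cancel the Bloch factors carried by the layer-$2$ components of $\tau_{\vec{v}}$. The same argument, with $D^{\alpha\dagger}$ in place of $D^\alpha$, handles the upper block.

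For the rotational identity $[H^\alpha,\mathcal{R}] = 0$, two transformation rules do the work. A one-line chain rule in complex coordinates $z = x+iy$ (using $\overline{\de} = \de_{\bar z}$) gives
\begin{equation}
\overline{\de}\bigl(f\circ R_\phi\bigr)(\vec{r}) = e^{-i\phi}(\overline{\de}f)(R_\phi\vec{r}).
\end{equation}
A direct check shows that $R_\phi$ cyclically permutes $\{\vec{q}_1,\vec{q}_2,\vec{q}_3\}$ as $\vec{q}_1\mapsto\vec{q}_2\mapsto\vec{q}_3\mapsto\vec{q}_1$, so the substitution $\vec{r}\mapsto R_\phi\vec{r}$ cyclically permutes the three exponentials in $U$; combined with the identity $e^{-2i\phi} = e^{i\phi}$ this yields $U(R_\phi\vec{r}) = e^{i\phi}U(\vec{r})$ and similarly $U(-R_\phi\vec{r}) = e^{i\phi}U(-\vec{r})$. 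Splitting $\mathcal{R}$ into its A and B block pieces (the B block carrying an extra factor $e^{-i\phi}$) and substituting these identities into $D^\alpha\mathcal{R}_A\psi^A$ versus $\mathcal{R}_B D^\alpha\psi^A$, the factor $e^{-i\phi}$ produced by the chain rule on the diagonal terms and the factor $e^{+i\phi}$ produced by the cyclic permutation on the off-diagonal terms combine into exactly the overall $e^{-i\phi}$ prescribed by $\mathcal{R}_B$, and a parallel computation with $D^{\alpha\dagger}$ covers the upper block. All three computations are elementary; the main obstacle is careful bookkeeping, verifying that the phases chosen in $\tau_{\vec{v}}$ (on layer-$2$ components) and in $\mathcal{R}$ (on sublattice-$B$ components) are precisely those required to cancel the phases generated by translating or rotating $U$ and by differentiating a rotated function.
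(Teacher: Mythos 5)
Your proposal is correct and follows essentially the same route as the paper: the paper's proof consists precisely of the two conjugation identities you derive (the phase law $U(\vec{r}+\vec{v})=e^{-i\vec{q}_1\cdot\vec{v}}U(\vec{r})$ from $\vec{q}_j-\vec{q}_1\in\Lambda^*$, and $\overline{\de}(f\circ R_\phi)=e^{-i\phi}(\overline{\de}f)\circ R_\phi$ together with $U(R_\phi\vec{r})=e^{i\phi}U(\vec{r})$ from the cyclic permutation of the $\vec{q}_j$), followed by the same blockwise bookkeeping, with the chiral anticommutation checked by block multiplication. Your write-up simply makes explicit the "direct calculation" the paper leaves to the reader, so there is nothing to add.
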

\begin{proof}
The first claim is a direct calculation using the facts that for any $\vec{v} \in \Lambda$
\begin{equation}
    \tilde{\tau}_{-\vec{v}} U(\vec{r}) \tilde{\tau}_{\vec{v}} = e^{- i \vec{q}_1 \cdot \vec{v}} U(\vec{r}), \quad \tilde{\tau}_{- \vec{v}} \overline{\de} \tilde{\tau}_{\vec{v}} = \overline{\de}.
\end{equation}
The second claim is a direct calculation using the facts that
\begin{equation}
    \tilde{\mathcal{R}}^{-1} U(\vec{r}) \tilde{\mathcal{R}} = e^{- i \phi} U(\vec{r}), \quad \tilde{\mathcal{R}}^{-1} \overline{\de} \tilde{\mathcal{R}} = e^{- i \phi} \overline{\de}.
\end{equation}
The final claim is trivial to check.
\end{proof}
The ``chiral'' symmetry \eqref{eq:chiral_sym} implies that the spectrum of $H^\alpha$ is symmetric about zero, because
\begin{equation}
    H^\alpha \psi = E \psi \iff H^\alpha \mathcal{S} \psi = - E \mathcal{S} \psi.
\end{equation}
The same calculation implies that zero modes of $H^\alpha$ can always be chosen without loss of generality to be eigenfunctions of $\mathcal{S}$.

\subsection{Bloch theory for the TKV Hamiltonian} \label{sec:Bloch_theory}

We now want to reduce the eigenvalue problem for $H^\alpha$ using the symmetries just introduced. The symmetry \eqref{eq:trans_sym} means that eigenfunctions of $H^\alpha$ can be chosen without loss of generality to be simultaneous eigenfunctions of $\tau_{\vec{v}}$ for all $\vec{v} \in \Lambda$. It therefore suffices to seek solutions of
\begin{equation} \label{eq:eigvalp}
    H^\alpha \psi = E \psi
\end{equation}
for $\vec{r}$ in a fundamental cell $\Omega := \field{R}^2/\Lambda$ of the moir\'e lattice in the symmetry-restricted spaces
\begin{equation} \label{eq:L2k_space}
    L^2_{\vec{k}} := \left\{ f \in L^2(\Omega;\field{C}^4) : f(\vec{r} + \vec{v}) = e^{i \vec{k} \cdot \vec{v}} \diag(1,e^{i \vec{q}_1 \cdot \vec{v}},1,e^{i \vec{q}_1 \cdot \vec{v}}) f(\vec{r})\ \forall \vec{v} \in \Lambda \right\}
\end{equation}
where $\vec{k}$ is known as the quasimomentum. Since $L^2_{\vec{k} + \vec{w}} = L^2_{\vec{k}}$ for any $\vec{w} \in \Lambda^*,$ it suffices to restrict attention to $\vec{k}$ in a fundamental cell of $\Lambda^*$ which we denote $\Omega^* := \field{R}^2/\Lambda^*$ and refer to as the Brillouin zone. We also define symmetry-restricted Sobolev spaces $H^s_{\vec{k}}$ for each $\vec{k} \in \Omega^*$ and positive integer $s$ by
\begin{equation}
    H^s_{\vec{k}} := \left\{ f \in H^s(\Omega;\field{C}^4) : f(\vec{r} + \vec{v}) = e^{i \vec{k} \cdot \vec{v}} \diag(1,e^{i \vec{q}_1 \cdot \vec{v}},1,e^{i \vec{q}_1 \cdot \vec{v}}) f(\vec{r})\ \forall \vec{v} \in \Lambda \right\}.
\end{equation}

We claim the following.
\begin{proposition}
For each fixed $\vec{k} \in \Omega^*$ and $\alpha \geq 0$, $H^\alpha$, defined on the domain $H^1_{\vec{k}}$, extends to an unbounded self-adjoint elliptic operator $L^2_{\vec{k}} \rightarrow L^2_{\vec{k}}$ with compact resolvent. In a complex neighborhood of every $\alpha \geq 0$, the family $H^\alpha$ is a holomorphic family of type (A) in the sense of Kato\cite{Kato}.
\end{proposition}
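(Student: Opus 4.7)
The plan is to decompose $H^\alpha = H^0 + \alpha V$, where $H^0$ is the diagonal Dirac part built from $-2i\bar\partial$ and $V$ is the off-diagonal multiplication operator containing $U(\pm\vec{r})$, and then apply standard perturbation theory sector by sector.

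First I would handle the unperturbed operator $H^0$. On each $L^2_{\vec{k}}$ the boundary condition in \eqref{eq:L2k_space} gives, after absorbing the $\vec{q}_1$ phases on the second and fourth components, a plane-wave basis indexed by $\vec{k} + \Lambda^*$. The derivative $-2i\bar\partial$ acts on such modes as multiplication by $(\vec{k}+\vec{g})_1 + i(\vec{k}+\vec{g})_2$, so $H^0$ becomes a direct sum, over $\vec{g}\in\Lambda^*$, of $4\times 4$ Hermitian matrices whose nonzero eigenvalues are $\pm|\vec{k}+\vec{g}|$. This immediately gives self-adjointness on $H^1_{\vec{k}}$, compact resolvent (the eigenvalues grow to infinity), and ellipticity of order one, since the principal symbol is $\begin{pmatrix} 0 & (\xi_1 - i\xi_2)I_2 \\ (\xi_1 + i\xi_2)I_2 & 0 \end{pmatrix}$, whose square equals $|\xi|^2 I_4$ and so is invertible for $\xi \neq 0$.

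Next I would incorporate the potential. Because $U$ is a sum of three unimodular plane waves, $V$ is a bounded multiplication operator on $L^2_{\vec{k}}$ and hence $H^0$-bounded with arbitrarily small relative bound. By the Kato–Rellich theorem, $H^\alpha = H^0 + \alpha V$ is self-adjoint on the same domain $H^1_{\vec{k}}$ for every real $\alpha$, and ellipticity of order one is preserved since $V$ is of order zero and does not alter the principal symbol. Compactness of the resolvent then follows from the compact embedding $H^1_{\vec{k}} \hookrightarrow L^2_{\vec{k}}$ provided by Rellich–Kondrachov on the torus $\Omega$: for any $z$ in the resolvent set, $(H^\alpha - z)^{-1}\colon L^2_{\vec{k}} \to H^1_{\vec{k}}$ is bounded by the closed graph theorem, so is compact when viewed as a map into $L^2_{\vec{k}}$.

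Finally, for the type (A) assertion I would verify Kato's definition directly. The domain $H^1_{\vec{k}}$ is independent of $\alpha$, and for each fixed $\psi \in H^1_{\vec{k}}$ the map $\alpha \mapsto H^\alpha \psi = H^0\psi + \alpha V\psi$ is linear in $\alpha$, hence entire. Closedness of $H^\alpha$ for complex $\alpha$ in a neighborhood of the real axis follows from closedness of $H^0$ and the boundedness of $\alpha V$ via a standard Neumann-series argument. The step requiring the most care is the Fourier-series diagonalization of $H^0$ on $L^2_{\vec{k}}$, since it is the nonstandard translation rule with the $\vec{q}_1$ phases that selects the correct shifted reciprocal lattice; once that is set up, every other ingredient reduces to textbook perturbation theory.
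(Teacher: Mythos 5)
Your proposal is correct and follows essentially the same route as the paper: split $H^\alpha = H^0 + \alpha H^1$, get self-adjointness of $H^0$ by Fourier analysis on $L^2_{\vec{k}}$ and of $H^\alpha$ by bounded symmetric (Kato--Rellich) perturbation, obtain compactness of the resolvent from the compact embedding $H^1_{\vec{k}} \hookrightarrow L^2_{\vec{k}}$, and verify type (A) from the $\alpha$-independent domain and the affine (hence entire) dependence of $H^\alpha f$ on $\alpha$. The only cosmetic difference is that you justify the resolvent mapping into $H^1_{\vec{k}}$ via the closed graph theorem where the paper invokes elliptic regularity; both are standard and the argument is sound.
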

\begin{proof}
Ellipticity is immediate since the principal symbol of $H^\alpha$ is invertible. Self-adjointness is clear using the Fourier transform when $\alpha = 0$, and for $\alpha \neq 0$ because $\alpha H^1$ is a bounded symmetric perturbation of $H^0$ (see e.g. Theorem 1.4 of Cycon et al.\cite{CyconFroeseKirschSimon}). Elliptic regularity implies that the resolvent maps $L^2_{\vec{k}} \rightarrow H^1_{\vec{k}}$, and compactness of the resolvent then follows by Rellich's theorem (see e.g. Proposition 3.4 of Taylor \cite{Taylor1}). The family $H^\alpha$ is holomorphic of type (A) since the domain of $H^\alpha$ is independent of $\alpha$, and $H^\alpha f$ is holomorphic for every $f \in H^1_{\vec{k}}$ (see Kato Chapter 7 \cite{Kato}). 
\end{proof}

We now claim the following.
\begin{proposition} \label{prop:k_rot}
Let $f \in L^2_{\vec{k}}$. Then $\mathcal{R} f \in L^2_{R_\phi^* \vec{k}}$.
\end{proposition}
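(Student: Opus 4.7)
The plan is a direct verification of the Bloch boundary condition. Setting $g := \mathcal{R} f$, I need to check that for every $\vec{v}\in\Lambda$,
\begin{equation}
g(\vec{r}+\vec{v}) \;=\; e^{i(R_\phi^*\vec{k})\cdot\vec{v}}\,\diag\!\left(1,e^{i\vec{q}_1\cdot\vec{v}},1,e^{i\vec{q}_1\cdot\vec{v}}\right) g(\vec{r}),
\end{equation}
using only the definition \eqref{eq:rotate} of $\mathcal{R}$, the Bloch condition \eqref{eq:L2k_space} on $f$, and the symmetries of the moir\'e lattice.

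First I would unfold the definition of $\mathcal{R}$:
\begin{equation}
(\mathcal{R}f)(\vec{r}+\vec{v}) \;=\; \diag\!\left(1,1,e^{-i\phi},e^{-i\phi}\right) f(R_\phi\vec{r}+R_\phi\vec{v}).
\end{equation}
Because the moir\'e lattice $\Lambda$ is invariant under $R_\phi$ (the generators $\vec{a}_1,\vec{a}_2$ are interchanged up to a sign, which one can check directly), we have $R_\phi\vec{v}\in\Lambda$, so the defining Bloch condition of $L^2_{\vec{k}}$ applies to $f$ at the shifted argument:
\begin{equation}
f(R_\phi\vec{r}+R_\phi\vec{v})=e^{i\vec{k}\cdot R_\phi\vec{v}}\,\diag\!\left(1,e^{i\vec{q}_1\cdot R_\phi\vec{v}},1,e^{i\vec{q}_1\cdot R_\phi\vec{v}}\right) f(R_\phi\vec{r}).
\end{equation}
The quasimomentum phase is immediate from $\vec{k}\cdot R_\phi\vec{v} = (R_\phi^T\vec{k})\cdot\vec{v} = (R_\phi^*\vec{k})\cdot\vec{v}$, which already reproduces the desired prefactor $e^{i(R_\phi^*\vec{k})\cdot\vec{v}}$.

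The only slightly delicate step is the sublattice/layer phase $e^{i\vec{q}_1\cdot R_\phi\vec{v}}$, which a priori is not the same as $e^{i\vec{q}_1\cdot\vec{v}}$. I would reduce it by computing $R_\phi^T\vec{q}_1$ explicitly: with $\vec{q}_1=(0,-1)$ one gets $R_\phi^T\vec{q}_1 = \tfrac12(-\sqrt{3},1) = \vec{q}_3 = \vec{q}_1+\vec{b}_2$. Consequently
\begin{equation}
\vec{q}_1\cdot R_\phi\vec{v}\;=\;(R_\phi^T\vec{q}_1)\cdot\vec{v}\;=\;\vec{q}_1\cdot\vec{v}+\vec{b}_2\cdot\vec{v},
\end{equation}
and $\vec{b}_2\cdot\vec{v}\in 2\pi\field{Z}$ by the reciprocal lattice relation $\vec{a}_i\cdot\vec{b}_j=2\pi\delta_{ij}$. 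Hence $e^{i\vec{q}_1\cdot R_\phi\vec{v}}=e^{i\vec{q}_1\cdot\vec{v}}$, which is exactly the phase required by the definition of $L^2_{R_\phi^*\vec{k}}$.

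To conclude, I would observe that the two diagonal matrices $\diag(1,1,e^{-i\phi},e^{-i\phi})$ and $\diag(1,e^{i\vec{q}_1\cdot\vec{v}},1,e^{i\vec{q}_1\cdot\vec{v}})$ commute, so the former can be pulled back through and reassembled with $f(R_\phi\vec{r})$ to yield $(\mathcal{R}f)(\vec{r})$, giving precisely the claimed transformation law. There is no real obstacle; the one point that deserves care is the identity $R_\phi^T\vec{q}_1\equiv\vec{q}_1\pmod{\Lambda^*}$, without which the phase structure encoded in $\tau_{\vec{v}}$ would not be preserved by $\mathcal{R}$, and the statement would fail.
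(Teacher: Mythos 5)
Your proof is correct and follows essentially the same route as the paper's: unfold $\mathcal{R}$, apply the Bloch condition at the rotated lattice vector, and reduce the sublattice phase via $R_\phi^* \vec{q}_1 = \vec{q}_1 + \vec{b}_2$ together with $\vec{b}_2 \cdot \vec{v} \in 2\pi\field{Z}$. The only difference is that you spell out a couple of steps (lattice invariance under $R_\phi$, commuting the diagonal matrices) that the paper leaves implicit.
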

\begin{proof}
By definition, for any $\vec{v} \in \Lambda$,
\begin{equation}
    \mathcal{R} f(\vec{r} + \vec{v}) = \diag(1,1,e^{- i \phi},e^{- i \phi}) f( R_\phi \vec{r} + R_\phi \vec{v} ).
\end{equation}
By the definition of $L^2_{\vec{k}}$ we have
\begin{equation}
    \mathcal{R} f(\vec{r} + \vec{v}) = e^{i (R_\phi^* \vec{k}) \cdot \vec{v}} \diag(1,e^{i (R_\phi^* \vec{q}_1) \cdot \vec{v}},1,e^{i (R_\phi^* \vec{q}_1) \cdot \vec{v}}) \mathcal{R} f( \vec{r} ).
\end{equation}
The conclusion now follows from $R_\phi^* \vec{q}_1 = \vec{q}_1 + \vec{b}_2$ and $\vec{b}_2 \cdot \vec{v} = 0 \mod 2 \pi$ for all $\vec{v} \in \Lambda$.
\end{proof}
In particular, whenever $R_\phi^* \vec{k} = \vec{k}$ mod $\Lambda^*$, we have $\mathcal{R} L^2_{\vec{k}} = L^2_{\vec{k}}$. Regarding such $\vec{k}$, the following is a simple calculation.
\begin{proposition}
The moir\'e $K$ and $K'$ points $\vec{k} = 0$ and $\vec{k} = - \vec{q}_1$, and the moir\'e $\Gamma$ point $\vec{k} = \vec{q}_1 + \vec{b}_1$ satisfy $R_\phi^* \vec{k} = \vec{k}$ mod $\Lambda^*$.
\end{proposition}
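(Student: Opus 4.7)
The plan is to verify each of the three claims by a direct $2\times 2$ matrix-vector calculation. First I would write down the adjoint of the rotation matrix, namely
\[
R_\phi^* = \frac{1}{2}\begin{pmatrix} -1 & \sqrt{3} \\ -\sqrt{3} & -1 \end{pmatrix},
\]
which represents rotation by $-\phi = -2\pi/3$ and is the transpose of the $R_\phi$ appearing in the definition \eqref{eq:rotate} of $\mathcal{R}$. It is precisely this adjoint that acts on quasimomenta via Proposition \ref{prop:k_rot}, so the proposition reduces to checking that $R_\phi^* \vec{k} - \vec{k} \in \Lambda^*$ for each of the three listed $\vec{k}$.

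The case $\vec{k} = 0$ is immediate since any linear rotation fixes the origin. For $\vec{k} = -\vec{q}_1$ and for $\vec{k} = \vec{q}_1 + \vec{b}_1$ the strategy is identical: I would first write the quasimomentum in Cartesian coordinates using the explicit formulas for $\vec{q}_1$ and $\vec{b}_1$, apply $R_\phi^*$ to obtain a $2$-vector, and then exhibit integers $n_1, n_2$ with $R_\phi^* \vec{k} - \vec{k} = n_1 \vec{b}_1 + n_2 \vec{b}_2$. I expect the displacement to come out to a single reciprocal lattice vector in each case, namely $-\vec{b}_2$ for $\vec{k} = -\vec{q}_1$ and $-\vec{b}_1$ for $\vec{k} = \vec{q}_1 + \vec{b}_1$, which is conceptually why these three quasimomenta are exactly the three representatives of the $C_3$-orbit structure on the Brillouin zone.

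There is no real obstacle: the proof really is two matrix-vector products followed by an inspection in the $\{\vec{b}_1, \vec{b}_2\}$ basis. The only point requiring a small amount of care is keeping track of which rotation acts where, i.e., $R_\phi$ in the definition of $\mathcal{R}$ versus its transpose $R_\phi^*$ in the induced action on quasimomenta, and correctly converting between the lattice-vector shorthand ($\vec{q}_1 + \vec{b}_1$, etc.) and Cartesian components before multiplying. One could alternatively argue more abstractly that $R_\phi^*$ preserves the hexagonal reciprocal lattice $\Lambda^*$, so it descends to a $C_3$ action on the torus $\field{R}^2/\Lambda^*$ whose fixed-point set is known to consist of exactly three points; the three listed quasimomenta then must be exactly these fixed points. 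However, since the proposition is being used to identify the particular $\vec{k}$ at which the symmetry-restricted analysis will be carried out later, the direct calculation is both shorter and more useful.
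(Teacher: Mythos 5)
Your proposal is correct and matches the paper's approach: the paper gives no written proof, simply noting that the statement ``is a simple calculation,'' and your direct computation is exactly that calculation. Your claimed displacements check out ($R_\phi^*(-\vec{q}_1)-(-\vec{q}_1)=-\vec{b}_2$ and $R_\phi^*(\vec{q}_1+\vec{b}_1)-(\vec{q}_1+\vec{b}_1)=-\vec{b}_1$), and your identification of $R_\phi^*$ as the transpose of $R_\phi$ acting on quasimomenta is consistent with Proposition \ref{prop:k_rot}.
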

The moir\'e $K$, $K'$, and $\Gamma$ points are shown in Figure \ref{fig:mBZ}. Note that the moir\'e $K$, $K'$, and $\Gamma$ points should not be confused with the single layer $K$, $K'$, and $\Gamma$ points. The moir\'e $K$ point corresponds to the $K$ point of layer 1, while the moir\'e $K'$ point corresponds to the $K$ point of layer 2. Interactions with the $K'$ points of layers 1 and 2 are formally small for small twist angles and are hence ignored.

In this work we will be particularly interested in Bloch functions at the moir\'e ${K}$ and ${K}'$ points. We therefore define
\begin{definition} \label{def:L2K_spaces}
\begin{equation}
    L^2_K := L^2_{\vec{0}}, \quad L^2_{K'} := L^2_{-\vec{q}_1}.
\end{equation}
\end{definition}
Let $\omega = e^{i \phi}$. Since the spaces $L^2_K$ and $L^2_{K'}$ are invariant under $\mathcal{R}$ they can be divided up into invariant subspaces corresponding to the eigenvalues of $\mathcal{R}$
\begin{equation}
    L^2_K = L^2_{K,1} \oplus L^2_{K,\omega} \oplus L^2_{K,\omega^*}, \quad L^2_{K'} = L^2_{K',1} \oplus L^2_{K',\omega} \oplus L^2_{K',\omega^*},
\end{equation}
where
\begin{equation} \label{eq:L2Ksigma}
    L^2_{K,\sigma} := \left\{ f \in L^2_K : \mathcal{R} f = \sigma f \right\} \quad \sigma = 1,\omega,\omega^*
\end{equation}
and $L^2_{K',\sigma}, \sigma = 1,\omega,\omega^*$, are defined similarly.

The following, which is trivial to prove, will be important for studying the zero modes of $H^\alpha$.
\begin{proposition}
The operator $\mathcal{S}$ commutes with ${\tau}_{\vec{v}}$ and $\mathcal{R}$ and hence maps the $L^2_{K,\sigma}$ and $L^2_{K',\sigma}$ spaces to themselves for $\sigma = 1,\omega,\omega^*$.
\end{proposition}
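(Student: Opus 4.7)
My plan is to first establish the two commutator identities $[\mathcal{S}, \tau_{\vec{v}}] = 0$ for every $\vec{v} \in \Lambda$ and $[\mathcal{S}, \mathcal{R}] = 0$ by direct computation from the definitions \eqref{eq:translate}, \eqref{eq:rotate}, and \eqref{eq:chiral}, and then to deduce the claimed subspace invariance as a formal consequence.

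For the first identity, I would observe that $\mathcal{S}$ is pointwise multiplication by the constant diagonal matrix $\diag(1,1,-1,-1)$, while $\tau_{\vec{v}} = M_{\vec{v}} \tilde{\tau}_{\vec{v}}$ with $M_{\vec{v}} := \diag(1, e^{i \vec{q}_1 \cdot \vec{v}}, 1, e^{i \vec{q}_1 \cdot \vec{v}})$ and $\tilde{\tau}_{\vec{v}}$ a pure spatial translation that acts trivially on the fiber $\field{C}^4$. Hence $\mathcal{S}$ commutes with $\tilde{\tau}_{\vec{v}}$, and since $\diag(1,1,-1,-1)$ and $M_{\vec{v}}$ are both diagonal they commute with one another, giving $\mathcal{S} \tau_{\vec{v}} = \tau_{\vec{v}} \mathcal{S}$. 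The same argument, now with $\tilde{\mathcal{R}}$ in place of $\tilde{\tau}_{\vec{v}}$ and the diagonal matrix $\diag(1,1,e^{-i \phi},e^{-i\phi})$ in place of $M_{\vec{v}}$, shows $[\mathcal{S}, \mathcal{R}] = 0$; the fact that $\tilde{\mathcal{R}}$ permutes the spatial argument is harmless because $\mathcal{S}$ acts only on the fiber.

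To conclude invariance, I would note that $L^2_K$ and $L^2_{K'}$ are by definition (see \eqref{eq:L2k_space} together with Definition \ref{def:L2K_spaces}) the subspaces on which $\tau_{\vec{v}}$ acts as the scalar $e^{i \vec{k} \cdot \vec{v}}$ for $\vec{k} = \vec{0}$ and $\vec{k} = -\vec{q}_1$ respectively; commuting $\mathcal{S}$ past $\tau_{\vec{v}}$ shows that if $f$ satisfies the Bloch condition then so does $\mathcal{S} f$, so $\mathcal{S} L^2_K \subset L^2_K$ and $\mathcal{S} L^2_{K'} \subset L^2_{K'}$. Invariance of the $\mathcal{R}$-eigenspaces $L^2_{K,\sigma}$ and $L^2_{K',\sigma}$ defined in \eqref{eq:L2Ksigma} then follows from the identity $\mathcal{R}(\mathcal{S} f) = \mathcal{S}(\mathcal{R} f) = \sigma \mathcal{S} f$, which preserves the $\mathcal{R}$-eigenvalue.

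There is no analytic obstacle here — the proposition reduces to a bookkeeping check. The only point worth verifying carefully is that $\mathcal{S}$ acts as a scalar on each layer block (it is $+1$ on the $A$ sublattice and $-1$ on the $B$ sublattice, independent of layer index), while the nontrivial phases appearing in $\tau_{\vec{v}}$ and $\mathcal{R}$ are themselves diagonal in the same basis $(\psi^A_1,\psi^A_2,\psi^B_1,\psi^B_2)$. Once one has recorded that all three operators are diagonal in this fiber basis, the two commutators vanish and the subspace invariance is immediate.
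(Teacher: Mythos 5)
Your proof is correct, and since the paper offers no proof of this proposition (it is declared ``trivial to prove''), your direct check --- that $\mathcal{S}$, the phase matrices in $\tau_{\vec{v}}$ and $\mathcal{R}$, and the boundary-condition matrix $\diag(1,e^{i\vec{q}_1\cdot\vec{v}},1,e^{i\vec{q}_1\cdot\vec{v}})$ are all diagonal in the fiber basis while $\tilde{\tau}_{\vec{v}}$ and $\tilde{\mathcal{R}}$ act only on the spatial argument --- is exactly the intended argument. The one loose phrase is that with definition \eqref{eq:translate} the operator $\tau_{\vec{v}}$ does not act as the scalar $e^{i\vec{k}\cdot\vec{v}}$ on $L^2_{\vec{k}}$ (an extra diagonal phase matrix survives), but this does not affect your argument, since what you actually use is that applying the constant matrix $\diag(1,1,-1,-1)$ to both sides of the Bloch condition in \eqref{eq:L2k_space} preserves that condition.
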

Since $\mathcal{S}$ has eigenvalues $\pm 1$, we can define the spaces
\begin{equation}
    L^2_{K,\sigma,\pm 1} = \left\{ f \in L^2_{K,\sigma} : \mathcal{S} f = \pm f \right\} \quad \sigma = 1,\omega,\omega^*
\end{equation}
and spaces $L^2_{K',\sigma,\pm 1}, \sigma = 1,\omega,\omega^*$ similarly. 
\begin{remark} \label{rem:A_B_sites}
Note that $+1$ eigenspaces of $\mathcal{S}$ correspond to wave-functions which vanish in their third and fourth entries, which correspond, through \eqref{eq:psi_densities}, to wave-functions supported only on $A$ sites of the layers. Similarly, $-1$ eigenspaces of $\mathcal{S}$ correspond to wave-functions which vanish in their first and second entries, which are supported only on $B$ sites of the layers.
\end{remark}

\begin{figure}
\centering
\includegraphics[scale=.5]{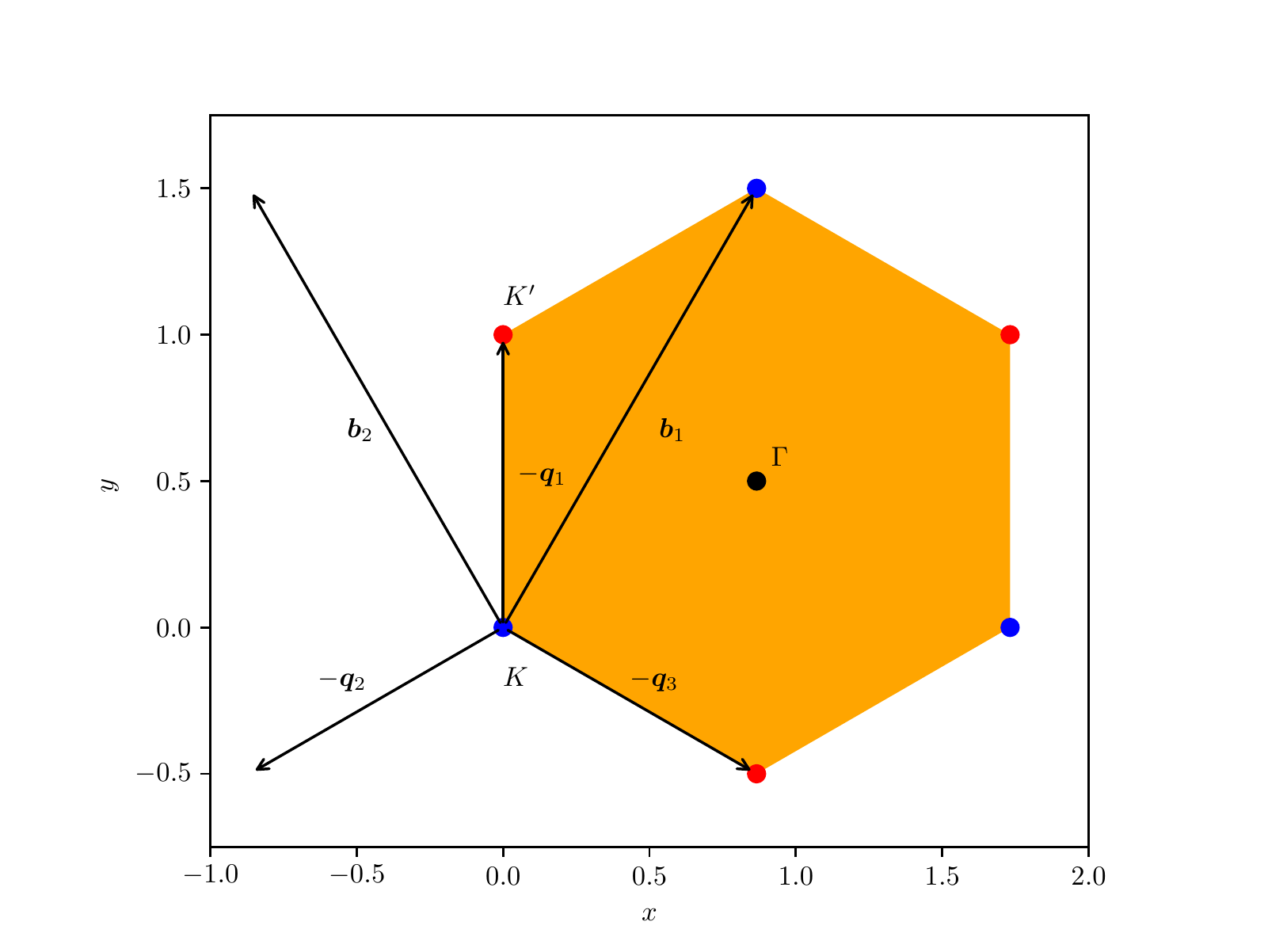}
\caption{Diagram showing locations of moir\'e $K$ (blue), $K'$ (red), and $\Gamma$ (black) points within the moir\'e Brillouin zone (orange).}
\label{fig:mBZ}
\end{figure}

\subsection{Zero modes of the chiral model}

We now want to investigate zero modes of $H^\alpha$ in detail. When $\alpha = 0$, there are exactly four zero modes given by ${e}_j , j = 1,2,3,4$ where ${e}_j$ equals $1$ in its $j$th entry and $0$ in its other entries. It is easy to check that
\begin{equation} \label{eq:0_zero_modes}
    e_1 \in L^2_{K,1}, \quad e_2 \in L^2_{K',1}, \quad e_3 \in L^2_{K,\omega^*}, \quad e_4 \in L^2_{K',\omega^*},
\end{equation}
and hence $0$ is a simple eigenvalue of $H^\alpha$ when restricted to each of these subspaces. Recall that zero modes can always be chosen as eigenfunctions of $\mathcal{S}$, and indeed we have
\begin{equation} \label{eq:0_zero_modes_S}
    e_1 \in L^2_{K,1,1}, \quad e_2 \in L^2_{K',1,1}, \quad e_3 \in L^2_{K,\omega^*,-1}, \quad e_4 \in L^2_{K',\omega^*,-1}.
\end{equation}
We now claim that these zero modes persist for all $\alpha$. This was already established by TKV \cite{Tarnopolsky2019}, and the following proposition is also similar to Proposition 3.1 of Becker et al.\cite{becker2020mathematics}. We re-state it using our notation and give the proof for completeness.
\begin{proposition} \label{prop:analytic_zeromodes}
There exist smooth functions $\psi^\alpha$ with $\| \psi^\alpha \| = 1$ in each of the spaces $L^2_{K,1,1}$, $L^2_{K',1,1}$, $L^2_{K,\omega^*,-1}$, $L^2_{K',\omega^*,-1}$ such that $\psi^0$ is as in \eqref{eq:0_zero_modes}, $\alpha \mapsto \psi^\alpha$ is real-analytic, and $H^\alpha \psi^\alpha = 0$ for all $\alpha$. The dimension of $\ker H^\alpha$ restricted to each of the spaces $L^2_{K,1}$, $L^2_{K',1}$, $L^2_{K,\omega^*}$, $L^2_{K,\omega^*}$ is always odd-dimensional.
\end{proposition}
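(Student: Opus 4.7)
The plan is to combine Kato's analytic perturbation theory with a Fredholm index argument exploiting the chiral symmetry $\mathcal{S}$. I would work on $L^2_{K,1}$; the other three subspaces are treated identically. Since the previous proposition gives that $H^\alpha|_{L^2_{K,1}}$ is a self-adjoint holomorphic family of type (A) with compact resolvent, and since $0$ is a simple eigenvalue of $H^0|_{L^2_{K,1}}$ with eigenvector $e_1$, Kato's analytic perturbation theory (\cite{Kato}, Chapters~II.6 and~VII) furnishes a unique real-analytic family $\psi^\alpha$ with $\psi^0 = e_1$, $\|\psi^\alpha\| = 1$, and $H^\alpha \psi^\alpha = \lambda(\alpha)\psi^\alpha$ for $\alpha$ near $0$, where $\lambda(\alpha)$ is real-analytic with $\lambda(0) = 0$.

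To show $\lambda(\alpha) \equiv 0$ and that $\psi^\alpha$ stays in $L^2_{K,1,1}$ throughout, I would apply the chiral symmetry \eqref{eq:chiral_sym}: $\mathcal{S}\psi^\alpha$ is an eigenvector of $H^\alpha$ with eigenvalue $-\lambda(\alpha)$. Since $\mathcal{S}\psi^0 = \psi^0$ and $\psi^\alpha$ is continuous, $\langle \psi^\alpha, \mathcal{S}\psi^\alpha\rangle \to 1$ as $\alpha \to 0$; but if $\lambda(\alpha) \neq 0$ then self-adjointness forces $\mathcal{S}\psi^\alpha \perp \psi^\alpha$, giving a contradiction. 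Hence $\lambda(\alpha) = 0$ and $\mathcal{S}\psi^\alpha = \psi^\alpha$ in a neighborhood of $0$, and the identity theorem for real-analytic functions extends both identities to all real $\alpha$. Global real-analytic continuation of the family $\psi^\alpha$ to all $\alpha \geq 0$ follows from the global form of Rellich's theorem for self-adjoint analytic families (\cite{Kato}, Chapter~VII.3), and elliptic regularity gives spatial smoothness.

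For the parity claim on $\dim\ker H^\alpha|_{L^2_{K,1}}$, I would use the chirality splitting $L^2_{K,1} = L^2_{K,1,1} \oplus L^2_{K,1,-1}$ to write $H^\alpha|_{L^2_{K,1}} = \bigl(\begin{smallmatrix} 0 & B^\alpha \\ (B^\alpha)^* & 0 \end{smallmatrix}\bigr)$ with Fredholm $B^\alpha: L^2_{K,1,-1} \to L^2_{K,1,1}$. Since $\alpha H^1$ is relatively compact with respect to $H^0$, the Fredholm index is invariant: $\mathrm{ind}(B^\alpha) = \mathrm{ind}(B^0) = -1$, using $\ker(B^0)^* = \mathrm{span}(e_1)$ and $\ker B^0 = \{0\}$. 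Therefore $\dim\ker H^\alpha|_{L^2_{K,1}} = 2\dim\ker B^\alpha + 1$ is always odd. The remaining three subspaces are handled identically with $e_2, e_3, e_4$ replacing $e_1$; for the $\omega^*$ subspaces the index flips sign to $+1$ because the starting zero mode has chirality $-1$, consistent with the chirality assignment in the proposition. The main obstacle I anticipate is verifying that $\ker B^0 = \{0\}$ in each subspace -- equivalently, that $e_j$ is the unique zero mode of $H^0$ in the corresponding symmetry subspace -- which reduces to showing that no nonconstant (anti)holomorphic function on $\Omega$ satisfies the prescribed quasi-periodicity and rotation-invariance conditions; this follows from Liouville on the torus together with direct elimination of the remaining nontrivial constant candidates via the rotation eigenvalue.
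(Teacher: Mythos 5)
Your proposal is correct, and while the backbone (Kato's theory for self-adjoint holomorphic families of type (A) with compact resolvent, plus the chiral symmetry $\mathcal{S}$) is the same as the paper's, you reach the two key conclusions by genuinely different mechanisms. To show the analytic eigenbranch through $e_1$ stays at zero, the paper argues directly that real-analyticity prevents nonzero eigenvalue branches from vanishing except at isolated points and continues the zero mode through crossings; you instead observe that $\mathcal{S}\psi^\alpha$ is an eigenvector for $-\lambda(\alpha)$, so $\lambda(\alpha)\neq 0$ would force $\langle\psi^\alpha,\mathcal{S}\psi^\alpha\rangle=0$, contradicting $\langle\psi^0,\mathcal{S}\psi^0\rangle=1$ near $\alpha=0$, and then invoke the identity theorem — this simultaneously pins the branch to zero and locks in the $\mathcal{S}$-eigenvalue, which the paper handles by a separate continuity remark. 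For the parity of $\dim\ker H^\alpha$, the paper counts: the spectrum is symmetric about $0$, so eigenvalues can only cross zero in $\pm$ pairs and the kernel dimension at a crossing must be odd. Your Fredholm index argument ($H^\alpha$ off-diagonal in the chirality splitting, $\mathrm{ind}(B^\alpha)=\mathrm{ind}(B^0)=\mp 1$ under the relatively compact perturbation $\alpha H^1$, so $\dim\ker H^\alpha=2\dim\ker B^\alpha+1$) is cleaner and more robust: it gives the parity at every fixed $\alpha$ without tracking spectral flow, at the cost of needing $B^\alpha$ Fredholm with the stated index at $\alpha=0$. The one input you flag as an obstacle — that $\ker H^0$ restricted to each symmetry subspace is exactly $\mathrm{span}(e_j)$ — is established in the paper by explicit Fourier diagonalization of $H^0$ (its Proposition on the $L^2_{K,1}$ spectrum), which is simpler than, though equivalent to, your Liouville-on-the-torus sketch.
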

\begin{proof}

Since $\mathcal{S}$ preserves each of the spaces $L^2_{K,1}$, $L^2_{K',1}$, $L^2_{K,\omega^*}$, $L^2_{K,\omega^*}$ and anti-commutes with $H^\alpha$, the spectrum of $H^\alpha$ restricted to each space must be symmetric about $0$ for all $\alpha$. Since $H^\alpha$ restricted to each space has compact resolvent and $H^\alpha$ is a holomorphic family of type (A), the spectrum of $H^\alpha$ consists of finitely-degenerate isolated eigenvalues depending real-analytically on $\alpha$, with associated eigenfunctions also depending real-analytically on $\alpha$ (although the real-analytic choice of eigenfunction at an eigenvalue crossing may not respect ordering); see Theorem 3.9 of Chapter 7 of Kato\cite{Kato}. The null space of $H^\alpha$ in each of the spaces is one-dimensional at $\alpha = 0$ by explicit calculation, with the zero modes given by (up to non-zero constants) \eqref{eq:0_zero_modes}. For small $\alpha > 0$, real-analyticity and the chiral symmetry force the null space to remain simple and it is clear how to define $\psi^\alpha$. For large $\alpha > 0$, the non-zero eigenvalues of $H^\alpha$ may cross $0$ at isolated values of $\alpha$, and in this case we define $\psi^\alpha$ to be the real-analytic continuation of the zero mode through the crossings. Note that real-analyticity prevents non-zero eigenvalues from equalling zero except at isolated points, so that the real-analytic continuation of the zero mode through the crossing must indeed be a zero mode. At crossings, the null space must be odd-dimensional in order to preserve symmetry of the spectrum of $H^\alpha$ about $0$. It remains to check that if $\psi^0$ is in, say, $L^2_{K,1,1}$, then $\psi^\alpha$ must remain in $L^2_{K,1,1}$ for all $\alpha > 0$. But this must hold because the $\mathcal{S}$-eigenvalue of $\psi^\alpha$ cannot change abruptly while preserving real-analyticity. Smoothness of $\psi^\alpha$ follows from elliptic regularity.
\end{proof}

In this work we will restrict attention to the moir\'e $K$ point, and especially the family $\psi^\alpha \in L^2_{K,1,1}$. We expect that our analysis would go through with only minor modifications if we considered instead the moir\'e $K'$ point. The zero modes in $L^2_{K,1,1}$ and $L^2_{K,\omega^*,-1}$ are related by the following symmetry.

\begin{proposition} \label{prop:K_symmetry}
Let $\psi^\alpha_{1}$ and $\psi^\alpha_{-1}$ denote the zero modes of $H^\alpha$ in the spaces $L^2_{K,1,1}$ and $L^2_{K,\omega^*,-1}$ respectively. Then $\psi^\alpha_1 = ( \Phi^\alpha , 0 )^\top$ where $\Phi^\alpha \in L^2(\Omega;\field{C}^2)$, $\Phi^\alpha(\vec{r} + \vec{v}) = \diag(1,e^{i \vec{q}_1 \cdot \vec{v}}) \Phi^\alpha(\vec{r})$ for all $\vec{v} \in \Lambda$, $\Phi^\alpha(R_\phi \vec{r}) = \Phi^\alpha(\vec{r})$. Up to gauge transformations $\psi^\alpha_{-1} \mapsto e^{i \phi(\alpha)} \psi^\alpha_{-1}$ which preserve real-analyticity of $\psi^\alpha_{-1}$, we have $\psi^\alpha_{-1}(\vec{r}) = ( 0, \Phi^{\alpha *}(-\vec{r}) )^\top$.
\end{proposition}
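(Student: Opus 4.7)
The plan is to construct $\psi^\alpha_{-1}$ explicitly from $\psi^\alpha_1$ by composing pointwise complex conjugation with spatial inversion $\vec r \mapsto -\vec r$, and then invoke the simplicity of the zero eigenspace of $H^\alpha$ in $L^2_{K,\omega^*,-1}$ provided by Proposition~\ref{prop:analytic_zeromodes} to match the construction with $\psi^\alpha_{-1}$ up to a real-analytic gauge phase. First I would extract the structure $\psi^\alpha_1 = (\Phi^\alpha, 0)^\top$ by unpacking definitions: membership in the $+1$ eigenspace of $\mathcal S = \diag(1,1,-1,-1)$ forces the last two entries of $\psi^\alpha_1$ to vanish (cf.\ Remark~\ref{rem:A_B_sites}), and the quasi-periodicity and rotation invariance of $\Phi^\alpha$ are obtained by restricting the defining relations of $L^2_K$ and $\mathcal R \psi^\alpha_1 = \psi^\alpha_1$ to the upper two components, on which the $\mathcal R$ prefactor $\diag(1,1)$ is trivial.

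Next I would define the candidate $\tilde\psi^\alpha(\vec r) := (0, \Phi^{\alpha*}(-\vec r))^\top$ and verify $\tilde\psi^\alpha \in L^2_{K,\omega^*,-1}$ via three quick symmetry checks. The relation $\mathcal S \tilde\psi^\alpha = -\tilde\psi^\alpha$ is immediate from the vanishing top components; the $L^2_K$ quasi-periodicity of $\tilde\psi^\alpha$ reduces to that of $\Phi^\alpha$ after substituting $-\vec v$ for $\vec v$ and taking complex conjugates, with the factor $e^{-i\vec q_1 \cdot \vec v}$ becoming $e^{i\vec q_1 \cdot \vec v}$ under conjugation; and $\mathcal R \tilde\psi^\alpha = \omega^* \tilde\psi^\alpha$ follows from $\Phi^\alpha(R_\phi\vec r) = \Phi^\alpha(\vec r)$ (applied at $-\vec r$, using linearity of $R_\phi$) together with the $e^{-i\phi}$ prefactor on the bottom two components of $\mathcal R$.

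The main computation is showing that $\tilde\psi^\alpha$ lies in $\ker H^\alpha$. Using the block off-diagonal form of $H^\alpha$, this amounts to $D^{\alpha\dagger} \chi^\alpha = 0$ with $\chi^\alpha(\vec r) := \Phi^{\alpha*}(-\vec r)$, and I would capture it by the intertwining identity
\begin{equation}
  D^{\alpha\dagger}\bigl(\Phi^{*}(-\vec r)\bigr) = (D^\alpha \Phi)^{*}(-\vec r),
\end{equation}
which immediately yields the conclusion from $D^\alpha \Phi^\alpha = 0$. The identity is a short direct calculation resting on three ingredients: the formal adjoint relation $(-2i\overline{\de})^\dagger = -2i \de$ (from integration by parts); the chain-rule/conjugation identity $\de ( f^{*}(-\vec r) ) = -(\overline{\de} f)^{*}(-\vec r)$; and the observation that the off-diagonal entries $\overline{U(-\vec r)}, \overline{U(\vec r)}$ of $D^{\alpha\dagger}$ exactly match the complex conjugates of the entries of $D^\alpha$ evaluated at $-\vec r$. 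If there is any subtle point in the proof it is here, in bookkeeping how $\overline{\de}$ and the potentials $U(\pm \vec r)$ transform under the combined action of complex conjugation and spatial inversion; but once the chain-rule identity is in hand, verifying the intertwining relation is a one-line exercise.

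To conclude, a change of variables $\vec r \mapsto -\vec r$ gives $\| \tilde\psi^\alpha \|_{L^2_K} = \| \Phi^\alpha \| = 1$ (valid because $\Omega$ is inversion-invariant modulo $\Lambda$), and real-analyticity of $\alpha \mapsto \tilde\psi^\alpha$ for real $\alpha$ is preserved by the operation $\Phi \mapsto \Phi^{*}(-\vec r)$. Proposition~\ref{prop:analytic_zeromodes} supplies a real-analytic unit-norm generator $\psi^\alpha_{-1}$ of the one-dimensional zero eigenspace of $H^\alpha$ in $L^2_{K,\omega^*,-1}$ (with real-analytic continuation through any isolated crossings), so $\tilde\psi^\alpha = e^{i \phi(\alpha)} \psi^\alpha_{-1}$ for some real-analytic unimodular phase $e^{i\phi(\alpha)}$, and absorbing this phase into the gauge of $\psi^\alpha_{-1}$ yields the stated equality.
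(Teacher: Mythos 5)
Your proposal is correct and follows essentially the same route as the paper: extract $\psi^\alpha_1 = (\Phi^\alpha,0)^\top$ from the chiral symmetry, check directly that $(0,\Phi^{\alpha*}(-\vec{r}))^\top \in L^2_{K,\omega^*,-1}$, verify it is a zero mode via the intertwining of $D^{\alpha\dagger}$ with conjugation-plus-inversion (the paper's ``simple manipulation,'' which you correctly expand), and then identify it with $\psi^\alpha_{-1}$ up to a real-analytic phase using simplicity of the kernel away from isolated crossings together with real-analytic continuation. The only cosmetic difference is that the paper anchors the final identification at $\alpha=0$, where the zero modes are explicit, whereas you anchor it at generic $\alpha$ where the kernel is one-dimensional; both are valid given Proposition \ref{prop:analytic_zeromodes}.
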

\begin{proof}
Since $\mathcal{S} \psi^\alpha_1 = \psi^\alpha_1$, the last two entries of $\psi^\alpha_1$ must vanish, so we can write $\psi^\alpha_1 = ( \Phi^\alpha , 0 )^\top$. That $\Phi^\alpha$ satisfies the stated symmetries follows immediately from $\psi^\alpha_1 \in L^2_{K,1}$. It is straightforward to check using the definitions of $\mathcal{R}$ and $\tau_{\vec{v}}$ that $( 0, \Phi^{\alpha *}(-\vec{r}) )^\top \in L^2_{K,\omega^*,-1}$. To see that $( 0, \Phi^{\alpha *}(-\vec{r}) )^\top$ is a zero mode, note that $\Phi^\alpha$ satisfies $D^\alpha \Phi^\alpha = 0$, which implies that $D^{\alpha \dagger} \Phi^{\alpha *}(-\vec{r}) = 0$ by a simple manipulation. To see that $\psi^\alpha_{-1}(\vec{r}) = ( 0, \Phi^{\alpha *}(-\vec{r}) )^\top$ (up to real-analytic gauge transformations) for all $\alpha$, note first that this clearly holds for $\alpha = 0$ (the zero modes are explicit \eqref{eq:0_zero_modes_S}). For $\alpha > 0$, the identity must continue to hold by uniqueness (up to real-analytic gauge transformations) of the real-analytic continuation of $\psi^\alpha_{-1}$ starting from $\alpha = 0$ and continuing first along the non-zero interval where $\psi^\alpha_1$ is non-degenerate in $L^2_{K,1,1}$, and then through eigenvalue crossings as in the proof of Proposition \ref{prop:analytic_zeromodes}.
\end{proof}

In Appendix \ref{sec:K_Dirac} we use Proposition \ref{prop:K_symmetry} to derive the effective Dirac operator with $\alpha$-dependent Fermi velocity which controls the Bloch band structure in a neighborhood of the moir\'e $K$ point. The Fermi velocity of the effective Dirac operator is given by the following. Note that we drop the subscript $+1$ when referring to the zero mode of $H^\alpha$ in $L^2_{K,1,1}$ since the zero mode of $H^\alpha$ in $L^2_{K,\omega^*,-1}$ plays no further role.
\begin{definition} \label{def:Fermi_v}
Let $\psi^\alpha \in L^2_{K,1,1}$ be as in Proposition \ref{prop:analytic_zeromodes}. Then we define
\begin{equation} \label{eq:Fermi_v_2}
    v(\alpha) := \frac{ | \ip{\psi^{\alpha*}(-\vec{r})}{\psi^\alpha(\vec{r})} | }{ | \ip{\psi^\alpha}{\psi^\alpha} | }
\end{equation}
where $\ip{.}{.}$ denotes the $L^2_K$ inner product.
\end{definition}

\section{Rigorous justification of TKV's expansion of the Fermi velocity} \label{sec:rigorous_expansion}

\subsection{Alternative formulation of TKV's expansion}

We now turn to approximating the zero mode $\psi^\alpha \in L^2_{K,1,1}$ by a series expansion in powers of $\alpha$. We write $H^\alpha = H^0 + \alpha H^1$ and formally expand $\psi^\alpha$ as a series
\begin{equation} \label{eq:psi_series}
    \psi^\alpha = \Psi^0 + \alpha \Psi^1 + ...
\end{equation}
where $H^0 \Psi^0 = 0$, and
\begin{equation} \label{eq:series_eqs}
    H^0 \Psi^n = - H^1 \Psi^{n-1}
\end{equation}
for all $n \geq 1$. To solve $H^0 \Psi^0 = 0$ we take $\Psi^0 = e_1$. We prove the following in Appendix \ref{sec:TKV_expansion}.
\begin{proposition} \label{prop:series_prop}
Let $P$ denote the projection operator in $L^2_{K,1}$ onto $e_1$, and $P^\perp = I - P$. The sequence of equations \eqref{eq:series_eqs} has a unique solution such that $\Psi^n \in L^2_{K,1,1}$ for all $n \geq 0$ and $P \Psi^n = 0$ for all $n \geq 1$ given by $\Psi^0 = e_1$ and
\begin{equation} \label{eq:H0_inv}
    \Psi^n = - P^\perp (H^{0})^{-1} P^\perp H^1 \Psi^{n-1}
\end{equation}
for each $n \geq 1$.
\end{proposition}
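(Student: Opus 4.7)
The plan is to prove Proposition~\ref{prop:series_prop} by induction on $n$, leveraging the fact that the chiral symmetry $\mathcal{S}$ gives $H^0$ an off-diagonal block structure with respect to the decomposition $L^2_{K,1} = L^2_{K,1,1} \oplus L^2_{K,1,-1}$. The base case is immediate: $\Psi^0 = e_1 \in L^2_{K,1,1}$ satisfies $H^0 \Psi^0 = 0$ by direct computation, matching \eqref{eq:0_zero_modes_S}, and the condition $P\Psi^n = 0$ is only imposed for $n \geq 1$.

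For the inductive step I would assume $\Psi^{n-1} \in L^2_{K,1,1}$ has already been produced and first record a grading property for the right-hand side $-H^1 \Psi^{n-1}$. Writing $H^\alpha = H^0 + \alpha H^1$ and subtracting the $\alpha = 0$ identities from the full identities $H^\alpha \mathcal{R} = \mathcal{R} H^\alpha$ and $\mathcal{S} H^\alpha = -H^\alpha \mathcal{S}$ shows that $H^1$ also commutes with $\mathcal{R}$ and anticommutes with $\mathcal{S}$. Hence $H^1$ preserves $L^2_{K,1}$ and swaps the two $\mathcal{S}$-eigenspaces, so $H^1 \Psi^{n-1} \in L^2_{K,1,-1}$. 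In particular, $P H^1 \Psi^{n-1} = 0$ because $e_1 \in L^2_{K,1,1}$ is orthogonal to $L^2_{K,1,-1}$, so the outer $P^\perp$ in the formula acts trivially on this factor.

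Next I would apply the Fredholm alternative. The compact-resolvent property of $H^0$ on $L^2_{K,1}$ implies that $H^0$ has closed range, so together with self-adjointness $\ran(H^0|_{L^2_{K,1}}) = \ker(H^0|_{L^2_{K,1}})^\perp = \{e_1\}^\perp$. Since $-H^1 \Psi^{n-1}$ lies in $\{e_1\}^\perp$, the equation $H^0 \Psi^n = -H^1 \Psi^{n-1}$ has a solution in $L^2_{K,1}$, unique modulo $\mathbb{C}\,e_1$. To promote the solution to $L^2_{K,1,1}$, decompose any candidate $\Psi^n = \Psi^n_+ + \Psi^n_-$ with $\Psi^n_\pm \in L^2_{K,1,\pm 1}$; because $H^0$ is off-diagonal with respect to $\mathcal{S}$, the identity $H^0 \Psi^n = -H^1 \Psi^{n-1} \in L^2_{K,1,-1}$ forces $H^0 \Psi^n_- = 0$. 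But $\ker H^0 \cap L^2_{K,1,-1} = \{0\}$ (the kernel of $H^0|_{L^2_{K,1}}$ is spanned by $e_1 \in L^2_{K,1,1}$), so $\Psi^n_- = 0$ and $\Psi^n \in L^2_{K,1,1}$. Uniqueness is finally enforced by the normalization $P\Psi^n = 0$, and the resulting formula is precisely $\Psi^n = -P^\perp (H^0)^{-1} P^\perp H^1 \Psi^{n-1}$, where $(H^0)^{-1}$ is read as the bounded pseudo-inverse of $H^0$ restricted to $P^\perp L^2_{K,1}$; boundedness of this inverse follows because the nonzero spectrum of $H^0|_{L^2_{K,1}}$ is bounded away from zero by the compact-resolvent property.

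The main subtlety I anticipate is organizing the interplay between the two symmetry gradings cleanly: $\mathcal{R}$ is what keeps everything inside $L^2_{K,1}$, while $\mathcal{S}$ is what both makes the solvability condition automatic (via orthogonality of $L^2_{K,1,-1}$ to $e_1$) and forces the unique $P$-orthogonal solution to lie in the correct $\mathcal{S}$-eigenspace $L^2_{K,1,1}$. Once these two grading statements are in place, nothing beyond the self-adjoint Fredholm alternative is needed, so the remaining work is essentially bookkeeping.
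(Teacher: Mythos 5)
Your proof is correct, and its skeleton is the same as the paper's: solvability of $H^0\Psi^n=-H^1\Psi^{n-1}$ follows because the right-hand side lies in $L^2_{K,1,-1}$ and is therefore orthogonal to the kernel of $H^0$ in $L^2_{K,1}$ (spanned by $e_1$), the general solution is determined modulo $\mathbb{C}\,e_1$, and the normalization $P\Psi^n=0$ fixes the constant and yields \eqref{eq:H0_inv}. The one genuine difference is in how you obtain the grading facts: the paper reads them off from the explicit chiral-basis computations (Propositions \ref{prop:H1_L2K1} and \ref{prop:H0_prop}, which show $H^1$ and $-P^\perp(H^0)^{-1}P^\perp H^1$ act by nearest-neighbor hopping between the $A$ and $B$ momentum sublattices), whereas you derive that $H^1$ commutes with $\mathcal{R}$ and anticommutes with $\mathcal{S}$ abstractly by subtracting the $\alpha=0$ case from the symmetry identities for $H^\alpha$, and then use the off-diagonal block structure of $H^0$ with respect to the $\mathcal{S}$-grading to force $\Psi^n_-=0$. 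Your route is more self-contained and would survive a change of interlayer potential, while the paper's is free of extra work because it needs the explicit basis computations anyway for the later quantitative estimates; both correctly invoke the self-adjoint Fredholm alternative (closed range from the compact resolvent) and the fact that $\ker H^0\cap L^2_{K,1,-1}=\{0\}$.
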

The expansion \eqref{eq:psi_series} appears different from the series studied by TKV, since we work only with the self-adjoint operators $H^0, H^1$, and $H^\alpha$ rather than the non-self-adjoint operator $D^\alpha$ (defined in \eqref{eq:chiral_H}). Since functions in $L^2_{K,1,1}$ vanish in their last two components, there is no practical difference. However, working with only self-adjoint operators allows us to use the spectral theorem, which greatly simplifies the error analysis. We compute the first eight terms in expansion \eqref{eq:psi_series} in Proposition \ref{prop:psi_alpha_expansion} after developing some necessary machinery in Appendix \ref{sec:chiral}.


\subsection{Rigorous error estimates for the expansion of the moir\'e $K$ point Bloch function}

In this section we explain the essential challenge in proving error estimates for the series \eqref{eq:psi_series} and explain how we overcome this challenge. Our goal is to prove the following. 
\begin{theorem} \label{th:error_theorem_2}
Let $\psi^\alpha \in L^2_{K,1,1}$ be as in Proposition \ref{prop:analytic_zeromodes}. Then 
\begin{equation} \label{eq:expansion_2}
    \psi^\alpha = \sum_{n = 1}^8 \alpha^n \Psi^n + \eta^\alpha
\end{equation}
where $\eta^\alpha \perp \sum_{n = 1}^8 \alpha^n \Psi^n$ with respect to the $L^2_K$ inner product, and
\begin{equation}\label{eq:ub_2}
    \| \eta^\alpha \|_{L^2_{K,1}} \leq \frac{ 3 \alpha^9 }{ 15 - 20 \alpha }
    \quad \text{for all}\quad 0 \leq \alpha \leq \frac{7}{10}.
\end{equation}
\end{theorem}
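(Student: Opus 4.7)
The strategy is to convert the eigenvalue equation $H^\alpha \psi^\alpha = 0$ into a residual equation for $\eta^\alpha$ and then invert $H^\alpha$ with quantitative control. Substituting the ansatz \eqref{eq:expansion_2} and using $H^0 \Psi^0 = 0$ together with the recursion $H^0 \Psi^n = -H^1 \Psi^{n-1}$ from Proposition \ref{prop:series_prop}, every power $\alpha^k$ with $k \leq 8$ telescopes in $H^\alpha \sum_{n=0}^{8} \alpha^n \Psi^n$, leaving
\begin{equation*}
H^\alpha \eta^\alpha = -\alpha^9 H^1 \Psi^8.
\end{equation*}
The right-hand side is an explicitly bounded source: Proposition \ref{prop:Psi_n_norms} controls $\|\Psi^8\|$, and $\|H^1\|_{\mathrm{op}}$ is bounded by a computable constant since $H^1$ is a sum of a few unit-modulus shift operators in momentum space. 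The imposed orthogonality condition, together with the normalization of $\psi^\alpha$, pins the component of $\eta^\alpha$ along the one-dimensional kernel of $H^\alpha$ and renders $\eta^\alpha$ uniquely determined.

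The principal obstacle is that a naive Neumann-series inversion of the form $(H^\alpha)^{-1} \sim (H^0)^{-1}\sum_k (-\alpha (H^0)^{-1} P^\perp H^1)^k$ converges only for $\alpha < 1/3$ (this is the easy regime handled by Proposition \ref{prop:series_convergence}), while the target interval $[0,\tfrac{7}{10}]$ reaches past the expected magic angle $\alpha \approx 1/\sqrt{3}$. I would get past this by splitting the Hilbert space into a finite-dimensional low-frequency subspace $\Xi$ (chosen as in Proposition \ref{prop:mu_choice}, so that $\Xi$ is spanned by eigenvectors of $H^0$ and $\|P_\Xi H^1 P_\Xi^\perp\| = 1$ via the nearest-neighbor-in-momentum structure of $H^1$ noted in Remark \ref{rem:other_potentials}) and its orthogonal complement $\Xi^\perp$. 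On $\Xi^\perp$, the restriction of $H^0$ is bounded below by some gap $\mu$ whose size is controlled by the choice of $\Xi$. Because $H^1$ is bounded and $\|P_\Xi^\perp H^1 P_\Xi^\perp\|$ is controlled, a Neumann series against $(H^0|_{\Xi^\perp})^{-1}$ gives that $P_\Xi^\perp H^\alpha P_\Xi^\perp$ is boundedly invertible on $\Xi^\perp$ uniformly over $\alpha \in [0,\tfrac{7}{10}]$, with an explicit constant.

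With off-$\Xi$ invertibility in hand, I would apply a Schur complement to decompose $H^\alpha$ along $\Xi \oplus \Xi^\perp$, reducing the residual equation to a finite-dimensional equation for $P_\Xi \eta^\alpha$ governed by the effective operator
\begin{equation*}
M(\alpha) := P_\Xi H^\alpha P_\Xi - \alpha^2 P_\Xi H^1 P_\Xi^\perp \bigl(P_\Xi^\perp H^\alpha P_\Xi^\perp\bigr)^{-1} P_\Xi^\perp H^1 P_\Xi.
\end{equation*}
The Schur correction term is $O(\alpha^2/\mu)$ and hence a small perturbation on $[0,\tfrac{7}{10}]$. Inverting $M(\alpha)$ on the orthogonal complement of $P_\Xi \psi^\alpha$ is exactly where Proposition \ref{as:PHP_gap} enters: it provides, via a rigorously controlled numerical diagonalization, a quantitative lower bound on the smallest non-zero eigenvalue modulus of the $80 \times 80$ Hermitian matrix $P_\Xi H^\alpha P_\Xi$, and this bound is large enough to absorb the Schur correction. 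Bounding the inverse of $M(\alpha)$ on the relevant subspace, multiplying by the source norm $\alpha^9 \|H^1 \Psi^8\|$, and reassembling $\eta^\alpha = P_\Xi \eta^\alpha + P_\Xi^\perp \eta^\alpha$ yields the claimed estimate
\begin{equation*}
\|\eta^\alpha\|_{L^2_{K,1}} \leq \frac{3 \alpha^9}{15 - 20\alpha},
\end{equation*}
in which the affine denominator encodes the optimized combined gap from both regimes and the numerator tracks $\|H^1\Psi^8\|$. The single hardest step is the verification of Proposition \ref{as:PHP_gap} together with rigorous round-off bounds on the $80\times 80$ diagonalization; once that input is in hand, the remainder is careful bookkeeping within standard self-adjoint perturbation theory.
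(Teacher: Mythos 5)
Your proposal is correct in outline and follows the same architecture as the paper: reduce to the residual equation $Q^{\alpha,\perp}H^\alpha Q^{\alpha,\perp}\eta^\alpha = -\alpha^9 Q^{\alpha,\perp}H^1\Psi^8$, split $L^2_{K,1}$ into the finite-dimensional subspace $\ran P_\Xi$ of Proposition \ref{prop:mu_choice} and its complement, use the spectral gap $\mu-3\alpha$ of $P_\Xi^\perp H^\alpha P_\Xi^\perp$ on the complement, invoke the computer-assisted gap $g^\alpha\geq\frac34$ of Proposition \ref{as:PHP_gap} on the finite-dimensional block, and close with $\|H^1\Psi^8\|\leq\frac{3}{20}$ (Proposition \ref{prop:bound_H1psi8}). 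The one genuine difference is how you combine the two blocks: you propose a Schur complement, absorbing the cross-coupling into an effective operator $M(\alpha)$ with a correction of size $\alpha^2\|P_\Xi H^1P_\Xi^\perp\|^2/(\mu-3\alpha)\approx 0.1$, whereas the paper's Lemma \ref{lem:decompose} avoids inverting anything off $\Xi$ and instead bounds the full operator below directly via the reverse triangle inequality and Pythagoras, paying the off-diagonal coupling as an \emph{additive} loss $\alpha\|Q^{\alpha,\perp}P_\Xi H^1P_\Xi^\perp\|\leq\alpha$, giving the denominator $\frac34-\alpha$. Your route, carried out honestly, would actually yield a \emph{better} constant near $\alpha=\frac{7}{10}$ (roughly $\frac34-\frac{\alpha^2}{\mu-3\alpha}\approx 0.65$ versus the paper's $0.05$), so your claim that the Schur route reproduces the exact denominator $15-20\alpha$ is reverse-engineered rather than derived; that is harmless since a stronger bound implies the stated one, but you would still need to track the $P_\Xi^\perp$-component of both the source $H^1\Psi^8$ and of $\eta^\alpha$ when reassembling, which your sketch leaves implicit. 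Two small imprecisions: the orthogonality constraint is against $\psi^{8,\alpha}=\sum_{n=0}^8\alpha^n\Psi^n$ (which spans $\ran Q^\alpha$), not against the true zero mode $\psi^\alpha$, so $M(\alpha)$ must be inverted on the orthogonal complement of $\psi^{8,\alpha}$ in $\ran P_\Xi$; and the matrix whose gap Proposition \ref{as:PHP_gap} controls is $Q^{\alpha,\perp}P_\Xi H^\alpha P_\Xi Q^{\alpha,\perp}$ on that $80$-dimensional subspace, not $P_\Xi H^\alpha P_\Xi$ itself.
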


Proposition \ref{prop:series_prop} guarantees that the series \eqref{eq:psi_series} is well-defined up to arbitrarily many terms. A straightforward bound on the growth of terms in the series comes from the following proposition.
\begin{proposition} \label{prop:H1H0bound}
The spectrum of $H^0$ in $L^2_{K,1}$ is
\begin{equation}
    \sigma_{L^2_{K,1}}(H^0) = \{ \pm |\vec{G}|, \pm |\vec{q}_1 + \vec{G}| : \vec{G} \in \Lambda^* \}
\end{equation}
and hence
\begin{equation} \label{eq:H0_worst}
    \| P^\perp (H^0)^{-1} P^\perp \|_{L^2_{K,1} \rightarrow L^2_{K,1}} = 1.
\end{equation}
We also have
\begin{equation} \label{eq:H1_worst}
    \| H^1 \|_{L^2_{K,1} \rightarrow L^2_{K,1}} = 3.
\end{equation}
\end{proposition}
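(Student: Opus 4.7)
The plan is to Fourier-diagonalize $H^0$ on $L^2_{K,1}$, read off the spectrum and resolvent norm directly from the resulting explicit eigenvalues, and handle $H^1$ by recognizing that $(H^1)^2$ reduces to a diagonal multiplication operator.

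For the spectrum, by \eqref{eq:L2k_space} at $\vec{k}=\vec{0}$, a function $\psi\in L^2_K$ has its first and third entries $\Lambda$-periodic, so these expand in $e^{i\vec{G}\cdot\vec{r}}$ for $\vec{G}\in\Lambda^*$, while its second and fourth entries pick up the phase $e^{i\vec{q}_1\cdot\vec{v}}$ under translations and so expand in $e^{i(\vec{q}_1+\vec{G})\cdot\vec{r}}$. The operator $-2i\overline{\de}$ acts on $e^{i\vec{k}\cdot\vec{r}}$ by multiplication by $k_x+ik_y$, and its adjoint $-2i\de$ by $k_x-ik_y$. Since $D^0$ is diagonal in spinor indices, $H^0$ decouples in the Fourier basis into independent $2\times 2$ blocks $\bigl(\begin{smallmatrix}0 & k_x-ik_y \\ k_x+ik_y & 0\end{smallmatrix}\bigr)$ with eigenvalues $\pm|\vec{k}|$, at $\vec{k}=\vec{G}$ (coupling components $1,3$) or $\vec{k}=\vec{q}_1+\vec{G}$ (coupling components $2,4$). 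This yields the claimed spectrum on $L^2_K$; restricting to the $\mathcal{R}$-invariant subspace $L^2_{K,1}$ does not remove any spectral value, because $R_\phi^*$ acts on $\Lambda^*$ by a rotation preserving magnitudes, and within each $R_\phi^*$-orbit one can form an $\mathcal{R}$-eigenvector of eigenvalue $1$.

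For \eqref{eq:H0_worst}, the kernel of $H^0$ on $L^2_{K,1}$ comes only from $\vec{G}=\vec{0}$ in the first pair of blocks and equals $\mathrm{span}(e_1)$, since $e_3$ lies in $L^2_{K,\omega^*}$ by \eqref{eq:0_zero_modes} and the other zero modes $e_2,e_4$ lie in $L^2_{K'}$ rather than $L^2_K$. Hence $P$ projects onto this one-dimensional kernel, and $\|P^\perp(H^0)^{-1}P^\perp\|$ equals the reciprocal of the smallest nonzero spectral magnitude realized in $L^2_{K,1}$, which is $\min\{|\vec{b}_1|,|\vec{q}_1|\}=\min\{\sqrt{3},1\}=1$.

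For \eqref{eq:H1_worst}, write $H^1=\bigl(\begin{smallmatrix}0 & M^\dagger \\ M & 0\end{smallmatrix}\bigr)$ with $M(\vec{r})=\bigl(\begin{smallmatrix}0 & U(\vec{r}) \\ U(-\vec{r}) & 0\end{smallmatrix}\bigr)$. A direct computation gives $M^\dagger M=\diag(|U(-\vec{r})|^2,|U(\vec{r})|^2)$ as a pointwise multiplication operator, so $\|H^1\|=\|M\|=\|U\|_\infty$. The triangle inequality gives $|U|\le 3$, and equality is attained precisely when the three exponential summands defining $U$ share a common phase. Matching phases pairwise, this reduces to solving $\vec{b}_1\cdot\vec{r}\equiv\phi$ and $\vec{b}_2\cdot\vec{r}\equiv-\phi\pmod{2\pi}$; using $\vec{a}_i\cdot\vec{b}_j=2\pi\delta_{ij}$, the point $\vec{r}=\tfrac{1}{3}(\vec{a}_1-\vec{a}_2)$ works. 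None of the steps presents a serious obstacle; the only mild care goes into the $\mathcal{R}$-orbit bookkeeping when confirming that restricting from $L^2_K$ to $L^2_{K,1}$ leaves the set of spectral values unchanged.
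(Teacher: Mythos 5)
Your proposal follows essentially the same route as the paper's (which distributes this proposition over its Appendix B): Fourier-diagonalize $H^0$ into $2\times 2$ blocks indexed by $\vec{G}$ and $\vec{q}_1+\vec{G}$, pass to $L^2_{K,1}$ by symmetrizing over $R_\phi^*$-orbits, identify the kernel with $\mathrm{span}(e_1)$, and read off $\|P^\perp(H^0)^{-1}P^\perp\|$ as the reciprocal of the smallest nonzero eigenvalue magnitude $\min\{\sqrt{3},1\}=1$. That part is complete and correct.

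There is one gap, in the last step. The computation $M^\dagger M=\diag(|U(-\vec{r})|^2,|U(\vec{r})|^2)$ gives $\|H^1\|=\|U\|_\infty=3$ as an operator on the full space $L^2_K$, and therefore only the upper bound $\|H^1\|_{L^2_{K,1}\to L^2_{K,1}}\le 3$: restriction to an invariant subspace can strictly decrease an operator norm. You handled exactly this issue carefully for $H^0$ (checking each spectral value survives the restriction via the $\mathcal{R}$-orbit bookkeeping) but dropped it for $H^1$, where the (generalized) eigenfunctions are delta functions at single points and so are certainly not in $L^2_{K,1}$. The repair is the same symmetrization the paper invokes (``taking linear combinations of rotated copies of the $H^1$ eigenfunctions''): take smooth bumps concentrated near a point $\vec{r}_0'$ with $|U(-\vec{r}_0')|$ close to $3$, multiply by the appropriate spinor, and average over $\mathcal{R}$; since $H^1$ commutes with $\mathcal{R}$ and the three rotated bumps can be made disjointly supported, the averaged function lies in $L^2_{K,1}$ and realizes norm arbitrarily close to $3$. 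One should take $\vec{r}_0'$ slightly off the exact maximizer, because your point $\vec{r}_0=\tfrac{1}{3}(\vec{a}_1-\vec{a}_2)$ satisfies $R_\phi\vec{r}_0=\vec{r}_0+\vec{a}_2$, i.e., it is a fixed point of $R_\phi$ modulo $\Lambda$, so the three rotated bumps centered there would coincide and the disjoint-support argument would need modification. The gap is harmless for the rest of the paper, which only ever uses $\|H^1\|\le 3$, but the proposition as stated asserts equality on $L^2_{K,1}$.
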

\begin{proof}
This proposition is a combination of Propositions \ref{prop:L2K1}, \ref{prop:H0_prop}, and \ref{prop:H1_norm_prop}, proved in Appendix \ref{sec:chiral}.
\end{proof}
Proposition \ref{prop:H1H0bound} implies that $\| P^\perp (H^0)^{-1} P^\perp H^1 \|_{L^2_{K,1} \rightarrow L^2_{K,1}} \leq 3$, which implies the following.
\begin{proposition} \label{prop:series_convergence}
The formal series \eqref{eq:psi_series} converges to $\psi^\alpha$ in $L^2_{K}$, with an explicit error rate, for all $|\alpha| < \frac{1}{3}$. The formal series for the Fermi velocity $v(\alpha)$ obtained by substituting the series expansion of $\psi^\alpha$ into \eqref{eq:Fermi_v_2} converges for the same range of $\alpha$, also with an explicit error rate.
\end{proposition}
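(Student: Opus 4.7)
The plan is to bound $\|\Psi^n\|$ geometrically using the operator-norm estimates of Proposition \ref{prop:H1H0bound}, and then identify the sum of the series with $\psi^\alpha$ via closedness of $H^\alpha$ together with the uniqueness of the zero mode from Proposition \ref{prop:analytic_zeromodes}. From the recursion \eqref{eq:H0_inv} combined with \eqref{eq:H0_worst} and \eqref{eq:H1_worst}, one has $\|\Psi^n\|_{L^2_{K,1}} \leq 3 \|\Psi^{n-1}\|_{L^2_{K,1}}$, so $\|\Psi^n\| \leq 3^n$ for all $n \geq 0$ since $\|\Psi^0\| = \|e_1\| = 1$. Consequently the partial sums $S_N(\alpha) := \sum_{n = 0}^N \alpha^n \Psi^n$ are Cauchy in $L^2_K$ whenever $|\alpha| < \frac{1}{3}$, and the limit $\tilde\psi^\alpha$ satisfies the explicit geometric tail bound $\|\tilde\psi^\alpha - S_N(\alpha)\|_{L^2_K} \leq (3|\alpha|)^{N+1}/(1 - 3|\alpha|)$.

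To identify $\tilde\psi^\alpha$ with (a scalar multiple of) $\psi^\alpha$, the next step is to upgrade \eqref{eq:H0_inv} to the unprojected identity $H^0 \Psi^n = -H^1 \Psi^{n-1}$, which requires the compatibility condition $P H^1 \Psi^{n-1} = 0$. This holds automatically because $H^1$ anticommutes with $\mathcal{S}$ and therefore sends $L^2_{K,1,1}$ into $L^2_{K,1,-1}$, which is orthogonal to $e_1$. Telescoping then yields $H^\alpha S_N(\alpha) = \alpha^{N+1} H^1 \Psi^N$, whose $L^2_K$-norm is bounded by $(3|\alpha|)^{N+1}$ and hence tends to zero. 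Since $H^\alpha$ is self-adjoint and therefore closed on $L^2_{\vec{k}}$, it follows that $\tilde\psi^\alpha \in H^1_{\vec{k}}$ and $H^\alpha \tilde\psi^\alpha = 0$. Moreover, Proposition \ref{prop:series_prop} gives $P \Psi^n = 0$ for every $n \geq 1$, so $P \tilde\psi^\alpha = e_1 \neq 0$ and $\tilde\psi^\alpha$ is nontrivial. Since the kernel of $H^\alpha$ restricted to $L^2_{K,1,1}$ is one-dimensional for $|\alpha|$ small by Proposition \ref{prop:analytic_zeromodes}, $\tilde\psi^\alpha$ must be a nonzero scalar multiple of $\psi^\alpha$.

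For the Fermi velocity, I substitute the series for $\tilde\psi^\alpha$ into \eqref{eq:Fermi_v_2}, whose value is invariant under $\tilde\psi^\alpha \mapsto c \tilde\psi^\alpha$, and obtain a ratio of two double series in $\alpha$. Each term is controlled by Cauchy--Schwarz together with $\|\Psi^n\| \leq 3^n$, so both the numerator and denominator are majorized term-by-term by $\bigl(\sum_n (3|\alpha|)^n\bigr)^2 = (1-3|\alpha|)^{-2}$ and converge absolutely for $|\alpha| < \frac{1}{3}$. The denominator is bounded below by $1$ since $\Psi^0 \perp \Psi^n$ for every $n \geq 1$ gives $\|\tilde\psi^\alpha\|^2 = 1 + \bigl\|\sum_{n \geq 1} \alpha^n \Psi^n\bigr\|^2 \geq 1$ for real $\alpha$; dividing the geometric tail bounds on the numerator by this uniform lower bound on the denominator produces the explicit error rate for $v(\alpha)$.

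The only genuine obstacle is the identification step, since series convergence and error estimates are then just bookkeeping with geometric series. In the small-$\alpha$ regime considered here the identification is painless because $H^\alpha$ has a simple zero eigenvalue in $L^2_{K,1,1}$ for $|\alpha|$ small and $\tilde\psi^\alpha$ has manifestly nonzero projection onto $e_1$; the much harder regime, where $\alpha$ is outside the radius of convergence of the naive Neumann series and the zero eigenvalue of $H^\alpha$ may approach other eigenvalues, is precisely the setting that is handled by the projection argument culminating in Theorem \ref{th:error_theorem} and is not needed for this proposition.
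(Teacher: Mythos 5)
Your proposal is correct, and it reaches the conclusion by a genuinely different identification mechanism than the paper. Both arguments rest on the same geometric bound $\|\Psi^n\| \leq 3^n$ from Proposition \ref{prop:H1H0bound} and the same lower bound $\|\psi^{N,\alpha}\| \geq 1$ from $\Psi^0 \perp \Psi^n$. But where you prove that the partial sums are Cauchy, pass to the limit $\tilde\psi^\alpha$, and then identify $\tilde\psi^\alpha$ as a zero mode via telescoping plus closedness of $H^\alpha$ and simplicity of the kernel, the paper never takes a limit of the series directly: it decomposes the \emph{normalized} partial sum as $c\psi^\alpha + \eta^\alpha$ with $\eta^\alpha \perp \psi^\alpha$, applies $H^\alpha$, and uses the spectral gap of $H^\alpha$ (first nonzero eigenvalue bounded below by $1 - 3\alpha$, obtained by perturbing the gap of $H^0$) to get $\|\eta^\alpha\| \leq (3\alpha)^{N+1}/|1-3\alpha|$. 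The paper's route buys a quantitative distance from the partial sum to the span of $\psi^\alpha$ without ever discussing convergence of the series as such; your route buys genuine norm convergence of the unnormalized series with an explicit tail, which is closer to the literal wording of the proposition. One small repair: you invoke simplicity of $\ker H^\alpha$ in $L^2_{K,1,1}$ only ``for $|\alpha|$ small'' via Proposition \ref{prop:analytic_zeromodes}, but the identification must hold for all $|\alpha| < \frac{1}{3}$; this is closed by the same eigenvalue-perturbation observation the paper uses, namely that for $3\alpha < 1$ every nonzero eigenvalue of $H^\alpha$ in $L^2_{K,1}$ has magnitude at least $1 - 3\alpha > 0$, so the kernel stays one-dimensional on the whole interval. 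With that, your Fermi-velocity step (Cauchy--Schwarz on the double series, denominator bounded below by $1$) matches the paper's in substance.
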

\begin{proof}
For any non-negative integer $N$, let $\psi^{N,\alpha} := \sum_{n = 0}^N \alpha^n \Psi^n$ where the $\Psi^n$ are as in \eqref{eq:H0_inv}. Since $\Psi^0 \perp \Psi^n$ for all $n \geq 1$ and $\| \Psi^0 \| = 1$, we have that $\| \psi^{N,\alpha} \| \geq 1$ for all $N$. Let $\phi^{N,\alpha} := \frac{\psi^{N,\alpha}}{\| \psi^{N,\alpha} \|}$, then we can decompose $\phi^{N,\alpha} = c \psi^\alpha + \eta^\alpha$ for some constant $c$ and where $\eta^\alpha \perp \psi^\alpha$. Applying $H^\alpha$ to both sides we have that $H^\alpha \phi^{N,\alpha} = \frac{ \alpha^{N+1} H^1 \Psi^N }{ \| \psi^{N,\alpha} \| } = H^\alpha \eta^\alpha$. Now fix $\alpha \geq 0$ such that $|\alpha| < \frac{1}{3}$. Then $\alpha \| H^1 \| < 1$ and hence the first non-zero eigenvalue of $H^\alpha$ is bounded away from $0$ by $1 - 3 \alpha$ (recall that the first non-zero eigenvalues of $H^0$ are $\pm 1$). Since $\eta^\alpha \perp \psi^\alpha$, where $\psi^\alpha$ spans the eigenspace of the zero eigenvalue of $H^\alpha$, we have that $\| \eta^\alpha \| \leq \frac{ |\alpha^{N+1}| \| H^1 \Psi^N \| }{ |1 - 3 \alpha| \| \psi^{N,\alpha} \| }$. Using the bound $\| \Psi^N \| \leq (3 \alpha)^N$ and the bound below on $\| \psi^{N,\alpha} \|,$ we have that $\| \eta^\alpha \| \leq \frac{ (3 \alpha)^{N+1} }{ |1 - 3 \alpha| }$ which clearly $\rightarrow 0$ as $N \rightarrow \infty$, so that $\lim_{N \rightarrow \infty} \phi^{N,\alpha} = \psi^\alpha$ (up to a non-zero constant). Now consider $v(\alpha)$ defined by \eqref{eq:Fermi_v_2}. Assuming WLOG that $\| \psi^\alpha \| = 1$, substituting $\psi^\alpha = \phi^{N,\alpha} + \eta^\alpha$ we find immediately, using Cauchy-Schwarz, that $\left| v(\alpha) - \ip{ \phi^{N,*}(-\vec{r}) }{ \phi^N(\vec{r}) } \right| \leq 2 \| \eta^\alpha \| + \| \eta^\alpha \|^2$. In terms of $\psi^N$ we have $\left| v(\alpha) - \frac{ \ip{ \psi^{N,*}(-\vec{r}) }{ \psi^N(\vec{r}) } }{ \ip{ \psi^N }{ \psi^N } } \right| \leq 2 \| \eta^\alpha \| + \| \eta^\alpha \|^2$.
\end{proof}
Proposition \ref{prop:series_convergence} shows that for $|\alpha| < \frac{1}{3}$ the series \eqref{eq:psi_series} converges to $\psi^\alpha$ and can be used to compute the Fermi velocity. However, this restriction is too strong to prove that the Fermi velocity has a zero, which occurs at the larger value $\alpha \approx \frac{1}{\sqrt{3}}$. Of course, Proposition \ref{prop:H1H0bound} establishes only the most pessimistic possible bound on the expansion functions $\Psi^n$, and this bound appears to be far from sharp from explicit calculation of each $\Psi^n$, see Proposition \ref{prop:Psi_n_norms}. We briefly discuss a possible route to a tighter bound in Remark \ref{rem:remark_on_PsiN_norms}, but do not otherwise pursue this approach in this work.

We now explain how to obtain error estimates over a large enough range of $\alpha$ values to prove $v(\alpha)$ has a zero. We seek a solution of $H^\alpha \psi^\alpha = 0$ in $L^2_{K,1,1}$ with the form
\begin{equation} \label{eq:approx_sol}
    \psi^\alpha = \psi^{N,\alpha} + \eta^\alpha, \quad \psi^{N,\alpha} := \sum_{n = 0}^N \alpha^n \Psi^n.
\end{equation}
For arbitrary $\alpha$, let $Q^{\alpha}$ denote the projection in $L^2_{K,1}$ onto $\psi^{N,\alpha}$, and $Q^{\alpha,\perp} := I - Q^{\alpha}$ (note that $Q^0 = P$). Note that $Q^\alpha$ depends on $N$ but we suppress this to avoid clutter. We assume WLOG that $Q^{\alpha} \eta^\alpha(\vec{r}) = 0$. It follows that $\eta^\alpha$ satisfies
\begin{equation} \label{eq:eta_eq}
    Q^{\alpha,\perp} H^\alpha Q^{\alpha,\perp} \eta^\alpha = - \alpha^{N+1} Q^{\alpha,\perp} H^1 \Psi^N.
\end{equation}
To obtain a bound on $\eta^\alpha$ in $L^2(\Omega)$, we require a lower bound on the operator $Q^{\alpha,\perp} H^\alpha Q^{\alpha,\perp} : Q^{\alpha,\perp} L^2_{K,1} \rightarrow Q^{\alpha,\perp} L^2_{K,1}$. The following Lemma gives a lower bound on this operator in terms of a lower bound on the projection of this operator onto the finite dimensional subspace of $L^2_{K,1}$ corresponding to a finite subset of the eigenfunctions of $H^0$. The importance of this result is that, since $H^1$ only couples finitely many modes of $H^0$, for \emph{fixed} $N$, by taking the subset sufficiently large, we can always arrange that $\psi^{N,\alpha}$ lies in this subspace. 

\begin{lemma} \label{lem:decompose}
Let $P_\Xi$ denote the projection onto a subset $\Xi$ of the eigenfunctions of $H^0$ in $L^2_{K,1}$, and let $\mu \geq 0$ be maximal such that 
\begin{equation} \label{eq:mu_def}
    \| P^\perp_\Xi H^0 P^\perp_\Xi f \| \geq \mu \| f \| \quad \forall f \in H^1_{K,1}, \quad P^\perp_\Xi := I - P_\Xi,
\end{equation}
(with this notation the operator $P$ introduced in Proposition \ref{prop:series_prop} corresponds to $P_\Xi$ with $\Xi$ being the set $\{ e_1 \}$ and $\mu = 1$). Suppose that $Q^{\alpha} P_\Xi = P_\Xi Q^{\alpha} = Q^\alpha$, i.e., that $\psi^{N,\alpha}$ lies in $\ran P_\Xi$.
Define $g^\alpha$ by
\begin{equation} \label{eq:g_def}
\begin{split}
    g^\alpha &:= \min \left\{ |E| : \parbox{22.2em}{ $E$ is an eigenvalue of the matrix $Q^{\alpha,\perp} P_\Xi H^\alpha P_\Xi Q^{\alpha,\perp}$ \\ acting $Q^{\alpha,\perp} P_\Xi L^2_{K,1} \rightarrow Q^{\alpha,\perp} P_\Xi L^2_{K,1}$ } \right\}.
\end{split}
\end{equation}
We note that $P_\Xi Q^{\alpha,\perp}$ is the projection onto the subspace of $P_\Xi L^2_{K,1}$ orthogonal to $\psi^{N,\alpha}$.
As long as
\begin{equation}
    3 \alpha \leq \mu \text{ and } \alpha \| Q^{\alpha,\perp} P_\Xi H^1 P_\Xi^\perp \| < \min( g^\alpha , \mu - 3 \alpha ),
\end{equation}
then
\begin{equation} \label{eq:bound_below}
    \|Q^{\alpha,\perp} H^\alpha Q^{\alpha,\perp} \eta^\alpha \|
    \geq \left( \min(g^\alpha,\mu - 3 \alpha) - \alpha\| Q^{\alpha,\perp} P_\Xi H^1 P_\Xi^\perp \|  \right) \| Q^{\alpha,\perp} \eta^\alpha \|.
\end{equation}
\end{lemma}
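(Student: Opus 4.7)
The plan is to recast $Q^{\alpha,\perp} H^\alpha Q^{\alpha,\perp}$ as a $2\times 2$ block operator with respect to the orthogonal decomposition $\ran Q^{\alpha,\perp} = Q^{\alpha,\perp} P_\Xi L^2_{K,1} \oplus P_\Xi^\perp L^2_{K,1}$, and to treat the $\alpha H^1$ off-diagonal coupling as a perturbation of a block-diagonal operator whose diagonal blocks are individually bounded below by $g^\alpha$ and $\mu - 3\alpha$ respectively. The critical enabling observation is that the hypothesis $\psi^{N,\alpha} \in \ran P_\Xi$ forces $P_\Xi Q^\alpha = Q^\alpha = Q^\alpha P_\Xi$, so $P_\Xi$ and $Q^{\alpha,\perp}$ commute as projections; this is what makes the splitting above orthogonal. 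Writing $\zeta := Q^{\alpha,\perp}\eta^\alpha = \eta_1 + \eta_2$ with $\eta_1 := Q^{\alpha,\perp} P_\Xi \eta^\alpha$ and $\eta_2 := P_\Xi^\perp \eta^\alpha$, orthogonality of the ranges of $P_\Xi$ and $P_\Xi^\perp$ gives $\|\zeta\|^2 = \|\eta_1\|^2 + \|\eta_2\|^2$.

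Next I would express $Q^{\alpha,\perp} H^\alpha Q^{\alpha,\perp}$ on $\ran Q^{\alpha,\perp}$ in the block form
\[M = \begin{pmatrix} A & B \\ B^* & D \end{pmatrix}, \quad A := Q^{\alpha,\perp} P_\Xi H^\alpha P_\Xi Q^{\alpha,\perp}, \quad D := P_\Xi^\perp H^\alpha P_\Xi^\perp, \quad B := \alpha\, Q^{\alpha,\perp} P_\Xi H^1 P_\Xi^\perp.\]
The identity $P_\Xi H^0 P_\Xi^\perp = 0$, which holds because $\Xi$ is a set of eigenfunctions of $H^0$, kills the $H^0$ contribution to the off-diagonal, leaving only the $\alpha H^1$ coupling, so $\|B\| = \|B^*\| \leq d$ with $d := \alpha \|Q^{\alpha,\perp} P_\Xi H^1 P_\Xi^\perp\|$. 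For the diagonal blocks, self-adjointness of $A$ combined with the definition of $g^\alpha$ yields $\|A\eta_1\| \geq g^\alpha \|\eta_1\|$, while a triangle inequality using the definition of $\mu$ and the a priori bound $\|H^1\| \leq 3$ from Proposition \ref{prop:H1H0bound} yields
\[\|D\eta_2\| \geq \|P_\Xi^\perp H^0 P_\Xi^\perp \eta_2\| - \alpha \|H^1\|\|\eta_2\| \geq (\mu - 3\alpha)\|\eta_2\|,\]
which is a genuine positive lower bound precisely under the hypothesis $3\alpha \leq \mu$.

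To conclude, I would decompose $M = M_0 + M_1$ with $M_0 = \diag(A, D)$ block-diagonal and $M_1$ purely off-diagonal. The block-orthogonality gives $\|M_0 \zeta\|^2 = \|A\eta_1\|^2 + \|D\eta_2\|^2 \geq c^2\|\zeta\|^2$ with $c := \min(g^\alpha, \mu - 3\alpha)$, while $\|M_1 \zeta\|^2 = \|B\eta_2\|^2 + \|B^*\eta_1\|^2 \leq d^2\|\zeta\|^2$. A reverse triangle inequality then gives $\|M\zeta\| \geq (c - d)\|\zeta\|$, with $c > d$ guaranteed by the second hypothesis, and applying this to $\zeta = Q^{\alpha,\perp}\eta^\alpha$ reproduces \eqref{eq:bound_below}. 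The main obstacle is essentially bookkeeping: the whole argument hinges on the commutation $P_\Xi Q^{\alpha,\perp} = Q^{\alpha,\perp} P_\Xi$, without which the block decomposition would not be orthogonal and one could not add squared norms on the diagonal blocks; once this identity is in hand, everything else is a standard Neumann-type perturbation estimate.
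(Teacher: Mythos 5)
Your proposal is correct and is essentially the paper's own argument: the paper likewise inserts $I = P_\Xi + P_\Xi^\perp$ on both sides of $H^\alpha$ (your block decomposition), uses the commutation $P_\Xi Q^{\alpha,\perp} = Q^{\alpha,\perp} P_\Xi$ and Pythagoras to bound the block-diagonal part below by $\min(g^\alpha,\mu-3\alpha)$ and the off-diagonal $\alpha H^1$ coupling above by $\alpha\|Q^{\alpha,\perp}P_\Xi H^1 P_\Xi^\perp\|$, and concludes with the reverse triangle inequality. The block-matrix packaging is only a notational difference.
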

Note that $g^\alpha$ would be identically zero if not for the restriction that the matrix acts on $Q^{\alpha,\perp} P_\Xi L^2_{K,1}$, since otherwise $\psi^{N,\alpha}$ would be an eigenfunction with eigenvalue zero for all $\alpha$. As it is, $g^0 = 1$ and $\alpha \mapsto g^\alpha$ is real-analytic so that $g^\alpha$ must be positive for a non-zero interval of positive $\alpha$ values.
\begin{proof}
Using $Q^{\alpha} P_\Xi = P_\Xi Q^\alpha$ we have $P_\Xi^\perp Q^{\alpha,\perp} = Q^{\alpha,\perp} P_\Xi^\perp = P_\Xi^\perp$ and hence
\begin{equation}
\begin{split}
    &\| Q^{\alpha,\perp} H^\alpha Q^{\alpha,\perp} \eta^\alpha \| = \| Q^{\alpha,\perp} (P_\Xi + P_\Xi^\perp) H^\alpha (P_\Xi + P_\Xi^\perp) Q^{\alpha,\perp} \eta^\alpha \| \\
    &= \| Q^{\alpha,\perp} P_\Xi H^\alpha P_\Xi Q^{\alpha,\perp} \eta^\alpha + \alpha Q^{\alpha,\perp} P_\Xi H^1 P_\Xi^\perp \eta^\alpha + \alpha P_\Xi^\perp H^\alpha P_\Xi Q^{\alpha,\perp} \eta^\alpha + P_\Xi^\perp H^\alpha P_\Xi^\perp \eta^\alpha \|.
\end{split}
\end{equation}
By the reverse triangle inequality
\begin{align}
    &\| Q^{\alpha,\perp} H^\alpha Q^{\alpha,\perp} \eta^\alpha \|  \label{eq:rev_tri} \\ \notag
    &\geq \left| \| Q^{\alpha,\perp} P_\Xi H^\alpha P_\Xi Q^{\alpha,\perp} \eta^\alpha + P_\Xi^\perp H^\alpha P_\Xi^\perp \eta^\alpha \| - \alpha \| Q^{\alpha,\perp} P_\Xi H^1 P_\Xi^\perp \eta^\alpha + P_\Xi^\perp H^\alpha P_\Xi Q^{\alpha,\perp} \eta^\alpha \| \right|.
\end{align}
We want to bound the second term above and the first term below. We start with the second term
\begin{equation}
\begin{split}
    &\| Q^{\alpha,\perp} P_\Xi H^1 P_\Xi^\perp \eta^\alpha + P_\Xi^\perp H^\alpha P_\Xi Q^{\alpha,\perp} \eta^\alpha \|^2 \\
    &= \| Q^{\alpha,\perp} P_\Xi H^1 P_\Xi^\perp \eta^\alpha \|^2 + \| P_\Xi^\perp H^1 P_\Xi Q^{\alpha,\perp} \eta^\alpha \|^2    \\
    &\leq \| Q^{\alpha,\perp} P_\Xi H^1 P_\Xi^\perp \|^2 \left( \| P_\Xi^\perp \eta^\alpha \|^2+\| P_\Xi Q^{\alpha,\perp} \eta^\alpha \|^2  \right)  \\
    &= \| Q^{\alpha,\perp} P_\Xi H^1 P_\Xi^\perp \|^2 \| Q^{\alpha,\perp} \eta^\alpha \|^2,
\end{split}
\end{equation}
where we use Pythagoras' theorem, $P_\Xi^\perp H^1 P_\Xi Q^{\alpha,\perp} \eta^\alpha=P_\Xi^\perp H^1 P_\Xi Q^{\alpha,\perp} P_\Xi Q^{\alpha,\perp}\eta^\alpha$ since $P_\Xi Q^{\alpha,\perp}$ is a projection, and $\| Q^{\alpha,\perp} P_\Xi H^1 P_\Xi^\perp \| = \| P_\Xi^\perp H^1 P_\Xi Q^{\alpha,\perp} \|$. Hence we can bound
\begin{equation} \label{eq:bd_above}
    \| Q^{\alpha,\perp} P_\Xi H^1 P_\Xi^\perp \eta^\alpha + P_\Xi^\perp H^\alpha P_\Xi Q^{\alpha,\perp} \eta^\alpha \| \leq \| Q^{\alpha,\perp} P_\Xi H^1 P_\Xi^\perp \| \| Q^{\alpha,\perp} \eta^\alpha \|.
\end{equation}
For the first term, first note that using Proposition \ref{prop:H1H0bound} and the spectral theorem
\begin{equation}
\begin{split}
    \| Q^{\alpha,\perp} P^\perp_\Xi H^\alpha P^\perp_\Xi Q^{\alpha,\perp} \eta^\alpha \| &\geq | \| Q^{\alpha,\perp} P^\perp_\Xi H^0 P^\perp_\Xi Q^{\alpha,\perp} \eta^\alpha \| - \alpha \| Q^{\alpha,\perp} P^\perp_\Xi H^1 P^\perp_\Xi Q^{\alpha,\perp} \eta^\alpha \| | \\
    &\geq (\mu - 3 \alpha) \| P^\perp_\Xi Q^{\alpha,\perp} \eta^\alpha \|
\end{split}
\end{equation}
as long as $\mu \geq 3 \alpha$. We now estimate
\begin{equation}
\begin{split}
    &\| Q^{\alpha,\perp} P_\Xi H^\alpha P_\Xi Q^{\alpha,\perp} \eta^\alpha + P_\Xi^\perp H^\alpha P_\Xi^\perp \eta^\alpha \|^2  \\
    &= \| Q^{\alpha,\perp} P_\Xi H^\alpha P_\Xi Q^{\alpha,\perp} \eta^\alpha \|^2 + \| P_\Xi^\perp H^\alpha P_\Xi^\perp \eta^\alpha \|^2  \\
    &\geq (g^\alpha)^2 \| Q^{\alpha,\perp} P_\Xi \eta^\alpha \|^2 + (\mu  - 3 \alpha)^2 \| P_\Xi^\perp \eta^\alpha \|^2    \\
    &\geq \min\left((g^\alpha)^2,(\mu - 3 \alpha)^2\right) \left( \| Q^{\alpha,\perp} P_\Xi \eta^\alpha \|^2 + \|P_\Xi^\perp \eta^\alpha \|^2 \right)   \\
    &= \min\left((g^\alpha)^2,(\mu - 3 \alpha)^2\right) \| Q^{\alpha,\perp} \eta^\alpha \|^2.
\end{split}
\end{equation}
It follows that as long as $3 \alpha \leq \mu$,
\begin{equation} \label{eq:bd_below}
    \| Q^{\alpha,\perp} P_\Xi H^\alpha P_\Xi Q^{\alpha,\perp} \eta^\alpha + P_\Xi^\perp H^\alpha P_\Xi^\perp \eta^\alpha \| \geq \min(g^\alpha,\mu-3\alpha) \|Q^{\alpha,\perp} \eta^\alpha\|.
\end{equation}
The conclusion now holds as long as $3 \alpha \leq \mu$ and $\alpha \| Q^{\alpha,\perp} P_\Xi H^1 P_\Xi^\perp \| \leq \min(g^\alpha,\mu-3 \alpha)$ upon substituting \eqref{eq:bd_above} and \eqref{eq:bd_below} into \eqref{eq:rev_tri}.
\end{proof}

For Lemma \ref{lem:decompose} to be useful, we must check that it is possible to choose $\Xi$ so that the bound \eqref{eq:bound_below} is non-trivial, i.e., so that the constant is positive. We will prove the following in Appendix \ref{sec:prop_proof}.
\begin{proposition} \label{prop:mu_choice}
There exists a subset $\Xi$ of the eigenfunctions of $H^0$ such that 
\begin{enumerate}
\item The maximal $\mu$ such that \eqref{eq:mu_def} holds is $\mu = 7$.
\item $\psi^{8,\alpha}$ defined by \eqref{eq:approx_sol} lies in $\ran P_\Xi$.
\item $\| P_\Xi H^1 P_\Xi^\perp \| = 1$ and hence $\| Q^{\alpha,\perp} P_\Xi H^1 P_\Xi^\perp \| \leq 1$.
\end{enumerate}
\end{proposition}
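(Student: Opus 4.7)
The plan is to construct $\Xi$ explicitly as a finite set of Fourier modes of $H^0$ in $L^2_{K,1}$, exploiting the fact that $H^1$ acts as a sparse ``hopping'' operator in this basis. Recall from Proposition \ref{prop:H1H0bound} that eigenfunctions of $H^0$ in $L^2_{K,1}$ are labeled by momenta of the form $\vec{G}$ or $\vec{q}_1 + \vec{G}$ with $\vec{G} \in \Lambda^*$, with eigenvalues of magnitude $|\vec{G}|$ and $|\vec{q}_1 + \vec{G}|$ respectively. The operator $H^1$ is, up to unimodular phases, multiplication by $U(\pm\vec{r}) = \sum_{j=1}^3 e^{\pm i\phi_j} e^{-i\vec{q}_j\cdot\vec{r}}$ on the off-diagonal, so it couples each momentum to exactly three nearest neighbors in the triangular graph generated by $\{\vec{q}_1, \vec{q}_2, \vec{q}_3\}$.

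I would build $\Xi$ by the following recipe. Starting from the seed $\Psi^0 = e_1$ (a single momentum lattice point), apply $H^1$ iteratively; since $\Psi^n = -P^\perp (H^0)^{-1} P^\perp H^1 \Psi^{n-1}$ and $H^1$ is a single-hop operator, the momentum support of $\Psi^n$ lies in the ball of radius $n$ in the graph metric. Choose $\Xi$ to be a finite set of modes satisfying three simultaneous requirements: (a) it contains the union $\bigcup_{n=0}^8 \operatorname{supp} \Psi^n$, so property (2) is automatic; (b) it contains every lattice momentum of Euclidean magnitude less than $7$, so that every $\vec{G}$ or $\vec{q}_1 + \vec{G}$ not in $\Xi$ has magnitude $\geq 7$, yielding property (1) via the diagonal action of $H^0$ and the spectral theorem; and (c) its boundary in the hopping graph is ``clean,'' meaning every mode in $\Xi^c$ that couples into $\Xi$ via $H^1$ does so through exactly one of the three edges $\vec{q}_j$, and conversely. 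I expect this $\Xi$ to be (after accounting for the threefold rotation symmetry that organizes the chiral basis into orbits) the list of 80 modes referenced in the introduction and supplementary material.

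Given this construction, each of the three properties reduces to a finite verification. Property (1) is an enumeration: list the momenta in $\Xi$ and check directly that the minimum of $|\vec{G}|$ and $|\vec{q}_1 + \vec{G}|$ over $\Xi^c$ equals $7$. Property (2) follows from the one-hop-per-power structure of \eqref{eq:H0_inv} together with the first-hop characterization of $\operatorname{supp}(H^1 \Psi^{n-1})$, via induction on $n\le 8$. For property (3), the matrix of $P_\Xi H^1 P_\Xi^\perp$, after the boundary has been arranged as in (c), decomposes (up to permutation) into a direct sum of rank-one blocks whose nonzero entries have modulus one; each block has operator norm one, giving $\|P_\Xi H^1 P_\Xi^\perp\| = 1$, and the further bound $\|Q^{\alpha,\perp}P_\Xi H^1 P_\Xi^\perp\|\leq 1$ is immediate because $Q^{\alpha,\perp}$ is a projection.

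The main obstacle is simultaneously meeting requirements (a)--(c). Including all modes of magnitude $<7$ forces $\Xi$ to be fairly large, but to preserve the ``clean boundary'' condition in (c) the shape of $\Xi$ must be tuned so that no exterior mode sees more than one of its three $\vec{q}_j$-neighbors inside $\Xi$; otherwise the boundary block would contribute norm $\sqrt{2}$ or $\sqrt{3}$ rather than $1$, and the crucial inequality $\alpha\|Q^{\alpha,\perp}P_\Xi H^1 P_\Xi^\perp\| < \min(g^\alpha, \mu-3\alpha)$ in Lemma \ref{lem:decompose} would fail over the interval of $\alpha$ needed to reach the first magic angle. This is essentially a combinatorial geometry problem on the triangular momentum lattice, made subtler by the need to work with orbits of the rotation $R_\phi^*$ rather than individual lattice points, and I expect it to occupy most of the appendix.
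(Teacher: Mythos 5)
Your proposal is correct and follows essentially the same route as the paper: the paper takes $\Xi$ to be all chiral basis modes with $H^0$-eigenvalue magnitude $\leq 4\sqrt{3}$ (equivalently, Euclidean momentum $<7$) augmented by two rotation-orbits at magnitude exactly $7$ chosen so that every mode of $\Xi$ has at most one nearest neighbor outside $\Xi$ in the momentum hopping graph, which yields $\mu=7$ by enumeration and $\|P_\Xi H^1 P_\Xi^\perp\|=1$ by exactly your rank-one-block argument. The only tension you flag — compatibility of containing $\bigcup_{n\le 8}\operatorname{supp}\Psi^n$ with the clean boundary — is resolved in the paper by the explicit computation of $\Psi^0,\dots,\Psi^8$, whose supports turn out to lie entirely within magnitude $4\sqrt{3}$ (the graph-ball bound of radius $8$ alone would not suffice, since it admits momenta of magnitude $7$ such as $\vec{q}_1-2\vec{b}_1-2\vec{b}_2$ that must be excluded from $\Xi$).
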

The set $\Xi$ constructed in Proposition \ref{prop:mu_choice} is the set of $L^2_{K,1}$-eigenfunctions of $H^0$ with eigenvalues with magnitude $\leq 4 \sqrt{3}$, augmented with two extra basis functions to ensure that $\| P_\Xi H^1 P_\Xi^\perp \| = 1$. Including all $L^2_{K,1}$-eigenfunctions of $H^0$ with eigenvalue magnitudes up to and including $4 \sqrt{3}$ ensures that $\psi^{8,\alpha}$ lies in $\ran P_\Xi$.

We now require the following. 
\begin{proposition} \label{as:PHP_gap}
Let $\Xi$ be as in Proposition \ref{prop:mu_choice}. Then $g^\alpha \geq \frac{3}{4}$ for all $0 \leq \alpha \leq \frac{7}{10}$.
\end{proposition}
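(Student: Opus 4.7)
The plan is to combine a Lipschitz estimate in $\alpha$ with a grid-based numerical verification, in the same computer-assisted style as Proposition \ref{as:Fermi_v_zero}. In the chiral basis of Appendix \ref{sec:chiral}, $P_\Xi H^\alpha P_\Xi$ is an explicit $80\times 80$ Hermitian matrix depending affinely on $\alpha$: writing $P_\Xi H^\alpha P_\Xi = A + \alpha B$ we have $\|B\| \leq \|H^1\| = 3$. Since $\psi^{N,\alpha} = \sum_{n=0}^{8} \alpha^n \Psi^n$ is a polynomial in $\alpha$ whose coefficients lie in $\ran P_\Xi$, I can change basis so the first coordinate aligns with $\psi^{N,\alpha}/\|\psi^{N,\alpha}\|$; then $g^\alpha$ equals the minimum absolute eigenvalue of the resulting $79\times 79$ principal submatrix $M(\alpha)$ obtained by deleting the first row and column.

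Next I would show that $\alpha \mapsto g^\alpha$ is Lipschitz on $[0,\tfrac{7}{10}]$ with an explicit constant $L$. By Weyl's inequality the eigenvalues of $P_\Xi H^\alpha P_\Xi$ are $3$-Lipschitz. The change of basis along $\psi^{N,\alpha}/\|\psi^{N,\alpha}\|$ is itself Lipschitz in $\alpha$, with constant bounded using $\|\partial_\alpha \psi^{N,\alpha}\| \leq \sum_{n=1}^{8} n\alpha^{n-1}\|\Psi^n\|$ and the norms supplied by Proposition \ref{prop:Psi_n_norms}. Combining the two contributions via min-max yields the desired $L$ for the eigenvalues of $M(\alpha)$, hence for $g^\alpha$.

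With $L$ in hand, I would partition $[0,\tfrac{7}{10}]$ into grid points $\alpha_0 = 0 < \alpha_1 < \cdots < \alpha_K = \tfrac{7}{10}$ of spacing $\delta$ chosen so that $L\delta$ is safely below the gap between the computed lower bound and $\tfrac{3}{4}$. At each $\alpha_j$ I would assemble $M(\alpha_j)$ and compute an approximate eigendecomposition $M(\alpha_j) \approx UDU^*$, numerically verify the residual $\|M(\alpha_j) - UDU^*\|$, and apply Weyl's inequality to bound the true eigenvalues of $M(\alpha_j)$ by the diagonal entries of $D$ up to that residual. If every computed $g^{\alpha_j}$ exceeds $\tfrac{3}{4} + L\delta$ after subtracting the round-off margin, the Lipschitz estimate yields $g^\alpha \geq \tfrac{3}{4}$ throughout the interval. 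The anchor $g^0 = 1$ follows directly from Proposition \ref{prop:H1H0bound}, since $\Psi^0 = e_1$ spans the kernel of $H^0$ in $L^2_{K,1}$ and the smallest non-zero eigenvalues of $H^0$ there are $\pm 1$.

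The main obstacle will be rigorously controlling round-off error in the eigendecomposition of each $M(\alpha_j)$, because standard eigensolvers do not automatically come with a provable residual bound. I expect the cleanest route is the same worst-case estimate invoked in Proposition \ref{as:Fermi_v_zero}: compute $M(\alpha_j)$ and $UDU^*$ in floating point, evaluate $\|M(\alpha_j)-UDU^*\|$ using the operation-count round-off bound, and fold that residual into a Weyl-type safety margin. A secondary concern is keeping $L$ tight enough that the grid remains computationally tractable; if the naive bound $L \approx 3$ from $\|B\| \leq 3$ turns out to be loose near $\alpha \approx \tfrac{7}{10}$, a finer argument exploiting analyticity of $g^\alpha$ away from eigenvalue crossings of $M(\alpha)$ may be needed.
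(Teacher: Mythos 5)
Your plan is essentially the paper's proof: a derivative bound on the projected, $Q^{\alpha,\perp}$-compressed Hamiltonian (Theorem \ref{thm:Taylor_for_eigenvalues_1} together with Proposition \ref{prop:bound_de_alpha_H}, where the dominant Lipschitz contribution is exactly the $\de_\alpha Q^\alpha$ term you identify via $\de_\alpha \psi^{8,\alpha}$) reduces the claim to a finite grid, on which a strictly stronger bound ($\tfrac{8}{10}$) is verified numerically with a worst-case round-off analysis. The one step your sketch leaves under-specified --- reading off eigenvalue enclosures from a computed factorization $UDU^*$ whose $U$ is not exactly unitary --- is precisely what the paper's Theorems \ref{th:first_1} and \ref{th:second_1} handle, by replacing the computed eigenvectors with a nearby exactly orthonormal set and bounding the resulting residuals.
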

\begin{proof}[Proof (computer assisted)]
Consider $H^\alpha_{\Xi} := Q^{\alpha,\perp} P_{\Xi} H^\alpha P_{\Xi} Q^{\alpha,\perp}$ acting on $P_{\Xi} L^2_{K,1}$. Assuming $\alpha$ is restricted to an interval such that the zero eigenspace of $H^\alpha_{\Xi}$ is simple, then, using orthogonality of eigenvectors corresponding to different eigenvalues and the fact that $Q^{\alpha}$ is the spectral projection onto the unique zero mode of $H^\alpha_{\Xi}$, $H^\alpha_{\Xi}$ has the same non-zero eigenvalues as the matrix $Q^{\alpha,\perp} P_{\Xi} H^\alpha P_{\Xi} Q^{\alpha,\perp}$ acting on $Q^{\alpha,\perp} P_{\Xi} L^2_{K,1}$. The matrix $H^\alpha_{\Xi}$ is an $81 \times 81$ matrix whose spectrum is symmetric about $0$ because of the chiral symmetry. When $\alpha = 0$ the spectrum is explicit: $0$ is a simple eigenvalue, and the smallest non-zero eigenvalues are $\pm 1$, both also simple. Proposition \ref{as:PHP_gap} is proved if we can prove that the first positive eigenvalue of $H^\alpha_{\Xi}$ is bounded away from zero by $\frac{3}{4}$ for all $0 \leq \alpha \leq \frac{7}{10}$. Note that if this holds, the zero eigenspace of $H^\alpha_{\Xi}$ must be simple for all $0 \leq \alpha \leq \frac{7}{10}$ and hence our basic assumption is justified. The strategy of the proof is as follows. 
\begin{enumerate}
\item Define a grid $\mathcal{G} := \left\{ \frac{7 n}{10 N} : n \in \{0,1,...,N\} \right\}$, where $N$ is a positive integer taken sufficiently large that the grid spacing $h := \frac{7}{10 N} < \frac{1}{388831}$ (the number $388831$ comes from Proposition \ref{prop:bound_de_alpha_H}).
\item Numerically compute the eigenvalues of $H^\alpha_{\Xi}$ for $\alpha \in \mathcal{G}$. We find that the numerically computed first positive eigenvalues of these matrices are uniformly bounded below by $\frac{8}{10} > \frac{3}{4}$.
\item Perform a backwards error analysis which fully accounts for round-off error in the numerical computation in order to prove that the exact first positive eigenvalues of the matrices $H^\alpha_{\Xi}$ must also be bounded below by $\frac{8}{10}$ at each $\alpha \in \mathcal{G}$.
\item Use perturbation theory to bound the exact first positive eigenvalue of $H^\alpha_{\Xi}$ below by $\frac{3}{4}$ over the whole interval of $\alpha$ values between $0$ and $\frac{7}{10}$.
\end{enumerate}
When discussing round-off error due to working in floating point arithmetic, we will denote ``machine epsilon'' by $\epsilon$. The significance of this number is that we will assume that all complex numbers $a$ can be represented by floating-point numbers $\tilde{a}$ such that $|a - \tilde{a}| \leq \epsilon a$. We will also make the standard assumption about creation of round-off error in floating-point arithmetic operations: if $\tilde{a}$ and $\tilde{b}$ are floating-point complex numbers, and if $(\tilde{a} \mathcal{O} \tilde{b})_{comp}$ and $\tilde{a} \mathcal{O} \tilde{b}$ represent the numerically computed value and exact value of an arithmetic operation on the numbers $\tilde{a}$ and $\tilde{b}$, then $(\tilde{a} \mathcal{O} \tilde{b})_{comp} = \tilde{a} \mathcal{O} \tilde{b} + e$ where $|e| \leq (\tilde{a} \mathcal{O} \tilde{b}) \epsilon$. In Python this is indeed the case, for all reasonably sized (such that stack overflow does not occur) complex numbers, with $\epsilon = 2.22044 \times 10^{-16}$ (5sf). We now present the main points of parts 2.-4. of the strategy, postponing proofs of intermediate lemmas to Appendix \ref{sec:verify_gap_assump}. 

For part 2. of the strategy, for each $\alpha \in \mathcal{G}$, we let $\tilde{H}_\Xi^\alpha$ denote $H_\Xi^\alpha$ (which is known exactly) evaluated as floating-point numbers. We generate numerically computed eigenpairs $\tilde{\lambda}_j, \tilde{v}_j$ for $1 \leq j \leq 81$ for each $\tilde{H}^\alpha_{\Xi}$ using numpy's Hermitian eigensolver \texttt{eigh}. We find that the smallest first positive eigenvalue of $\tilde{H}^\alpha_\Xi$ for $\alpha \in \mathcal{G}$ is 0.8147191261445436 (computed using \texttt{compute\_PHalphaP\_enclosures.py} in the Github repo). Note that the difference between this number and $\frac{8}{10}$ is bounded below by $0.01$.

The main tool for part 3. of the strategy is the following theorem.
\begin{theorem} \label{th:first_1}
Let $m$ and $n$ denote positive integers with $m \leq n$. Let $A$ be a Hermitian $n \times n$ matrix, and let $\{ v_j \}_{1 \leq j \leq m}$ be orthonormal $n$-vectors satisfying $(A - \lambda_j I) v_j = r_j$ for scalars $\lambda_j$ and $n$-vectors $r_j$ for each $1 \leq j \leq m$. Then there are $m$ eigenvalues $\{ \alpha_j \}_{1 \leq j \leq m}$ of $A$ which can be put into one-to-one correspondence with the $\lambda_j$s such that 
\begin{equation} \label{eq:enclosure_ints}
    | \lambda_j - \alpha_j | \leq 2 m \sup_{1 \leq i \leq m} \| r_i \|_2 \quad \text{for all $1 \leq j \leq m$.}
\end{equation}
\end{theorem}
\begin{proof}
See Appendix \ref{sec:prove_first_1}.
\end{proof}
Na\"ively, one would hope to be able to calculate enclosure intervals for every eigenvalue of $H^\alpha_{\Xi}$, and in particular a lower bound on the first positive eigenvalue of $H^\alpha_{\Xi}$, by directly applying Theorem \ref{th:first_1} with $A = H^\alpha_{\Xi}$, $m = 81$, and $\lambda_j$ and $v_j$ given for each $1 \leq j \leq 81$ by the approximate eigenpairs $\tilde{\lambda}_j, \tilde{v}_j$ computed in part 2. However, we can't directly apply the theorem because the $\{ \tilde{v}_j \}_{1 \leq j \leq 81}$ aren't exactly orthonormal because of round-off error. So we will prove existence of an exactly orthonormal set $\{ \oldhat{v}_j \}_{1 \leq j \leq 81}$ close to the set $\{ \tilde{v}_j \}_{1 \leq j \leq 81}$ and apply Theorem \ref{th:first_1} to the set $\{ \oldhat{v}_j \}_{1 \leq j \leq 81}$ (with the same $\tilde{\lambda}_j$) instead. Note that to carry out this strategy we must bound the residuals $\oldhat{r}_j := ( H^\alpha_\Xi - \tilde{\lambda}_j ) \oldhat{v}_j$. The result we need to implement this strategy is the following. Note that the result requires numerical computation of inner products and residuals, and we account for round-off error in these computations.
\begin{theorem} \label{th:second_1}
Let $m$ and $n$ be positive integers with $m \leq n$. Let $A$ be an $n \times n$ Hermitian matrix, let $\tilde{A}$ denote $A$ evaluated in floating-point numbers, and let $\tilde{v}_j, \tilde{\lambda}_j$ for $1 \leq j \leq m$ be a set of $n$-dimensional vectors and real numbers respectively. Let $\ip{\tilde{v}_i}{\tilde{v}_j}_{comp}$ denote their numerically computed inner products, and let $\tilde{r}_{j,comp} := \left[ \left( \tilde{A} - \tilde{\lambda}_j I \right) \tilde{v}_j \right]_{comp}$ denote their numerically computed residuals. Let $\epsilon$ denote machine epsilon, and assume $n \epsilon < 0.01$. Let $\mu$ be 
\begin{equation} \label{eq:mu}
    \mu := (1.01) n^2 \epsilon \left( \sup_{1 \leq i \leq m} \| \tilde{v}_i \|_\infty \right)^2 + \sup_{1 \leq i \leq m} | \ip{ \tilde{v}_i }{ \tilde{v}_i }_{comp} - 1 | + \sup_{\substack{i \neq j \\ 1 \leq i, j \leq m}} | \ip{ \tilde{v}_i }{ \tilde{v}_j }_{comp} |.
\end{equation}
Then, as long as $m \mu < \frac{1}{2}$, there is an orthonormal set of $n$-vectors $\{ \oldhat{v}_j \}_{1 \leq j \leq m}$ whose residuals $\oldhat{r}_j := ( A - \tilde{\lambda}_j I ) \oldhat{v}_j$ satisfy the bound
\begin{equation} \label{eq:bound_1}
\begin{split}
    \sup_{1 \leq j \leq m} \| \oldhat{r}_j \|_2 \leq & \; 2^{-1/2} n \left( \| A \|_2 + \sup_{1 \leq j \leq m} | \tilde{\lambda}_j | \right) \mu + n^{1/2} \sup_{1 \leq j \leq m} \left\| \tilde{r}_{j,comp} \right\|_\infty \\
    &+ (1.01) n^{5/2} \epsilon \left( \| \tilde{A} \|_{max} + \sup_{1 \leq j \leq m} |\tilde{\lambda}_j| \right) \sup_{1 \leq j \leq m} \| \tilde{v}_j \|_\infty + n \epsilon \| A \|_{max} \sup_{1 \leq j \leq m} \| \tilde{v}_j \|_\infty,
\end{split}
\end{equation}
where $\| A \|_{max}$ denotes the largest of the absolute values of the elements of the matrix $A$.
\end{theorem}
\begin{proof}
See Appendix \ref{sec:proof_second_1}.
\end{proof}
Numerical computation (using the script \texttt{compute\_PHalphaP\_enclosures.py} in the Github repo) shows that the maximum of $\sup_{1 \leq i \leq m} | \ip{ \tilde{v}_i }{ \tilde{v}_i }_{comp} - 1 |$ and $\sup_{\substack{i \neq j \\ 1 \leq i, j \leq m}} | \ip{ \tilde{v}_i }{ \tilde{v}_j }_{comp} |$ over $\alpha \in \mathcal{G}$ is bounded by 7$\times 10^{-15}$. Hence we can apply Theorem \ref{th:second_1} with $A = H^\alpha_{\Xi}$ and $\tilde{\lambda}_j, \tilde{v}_j$ given by the numerically computed eigenpairs of $\tilde{H}^\alpha_{\Xi}$ to obtain orthonormal sets $\{ \oldhat{v}_j \}_{1 \leq j \leq 81}$ whose residuals with respect to $H^\alpha_{\Xi}$ satisfy \eqref{eq:bound_1}. The following is straightforward.
\begin{proposition} \label{prop:Halpha_bounds}
\begin{equation}
    \sup_{0 \leq \alpha \leq \frac{7}{10}} \| H^\alpha_{\Xi} \|_2 \leq 10, \quad \sup_{0 \leq \alpha \leq \frac{7}{10}} \| H^\alpha_{\Xi} \|_{max} \leq 7 
\end{equation}
\end{proposition}
\begin{proof}
The first estimate follows from $\| P_{\Xi} H^0 P_{\Xi} \| \leq 7$ and $\| H^1 \| \leq 3$. The second estimate follows immediately from writing the matrix $H^\alpha_{\Xi}$ in the chiral basis.
\end{proof}
We can now apply Theorem \ref{th:first_1} with $A = H^\alpha_{\Xi}$ and $\lambda_j, v_j$ given by the numerically computed $\tilde{\lambda}_j$ from part 2. and the $\oldhat{v}_j$ coming from Theorem \ref{th:second_1}, in order to derive rigorous enclosure intervals for every eigenvalue of $H^\alpha_{\Xi}$. We find that (using the script \texttt{compute\_PHalphaP\_enclosures.py} in the Github repo) the suprema over $\alpha \in \mathcal{G}$ of $\sup_{1 \leq j \leq m} \| \tilde{v}_j \|_\infty$, $\sup_{1 \leq j \leq m} \| \tilde{r}_{j,comp} \|_\infty$,$\| \tilde{H}^\alpha_{\Xi} \|_{max}$, $\sup_{1 \leq j \leq m} |\tilde{\lambda}_j|$, are bounded by $1$, $5 \times 10^{-14}$, $7$, and $8$, respectively. It is then easy to see that $2 \times 81$ times the right-hand side of \eqref{eq:bound_1} is much smaller than $0.01$, and is hence smaller than the distance between the minimum over $\alpha \in \mathcal{G}$ of the numerically computed first positive eigenvalues of $\tilde{H}^\alpha_{\Xi}$ and $\frac{8}{10}$. We can therefore conclude that the first positive eigenvalues of $H^\alpha_{\Xi}$ are bounded below by $\frac{8}{10}$ at every $\alpha \in \mathcal{G}$.


The main tool for part 4. of the strategy is the following.
\begin{theorem} \label{thm:Taylor_for_eigenvalues_1}
Let $A^\alpha$ be an $n \times n$ Hermitian matrix depending real-analytically on a real parameter $\alpha$. Denote the ordered eigenvalues of $A^\alpha$ by $\lambda_j^\alpha$ for $1 \leq j \leq n$. Then for any $\alpha$ and $\alpha_0$,
\begin{equation}
    | \lambda_j^\alpha - \lambda_j^{\alpha_0} | \leq |\alpha - \alpha_0| \sup_{\beta \in [\alpha_0,\alpha]} \| \de_\beta A^\beta \|_2 \text{ for all $1 \leq j \leq n$.} 
\end{equation}
\end{theorem}
\begin{proof}
See Appendix \ref{sec:proof_Taylor_eigenvalues}.
\end{proof}
We would like to apply Theorem \ref{thm:Taylor_for_eigenvalues_1} to bound the variation of eigenvalues of $H^\alpha_{\Xi}$. To this end we require the following proposition, which bounds the derivative of $H^\alpha_{\Xi}$ with respect to $\alpha$ over the interval $0 \leq \alpha \leq \frac{7}{10}$.
\begin{proposition} \label{prop:bound_de_alpha_H}
\begin{equation}
    \sup_{0 \leq \alpha \leq \frac{7}{10}} \left\| \de_\alpha H^\alpha_{\Xi} \right\|_2 \leq 38883 
\end{equation}
\end{proposition}
\begin{proof}
See Appendix \ref{sec:proof_de_alpha_H}.
\end{proof}
Proposition \ref{prop:bound_de_alpha_H} combined with Theorem \ref{thm:Taylor_for_eigenvalues_1} explains the choice of distance $h = \frac{1}{388831}$ between grid points. Assuming that the first positive eigenvalue of $H^\alpha_{\Xi}$ is bounded below by $\frac{8}{10}$ at grid points between $0$ and $\frac{7}{10}$ separated by $h$, we see that as long as
\begin{equation}
    \frac{38883 h}{2} < \frac{8}{10} - \frac{3}{4} \iff h < \frac{1}{388830},
\end{equation}
Proposition \ref{prop:bound_de_alpha_H} and Theorem \ref{thm:Taylor_for_eigenvalues_1} guarantee that the first eigenvalue of $H^\alpha_{\Xi}$ must be greater than $\frac{3}{4}$ over the whole interval $0 \leq \alpha \leq \frac{7}{10}$.
\end{proof}
\begin{remark}
In the proof of Proposition \ref{as:PHP_gap} we bound the round-off error which can occur in our numerical computations in order to draw rigorous conclusions. A common approach to this is interval arithmetic, see Rump\cite{Rump2010} and references therein. Our approach applies directly to the present problem and is just as rigorous.
\end{remark}
The results of a  computation of the eigenvalues of $H^\alpha_{\Xi}$ are shown in Figure \ref{fig:show_PHP_gap}.

\begin{figure}
\includegraphics[scale=.4]{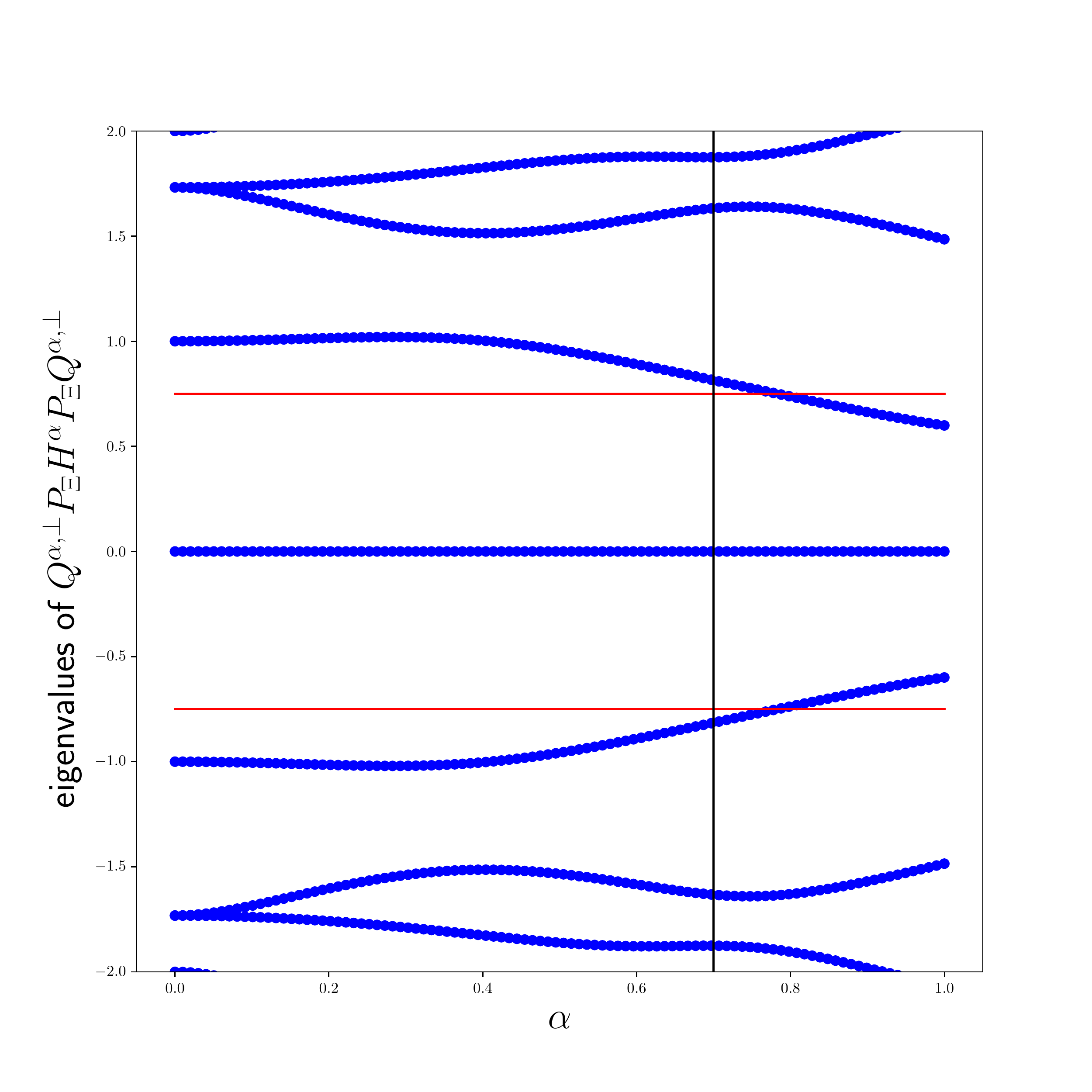}
\caption{
Plot of numerically computed eigenvalues of the 81$\times$81 matrix $H^\alpha_{\Xi}$ acting on $P_\Xi L^2_{K,1}$ (blue lines), showing the first non-zero eigenvalues are bounded away from $0$ by $\frac{3}{4}$ (red lines) when $\alpha$ is less than $\frac{7}{10}$ (black line). The zero eigenvalue corresponds to the subspace spanned by $\psi^{8,\alpha}$, and the non-zero eigenvalues equal those of the 80$\times$80 matrix $Q^{\alpha,\perp} P_\Xi H^\alpha P_\Xi Q^{\alpha,\perp}$ acting on $Q^{\alpha,\perp} P_\Xi L^2_{K,1}$ since non-zero eigenvectors $v$ of $Q^{\alpha,\perp} P_\Xi H^\alpha P_\Xi Q^{\alpha,\perp}$ must be orthogonal to $\psi^{8,\alpha}$ by orthogonality of eigenvectors corresponding to different eigenvalues. 
}
\label{fig:show_PHP_gap}
\end{figure}

Assuming Proposition \ref{prop:mu_choice} and Proposition \ref{as:PHP_gap}, the bound \eqref{eq:bound_below} becomes, for all $0 \leq \alpha \leq \frac{7}{10}$,
\begin{equation}
    \| Q^{\alpha,\perp} H^\alpha Q^{\alpha,\perp} \eta^\alpha \| \geq \left( \frac{3}{4} - \alpha \right) \| Q^{\alpha,\perp} \eta^\alpha \|.
\end{equation}
We now assume the following, proved in Appendix \ref{sec:TKV_expansion}.
\begin{proposition} \label{prop:bound_H1psi8}
$\| H^1 \Psi^8 \| \leq \frac{3}{20}$.
\end{proposition}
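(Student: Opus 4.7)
The plan is to obtain the bound by computing $\Psi^8$ explicitly in the chiral basis of Appendix \ref{sec:chiral}, applying $H^1$ to it, and reading off the $L^2$ norm as the Euclidean norm of a finite rational coefficient vector. In the chiral basis, $H^0$ is diagonal with explicit real eigenvalues listed in Proposition \ref{prop:H0_prop}, and $H^1$ only couples nearest neighbors in the momentum lattice (this locality is the structural fact exploited throughout the paper, cf. Proposition \ref{prop:mu_choice}). Consequently, starting from $\Psi^0 = e_1$, each application of $P^\perp (H^0)^{-1} P^\perp H^1$ in the recursion \eqref{eq:H0_inv} enlarges the support of $\Psi^n$ in the chiral basis by only one layer of nearest neighbors, so $\Psi^8$ is a finite linear combination of basis functions with explicit complex rational coefficients.

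First, I would iterate the recursion $\Psi^n = - P^\perp (H^0)^{-1} P^\perp H^1 \Psi^{n-1}$ in exact rational arithmetic (as is already done elsewhere in the paper using the Sympy-based scripts, e.g.\ \texttt{compute\_expansion\_symbolically.py}), obtaining $\Psi^8$ as a vector of rational coefficients in the chiral basis. Next, apply $H^1$ to $\Psi^8$ once more, again exactly, giving a finite list of coefficients. Since the chiral basis is orthonormal, $\|H^1 \Psi^8\|_{L^2}^2$ equals the sum of squared moduli of those coefficients, and one verifies the resulting rational number is bounded by $(3/20)^2 = 9/400$.

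The naive route via Proposition \ref{prop:H1H0bound} and the explicit value $\|\Psi^8\| = \sqrt{183643119755214454}/4997570760$ from Proposition \ref{prop:Psi_n_norms} gives only $\|H^1 \Psi^8\| \leq 3\|\Psi^8\| \approx 0.257$, which exceeds $3/20 = 0.15$, so the direct computation is essential. One can slightly streamline the computation by noting that $H^0 \Psi^9 = - P^\perp H^1 \Psi^8$ together with $P H^0 = P H^0 P^\perp = 0$ on $\ran \Psi^n$ yields the Pythagorean identity
\begin{equation}
    \|H^1 \Psi^8\|^2 = \|P H^1 \Psi^8\|^2 + \|H^0 \Psi^9\|^2,
\end{equation}
so that it suffices to compute $P H^1 \Psi^8$ (a single inner product with $e_1$) and the norm of $H^0 \Psi^9$; but the shortest route is just the direct evaluation of $\|H^1 \Psi^8\|^2$ once $\Psi^8$ is tabulated.

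The main obstacle is purely bookkeeping: the support of $\Psi^8$ in the chiral basis is large and the rational coefficients have sizable numerators and denominators, so the verification is most naturally carried out by the same exact symbolic machinery used in the rest of the paper. Because the quantities are rational, no round-off analysis of the kind needed for Propositions \ref{as:PHP_gap} and \ref{as:Fermi_v_zero} is required here; the inequality $\|H^1 \Psi^8\|^2 \leq 9/400$ reduces to comparing two explicit rational numbers.
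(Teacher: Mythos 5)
Your proposal is correct and matches the paper's proof, which is exactly the explicit computation you describe: apply $H^1$ to the tabulated $\Psi^8$ using Proposition \ref{prop:H1_L2K1}, use orthonormality of the chiral basis to read off $\| H^1 \Psi^8 \| = \sqrt{4855076200233765642}/14992712280 \approx 0.147 \leq \frac{3}{20}$. Your observation that the crude bound $3\|\Psi^8\| \approx 0.257$ is insufficient correctly identifies why the exact evaluation is needed.
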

We can now give the proof of Theorem \ref{th:error_theorem_2}.
\begin{proof}[Proof of Theorem \ref{th:error_theorem_2}]
The proof follows immediately from Lemma \ref{lem:decompose}, Proposition \ref{prop:mu_choice}, Assumption \ref{as:PHP_gap}, and Proposition \ref{prop:bound_H1psi8}.
\end{proof}

\section*{Supplementary material}

In the supplementary material we list the chiral basis functions which span the space $\Xi$, list terms $\Psi^5-\Psi^8$ in the formal expansion of $\psi^\alpha$, and derive the TKV Hamiltonian from the Bistritzer-MacDonald model.

\appendix


\section{Derivation of expression for Fermi velocity in terms of $L^2_{K,1,1}$ zero mode of $H^\alpha$} \label{sec:K_Dirac}

The Bloch eigenvalue problem for the TKV Hamiltonian at quasi-momentum $\vec{k}$ is
\begin{equation}
    H^\alpha \psi^\alpha_{\vec{k}} = E_{\vec{k}} \psi^\alpha_{\vec{k}}
\end{equation}
where $H^\alpha$ is as in \eqref{eq:chiral_H} and 
\begin{equation}
    \psi^\alpha_{\vec{k}}(\vec{r} + \vec{v}) = e^{i \vec{k} \cdot \vec{v}} \diag(1,e^{i \vec{q}_1 \cdot \vec{v}},1,e^{i \vec{q}_1 \cdot \vec{v}}) \psi^\alpha_{\vec{k}}(\vec{r}) \quad \forall \vec{v} \in \Lambda.
\end{equation}
By Propositions \ref{prop:analytic_zeromodes} and \ref{prop:K_symmetry}, $0$ is a two-fold (at least) degenerate eigenvalue at the moir\'e $K$ point $\vec{k} = 0$, with associated eigenfunctions $\psi_{\pm 1}^\alpha$ as in Proposition \ref{prop:K_symmetry}. In what follows we assume that $0$ is \emph{exactly} two-fold degenerate so that $\psi_{\pm 1}^\alpha$ form a basis of the degenerate eigenspace. This assumption is clearly true for small $\alpha$ but could in principle be violated for $\alpha > 0$.

Introducing $\chi^\alpha_{\vec{k}} := e^{- i \vec{k} \cdot \vec{r}} \psi^\alpha_{\vec{k}}$, we derive the equivalent Bloch eigenvalue problem with $\vec{k}$-independent boundary conditions 
\begin{equation} \label{eq:periodic_Bloch_prob}
    H^\alpha_{\vec{k}} \chi^\alpha_{\vec{k}} = E_{\vec{k}} \chi^\alpha_{\vec{k}},
\end{equation}
where
\begin{equation}
    H^\alpha_{\vec{k}} := \begin{pmatrix} 0 & D_{\vec{k}}^{\alpha\dagger} \\ D_{\vec{k}}^\alpha & 0 \end{pmatrix}, \quad D_{\vec{k}}^\alpha = \begin{pmatrix} D_{x} + k_x + i( D_y + k_y ) & \alpha {U}(\vec{r}) \\ \alpha {U}(-\vec{r}) & D_x + k_x + i( D_y + k_y ) \end{pmatrix},
\end{equation}
where $D_{x,y} := - i \de_{x,y}$, 
and
\begin{equation}
    \chi^\alpha_{\vec{k}}(\vec{r} + \vec{v}) = \diag(1,e^{i \vec{q}_1 \cdot \vec{v}},1,e^{i \vec{q}_1 \cdot \vec{v}}) \chi^\alpha_{\vec{k}}(\vec{r}) \quad \forall \vec{v} \in \Lambda.
\end{equation}
Clearly $\psi_{\pm 1}^\alpha$ remain a basis of the zero eigenspace for the problem \eqref{eq:periodic_Bloch_prob} at $\vec{k} = 0$.

Differentiating the operator $D^\alpha_{\vec{k}}$ we find $\de_{k_x} D^\alpha_{\vec{k}} = I_2$ and $\de_{k_y} D^\alpha_{\vec{k}} = i I_2$, where $I_2$ denotes the $2 \times 2$ identity matrix, so that
\begin{equation} \label{eq:diff_H}
    \de_{k_x} H^\alpha_{\vec{k}} = \begin{pmatrix} 0 & I_2 \\ I_2 & 0 \end{pmatrix}, \quad \de_{k_y} H^\alpha_{\vec{k}} = \begin{pmatrix} 0 & - i I_2 \\ i I_2 & 0 \end{pmatrix}.
\end{equation}
By degenerate perturbation theory \cite{messiah1962quantum}, for small $\vec{k}$ we have that eigenfunctions $\chi^\alpha_{\vec{k}}$ of \eqref{eq:periodic_Bloch_prob} are given by 
\begin{equation}
    \chi^\alpha_{\vec{k}} \approx \sum_{\sigma = \pm 1} c_{\sigma,\vec{k}} \psi_\sigma^\alpha,
\end{equation}
where the coefficients $c_{\sigma,\vec{k}}$ and associated eigenvalues $E_{\vec{k}} \approx \epsilon_{\vec{k}}$ are found by solving the matrix eigenvalue problem
\begin{equation} \label{eq:k_dot_p}
    \begin{pmatrix} \frac{ \ip{ \psi_1^\alpha }{ \vec{k} \cdot  \nabla_{\vec{k}} H^\alpha_{\vec{0}} \psi_1^\alpha }}{ \ip{ \psi_1^\alpha }{ \psi_1^\alpha} } & \frac{ \ip{ \psi_1^\alpha }{ \vec{k} \cdot  \nabla_{\vec{k}} H^\alpha_{\vec{0}} \psi_{-1}^\alpha } }{ \ip{ \psi_1^\alpha }{ \psi_1^\alpha } } \\ \frac{ \ip{ \psi_{-1}^\alpha }{ \vec{k} \cdot  \nabla_{\vec{k}} H^\alpha_{\vec{0}} \psi_1^\alpha } }{ \ip{ \psi_{-1}^\alpha }{ \psi_{-1}^\alpha } } & \frac{ \ip{ \psi_{-1}^\alpha }{ \vec{k} \cdot \nabla_{\vec{k}} H^\alpha_{\vec{0}} \psi_{-1}^\alpha } }{ \ip{ \psi_{-1}^\alpha }{ \psi_{-1}^\alpha } } \end{pmatrix} \begin{pmatrix} c_{+1,\vec{k}} \\ c_{-1,\vec{k}} \end{pmatrix} = \epsilon_{\vec{k}} \begin{pmatrix} c_{+1,\vec{k}} \\ c_{-1,\vec{k}} \end{pmatrix}.
\end{equation}
Using \eqref{eq:diff_H} and the explicit forms of $\psi_{\pm 1}^\alpha$ given by Proposition \ref{prop:K_symmetry}, we find that the matrix on the left-hand side of \eqref{eq:k_dot_p} can be simplified to 
\begin{equation}
    \begin{pmatrix} 0 & \lambda(\alpha) ( k_x - i k_y ) \\ \lambda^*(\alpha) ( k_x + i k_y ) & 0 \end{pmatrix}, \quad \lambda(\alpha) := \frac{ \ip{ \psi^\alpha_1(\vec{r}) }{ \psi^{\alpha *}_1(-\vec{r}) } }{ \ip{ \psi_1^\alpha }{ \psi_1^\alpha } }.
\end{equation}
It follows that, for small $\vec{k}$, we have $E_{\vec{k}} \approx \pm v(\alpha) |\vec{k}|$, where $v(\alpha) = | \lambda(\alpha) |$ is as in \eqref{eq:Fermi_v_2}.

\section{The chiral basis of $L^2_{K,1}$ and action of $H^0$ and $H^1$ with respect to this basis} \label{sec:chiral}

\subsection{The spectrum and eigenfunctions of $H^0$ in $L^2_{K}$} \label{sec:H0_eigenfuncs}

The first task is to understand the spectrum and eigenfunctions of $H^0$ in $L^2_{K}$. In the next section we will discuss the spectrum and eigenfunctions of $H^0$ in $L^2_{K,1}$. Recall that
\begin{equation}
    H^0 = \begin{pmatrix} 0 & D^{0 \dagger} \\ D^{0} & 0 \end{pmatrix}, \quad D^0 = \begin{pmatrix} - 2 i \overline{\de} & 0 \\ 0 & - 2 i \overline{\de} \end{pmatrix},
\end{equation}
where $\overline{\de} = \frac{1}{2} ( \de_x + i \de_y )$. To describe the eigenfunctions of $H^0$ in $L^2_K$ we introduce some notation. Let $\vec{v} = \begin{pmatrix} v_1 , v_2 \end{pmatrix}$ be a vector in $\field{R}^2$. Then we will write
\begin{equation}
    z_{\vec{v}} = v_1 + i v_2, \quad \oldhat{z}_{\vec{v}} = \frac{ v_1 + i v_2 }{ |\vec{v}| }.
\end{equation}
Finally, let $V$ denote the area of the moir\'e cell $\Omega$.
\begin{proposition} \label{prop:L2K}
The zero eigenspace of $H^0$ in $L^2_K$ is spanned by
\begin{equation}
    \chi_\pm^{\vec{0}} = \frac{1}{\sqrt{2 V}} \begin{pmatrix} 1 , 0 , \pm 1 , 0 \end{pmatrix}.
\end{equation}
For all $\vec{G} \neq 0$ in the reciprocal lattice, then
\begin{equation} \label{eq:chi_pm}
    \chi_\pm^{\vec{G}}(\vec{r}) = \frac{1}{\sqrt{2 V}} \begin{pmatrix} 1 , 0 , \pm \oldhat{z}_{\vec{G}} , 0 \end{pmatrix} e^{i \vec{G} \cdot \vec{r}}
\end{equation}
are eigenfunctions with eigenvalues $\pm |\vec{G}|$. For all $\vec{G}$ in the reciprocal lattice,
\begin{equation} \label{eq:chi_pm_2}
    \chi_\pm^{\vec{q}_1 + \vec{G}}(\vec{r}) = \frac{1}{\sqrt{2 V}} \begin{pmatrix} 0 , 1 , 0 , \pm \oldhat{z}_{\vec{G} + \vec{q}_1} \end{pmatrix} e^{i (\vec{q}_1 + \vec{G}) \cdot \vec{r}}
\end{equation}
are eigenfunctions with eigenvalues $\pm |\vec{q}_1 + \vec{G}|$. The operator $H^0$ has no other eigenfunctions in $L^2_K$ other than linear combinations of these, and hence the spectrum of $H^0$ in $L^2_K$ is
\begin{equation}
    \sigma_{L^2_K}(H^0) = \left\{ \pm |\vec{G}|, \pm |\vec{q}_1 + \vec{G}| : \vec{G} \in \Lambda^* \right\}.
\end{equation}
\end{proposition}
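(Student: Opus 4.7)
The plan is to diagonalize $H^0$ by Fourier series. Any $f \in L^2_K$ can be expanded as
\begin{equation}
    f(\vec{r}) = \sum_{\vec{G} \in \Lambda^*} \left[ a_{\vec{G}} e^{i\vec{G}\cdot\vec{r}} \vec{e}_1 + b_{\vec{G}} e^{i(\vec{q}_1 + \vec{G})\cdot\vec{r}} \vec{e}_2 + c_{\vec{G}} e^{i\vec{G}\cdot\vec{r}} \vec{e}_3 + d_{\vec{G}} e^{i(\vec{q}_1 + \vec{G})\cdot\vec{r}} \vec{e}_4 \right],
\end{equation}
since components $1,3$ are $\Lambda$-periodic and components $2,4$ satisfy $f(\vec{r}+\vec{v}) = e^{i \vec{q}_1 \cdot \vec{v}} f(\vec{r})$. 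This is just the usual $L^2$ Fourier series expansion applied to the Bloch factors $e^{-i\vec{q}_1\cdot\vec{r}}$ times the second and fourth components.

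First I would compute the action of $-2i\overline{\partial}$ and its adjoint on plane waves: a short calculation using $\overline{\partial} = \frac{1}{2}(\partial_x + i\partial_y)$ shows $-2i\overline{\partial} e^{i\vec{k}\cdot\vec{r}} = z_{\vec{k}} e^{i\vec{k}\cdot\vec{r}}$ and $(-2i\overline{\partial})^* e^{i\vec{k}\cdot\vec{r}} = \bar{z}_{\vec{k}} e^{i\vec{k}\cdot\vec{r}}$. Because $H^0$ is a constant-coefficient differential operator and moreover acts independently on the $(1,3)$ and $(2,4)$ sector pairs, in the Fourier basis it is block-diagonal, consisting, for each $\vec{G} \in \Lambda^*$, of two decoupled $2 \times 2$ blocks: one coupling the $(a_{\vec{G}}, c_{\vec{G}})$ coefficients with matrix $\bigl(\begin{smallmatrix} 0 & \bar{z}_{\vec{G}} \\ z_{\vec{G}} & 0 \end{smallmatrix}\bigr)$, and an analogous one coupling $(b_{\vec{G}}, d_{\vec{G}})$ with $\vec{G}$ replaced by $\vec{q}_1 + \vec{G}$.

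Each such $2\times 2$ block has spectrum $\{ \pm |\vec{k}| \}$ (with $\vec{k} = \vec{G}$ or $\vec{q}_1 + \vec{G}$) and normalized eigenvectors $\frac{1}{\sqrt{2}} (1, \pm \hat{z}_{\vec{k}})^\top$ when $\vec{k} \neq 0$, which when translated back to position space and normalized on $\Omega$ (area $V$) give precisely the formulas \eqref{eq:chi_pm} and \eqref{eq:chi_pm_2}. In the single exceptional case $\vec{G} = \vec{0}$ in the $(1,3)$ sector, the block is the zero matrix and contributes the two-dimensional zero eigenspace spanned by $\chi_\pm^{\vec{0}}$; the $(2,4)$ sector has no such exceptional case because $\vec{q}_1 + \vec{G} \neq \vec{0}$ for every $\vec{G}\in\Lambda^*$ (a quick check using $\vec{q}_1 = (0,-1)$ and the explicit $\vec{b}_1, \vec{b}_2$).

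Completeness of the listed eigenfunctions (hence the claim that there are no other eigenfunctions, and the stated spectrum) follows from completeness of the plane-wave Fourier basis: any $f \in L^2_K$ can be decomposed into the Fourier modes above, and then each mode is itself expanded in the orthonormal eigenbasis of its $2\times 2$ block. The main (mild) obstacle is bookkeeping, namely carefully verifying that $\hat{z}_{\vec{G}} \bar{z}_{\vec{G}} = |\vec{G}|$ so the eigenvalue identities come out correctly, treating the $\vec{G} = \vec{0}$ degenerate block separately, and confirming the normalization $\|\chi_\pm^{\vec{k}}\|_{L^2_K} = 1$ with the factor $\frac{1}{\sqrt{2V}}$.
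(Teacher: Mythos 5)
Your proposal is correct and is essentially the paper's own proof: the paper simply states that the result is ``a straightforward calculation taking into account the $L^2_K$ boundary conditions,'' and your Fourier expansion into $(1,3)$ and $(2,4)$ sector blocks, diagonalization of each $2\times 2$ block $\bigl(\begin{smallmatrix} 0 & \bar{z}_{\vec{k}} \\ z_{\vec{k}} & 0\end{smallmatrix}\bigr)$, and separate treatment of the degenerate $\vec{G}=\vec{0}$ block is exactly that calculation written out in full.
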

\begin{proof}
The proof is a straightforward calculation taking into account the $L^2_K$ boundary conditions given by \eqref{eq:L2k_space} with $\vec{k} = 0$. For example, ${e}_2$ and ${e}_4$ are zero eigenfunctions of $H^0$ but in $L^2_{K'}$, not $L^2_K$.
\end{proof}

Note that (as it must be because of the chiral symmetry) the spectrum is symmetric about $0$ and all of the eigenfunctions with negative eigenvalues are given by applying $\mathcal{S}$ to the eigenfunctions with positive eigenvalues. 

The union of the lattices $\Lambda^*$ and $\Lambda^* + \vec{q}_1$ has the form of a honeycomb lattice in momentum space, where the lattice $\Lambda^*$ corresponds to ``A'' sites and $\Lambda^* + \vec{q}_1$ corresponds to ``B'' sites (or vice versa), see Figure \ref{fig:H0_eigenvalues}.

\begin{figure}
\centering
\includegraphics[scale=.5]{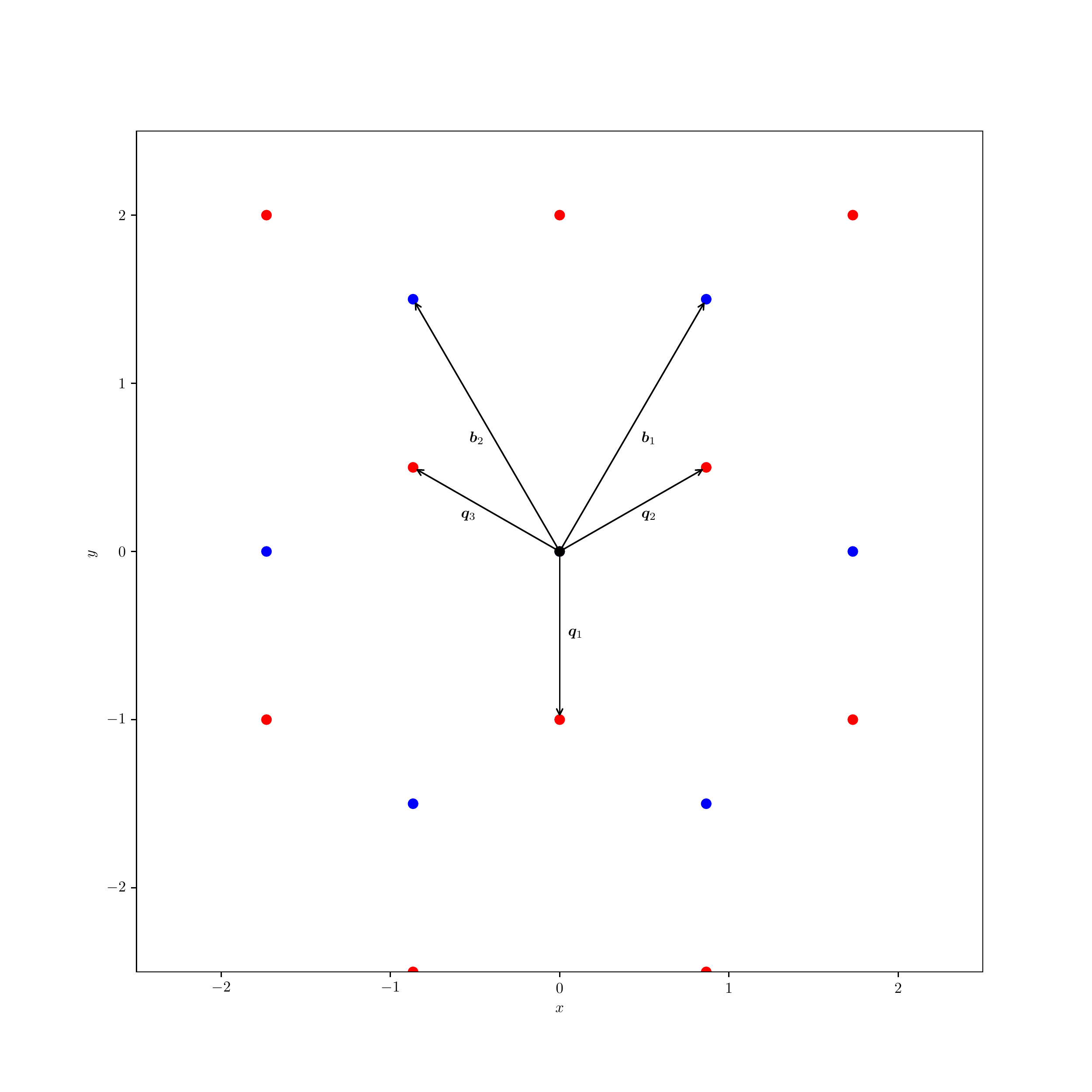}
\caption{Diagram showing $A$ (blue) and $B$ (red) sites of the momentum space lattice. Each site corresponds to two $L^2_K$-eigenvalues of $H^0$, given by $\pm$ the distance between the site and the origin (black). The lattice vectors $\vec{b}_1$ and $\vec{b}_2$ are shown, as well as the $A$ site nearest-neighbor vectors $\vec{q}_1$, $\vec{q}_2$, $\vec{q}_3$.}
\label{fig:H0_eigenvalues}
\end{figure}

\begin{figure}
\centering
\includegraphics[scale=.5]{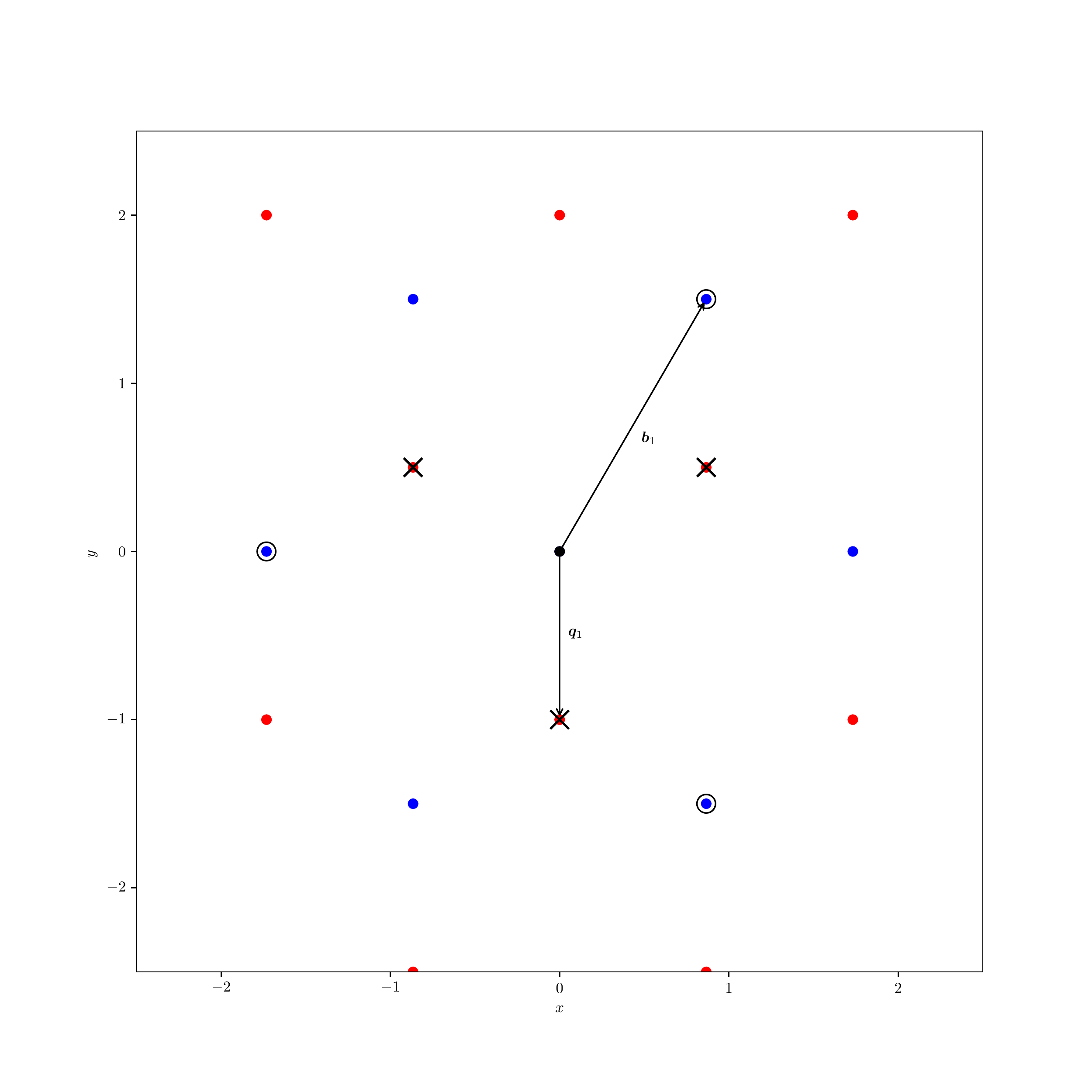}
\caption{Diagram showing support of $L^2_{K,1}$-eigenfunctions of $H^0$ superposed on the momentum space lattice. Each eigenfunction is given by superposing an $L^2_{K}$-eigenfunction of $H^0$ with its rotations by $\frac{2 \pi}{3}$ and $\frac{4 \pi}{3}$. The support of the eigenfunctions $\chi^{\pm \widetilde{\vec{q}_1}}$ with eigenvalues $\pm 1$ is shown with black crosses, while the support of the eigenfunctions $\chi^{\pm \widetilde{\vec{b}_1}}$ with eigenvalues $\pm \sqrt{3}$ is shown with black circles.
}
\label{fig:H0_L2K1_eigenvalues}
\end{figure}


\subsection{The spectrum and eigenfunctions of $H^0$ in $L^2_{K,1}$} \label{sec:H0_eigenfuncs_rotation}

We now discuss the spectrum of $H^0$ in $L^2_{K,1}$. 
\begin{proposition} \label{prop:L2K1}
The zero eigenspace of $H^0$ in $L^2_{K,1}$ is spanned by
\begin{equation}
    \chi^{\widetilde{\vec{0}}} := \frac{1}{\sqrt{V}} e_1.
\end{equation}
For all $\vec{G} \neq 0$ in the reciprocal lattice $\Lambda^*$,
\begin{equation}
    \chi^{\widetilde{\vec{G}}}_\pm := \frac{1}{\sqrt{3}} \sum_{k = 0}^2 \mathcal{R}^k \chi_\pm^{\vec{G}} = \frac{1}{\sqrt{3}}  \sum_{k = 0}^2 \chi_\pm^{ (R_\phi^*)^k \vec{G}}
\end{equation}
are eigenfunctions of $H^0$ in $L^2_{K,1}$ with associated eigenvalues $\pm |\vec{G}|$. For all $\vec{G}$ in the reciprocal lattice $\Lambda^*$,
\begin{equation}
    \chi^{\pm \widetilde{\vec{G} + \vec{q}_1}} = \frac{1}{\sqrt{3}} \sum_{k = 0}^2 \mathcal{R}^k \chi_\pm^{\vec{G} + \vec{q}_1} = \frac{1}{\sqrt{3}}  \sum_{k = 0}^2 \chi_\pm^{ (R_\phi^*)^k (\vec{G} + \vec{q}_1) }
\end{equation}
are eigenfunctions of $H^0$ in $L^2_{K,1}$ with associated eigenvalues $\pm |\vec{q}_1 + \vec{G}|$. The operator $H^0$ has no other eigenfunctions in $L^2_{K,1}$ other than linear combinations of these, and hence the spectrum of $H^0$ in $L^2_{K,1}$ is
\begin{equation}
    \sigma_{L^2_{K,1}}(H^0) = \left\{ \pm |\vec{G}|, \pm |\vec{q}_1 + \vec{G}| : \vec{G} \in \Lambda^* \right\}.
\end{equation}
\end{proposition}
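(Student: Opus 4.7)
The strategy is to exploit that $\mathcal{R}$ commutes with $H^0$ (equation \eqref{eq:rot_sym} at $\alpha = 0$) and preserves $L^2_K$ by Proposition \ref{prop:k_rot} (since $R_\phi^* \vec{0} = \vec{0}$ mod $\Lambda^*$). Hence $\mathcal{R}$ acts on each $H^0$-eigenspace inside $L^2_K$, and an $L^2_{K,1}$-eigenbasis for $H^0$ is obtained by applying the projector $\Pi_1 := \tfrac{1}{3}(I + \mathcal{R} + \mathcal{R}^2)$ onto the $\mathcal{R}=1$ eigenspace to the $L^2_K$-eigenbasis from Proposition \ref{prop:L2K}, together with a normalization check.

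First I would verify that $\mathcal{R}$ permutes the basis of Proposition \ref{prop:L2K} according to the action of $R_\phi^*$ on momenta: $\mathcal{R}\chi^{\vec{G}}_\pm = \chi^{R_\phi^*\vec{G}}_\pm$ and $\mathcal{R}\chi^{\vec{q}_1+\vec{G}}_\pm = \chi^{R_\phi^*(\vec{q}_1+\vec{G})}_\pm$. This is a direct calculation: since $R_\phi^*$ rotates vectors by $-\phi$, $\oldhat{z}_{R_\phi^*\vec{v}} = e^{-i\phi}\oldhat{z}_{\vec{v}}$, and the factor $e^{-i\phi}$ from the diagonal matrix in $\mathcal{R}$ supplies exactly this phase. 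Since $(R_\phi^*)^3 = I$ and $R_\phi^*$ preserves both $\Lambda^*$ and $\Lambda^* + \vec{q}_1$ (using $R_\phi^* \vec{q}_1 = \vec{q}_1 + \vec{b}_2$ together with $R_\phi^*\Lambda^* = \Lambda^*$), the symmetrized sums $\chi^{\widetilde{\vec{G}}}_\pm := \tfrac{1}{\sqrt{3}}\sum_{k=0}^{2}\mathcal{R}^k\chi^{\vec{G}}_\pm$ and $\chi^{\widetilde{\vec{q}_1+\vec{G}}}_\pm := \tfrac{1}{\sqrt{3}}\sum_{k=0}^{2}\mathcal{R}^k\chi^{\vec{q}_1+\vec{G}}_\pm$ are $\mathcal{R}$-invariant, hence lie in $L^2_{K,1}$, and each inherits the $H^0$-eigenvalue $\pm|\vec{G}|$ or $\pm|\vec{q}_1+\vec{G}|$ because $R_\phi^*$ is an isometry.

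For normalization, the key observation is that every nonzero vector in $\mathbb{R}^2$ has $R_\phi^*$-orbit of size exactly three, because a proper rotation has no nontrivial fixed points. For $\vec{G}\neq\vec{0}$ in $\Lambda^*$, the three summands are thus orthogonal plane waves of unit $L^2_K$-norm, so $\|\chi^{\widetilde{\vec{G}}}_\pm\|=1$; the same argument applies to $\vec{q}_1+\vec{G}$ since $-\vec{q}_1\notin\Lambda^*$, ruling out any fixed point of $R_\phi^*$ in $\Lambda^*+\vec{q}_1$. For the zero eigenvalue, applying $\Pi_1$ to $\chi^{\vec{0}}_\pm = \tfrac{1}{\sqrt{2V}}(1,0,\pm 1,0)$ annihilates the third component (since $1+e^{-i\phi}+e^{-2i\phi}=0$) and retains the first, producing $\chi^{\widetilde{\vec{0}}} = \tfrac{1}{\sqrt{V}}e_1$; in particular the zero eigenspace of $H^0$ in $L^2_{K,1}$ is one-dimensional.

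Completeness and the spectrum formula then follow because $\Pi_1$ maps the $L^2_K$-orthonormal basis of Proposition \ref{prop:L2K} onto an $L^2_{K,1}$-spanning set, with distinct $R_\phi^*$-orbits on $\Lambda^*\cup(\Lambda^*+\vec{q}_1)$ producing mutually orthogonal symmetrizations by disjointness of their momentum supports; the enumeration above exhausts all such orbits. The main obstacle is bookkeeping rather than analysis: verifying orbit sizes and the index-shifting identities showing $R_\phi^*$ preserves $\Lambda^*+\vec{q}_1$, so that the symmetrization is well-defined and nondegenerate. No deeper analytic difficulty arises.
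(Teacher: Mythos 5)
Your proposal is correct and is essentially the calculation the paper intends: the paper's proof is the one-line remark that the result is ``another straightforward calculation starting from Proposition \ref{prop:L2K},'' and symmetrizing the $L^2_K$-eigenbasis over $\mathcal{R}$ (equivalently, applying $\Pi_1=\tfrac{1}{3}(I+\mathcal{R}+\mathcal{R}^2)$) is exactly that calculation. Your verification of the permutation rule $\mathcal{R}\chi^{\vec{G}}_\pm=\chi^{R_\phi^*\vec{G}}_\pm$, the orbit-size-three normalization (including that $0\notin\Lambda^*+\vec{q}_1$), and the collapse of $\chi^{\vec{0}}_\pm$ to $\tfrac{1}{\sqrt{V}}e_1$ supplies all the details the paper omits.
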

\begin{proof}
The proof is another straightforward calculation starting from Proposition \ref{prop:L2K}. 
\end{proof}
For an illustration of the support of the $L^2_{K,1}$-eigenfunctions of $H^0$ on the momentum space lattice, see Figure \ref{fig:H0_L2K1_eigenvalues}. It is important to note that the notation introduced in Proposition \ref{prop:L2K1} is not one-to-one, because for example
\begin{equation}
    \chi^{\pm \widetilde{\vec{G}}} = \chi^{\pm \widetilde{R_\phi^* \vec{G}}} = \chi^{\pm \widetilde{ (R_\phi^*)^2 \vec{G} }}
\end{equation}
for any $\vec{G} \neq 0$ in $\Lambda^*$. 

\subsection{The chiral basis of $L^2_{K,1}$}

Recall that zero modes of $H^\alpha$ can be assumed to be eigenfunctions of the chiral symmetry operator $\mathcal{S}$. It follows that the most convenient basis for our purposes is not be the spectral basis just introduced but the basis of $L^2_{K,1}$ consisting of eigenfunctions of $\mathcal{S}$. We call this basis the chiral basis.
\begin{definition}
The chiral basis of $L^2_{K,1}$ is defined as the union of the functions
\begin{equation} \label{eq:chiral_zero}
    \chi^{\widetilde{\vec{0}}} = \frac{1}{\sqrt{V}} {e}_1,
\end{equation}
\begin{equation} \label{eq:chiral_G}
    \chi^{\widetilde{\vec{G}},\pm 1} := \frac{1}{\sqrt{2}} \left( \chi^{\widetilde{\vec{G}}} \pm \chi^{-\widetilde{\vec{G}}} \right), \quad \vec{G} \in \Lambda^* \setminus \{ \vec{0} \},
\end{equation}
and 
\begin{equation} \label{eq:chiral_q_G}
    \chi^{\widetilde{\vec{q}_1 + \vec{G}},\pm 1} := \frac{1}{\sqrt{2}} \left( \chi^{\widetilde{\vec{q}_1 + \vec{G}}} \pm \chi^{-\widetilde{\vec{q}_1 + \vec{G}}} \right), \quad \vec{G} \in \Lambda^*.
\end{equation}
\end{definition}
The following is straightforward.
\begin{proposition} \label{prop:L2K11}
The chiral basis is an orthonormal basis of $L^2_{K,1}$. The modes $\chi^{\widetilde{\vec{0}}}$, $\chi^{\widetilde{\vec{G}},1}$, and $\chi^{\widetilde{\vec{q}_1 + \vec{G}},1}$ are $+1$ eigenfunctions of $\mathcal{S}$, while the modes $\chi^{\widetilde{\vec{G}},-1}$ and $\chi^{\widetilde{\vec{q}_1 + \vec{G}},-1}$ are $-1$ eigenfunctions of $\mathcal{S}$.
\end{proposition}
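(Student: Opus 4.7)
The plan is to verify unit norm and pairwise orthogonality (giving orthonormality), then the $\mathcal{S}$-eigenvalue assignments, and finally completeness; each step is a direct consequence of the explicit formulas already established in Propositions \ref{prop:L2K} and \ref{prop:L2K1}.

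First I would observe that the chiral basis is obtained from the spectral basis of Proposition \ref{prop:L2K1} by applying, within each $H^0$-eigenspace pair $\{\chi^{\widetilde{\vec{G}}}_+,\chi^{\widetilde{\vec{G}}}_-\}$ (and analogously for $\vec{q}_1+\vec{G}$), the unitary change of basis $(x_+,x_-)\mapsto \frac{1}{\sqrt{2}}(x_++x_-,\, x_+-x_-)$, while leaving $\chi^{\widetilde{\vec{0}}}$ fixed. Unit norm of $\chi^{\widetilde{\vec{G}}}_\pm$ would follow from the fact that the three rotated terms $\chi^{(R_\phi^*)^k\vec{G}}_\pm$ appearing in the definition live on distinct Fourier modes and are hence $L^2$-orthogonal, while each $\chi^{\vec{G}}_\pm$ is already a unit vector by the normalization in \eqref{eq:chi_pm}. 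Orthogonality $\chi^{\widetilde{\vec{G}}}_+\perp \chi^{\widetilde{\vec{G}}}_-$ holds because $H^0$ is self-adjoint and these are eigenvectors with distinct eigenvalues $\pm|\vec{G}|$. A unitary transformation preserves orthonormality within each pair, and different eigenspace pairs remain orthogonal by the same self-adjointness argument. The function $\chi^{\widetilde{\vec{0}}}$ is trivially unit norm and orthogonal to every other basis vector since it spans the (one-dimensional, by Proposition \ref{prop:L2K1}) kernel of $H^0$ in $L^2_{K,1}$.

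For the $\mathcal{S}$-eigenvalue claims I would unfold \eqref{eq:chi_pm}: the first component of $\chi^{\vec{G}}_\pm$ is $(2V)^{-1/2}e^{i\vec{G}\cdot\vec{r}}$ independent of the sign $\pm$, whereas the third component is $\pm\oldhat{z}_{\vec{G}}(2V)^{-1/2}e^{i\vec{G}\cdot\vec{r}}$. Averaging over the rotations $\mathcal{R}^k$ acts through the diagonal matrix $\diag(1,1,e^{-i\phi},e^{-i\phi})$, which does not mix the upper two entries with the lower two and hence preserves the sign pattern in the $\pm$ dependence. Therefore $\chi^{\widetilde{\vec{G}}}+\chi^{-\widetilde{\vec{G}}}$ has vanishing third and fourth entries, while $\chi^{\widetilde{\vec{G}}}-\chi^{-\widetilde{\vec{G}}}$ has vanishing first and second entries, placing $\chi^{\widetilde{\vec{G}},+1}$ in the $+1$ eigenspace of $\mathcal{S}=\diag(1,1,-1,-1)$ and $\chi^{\widetilde{\vec{G}},-1}$ in the $-1$ eigenspace. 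The argument is identical for the $\vec{q}_1+\vec{G}$ modes using \eqref{eq:chi_pm_2}, with the 2nd and 4th entries carrying the $\pm$ sign, and $\chi^{\widetilde{\vec{0}}}=V^{-1/2}e_1$ is in the $+1$ eigenspace of $\mathcal{S}$ by inspection.

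Completeness then follows at once from completeness of the spectral basis in Proposition \ref{prop:L2K1}, since the chiral basis is related to it by an invertible block-diagonal transformation. There is no real obstacle; the only care required is notational, namely keeping straight the convention that $\chi^{\widetilde{\vec{G}}}$ (with no subscript) denotes $\chi^{\widetilde{\vec{G}}}_+$ and $\chi^{-\widetilde{\vec{G}}}$ denotes $\chi^{\widetilde{\vec{G}}}_-$, so that the $\pm$ in the definition \eqref{eq:chiral_G} refers consistently to the sign of the $H^0$-eigenvalue being combined rather than to the $\mathcal{S}$-eigenvalue being produced.
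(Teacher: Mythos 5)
The paper offers no proof of this proposition (it is introduced with ``The following is straightforward''), so there is nothing to compare against; your direct verification is the natural route and is essentially correct, including your reading of the convention that the prefix minus in $\chi^{-\widetilde{\vec{G}}}$ refers to the sign of the $H^0$-eigenvalue rather than to the lattice vector $-\vec{G}$, which is confirmed by matching against the explicit forms \eqref{eq:A_chiral_efuncs}--\eqref{eq:B_chiral_efuncs}. Two small points to tighten. First, the claim that ``different eigenspace pairs remain orthogonal by the same self-adjointness argument'' does not cover all cases: distinct rotation orbits can share the same eigenvalue magnitude (the orbits of $\vec{b}_1$ and of $-\vec{b}_1$ both give eigenvalues $\pm\sqrt{3}$, and the $\pm 7$ eigenspace listed in the supplementary material is spanned by two distinct orbits), so for two $+$ eigenfunctions from different orbits with equal $|\vec{G}|$ self-adjointness gives nothing. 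The disjoint-Fourier-support argument you already invoke for the norm computation is what is needed: functions built on the disjoint orbits $\{(R_\phi^*)^k\vec{G}\}_k$ and $\{(R_\phi^*)^k\vec{G}'\}_k$ are $L^2$-orthogonal regardless of their $H^0$-eigenvalues. Second, in the $\vec{q}_1+\vec{G}$ case it is only the fourth entry of $\chi_\pm^{\vec{q}_1+\vec{G}}$ that carries the $\pm$ sign (the second entry is $1$ independently of the sign); your phrasing suggests both do, though the conclusion you draw --- sum supported on entries $1,2$, difference on entries $3,4$ --- is the right one. With these touch-ups the argument is complete.
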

Written out, chiral basis functions have a very simple form. We have
\begin{equation} \label{eq:A0_chiral_efuncs}
    \chi^{\widetilde{\vec{0}}} = \frac{1}{\sqrt{V}} {e}_1,
\end{equation}
and for all $\vec{G} \in \Lambda^* \setminus \{ \vec{0} \}$,
\begin{equation} \label{eq:A_chiral_efuncs}
    \chi^{\widetilde{\vec{G}},1}(\vec{r}) = \frac{1}{\sqrt{3 V}} {e}_1 \sum_{k = 0}^2 e^{i ( (R_\phi^*)^k \vec{G} ) \cdot \vec{r}}, \quad \chi^{\widetilde{\vec{G}},-1}(\vec{r}) = \frac{1}{\sqrt{3 V}} \oldhat{z}_{\vec{G}} {e}_3 \sum_{k = 0}^2 e^{- i k \phi} e^{i ( (R_\phi^*)^k \vec{G} ) \cdot \vec{r}},
\end{equation}
and for all $\vec{G} \in \Lambda^*$,
\begin{equation} \label{eq:B_chiral_efuncs}
\begin{split}
    \chi^{\widetilde{\vec{G}+\vec{q}_1},1}(\vec{r}) &= \frac{1}{\sqrt{3 V}} {e}_2 \sum_{k = 0}^2 e^{i ( (R_\phi^*)^k (\vec{q}_1 + \vec{G}) ) \cdot \vec{r}}, \\
    \chi^{\widetilde{\vec{G} + \vec{q}_1},-1}(\vec{r}) &= \frac{1}{\sqrt{3 V}} \oldhat{z}_{\vec{G} + \vec{q}_1} {e}_4 \sum_{k = 0}^2 e^{- i k \phi} e^{i ( (R_\phi^*)^k (\vec{q}_1 + \vec{G}) ) \cdot \vec{r}}.
\end{split}
\end{equation}
We use the chiral basis to divide up $L^2_{K,1}$ as follows. 
\begin{definition}
We define spaces $L^2_{K,1,\pm 1}$ to be the spans of the $\pm 1$ eigenfunctions of $\mathcal{S}$ in $L^2_{K,1}$, respectively. 
\end{definition}
Clearly we have
\begin{equation}
    L^2_{K,1} = L^2_{K,1,1} \oplus L^2_{K,1,-1}.
\end{equation}
We can divide up the chiral basis more finely as follows.
\begin{definition} \label{def:A_B_momentum}
We define
\begin{equation}
\begin{split}
    L^2_{K,1,1,A} &:= \left\{ \chi^{\widetilde{\vec{0}}} \right\} \cup \left\{ \chi^{\widetilde{\vec{G}},1} : \vec{G} \in \Lambda^* \setminus \{ \vec{0} \} \right\} , \\
    L^2_{K,1,1,B} &:= \left\{ \chi^{\widetilde{\vec{G} + \vec{q}_1},1} : \vec{G} \in \Lambda^* \right\} , \\
    L^2_{K,1,-1,A} &:= \left\{ \chi^{\widetilde{\vec{G}},-1} : \vec{G} \in \Lambda^* \setminus \{ \vec{0} \} \right\} , \\
    L^2_{K,1,-1,B} &:= \left\{ \chi^{\widetilde{\vec{G} + \vec{q}_1},-1} : \vec{G} \in \Lambda^* \right\}.
\end{split}
\end{equation}
\end{definition}
\begin{remark}
Note that the notation $A$ and $B$ in Definition \ref{def:A_B_momentum} refers to $A$ and $B$ sites of the momentum space lattice, not to the $A$ and $B$ sites of the real space lattice. Recalling Remark \ref{rem:A_B_sites} and comparing \eqref{eq:A_chiral_efuncs}-\eqref{eq:B_chiral_efuncs} with \eqref{eq:psi_densities}, we see that $L^2_{K,1,1,A}$ corresponds to wave-functions supported on $A$ sites of layer $1$, $L^2_{K,1,1,B}$ corresponds to wave-functions supported on $A$ sites of layer $2$, $L^2_{K,1,-1,A}$ corresponds to wave-functions supported on $B$ sites of layer $1$, and $L^2_{K,1,-1,B}$ corresponds to wave-functions supported on $B$ sites of layer $2$. 
\end{remark}
Clearly we have
\begin{equation}
    L^2_{K,1} = L^2_{K,1,1,A} \oplus L^2_{K,1,1,B} \oplus L^2_{K,1,-1,A} \oplus L^2_{K,1,-1,B}.
\end{equation}
The following propositions are straightforward to prove. For the first claim, note that $\{ \mathcal{S} , H^0 \}=0$.
\begin{proposition} \label{prop:H0_prop}
The operator $H^0$ maps $L^2_{K,1,\pm 1,\sigma} \rightarrow L^2_{K,1,\mp 1,\sigma}$ for $\sigma = A,B$. The action of $H^0$ on chiral basis functions is as follows
\begin{equation}
    H^0 \chi^{\widetilde{\vec{0}}} = 0,
\end{equation}
for all $\vec{G} \in \Lambda^*$ with $\vec{G} \neq 0$
\begin{equation}
    H^0 \chi^{\widetilde{\vec{G}},\pm 1} = |\vec{G}| \chi^{\widetilde{\vec{G}},\mp 1},
\end{equation}
and for all $\vec{G} \in \Lambda^*$
\begin{equation}
    H^0 \chi^{\widetilde{\vec{q}_1 + \vec{G}},\pm 1} = |\vec{q}_1 + \vec{G}| \chi^{\widetilde{\vec{q}_1 + \vec{G}}, \mp 1}.
\end{equation}
\end{proposition}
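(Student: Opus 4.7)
The plan is to establish the proposition by two short, essentially mechanical, verifications: first the subspace mapping structure, then the explicit action on basis vectors.

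First I would prove the mapping claim by exploiting the block structure of $H^0$. Recall that $H^0$ couples the first spinor component only to the third, and the second only to the fourth, since $D^0$ is diagonal. Reading off the chiral basis formulas \eqref{eq:A0_chiral_efuncs}--\eqref{eq:B_chiral_efuncs}, the spaces $L^2_{K,1,\pm 1,A}$ are supported in components $\{1,3\}$ and $L^2_{K,1,\pm 1,B}$ in components $\{2,4\}$, while the $\mathcal{S}$-parity distinguishes component $1$ from $3$ (respectively $2$ from $4$). Hence the block form of $H^0$ shows immediately that it exchanges the $+1$ and $-1$ $\mathcal{S}$-eigenspaces while preserving the $A,B$ momentum-lattice label, giving $L^2_{K,1,\pm 1,\sigma}\to L^2_{K,1,\mp 1,\sigma}$ for $\sigma=A,B$. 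Equivalently, this is just the anticommutation $\{\mathcal{S},H^0\}=0$ combined with the fact that $H^0$ is diagonal in the layer-momentum index.

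Next I would verify the action on basis vectors by straight substitution. From Proposition \ref{prop:L2K1}, the spectral basis functions satisfy $H^0\chi_{\pm}^{\widetilde{\vec{G}}}=\pm|\vec{G}|\chi_{\pm}^{\widetilde{\vec{G}}}$ and $H^0\chi_{\pm}^{\widetilde{\vec{q}_1+\vec{G}}}=\pm|\vec{q}_1+\vec{G}|\chi_{\pm}^{\widetilde{\vec{q}_1+\vec{G}}}$, with $H^0\chi^{\widetilde{\vec 0}}=0$ accounting for the zero eigenspace. Applying $H^0$ to the definition \eqref{eq:chiral_G} of the chiral basis gives
\begin{equation}
H^0\chi^{\widetilde{\vec{G}},\pm 1}=\tfrac{1}{\sqrt{2}}\bigl(|\vec{G}|\chi^{\widetilde{\vec{G}}}\mp |\vec{G}|\chi^{-\widetilde{\vec{G}}}\bigr)=|\vec{G}|\,\chi^{\widetilde{\vec{G}},\mp 1},
\end{equation}
and the analogous computation using \eqref{eq:chiral_q_G} handles the $\vec{q}_1+\vec{G}$ case. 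The zero-mode identity is the trivial observation that $\chi^{\widetilde{\vec 0}}$ is proportional to $e_1$, which is annihilated by $H^0$.

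I expect no substantive obstacle; the only thing worth watching is the sign bookkeeping produced by combining a $+|\vec{G}|$-eigenfunction with a $-|\vec{G}|$-eigenfunction into a symmetric or antisymmetric combination, which is precisely what converts an $\mathcal{S}$-parity from $+1$ to $-1$ under the action of $H^0$. Once that sign is tracked once, all three displayed identities follow in parallel, and the subspace-mapping statement is a consistency check on the parity flip.
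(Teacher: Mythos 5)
Your proposal is correct and follows essentially the same route as the paper, which treats this proposition as a straightforward verification and offers only the hint that $\{\mathcal{S},H^0\}=0$ for the mapping claim. Your sign bookkeeping — applying $H^0$ to $\chi^{\widetilde{\vec{G}},\pm 1}=\tfrac{1}{\sqrt 2}(\chi^{\widetilde{\vec{G}}}\pm\chi^{-\widetilde{\vec{G}}})$ and observing that the relative sign flips, converting $\pm 1$ chirality to $\mp 1$ — is exactly the intended computation.
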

\begin{proposition} \label{prop:H0_inv_prop}
Let $P$ denote the projection operator onto $\chi^{\widetilde{\vec{0}}}$ in $L^2_{K,1}$, and $P^\perp = 1 - P$. Then the operator $P^\perp (H^0)^{-1} P^\perp$ maps $L^2_{K,1,\pm 1,\sigma} \rightarrow L^2_{K,1,\mp 1,\sigma}$ for $\sigma = A,B$, and
\begin{equation}
    P^\perp (H^0)^{-1} P^\perp \chi^{\widetilde{\vec{G}},\pm 1} = \frac{1}{|\vec{G}|} \chi^{\widetilde{\vec{G}},\mp 1}
\end{equation}
for all $\vec{G} \in \Lambda^*$ with $\vec{G} \neq 0$, and
\begin{equation}
    P^\perp (H^0)^{-1} P^\perp \chi^{\widetilde{\vec{q}_1 + \vec{G}},\pm 1} = \frac{1}{|\vec{q}_1 + \vec{G}|} \chi^{\widetilde{\vec{q}_1 + \vec{G}}, \mp 1}
\end{equation}
for all $\vec{G} \in \Lambda^*$.
\end{proposition}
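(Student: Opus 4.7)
The plan is to read this proposition off directly from Proposition \ref{prop:H0_prop} together with orthonormality of the chiral basis (Proposition \ref{prop:L2K11}). By Proposition \ref{prop:L2K1}, the kernel of $H^0$ restricted to $L^2_{K,1}$ is the one-dimensional span of $\chi^{\widetilde{\vec{0}}}$, so $P$ is precisely the orthogonal projection onto $\ker H^0$. Hence $P^\perp L^2_{K,1}$ coincides with $(\ker H^0)^\perp$, which equals the closure of $\ran H^0$, and on this subspace $H^0$ is invertible.

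Next I would use Proposition \ref{prop:H0_prop} to exhibit a direct-sum decomposition of $P^\perp L^2_{K,1}$ into $H^0$-invariant two-dimensional subspaces: for each $\vec{G} \in \Lambda^* \setminus \{\vec{0}\}$, the subspace $\mathrm{span}\{ \chi^{\widetilde{\vec{G}},+1}, \chi^{\widetilde{\vec{G}},-1} \}$ is invariant, with $H^0$ acting on the ordered basis as $|\vec{G}|\bigl(\begin{smallmatrix} 0 & 1 \\ 1 & 0 \end{smallmatrix}\bigr)$, and similarly for each $\vec{G} \in \Lambda^*$ the subspace $\mathrm{span}\{ \chi^{\widetilde{\vec{q}_1 + \vec{G}},+1}, \chi^{\widetilde{\vec{q}_1 + \vec{G}},-1} \}$ is invariant with the scalar $|\vec{q}_1 + \vec{G}|$. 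These subspaces are pairwise orthogonal by Proposition \ref{prop:L2K11}. Inverting each $2 \times 2$ block amounts to inverting a scalar times the swap matrix, which gives the reciprocal scalar times the swap matrix; reading this off on basis vectors yields the two claimed identities.

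The mapping statement $P^\perp (H^0)^{-1} P^\perp \colon L^2_{K,1,\pm 1,\sigma} \to L^2_{K,1,\mp 1,\sigma}$ for $\sigma = A,B$ then follows by inspection: the $A$-momentum subspaces are spanned (modulo the kernel vector killed by $P^\perp$) by $\chi^{\widetilde{\vec{G}},\pm 1}$ and the $B$-momentum subspaces by $\chi^{\widetilde{\vec{q}_1 + \vec{G}},\pm 1}$, and in each case the formula swaps the $\mathcal{S}$-eigenvalue while preserving the momentum-lattice sublattice. There is no substantive obstacle here, since the content of the proposition is essentially equivalent to inverting a scalar on each irreducible two-dimensional block already diagonalized in Proposition \ref{prop:H0_prop}; the only item worth a sentence of care is checking that $P^\perp$ is indeed the spectral projection onto the orthogonal complement of the kernel, which is immediate from Proposition \ref{prop:L2K1}.
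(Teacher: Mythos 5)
Your proof is correct and follows exactly the route the paper intends: the paper states this proposition (together with Proposition \ref{prop:H0_prop}) without proof, declaring it "straightforward to prove," and your argument — identifying $P$ as the spectral projection onto $\ker H^0$ via Proposition \ref{prop:L2K1}, then inverting the scalar-times-swap action of $H^0$ on each orthogonal two-dimensional block $\mathrm{span}\{\chi^{\widetilde{\vec{G}},+1},\chi^{\widetilde{\vec{G}},-1}\}$ from Proposition \ref{prop:H0_prop} — is precisely the straightforward verification being alluded to. No gaps.
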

In the coming sections we will study the action of the operator $H^1$ on $L^2_{K,1}$ with respect to the chiral basis.

\subsection{The spectrum of $H^1$ in $L^2_K$ and $L^2_{K,1}$}

Recall that
\begin{equation}
    H^1 = \begin{pmatrix} 0 & D^{1\dagger} \\ D^1 & 0 \end{pmatrix}, \quad D^1 = \begin{pmatrix} 0 & U(\vec{r}) \\ U(-\vec{r}) & 0 \end{pmatrix},
\end{equation}
where $U(\vec{r}) = e^{- i \vec{q}_1 \cdot \vec{r}} + e^{i \phi} e^{- i \vec{q}_2 \cdot \vec{r}} + e^{- i \phi} e^{- i \vec{q}_3 \cdot \vec{r}}$. We claim the following.
\begin{proposition} \label{prop:H1_action_L2K}
For each $\vec{r}_0 \in \Omega$, $\pm |U(\vec{r}_0)|$ and $\pm |U(-\vec{r}_0)|$ are eigenvalues of $H^1 : L^2_K \rightarrow L^2_K$. For $\vec{r}_0$ such that $U(\vec{r}_0) \neq 0$, the $\pm |U(\vec{r}_0)|$ eigenvectors are
\begin{equation} \label{eq:H1_evec_1}
    \begin{pmatrix} 0, 1, \pm \frac{U(\vec{r}_0)}{|U(\vec{r}_0)|}, 0 \end{pmatrix} \delta(\vec{r} - \vec{r}_0).
\end{equation}
For $\vec{r}_0$ such that $U(-\vec{r}_0) \neq 0$, the $\pm |U(-\vec{r}_0)|$ eigenvectors are
\begin{equation} \label{eq:H1_evec_2}
    \begin{pmatrix} 1, 0, 0, \pm \frac{U(-\vec{r}_0)}{|U(-\vec{r}_0)|} \end{pmatrix} \delta(\vec{r} - \vec{r}_0).
\end{equation}
When $U(\vec{r}_0) = 0$, zero is a degenerate eigenvalue with associated eigenfunctions ${e}_2 \delta(\vec{r} - \vec{r}_0)$ and ${e}_3 \delta(\vec{r} - \vec{r}_0)$. When $U(-\vec{r}_0) = 0$, zero is a degenerate eigenvalue with associated eigenfunctions ${e}_1 \delta(\vec{r} - \vec{r}_0)$ and ${e}_4 \delta(\vec{r} - \vec{r}_0)$. Finally,
\begin{equation} \label{eq:spec_H1}
    \sigma_{L^2_K}(H^1) = [-3,3].
\end{equation}
\end{proposition}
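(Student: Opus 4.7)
The plan is to treat $H^1$ as a matrix-valued multiplication operator: $H^1 \psi(\vec{r}) = M(\vec{r}) \psi(\vec{r})$ where $M(\vec{r})$ is the $4\times 4$ Hermitian matrix whose only nonzero entries are $M_{14}=U(-\vec{r})^*$, $M_{23}=U(\vec{r})^*$, $M_{32}=U(\vec{r})$, $M_{41}=U(-\vec{r})$. Since there are no derivatives in $H^1$, the spectral problem reduces at each fixed $\vec{r}_0$ to a finite-dimensional linear algebra problem for $M(\vec{r}_0)$.

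First, I would diagonalize $M(\vec{r}_0)$ pointwise. Its nonzero structure couples components $\{2,3\}$ and $\{1,4\}$ independently, so the $4\times 4$ problem decouples into two $2\times 2$ blocks. The $\psi_2^A$-$\psi_1^B$ block has eigenvalues $\pm|U(\vec{r}_0)|$ and the $\psi_1^A$-$\psi_2^B$ block has eigenvalues $\pm|U(-\vec{r}_0)|$. Solving each $2\times 2$ eigenvalue problem by inspection gives precisely the pointwise eigenvectors appearing in \eqref{eq:H1_evec_1} and \eqref{eq:H1_evec_2}; the cases $U(\pm\vec{r}_0)=0$ are read off directly by setting the corresponding block to zero and noting that its full kernel is spanned by $e_2,e_3$ (respectively $e_1,e_4$). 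The delta-function factors $\delta(\vec{r}-\vec{r}_0)$ in the statement are of course formal: $H^1$ has no honest $L^2_K$ eigenfunctions, and these distributions should be understood as generalized eigenfunctions encoding the fiber-wise spectral decomposition.

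Second, to upgrade the pointwise analysis to the operator-theoretic claim $\sigma_{L^2_K}(H^1) = [-3,3]$, I would invoke the standard characterization of the spectrum of a multiplication operator on $L^2$: $\lambda \in \sigma_{L^2_K}(H^1)$ iff $\lambda$ lies in the essential range of the pointwise eigenvalue functions $\pm|U(\pm\vec{r})|$. Concretely, from a pointwise eigenvector $v_{\vec{r}_0}$ with eigenvalue $\lambda(\vec{r}_0)$ I would build a Weyl sequence by multiplying $v_{\vec{r}_0}$ by a narrow $L^2_K$-admissible bump concentrated near $\vec{r}_0$, normalized in $L^2_K$; this shows $\lambda(\vec{r}_0) \in \sigma_{L^2_K}(H^1)$ for every $\vec{r}_0$. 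Conversely, any $\lambda$ outside the closure of these pointwise eigenvalues makes $M(\vec{r})-\lambda I$ uniformly invertible, so $H^1-\lambda I$ is boundedly invertible on $L^2_K$.

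Finally, to identify the set of attained pointwise eigenvalues as $[-3,3]$, I would show that $|U|$ is continuous on the compact cell $\Omega$ and takes every value in $[0,3]$. The triangle inequality gives $|U(\vec{r})| \leq 3$; the value $0$ is attained at $\vec{r}=\vec{0}$ since $U(\vec{0}) = 1 + e^{i\phi}+e^{-i\phi} = 1 + 2\cos(2\pi/3) = 0$; and the value $3$ is attained at the $\text{AB}$/$\text{BA}$ stacking points where the three phases $e^{-i\vec{q}_j\cdot\vec{r}}$ combined with $1, e^{i\phi}, e^{-i\phi}$ align, which I would verify by direct substitution. The intermediate value theorem then fills in $[0,3]$, and $|U(-\vec{r})|$ takes the same range by the same argument, giving $\sigma_{L^2_K}(H^1) = [-3,3]$. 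The only mildly subtle step is the Weyl-sequence construction respecting the twisted $L^2_K$ boundary conditions, but this is standard since the twist is encoded only as a phase on translations.
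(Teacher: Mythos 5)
Your proposal is correct and follows essentially the same route as the paper's proof: pointwise diagonalization of the $2\times 2$ blocks coupling components $\{2,3\}$ and $\{1,4\}$ (which the paper dismisses as ``clear''), the triangle inequality for $|U|\leq 3$, and exhibiting points where $|U|$ attains $0$ and $3$ so that continuity fills in $[-3,3]$. The only difference is that you make explicit the essential-range/Weyl-sequence justification for passing from pointwise eigenvalues to the $L^2_K$ spectrum, a step the paper leaves implicit; this is a sound and welcome addition rather than a deviation.
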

\begin{proof}
We prove only \eqref{eq:spec_H1} since the other assertions are clear. The triangle inequality yields the obvious bound 
\begin{equation}
    |U(\vec{r}_0)| \leq 3,
\end{equation}
so that the $L^2_K$ spectrum of $H^1$ must be contained in the interval $[-3,3]$. To see that the spectrum actually equals $[-3,3]$, note that if $\vec{r}_0 = \begin{pmatrix} \frac{4 \pi}{3 \sqrt{3}}, 0 \end{pmatrix}$ then
\begin{equation}
    \vec{q}_1 \cdot \vec{r}_0 = 0, (\vec{q}_1 + \vec{b}_1) \cdot \vec{r}_0 = \frac{1}{2} \begin{pmatrix} \sqrt{3}, 1 \end{pmatrix} \cdot \vec{r}_0 = \frac{2 \pi}{3}, (\vec{q}_1 + \vec{b}_2) \cdot \vec{r}_0 = \frac{1}{2} \begin{pmatrix} - \sqrt{3}, 1 \end{pmatrix} \cdot \vec{r}_0 = - \frac{2 \pi}{3}
\end{equation}
and hence $U(\vec{r}_0) = 3$. On the other hand, when $\vec{r}_0 = 0$ we have $U(\vec{r}_0) = 0$ so that the spectrum of $H^1$ in $L^2_K$ equals $[-3,3]$.
\end{proof}
By taking linear combinations of rotated copies of the $H^1$ eigenfunctions, just as we did with the $H^0$ eigenfunctions, it is straightforward to prove an analogous result to Proposition \ref{prop:H1_action_L2K} in $L^2_{K,1}$. We record only the following.
\begin{proposition} \label{prop:H1_norm_prop}
\begin{equation}
    \sigma_{L^2_{K,1}}(H^1) = [-3,3].
\end{equation}
\end{proposition}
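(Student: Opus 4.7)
The plan is to prove the inclusion $\sigma_{L^2_{K,1}}(H^1)\subset[-3,3]$ by a norm bound, and then prove the reverse inclusion $[-3,3]\subset\sigma_{L^2_{K,1}}(H^1)$ by constructing, for every $\lambda\in[-3,3]$, a Weyl sequence in $L^2_{K,1}$ approximating $\lambda$. The Weyl sequences will be adaptations of the $\delta$-function ``eigenfunctions'' exhibited in Proposition \ref{prop:H1_action_L2K}, together with a symmetrization step to force membership in $L^2_{K,1}$.

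For the upper bound, $H^1$ is multiplication by a Hermitian matrix-valued function whose pointwise operator norm equals $\max\{|U(\vec{r})|,|U(-\vec{r})|\}\leq 3$, so $\|H^1\|_{L^2_K\to L^2_K}\leq 3$. Since $[\mathcal{R},H^1]=0$ (subtract two copies of \eqref{eq:rot_sym} at different $\alpha$ to kill the $H^0$ contribution), $L^2_{K,1}$ is closed and $H^1$-invariant, so the restriction is a bounded self-adjoint operator with norm $\leq 3$, and its spectrum lies in $[-3,3]$.

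For the lower bound, fix $\lambda\in[-3,3]$. Since $|U(\vec{0})|=1+2\cos\phi=0$ while $|U(\vec{r}_*)|=3$ for $\vec{r}_*=(4\pi/(3\sqrt 3),0)$ from Proposition \ref{prop:H1_action_L2K}, continuity of $|U|$ and the intermediate value theorem produce a point $\vec{r}_0\in\Omega$ with $|U(\vec{r}_0)|=|\lambda|$. By perturbing $\vec{r}_0$ slightly along the level set we may insist that the three points $\vec{r}_0, R_\phi\vec{r}_0, R_\phi^2\vec{r}_0$ are distinct and none equals $\vec{0}$ (the unique fixed point of $R_\phi$). For small $\epsilon>0$, let $\phi_\epsilon$ be a smooth $L^2$-normalized bump supported in an $\epsilon$-ball around $\vec{r}_0$, extended $\Lambda$-periodically in accordance with the $L^2_K$ boundary conditions, and set
\[
    v_\epsilon(\vec{r}) := \phi_\epsilon(\vec{r})\bigl(0,\,1,\,\mathrm{sgn}(\lambda)\,U(\vec{r}_0)/|U(\vec{r}_0)|,\,0\bigr)^\top.
\]
Exactly as in \eqref{eq:H1_evec_1}, the pointwise matrix $H^1(\vec{r}_0)$ sends the column vector to $\lambda$ times itself, and continuity of $\vec{r}\mapsto H^1(\vec{r})$ together with concentration of $\phi_\epsilon$ gives $\|H^1 v_\epsilon-\lambda v_\epsilon\|_{L^2}\to 0$ as $\epsilon\to 0$. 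To land in $L^2_{K,1}$, apply the projector $\tfrac13(I+\mathcal{R}+\mathcal{R}^2)$ (using $\mathcal{R}^3=I$) to obtain $w_\epsilon:=\tfrac13(I+\mathcal{R}+\mathcal{R}^2)v_\epsilon\in L^2_{K,1}$, and since $\mathcal{R}$ is unitary and commutes with $H^1$, $\|(H^1-\lambda)w_\epsilon\|\to 0$ as well.

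The main subtlety is ensuring $\|w_\epsilon\|$ stays bounded away from zero, which could fail from cancellation in the symmetrization if the three rotates of $\vec{r}_0$ were to coincide; this is precisely why we had to exclude $\vec{r}_0=\vec{0}$. For $\epsilon$ smaller than one third of the minimal pairwise distance among the three rotates, the three summands in $w_\epsilon$ have disjoint supports, so by orthogonality $\|w_\epsilon\|^2=\tfrac19\cdot 3\|v_\epsilon\|^2=\tfrac13$. Normalizing $w_\epsilon$ produces the required Weyl sequence, so $\lambda\in\sigma_{L^2_{K,1}}(H^1)$, completing the proof.
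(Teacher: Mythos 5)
Your strategy is the same as the paper's: the paper's (very terse) argument is precisely ``take linear combinations of rotated copies of the $H^1$ eigenfunctions'' from Proposition \ref{prop:H1_action_L2K}, and you have correctly fleshed this out -- replacing the distributional eigenfunctions by bump-function Weyl sequences, symmetrizing with the projector $\tfrac13(I+\mathcal{R}+\mathcal{R}^2)$, and controlling the norm of the symmetrized function. The upper bound via the pointwise matrix norm and the invariance of $L^2_{K,1}$ under $H^1$ (extracted from \eqref{eq:rot_sym} by varying $\alpha$) is also correct. In this sense you supply exactly the details the paper omits.

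There are, however, three concrete soft spots in the lower-bound step, all concentrated at the endpoints. First, at $\lambda=0$ the vector $(0,1,\mathrm{sgn}(\lambda)U(\vec{r}_0)/|U(\vec{r}_0)|,0)^\top$ is undefined since $U(\vec{r}_0)=0$ there. Second, and more substantively, the disjointness you need is disjointness of the supports \emph{modulo} $\Lambda$ (the functions are quasi-periodic, so their supports are $\Lambda$-periodic sets), and the relevant obstruction is the set of fixed points of $R_\phi$ acting on $\field{R}^2/\Lambda$; since $\det(I-R_\phi)=3$, there are three such fixed points, not one, so excluding $\vec{r}_0=\vec{0}$ is not sufficient. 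This matters at $\lambda=\pm 3$: the level set $\{|U|=3\}$ is the single point $\vec{r}_*=(4\pi/(3\sqrt3),0)$ modulo $\Lambda$ (the two phase-matching conditions $\vec{b}_1\cdot\vec{r}\equiv-\phi$, $\vec{b}_2\cdot\vec{r}\equiv\phi$ pin it down), and one checks $R_\phi\vec{r}_*=\vec{r}_*+\vec{a}_2$, so $\vec{r}_*$ \emph{is} a torus fixed point of $R_\phi$; there is nothing to perturb along, and your disjoint-support computation of $\|w_\epsilon\|^2=\tfrac13$ genuinely fails there. Third, ``perturbing along the level set'' presumes the level set is locally a curve; the clean justification for $|\lambda|\in(0,3)$ is that $\{|U|=|\lambda|\}$ must be infinite (otherwise $\Omega\setminus\{|U|=|\lambda|\}$ would be connected yet partitioned into the nonempty open sets $\{|U|<|\lambda|\}\ni\vec{0}$ and $\{|U|>|\lambda|\}\ni\vec{r}_*$), hence meets the complement of the three-point fixed set. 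All three issues are repaired by one sentence: run your argument only for $\lambda\in(-3,3)\setminus\{0\}$ and invoke closedness of the spectrum to capture $\{0,\pm3\}$. With that patch the proof is complete.
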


\subsection{The action of $H^1$ on $L^2_{K,1}$ with respect to the chiral basis}

We now want to study the action of $H^1$ on $L^2_{K,1}$ with respect to the chiral basis. We will prove two propositions, which parallel Proposition \ref{prop:H0_prop}.
\begin{proposition} \label{prop:H1_L2K1}
The operator $H^1$ maps $L^2_{K,1,1,A} \rightarrow L^2_{K,1,-1,B}$, and $L^2_{K,1,1,B} \rightarrow L^2_{K,1,-1,A}$. The action of $H^1$ on chiral basis functions is as follows:
\begin{equation} \label{eq:H1_0}
    H^1 \chi^{\widetilde{\vec{0}}} = \sqrt{3} \overline{ \oldhat{z}_{\vec{q}_1} } \chi^{\widetilde{\vec{q}_1},-1},
\end{equation}
and
\begin{equation} \label{eq:H1_q1}
\begin{split}
    H^1 \chi^{\widetilde{\vec{q}_1},1} = e^{i \phi} \overline{ \oldhat{z}_{\vec{q}_1 - \vec{q}_2} } \chi^{\widetilde{\vec{q}_1-\vec{q}_2},-1} + e^{- i \phi} \overline{ \oldhat{z}_{\vec{q}_1 - \vec{q}_3} } \chi^{\widetilde{\vec{q}_1 - \vec{q}_3},-1}  
\end{split}
\end{equation}
For all $\vec{G} \in \Lambda^* \setminus \{ \vec{0} \}$,
\begin{equation} \label{eq:H1_chi_G}
\begin{split}
    H^1 \chi^{\widetilde{\vec{G}},1} = \overline{\oldhat{z}_{\vec{G} + \vec{q}_1}} \chi^{\widetilde{\vec{G} + \vec{q}_1},-1} + e^{i \phi} \overline{\oldhat{z}_{\vec{G} + \vec{q}_2}} \chi^{\widetilde{\vec{G} + \vec{q}_2},-1} + e^{- i \phi} \overline{\oldhat{z}_{\vec{G} + \vec{q}_3}} \chi^{\widetilde{\vec{G} + \vec{q}_3},-1}  
\end{split}
\end{equation}
For all $\vec{G} \in \Lambda^* \setminus \{ \vec{0} \}$,
\begin{equation} \label{eq:H1_q1_G}
\begin{split}
    H^1 \chi^{\widetilde{\vec{G} + \vec{q}_1},1} = \overline{ \oldhat{z}_{\vec{G}} } \chi^{\widetilde{\vec{G}},-1} + e^{i \phi} \overline{ \oldhat{z}_{\vec{G} + \vec{q}_1 - \vec{q}_2} } \chi^{\widetilde{\vec{G} + \vec{q}_1 - \vec{q}_2},-1} + e^{- i \phi} \overline{ \oldhat{z}_{\vec{G} + \vec{q}_1 - \vec{q}_3} } \chi^{\widetilde{\vec{G} + \vec{q}_1 - \vec{q}_3},-1}.
\end{split}
\end{equation}
\end{proposition}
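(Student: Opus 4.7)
The proof strategy is direct computation in the chiral basis, starting from the explicit matrix form of $H^1$. Writing $\psi = (\psi^A_1, \psi^A_2, \psi^B_1, \psi^B_2)^\top$ and expanding the block structure of $H^1$, one finds
\begin{equation*}
H^1 \psi = \left( \overline{U(-\vec{r})}\,\psi^B_2,\; \overline{U(\vec{r})}\,\psi^B_1,\; U(\vec{r})\,\psi^A_2,\; U(-\vec{r})\,\psi^A_1 \right)^\top.
\end{equation*}
Since the functions in $L^2_{K,1,1,A}$ are supported only in the first component and those in $L^2_{K,1,1,B}$ only in the second, the displayed formula immediately shows that $H^1$ sends the former to functions supported in the fourth component and the latter to functions supported in the third. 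Combined with the fact that $H^1$ commutes with $\tau_{\vec{v}}$ and $\mathcal{R}$ (so $L^2_{K,1}$ is preserved), this yields the claimed subspace mappings $L^2_{K,1,1,A} \to L^2_{K,1,-1,B}$ and $L^2_{K,1,1,B} \to L^2_{K,1,-1,A}$.

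For each of the formulas \eqref{eq:H1_0}--\eqref{eq:H1_q1_G}, the plan is to take the defining expression \eqref{eq:A0_chiral_efuncs}--\eqref{eq:B_chiral_efuncs} of the chiral basis function, apply the appropriate row of $H^1$ (multiplication by $U(\pm\vec{r})$), and expand the result into plane waves. The essential algebraic identity is
\begin{equation*}
(R_\phi^*)^k \vec{G} + \vec{q}_j = (R_\phi^*)^k\!\left( \vec{G} + \vec{q}_{j+k} \right),
\end{equation*}
which uses that $R_\phi$ cyclically permutes $\{\vec{q}_1,\vec{q}_2,\vec{q}_3\}$. After multiplying by $U(-\vec{r}) = \sum_{j=1}^3 c_j e^{i\vec{q}_j \cdot \vec{r}}$, the resulting nine plane waves regroup into three $R_\phi^*$-orbits indexed by $m = j+k \bmod 3$, and each orbit is (up to an overall scalar) exactly the sum defining a chiral basis function $\chi^{\widetilde{\vec{G}+\vec{q}_m},-1}$. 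Reading off the scalar — which combines the $c_j$'s, the phase factors $e^{-ik\phi}$ appearing in the definition of the $-1$ eigenfunctions, and the normalizations $\oldhat{z}$ — gives the stated coefficients $\overline{\oldhat{z}_{\vec{G}+\vec{q}_j}}$ (possibly multiplied by $e^{\pm i\phi}$). The computation for the $L^2_{K,1,1,B}$ basis functions is analogous using $U(\vec{r})$ in place of $U(-\vec{r})$. The two special cases \eqref{eq:H1_0} and \eqref{eq:H1_q1} follow similarly, noting that in \eqref{eq:H1_q1} the coefficient of the $\vec{0}$-momentum component in $U(\vec{r}) \sum_k e^{i((R_\phi^*)^k \vec{q}_1)\cdot\vec{r}}$ is proportional to $1 + e^{i\phi} + e^{-i\phi} = 0$, which is why that term is absent from the right-hand side and only two chiral basis functions appear.

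The main obstacle is purely bookkeeping: tracking how indices of $\vec{q}_j$ shift under powers of $R_\phi^*$, correctly absorbing the phase factors $e^{-ik\phi}$ from the $-1$ eigenspace of $\mathcal{R}$ into the plane-wave coefficients, and verifying that the phases $c_j \cdot e^{-ik\phi}$ consistent with a single orbit collapse to the same value (so that the orbit really does reassemble into a scalar multiple of a chiral basis function). Once this indexing is laid out carefully — ideally with a small table matching each pair $(k,j)$ with its target momentum and coefficient — each of the formulas \eqref{eq:H1_0}--\eqref{eq:H1_q1_G} follows by inspection, with the normalization factors $\overline{\oldhat{z}_{\cdot}}$ appearing as the ratios of the $\oldhat{z}$ appearing in the definitions of the chiral basis functions at the output versus the absence of such a factor in the $+1$ functions at the input.
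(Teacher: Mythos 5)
Your proposal is correct and follows essentially the same route as the paper: the paper's proof also proceeds by direct computation, multiplying the interlayer potential $U(\pm\vec{r})$ against the three-plane-wave sum defining a chiral basis function, regrouping the resulting nine plane waves into three $R_\phi^*$-orbits, and reading off the coefficients (the paper carries this out explicitly only for \eqref{eq:H1_chi_G} and declares the other identities similar). Your additional observations — the block structure of $H^1$ giving the subspace mappings, and the cancellation $1 + e^{i\phi} + e^{-i\phi} = 0$ explaining the absence of the $\chi^{\widetilde{\vec{0}}}$ term in \eqref{eq:H1_q1} — are both correct and consistent with the paper's remarks.
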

Note that $H^1$ exchanges chirality ($\mathcal{S}$ eigenvalue) and the $A$ and $B$ momentum space sublattices, while $H^0$ only exchanges chirality. Proposition \ref{prop:H1_L2K1} has a simple interpretation in terms of nearest-neighbor hopping in the momentum space lattice, see Figures \ref{fig:H1_chi_G} and \ref{fig:H1_0}.
\begin{remark}
At first glance, equations \eqref{eq:H1_0} and \eqref{eq:H1_q1} appear different from \eqref{eq:H1_chi_G} and \eqref{eq:H1_q1_G}, because they appear to violate $\frac{2 \pi}{3}$ rotation symmetry. But this is not the case, since every chiral basis function individually respects this symmetry. For example, using $\chi^{\widetilde{\vec{q}_1},-1} = \chi^{\widetilde{\vec{q}_2},-1} = \chi^{\widetilde{\vec{q}_3},-1}$ and $\overline{\oldhat{z}_{\vec{q}_1}} = e^{i \phi} \overline{\oldhat{z}_{\vec{q}_2}} = e^{- i \phi} \overline{\oldhat{z}_{\vec{q}_3}}$, we can re-write \eqref{eq:H1_0} in a way that manifestly respects the $\frac{2 \pi}{3}$ rotation symmetry as
\begin{equation} \label{eq:H1_0_2}
    H^1 \chi^{\widetilde{\vec{0}}} = \frac{1}{\sqrt{3}} \left( \overline{\oldhat{z}_{\vec{q}_1}} \chi^{\widetilde{\vec{q}_1},-1} + e^{i \phi} \overline{\oldhat{z}_{\vec{q}_2}} \chi^{\widetilde{\vec{q}_2},-1} + e^{- i \phi} \overline{\oldhat{z}_{\vec{q}_3}} \chi^{\widetilde{\vec{q}_3},-1} \right).
\end{equation}
Equation \eqref{eq:H1_q1} can also be written in a manifestly rotationally invariant way but the expression is long and hence we omit it. Note that \eqref{eq:H1_q1} cannot have a term proportional to $\chi^{\widetilde{\vec{0}}}$ since $\chi^{\widetilde{\vec{0}}} \in L^2_{K,1,1}$ and $H^1$ maps $L^2_{K,1,1} \rightarrow L^2_{K,1,-1}$.
\end{remark}
\begin{figure}
\centering
\includegraphics[scale=.3]{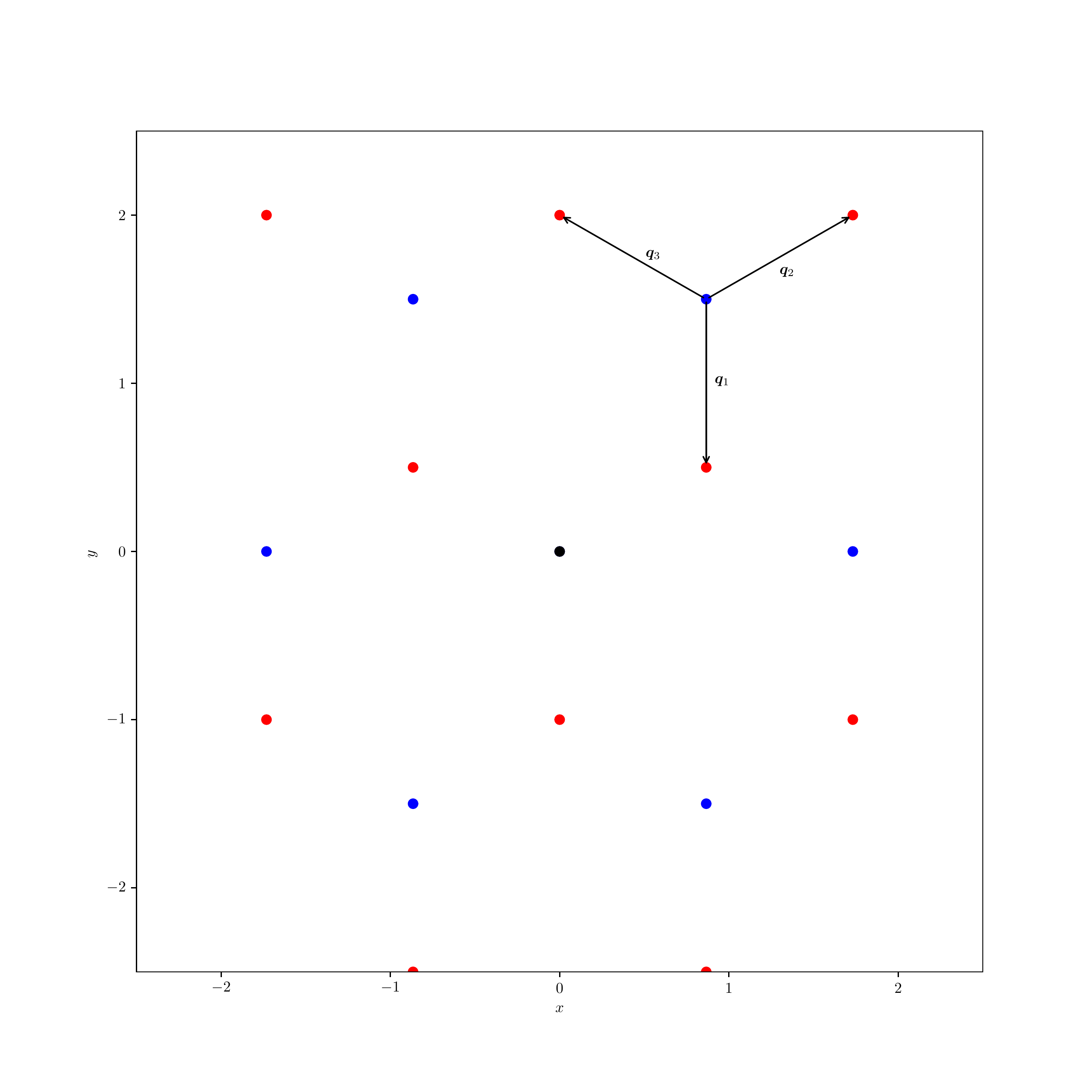}
\includegraphics[scale=.3]{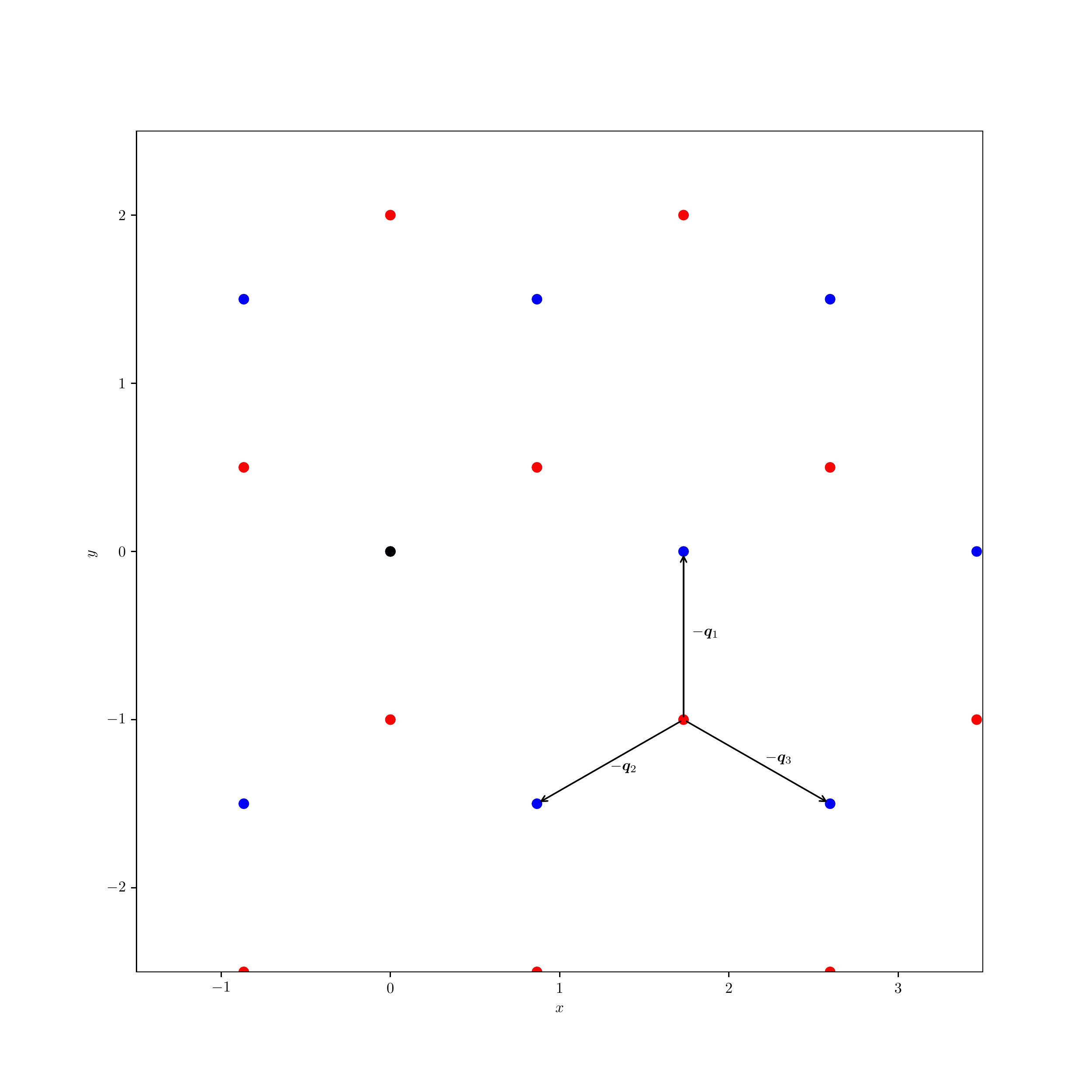}
\caption{Illustration of the action of $H^1$ in $L^2_{K,1}$ as hopping in the momentum space lattice described by equations \eqref{eq:H1_chi_G} (left, starting at $\vec{b}_1$) and \eqref{eq:H1_q1_G} (right, starting at $\vec{q}_1 + \vec{b}_1 - \vec{b}_2$). The origin is marked by a black dot. }
\label{fig:H1_chi_G}
\end{figure}
\begin{figure}
\centering
\includegraphics[scale=.3]{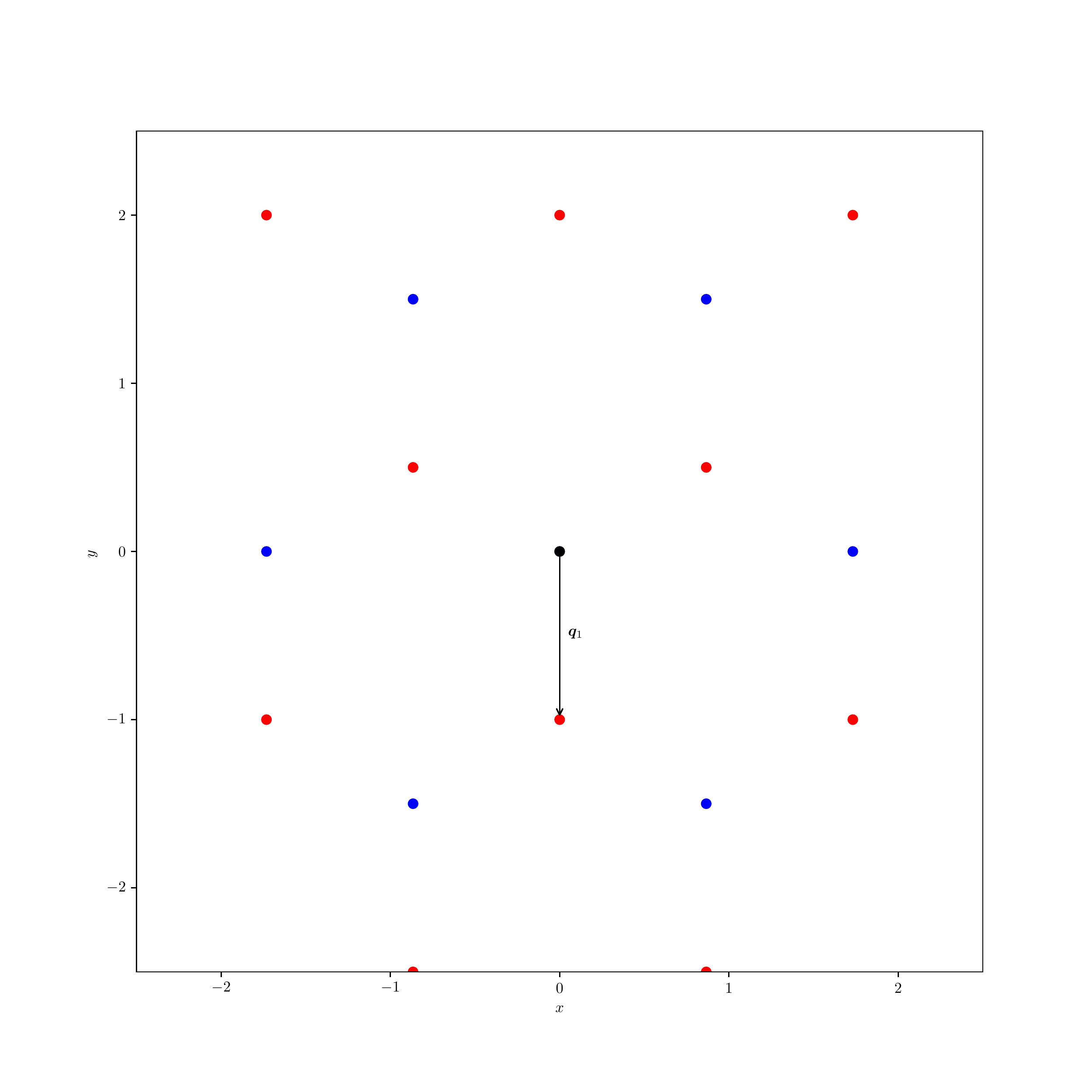}
\includegraphics[scale=.3]{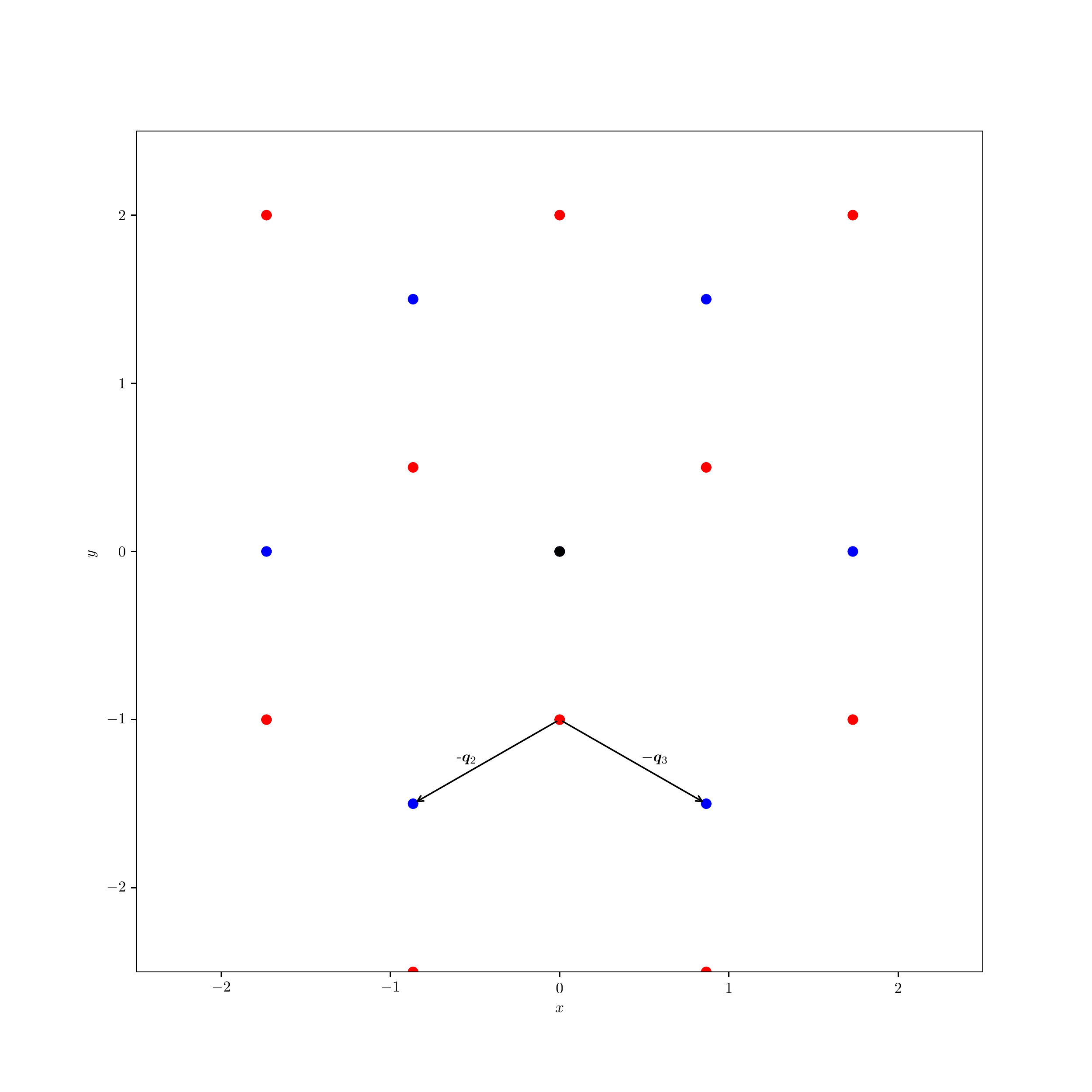}
\caption{Illustration of the action of $H^1$ as hopping in the momentum space lattice described by equations \eqref{eq:H1_0} (left, starting at $\vec{0}$) and \eqref{eq:H1_q1} (right, starting at $\vec{q}_1$). Although it appears that the hopping in these cases does not respect $\frac{2 \pi}{3}$ rotation symmetry, this is an artifact of working with chiral basis functions which individually respect the rotation symmetry, see \eqref{eq:H1_0_2}. 
}
\label{fig:H1_0}
\end{figure}
\begin{proof}[Proof of Proposition \ref{prop:H1_L2K1}]
We will prove \eqref{eq:H1_chi_G}, the proofs of the other identities are similar and hence omitted. We have 
\begin{equation*}
    H^1 \chi^{\widetilde{\vec{G}},1} = \frac{1}{\sqrt{3 V}} \left( e^{i \vec{q}_1 \cdot \vec{r}} + e^{i \phi} e^{i (\vec{q}_1 + \vec{b}_1) \cdot \vec{r}} + e^{- i \phi} e^{i (\vec{q}_1 + \vec{b}_2) \cdot \vec{r}} \right) \left( e^{i \vec{G} \cdot \vec{r}} + e^{i (R^*_\phi \vec{G}) \cdot \vec{r}} + e^{i ( (R^*_\phi)^2 \vec{G}) \cdot \vec{r}} \right) {e}_4.
\end{equation*}
Multiplying out we have
\begin{equation}
\begin{split}
    &\frac{1}{\sqrt{3 V}} \left( e^{i (\vec{q}_1 + \vec{G}) \cdot \vec{r}} + e^{i \phi} e^{i (\vec{q}_1 + \vec{G} + \vec{b}_1) \cdot \vec{r}} + e^{- i \phi} e^{i (\vec{q}_1 + \vec{G} + \vec{b}_2) \cdot \vec{r}} \right. \\
    &+ e^{i ( \vec{q}_1 + ( R^*_\phi \vec{G} ) ) \cdot \vec{r}} + e^{i \phi} e^{i (\vec{q}_1 + (R^*_\phi \vec{G}) + \vec{b}_1) \cdot \vec{r}} + e^{- i \phi} e^{i (\vec{q}_1 + (R^*_\phi \vec{G}) + \vec{b}_2) \cdot \vec{r}}   \\
    &\left. + e^{i ( \vec{q}_1 + ( (R^*_\phi)^2 \vec{G} ) )\cdot \vec{r}} + e^{i \phi} e^{i (\vec{q}_1 + ( (R^*_\phi)^2 \vec{G} ) + \vec{b}_1 ) \cdot \vec{r}} + e^{- i \phi} e^{i (\vec{q}_1 + ( (R^*_\phi)^2 \vec{G} ) + \vec{b}_2 ) \cdot \vec{r}} \right)   \\
    = &\frac{1}{\sqrt{3 V}} \left( e^{i (\vec{q}_1 + \vec{G}) \cdot \vec{r}} + e^{i \phi} e^{i (\vec{q}_1 + \vec{G} + \vec{b}_1) \cdot \vec{r}} + e^{- i \phi} e^{i (\vec{q}_1 + \vec{G} + \vec{b}_2) \cdot \vec{r}} \right. \\
    &e^{i ( R^*_\phi ( \vec{q}_1 + \vec{G} + \vec{b}_1 ) ) \cdot \vec{r}} + e^{i \phi} e^{i ( R^*_\phi ( \vec{q}_1 + \vec{G} + \vec{b}_2) ) \cdot \vec{r}} + e^{- i \phi} e^{i ( R^*_\phi (\vec{q}_1 + \vec{G}) ) \cdot \vec{r}}   \\
    &\left. + e^{i ( (R^*_\phi)^2 ( \vec{q}_1 + \vec{G} + \vec{b}_2 ) ) \cdot \vec{r}} + e^{i \phi} e^{i ( (R^*_\phi)^2 ( \vec{q}_1 + \vec{G} ) ) \cdot \vec{r}} + e^{- i \phi} e^{i ( (R^*_\phi)^2 (\vec{q}_1 + \vec{G} + \vec{b}_1 ) \cdot \vec{r}} \right).   \\
    = &\frac{1}{\sqrt{3 V}} \left( e^{i (\vec{q}_1 + \vec{G}) \cdot \vec{r}} + e^{- i \phi} e^{i ( R^*_\phi ( \vec{q}_1 + \vec{G}) ) \cdot \vec{r}} + e^{i \phi} e^{i ( (R^*_\phi)^2 ( \vec{q}_1 + \vec{G} ) ) \cdot \vec{r}} \right)   \\
    &+ \frac{1}{\sqrt{3 V}} e^{i \phi} \left( e^{i (\vec{q}_1 + \vec{G} + \vec{b}_1) \cdot \vec{r}} + e^{- i \phi} e^{i ( R^*_\phi ( \vec{q}_1 + \vec{G} + \vec{b}_1) ) \cdot \vec{r}} + e^{i \phi} e^{i ( (R^*_\phi)^2 ( \vec{q}_1 + \vec{G} + \vec{b}_1 ) ) \cdot \vec{r}} \right)   \\
    &+ \frac{1}{\sqrt{3 V}} e^{- i \phi} \left( e^{i (\vec{q}_1 + \vec{G} + \vec{b}_2) \cdot \vec{r}} + e^{- i \phi} e^{i ( R^*_\phi ( \vec{q}_1 + \vec{G} + \vec{b}_2) ) \cdot \vec{r}} + e^{i \phi} e^{i ( (R^*_\phi)^2 ( \vec{q}_1 + \vec{G} + \vec{b}_2 ) ) \cdot \vec{r}} \right),
\end{split}
\end{equation}
from which \eqref{eq:H1_chi_G} follows.
\end{proof}
\begin{proposition} \label{prop:H1_L2Kminus1}
The operator $H^1$ maps $L^2_{K,1,-1,A} \rightarrow L^2_{K,1,1,B}$, and $L^2_{K,1,-1,B} \rightarrow L^2_{K,1,1,A}$. The action of $H^1$ on chiral basis functions is as follows:
\begin{equation} \label{eq:H1_q1_-}
    H^1 \chi^{\widetilde{\vec{q}_1},-1} = \oldhat{z}_{\vec{q}_1} \left( \sqrt{3} \chi^{\widetilde{\vec{0}}} + e^{- i \phi} \chi^{\widetilde{\vec{q}_1-\vec{q}_2},1} + e^{i \phi} \chi^{\widetilde{\vec{q}_1-\vec{q}_3},1} \right).
\end{equation}
For all $\vec{G} \in \Lambda^* \setminus \{ \vec{0} \}$,
\begin{equation} \label{eq:H1_G_-}
    H^1 \chi^{\widetilde{\vec{G}},-1} = \oldhat{z}_{\vec{G}} \left( \chi^{\widetilde{\vec{G} + \vec{q}_1},1} + e^{- i \phi} \chi^{\widetilde{\vec{G} + \vec{q}_2},1} + e^{i \phi} \chi^{\widetilde{\vec{G} + \vec{q}_3},1} \right).
\end{equation}
For all $\vec{G} \in \Lambda^*$,
\begin{equation} \label{eq:H1_q1_G_-}
    H^1 \chi^{\widetilde{\vec{G} + \vec{q}_1},-1} = \oldhat{z}_{\vec{G} + \vec{q}_1} \left( \chi^{\widetilde{\vec{G}},1} + e^{- i \phi} \chi^{\widetilde{\vec{G} + \vec{q}_1 - \vec{q}_2},1} + e^{i \phi} \chi^{\widetilde{\vec{G} + \vec{q}_1 - \vec{q}_3},1} \right).
\end{equation}
\end{proposition}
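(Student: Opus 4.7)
The plan is to derive Proposition~\ref{prop:H1_L2Kminus1} as a direct corollary of Proposition~\ref{prop:H1_L2K1} by exploiting Hermiticity of $H^1$ on $L^2_{K,1}$, rather than repeating a plane-wave calculation analogous to the proof of \eqref{eq:H1_chi_G}. Since the chiral basis is orthonormal and $H^1$ is self-adjoint, the coefficient of $\chi^{\widetilde{A}}$ in the expansion of $H^1 \chi^{\widetilde{B}}$ equals $\ip{\chi^{\widetilde{A}}}{H^1 \chi^{\widetilde{B}}} = \overline{\ip{\chi^{\widetilde{B}}}{H^1 \chi^{\widetilde{A}}}}$. The mapping statements $H^1 : L^2_{K,1,-1,A} \to L^2_{K,1,1,B}$ and $H^1 : L^2_{K,1,-1,B} \to L^2_{K,1,1,A}$ then follow immediately from the mapping statements of Proposition~\ref{prop:H1_L2K1}, since only matrix elements between the stated pairs of subspaces can be non-zero.

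To produce the explicit expansions \eqref{eq:H1_q1_-}--\eqref{eq:H1_q1_G_-}, for each target basis function $\chi^{\widetilde{A},1}$ I would locate $\chi^{\widetilde{X},-1}$ on the right-hand side of the appropriate instance of \eqref{eq:H1_chi_G} or \eqref{eq:H1_q1_G} (with a suitable representative of $\vec{A}$), read off the coefficient, and complex-conjugate. Inspection of \eqref{eq:H1_chi_G}--\eqref{eq:H1_q1_G} shows that each $\chi^{\widetilde{X},-1}$ arises in exactly three such expansions, producing the three terms per line of Proposition~\ref{prop:H1_L2Kminus1}. Under conjugation the phases $1,\,e^{i\phi},\,e^{-i\phi}$ and coefficients $\overline{\oldhat{z}}$ appearing in Proposition~\ref{prop:H1_L2K1} become the phases $1,\,e^{-i\phi},\,e^{i\phi}$ and coefficients $\oldhat{z}$ asserted here. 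The prefactor $\sqrt{3}$ in \eqref{eq:H1_q1_-} arises from the identification $\chi^{\widetilde{\vec{q}_1},-1} = \chi^{\widetilde{\vec{q}_2},-1} = \chi^{\widetilde{\vec{q}_3},-1}$: the three matrix elements $\ip{\chi^{\widetilde{\vec{0}}}}{H^1 \chi^{\widetilde{\vec{q}_i},-1}}$ read from \eqref{eq:H1_0_2} collapse into a single enhanced coefficient in the non-redundant notation.

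The main obstacle is careful bookkeeping with the rotation equivalences. Because $\chi^{\widetilde{\vec{v}}}$ is invariant under $\vec{v} \mapsto R^*_\phi \vec{v}$, for each target $\chi^{\widetilde{A},1}$ I must first choose a specific representative $\vec{G}'$ or $\vec{G}'+\vec{q}_1$ of $\vec{A}$, then apply the corresponding line of Proposition~\ref{prop:H1_L2K1} and identify which of its three summands involves the desired $\chi^{\widetilde{X},-1}$, possibly after a rotation. Running through the cases $\vec{X} = \vec{q}_1$, $\vec{X} = \vec{G} \in \Lambda^* \setminus \{\vec{0}\}$, and $\vec{X} = \vec{G}+\vec{q}_1$ with $\vec{G} \in \Lambda^*$, and assembling the three non-zero contributions per $\vec{X}$, yields \eqref{eq:H1_q1_-}, \eqref{eq:H1_G_-}, and \eqref{eq:H1_q1_G_-}. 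A fully direct alternative, if the Hermiticity argument becomes opaque, is to mimic the explicit plane-wave computation used to establish \eqref{eq:H1_chi_G}, with $\overline{U(\vec{r})}$ in place of $U(\vec{r})$ acting on the $e_3$- or $e_4$-supported basis functions in \eqref{eq:A_chiral_efuncs}--\eqref{eq:B_chiral_efuncs}.
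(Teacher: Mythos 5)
Your proposal is correct but takes a genuinely different route from the paper. The paper's proof of Proposition \ref{prop:H1_L2Kminus1} is simply the remark that it is ``similar to that of Proposition \ref{prop:H1_L2K1} and is hence omitted,'' i.e., the authors intend a second direct plane-wave computation, multiplying $U(-\vec{r})$ against the $e_3$- and $e_4$-supported basis functions of \eqref{eq:A_chiral_efuncs}--\eqref{eq:B_chiral_efuncs} and regrouping the nine exponentials, exactly as in the displayed derivation of \eqref{eq:H1_chi_G}. You instead read the coefficients off as conjugate-transposed matrix elements of the already-established Proposition \ref{prop:H1_L2K1}, using self-adjointness of $H^1$ and orthonormality of the chiral basis. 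This is a legitimate and arguably more economical argument: it eliminates a second round of exponential bookkeeping, it makes the sign flips $e^{\pm i\phi}\mapsto e^{\mp i\phi}$ and $\overline{\oldhat{z}}\mapsto\oldhat{z}$ transparent, and your explanation of the $\sqrt{3}$ in \eqref{eq:H1_q1_-} via the identification $\chi^{\widetilde{\vec{q}_1},-1}=\chi^{\widetilde{\vec{q}_2},-1}=\chi^{\widetilde{\vec{q}_3},-1}$ collapsing three matrix elements of \eqref{eq:H1_0_2} into one is exactly right. The price is the representative bookkeeping you identify, which the direct computation largely avoids.

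One step should be made explicit. Hermiticity plus Proposition \ref{prop:H1_L2K1} determines only the matrix elements $\ip{\chi^{\widetilde{A},1}}{H^1\chi^{\widetilde{B},-1}}$, i.e., the components of $H^1\chi^{\widetilde{B},-1}$ along the $+1$-chirality basis functions; it says nothing about $\ip{\chi^{\widetilde{A},-1}}{H^1\chi^{\widetilde{B},-1}}$, since conjugating that matrix element just returns another element of the same unknown block. So the claim that the mapping statements ``follow immediately'' from Proposition \ref{prop:H1_L2K1} alone is circular for the $(-1,-1)$ block. The gap is closed in one line by the chiral anticommutation $\{H^1,\mathcal{S}\}=0$ (immediate from $\{H^\alpha,\mathcal{S}\}=0$ for all $\alpha$, or from the block-off-diagonal form of $H^1$), which forces $H^1 L^2_{K,1,-1}\subset L^2_{K,1,1}$ and hence kills all $(-1,-1)$ matrix elements before the Hermiticity argument is applied. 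The $A\leftrightarrow B$ momentum-sublattice exchange then does follow from Proposition \ref{prop:H1_L2K1} by your duality argument. With that one sentence added, the proof is complete.
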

\begin{proof}
The proof is similar to that of Proposition \ref{prop:H1_L2K1} and is hence omitted.
\end{proof}

\section{Formal expansion of the zero mode} \label{sec:TKV_expansion}

We now bring to bear the developments of the preceding sections on the asymptotic expansion of the zero mode $\psi^\alpha \in L^2_{K,1,1}$ starting from $\Psi^0 = e_1 = \chi^{\widetilde{\vec{0}}}$. We first give the proof of Proposition \ref{prop:series_prop}.
\begin{proof}[Proof of Proposition \ref{prop:series_prop}]
We have seen that $\chi^{\widetilde{\vec{0}}} \in L^2_{K,1,1}$. By the calculations of the previous section, $H^1 \chi^{\widetilde{\vec{0}}} \in L^2_{K,1,-1}$ which is orthogonal to the null space of $H^0$. The general solution of $H^0 \Psi^1 = - H^1 \Psi^0$ is
\begin{equation}
    \Psi^1 = - P^\perp (H^0)^{-1} P^\perp H^1 \Psi^0 + C \Psi^0,
\end{equation}
where $C$ is an arbitrary constant, which is in $L^2_{K,1,1}$ by Proposition \ref{prop:H0_prop}. To ensure that $\Psi^1$ is orthogonal to $\Psi^0$ we take $C = 0$. It is clear that this procedure can be repeated to derive an expansion to all orders satisfying the conditions of Proposition \ref{prop:series_prop}.
\end{proof}
Our goal is to calculate $\Psi^n \in L^2_{K,1,1}$ satisfying the conditions of Proposition \ref{prop:H0_prop} up to $n = 8$. This amounts to calculating, for $n = 1$ to $n = 8$,
\begin{equation}
    \Psi^n = - P^\perp (H^0)^{-1} P^\perp H^1 \Psi^{n-1}.
\end{equation}
We do this algorithmically by repeated application of the following proposition, which combines Proposition \ref{prop:H1_L2K1} and Proposition \ref{prop:H0_inv_prop}.
\begin{proposition} \label{prop:H0_inv_H1}
The operator $- P^\perp (H^0)^{-1} P^\perp H^1$ maps $L^2_{K,1,1,A} \rightarrow L^2_{K,1,1,B}$ and $L^2_{K,1,1,B} \rightarrow L^2_{K,1,1,A}$. Its action on chiral basis functions is as follows: 
\begin{equation} \label{eq:H0_inv_H1_chi_0}
    - P^\perp (H^0)^{-1} P^\perp H^1 \chi^{\widetilde{\vec{0}}} = - \sqrt{3} \overline{\oldhat{z}_{\vec{q}_1}} \chi^{\widetilde{\vec{q}_1},1},
\end{equation}
and
\begin{equation} \label{eq:H0_inv_H1_chi_1}
    - P^\perp (H^0)^{-1} P^\perp H^1 \chi^{\widetilde{\vec{q}_1},1} = - \frac{e^{i \phi} \overline{ \oldhat{z}_{\vec{q}_1 - \vec{q}_2} }}{|\vec{q}_1 - \vec{q}_2|} \chi^{\widetilde{\vec{q}_1-\vec{q}_2},1} - \frac{e^{- i \phi} \overline{ \oldhat{z}_{\vec{q}_1 - \vec{q}_3} }}{|\vec{q}_1 - \vec{q}_3|} \chi^{\widetilde{\vec{q}_1-\vec{q}_3},1}.
\end{equation}
For all $\vec{G} \in \Lambda^* \setminus \{ \vec{0} \}$,  
\begin{equation} \label{eq:H0_inv_H1_chi_G}
\begin{split}
    &- P^\perp (H^0)^{-1} P^\perp H^1 \chi^{\widetilde{\vec{G}},1} = \\
    &- \frac{\overline{ \oldhat{z}_{\vec{G} + \vec{q}_1} }}{|\vec{G} + \vec{q}_1|} \chi^{\widetilde{\vec{G} + \vec{q}_1},1} - \frac{e^{i \phi} \overline{ \oldhat{z}_{\vec{G} + \vec{q}_2} }}{|\vec{G} + \vec{q}_2|} \chi^{\widetilde{\vec{G} + \vec{q}_2},1} - \frac{e^{- i \phi} \overline{ \oldhat{z}_{\vec{G} + \vec{q}_3} }}{|\vec{G} + \vec{q}_3|} \chi^{\widetilde{\vec{G} + \vec{q}_3},1}.
\end{split}
\end{equation}
For all $\vec{G} \in \Lambda^* \setminus \{ \vec{0} \}$,
\begin{equation} \label{eq:H0_inv_H1_chi_G_plus_q}
    - P^\perp (H^0)^{-1} P^\perp H^1 \chi^{\widetilde{\vec{G} + \vec{q}_1},1} = - \frac{\overline{\oldhat{z}_{\vec{G}}}}{|\vec{G}|} \chi^{\widetilde{\vec{G}},1} - \frac{e^{i \phi} \overline{ \oldhat{z}_{\vec{G} + \vec{q}_1 - \vec{q}_2} }}{|\vec{G} + \vec{q}_1 - \vec{q}_2|} \chi^{\widetilde{\vec{G} + \vec{q}_1 - \vec{q}_2},1} - \frac{e^{- i \phi} \overline{ \oldhat{z}_{\vec{G} + \vec{q}_1 - \vec{q}_3} } }{|\vec{G} + \vec{q}_1 - \vec{q}_3|} \chi^{\widetilde{\vec{G} + \vec{q}_1 - \vec{q}_3},1}.
\end{equation}
\end{proposition}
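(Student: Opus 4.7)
The plan is to obtain every identity by straightforward composition: first apply $H^1$ using Proposition \ref{prop:H1_L2K1}, then apply $P^\perp (H^0)^{-1} P^\perp$ using Proposition \ref{prop:H0_inv_prop}. Since each of these two propositions has already been proved, the present proposition reduces to a bookkeeping exercise. There is no substantive analytic obstacle; the only point to verify carefully is that the composition acts as claimed on the various chiral-basis functions, and in particular that the projection $P^\perp$ does not alter the intermediate result.

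First I would verify the stated mapping property. By Proposition \ref{prop:H1_L2K1}, $H^1$ maps $L^2_{K,1,1,A} \to L^2_{K,1,-1,B}$ and $L^2_{K,1,1,B} \to L^2_{K,1,-1,A}$. By Proposition \ref{prop:H0_inv_prop}, $P^\perp (H^0)^{-1} P^\perp$ maps $L^2_{K,1,\mp 1,\sigma} \to L^2_{K,1,\pm 1,\sigma}$ for $\sigma = A,B$. Composing gives the claimed maps $L^2_{K,1,1,A} \to L^2_{K,1,1,B}$ and $L^2_{K,1,1,B} \to L^2_{K,1,1,A}$. Moreover, for any $+1$-chirality basis function $f$, $H^1 f \in L^2_{K,1,-1}$, which is orthogonal to $\chi^{\widetilde{\vec{0}}}$ since $\chi^{\widetilde{\vec{0}}} \in L^2_{K,1,1}$. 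Therefore the leftmost $P^\perp$ acts as the identity on $H^1 f$, and we may in fact drop both $P^\perp$'s and compute $-(H^0)^{-1} H^1 f$ on the orthogonal complement of the kernel of $H^0$.

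Next I would treat each of the four identities \eqref{eq:H0_inv_H1_chi_0}--\eqref{eq:H0_inv_H1_chi_G_plus_q} in turn by substituting the output of $H^1$ from \eqref{eq:H1_0}, \eqref{eq:H1_q1}, \eqref{eq:H1_chi_G}, \eqref{eq:H1_q1_G} into the $(H^0)^{-1}$ action from Proposition \ref{prop:H0_inv_prop}. In each case, $(H^0)^{-1}$ simply divides the coefficient of $\chi^{\widetilde{\vec{G}},-1}$ (respectively $\chi^{\widetilde{\vec{q}_1+\vec{G}},-1}$) by $|\vec{G}|$ (respectively $|\vec{q}_1 + \vec{G}|$), while flipping the chirality label from $-1$ to $+1$. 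For instance, \eqref{eq:H1_0} gives $H^1 \chi^{\widetilde{\vec{0}}} = \sqrt{3}\,\overline{\oldhat{z}_{\vec{q}_1}}\,\chi^{\widetilde{\vec{q}_1},-1}$, and since $|\vec{q}_1| = 1$, applying $-(H^0)^{-1}$ yields $-\sqrt{3}\,\overline{\oldhat{z}_{\vec{q}_1}}\,\chi^{\widetilde{\vec{q}_1},1}$, which is \eqref{eq:H0_inv_H1_chi_0}. The remaining identities follow by analogous bookkeeping, with the division by the corresponding eigenvalue $|\vec{G}+\vec{q}_j|$ or $|\vec{G}|$ producing the denominators visible in \eqref{eq:H0_inv_H1_chi_1}--\eqref{eq:H0_inv_H1_chi_G_plus_q}.

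The main (very mild) obstacle is purely notational: one must be careful that the labels $\vec{G}$, $\vec{G}+\vec{q}_j$, $\vec{q}_1-\vec{q}_j$ that appear inside the basis functions and in the hat-$z$ coefficients match correctly after composition, and that the edge case $\vec{G} = \vec{0}$ is treated separately (since the zero mode of $H^0$ lies in the kernel of $P^\perp (H^0)^{-1} P^\perp$). Once these are accounted for, the proof is purely algebraic and the four stated identities follow.
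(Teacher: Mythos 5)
Your proposal is correct and matches the paper's approach exactly: the paper itself offers no separate proof, stating only that Proposition \ref{prop:H0_inv_H1} ``combines Proposition \ref{prop:H1_L2K1} and Proposition \ref{prop:H0_inv_prop},'' which is precisely the composition you carry out (apply $H^1$ via \eqref{eq:H1_0}--\eqref{eq:H1_q1_G}, then divide by the relevant $H^0$-eigenvalue and flip the chirality label via Proposition \ref{prop:H0_inv_prop}). The only nitpick is that the $P^\perp$ acting on $H^1 f$ is the inner (rightmost) one rather than the ``leftmost,'' but since both projections act as the identity here --- $H^1 f$ has chirality $-1$ while $\chi^{\widetilde{\vec{0}}}$ has chirality $+1$ --- this does not affect the argument.
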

We now claim the following.
\begin{proposition} \label{prop:psi_alpha_expansion}
Let $\Psi^n$ be the sequence defined by Proposition \ref{prop:series_prop}. Then
\begin{equation} \label{eq:Psi_1}
    \Psi^1 = - \sqrt{3} i \chi^{\widetilde{\vec{q}_1},1},
\end{equation}
\begin{equation} \label{eq:Psi_2}
    \Psi^2 = \left( \frac{ \sqrt{3} - i }{ 2 } \right) \chi^{\widetilde{-\vec{b}_1},1} + \left( \frac{ \sqrt{3} + i }{ 2 } \right) \chi^{\widetilde{-\vec{b}_2},1},
\end{equation}
\begin{equation} \label{eq:Psi_3}
    \Psi^3 = \frac{1}{\sqrt{7}} \left( \frac{ \sqrt{7} - 3 \sqrt{21} i }{ 14 } \right) \chi^{\widetilde{\vec{q}_1 - \vec{b}_2},1} + \frac{1}{\sqrt{7}} \left( \frac{-\sqrt{7} - 3 \sqrt{21} i}{ 14 } \right) \chi^{\widetilde{\vec{q}_1 - \vec{b}_1},1},
\end{equation}
\begin{equation} \label{eq:Psi_4}
\begin{split}
    \Psi^4 = &\frac{1}{\sqrt{21}} \left( \frac{ -5 \sqrt{7} + \sqrt{21} i }{14} \right) \chi^{\widetilde{-\vec{b}_2},1} + \frac{1}{2 \sqrt{21}} \left( \frac{2 \sqrt{7} + \sqrt{21} i}{7} \right) \chi^{\widetilde{-2 \vec{b}_2},1}  \\
    &+ \frac{1}{\sqrt{21}} \left( \frac{-5\sqrt{7} - \sqrt{21}i}{14} \right) \chi^{\widetilde{-\vec{b}_1},1} + \frac{1}{2 \sqrt{21}} \left( \frac{2 \sqrt{7} - \sqrt{21} i}{7} \right) \chi^{\widetilde{-2 \vec{b}_1}}  \\
    &+ \frac{2 \sqrt{3}}{21} \chi^{\widetilde{-\vec{b}_1 - \vec{b}_2},1},
\end{split}
\end{equation}
\end{proposition}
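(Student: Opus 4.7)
The plan is to prove \eqref{eq:Psi_1}--\eqref{eq:Psi_4} by iterating the recursion $\Psi^n = -P^\perp (H^0)^{-1} P^\perp H^1 \Psi^{n-1}$ from Proposition \ref{prop:series_prop}, starting from $\Psi^0 = \chi^{\widetilde{\vec{0}}}$, and applying the explicit action of $-P^\perp (H^0)^{-1} P^\perp H^1$ on chiral basis functions given in Proposition \ref{prop:H0_inv_H1}. Each application of this operator is a ``hopping step'' in the momentum-space lattice (cf.~Figure \ref{fig:H1_chi_G}): it sends a chiral basis function labelled by a lattice site $\vec{G}$ to a weighted sum over its (two or three) nearest neighbors, with weights that are explicit products of $\oldhat{z}$-phases and $e^{\pm i\phi}$ factors divided by the norm of the displaced momentum.

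For $\Psi^1$, the evaluation $\oldhat{z}_{\vec{q}_1} = -i$ substituted into \eqref{eq:H0_inv_H1_chi_0} yields \eqref{eq:Psi_1} immediately. For $\Psi^2$, formula \eqref{eq:H0_inv_H1_chi_1} produces two terms at $\vec{q}_1 - \vec{q}_2 = -\vec{b}_1$ and $\vec{q}_1 - \vec{q}_3 = -\vec{b}_2$ (both of magnitude $\sqrt{3}$); evaluating the phases $\oldhat{z}_{-\vec{b}_1} = e^{-i\phi}$ and $\oldhat{z}_{-\vec{b}_2} = \tfrac{1}{2}(1-\sqrt{3}i)$, combining with the $e^{\pm i\phi}$ factors prescribed by \eqref{eq:H0_inv_H1_chi_1}, and multiplying through by the coefficient $-\sqrt{3}i$ of $\Psi^1$, yields \eqref{eq:Psi_2}.

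The substance of the proof is in computing $\Psi^3$ and $\Psi^4$, where the hopping step generates more terms than appear in the final formulas, so nontrivial coincidences and cancellations must occur. For $\Psi^3$, expanding each of the two terms of $\Psi^2$ via \eqref{eq:H0_inv_H1_chi_G} produces six intermediate contributions. Two of these land on the same chiral basis function $\chi^{\widetilde{\vec{q}_1},1}$ (since $-\vec{b}_1 + \vec{q}_2 = -\vec{b}_2 + \vec{q}_3 = \vec{q}_1$), and two more land on rotation-equivalent momenta of magnitude $2$ (namely $(-\sqrt{3},-1)$ and $(\sqrt{3},-1)$, identified via $\chi^{\widetilde{\vec{G}},1} = \chi^{\widetilde{R_\phi^* \vec{G}},1}$), so their coefficients add. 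Direct computation shows that both combined coefficients vanish, leaving only the two surviving contributions at magnitude $\sqrt{7}$ which match \eqref{eq:Psi_3}. The same procedure applied to $\Psi^3$ using \eqref{eq:H0_inv_H1_chi_G_plus_q} yields $\Psi^4$: six new intermediate terms, of which two (at the momentum $-\vec{b}_1 - \vec{b}_2$ of magnitude $3$) coincide and combine, while the remaining four survive distinct, giving the five-term expression \eqref{eq:Psi_4}.

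The main obstacle is purely the bookkeeping: one must carefully identify which generated momenta coincide modulo the rotational identification $\chi^{\widetilde{\vec{G}},1} = \chi^{\widetilde{R_\phi^* \vec{G}},1}$ and accumulate their coefficients accordingly, and verify that the expected cancellations really occur. The arithmetic is elementary but error-prone, and is most safely executed by symbolic computation along the lines of the accompanying script \texttt{compute\_expansion\_symbolically.py}.
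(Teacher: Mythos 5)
Your proposal is correct and follows essentially the same route as the paper's own proof: iterate the recursion of Proposition \ref{prop:series_prop} using the explicit hopping action of $-P^\perp (H^0)^{-1}P^\perp H^1$ from Proposition \ref{prop:H0_inv_H1}, identify coinciding momenta modulo the rotational equivalence $\chi^{\widetilde{\vec{G}},1} = \chi^{\widetilde{R_\phi^*\vec{G}},1}$, and verify the cancellations (the paper likewise notes that for $\Psi^3$ the $\chi^{\widetilde{\vec{q}_1},1}$ contributions and the magnitude-$2$ contributions cancel, and omits the analogous bookkeeping for $\Psi^4$). Your description of the surviving and cancelling terms at each order matches the paper's computation.
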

\begin{proof}
Equations \eqref{eq:Psi_1} and \eqref{eq:Psi_2} follow immediately from \eqref{eq:H0_inv_H1_chi_0} and \eqref{eq:H0_inv_H1_chi_1} and using $\vec{q}_2 = \vec{q}_1 + \vec{b}_1$ and $\vec{q}_3 = \vec{q}_1 + \vec{b}_2$. The derivation of equation \eqref{eq:Psi_3} is more involved, so we give details. Using linearity, and applying \eqref{eq:H0_inv_H1_chi_G} twice, we find
\begin{equation}
\begin{split}
    &- P^\perp (H^0)^{-1} P^\perp H^1 \Psi^2 =   \\
    &\left( \frac{\sqrt{3} - i}{2} \right) \left( \frac{ \overline{ \oldhat{z}_{\vec{q}_1 - \vec{b}_2} } }{ |\vec{q}_1 - \vec{b}_2| } \chi^{\widetilde{\vec{q}_1 - \vec{b}_2},1} + \frac{ e^{i \phi} \overline{ \oldhat{z}_{\vec{q}_1 + \vec{b}_1 - \vec{b}_2} } }{ |\vec{q}_1 + \vec{b}_1 - \vec{b}_2| } \chi^{\widetilde{\vec{q}_1 + \vec{b}_1 - \vec{b}_2},1} + e^{- i \phi} \overline{ \oldhat{z}_{\vec{q}_1} } \chi^{\widetilde{\vec{q}_1},1} \right)    \\
    &+ \left( \frac{\sqrt{3} + i}{2} \right) \left( \frac{ \overline{ \oldhat{z}_{\vec{q}_1 - \vec{b}_1} } }{ |\vec{q}_1 - \vec{b}_1| } \chi^{\widetilde{\vec{q}_1 - \vec{b}_1},1} + e^{i \phi} \overline{ \oldhat{z}_{\vec{q}_1} } \chi^{\widetilde{\vec{q}_1},1} + \frac{ e^{- i \phi} \overline{ \oldhat{z}_{\vec{q}_1 + \vec{b}_2 - \vec{b}_1} } }{ |\vec{q}_1 + \vec{b}_2 - \vec{b}_1| } \chi^{\widetilde{\vec{q}_1 + \vec{b}_2 - \vec{b}_1},1} \right).
\end{split}
\end{equation}
First, the terms proportional to $\chi^{\widetilde{\vec{q}_1},1}$ cancel. Next, since $R_\phi ( \vec{q}_1 + \vec{b}_1 - \vec{b}_2 ) = \vec{q}_1 + \vec{b}_2 - \vec{b}_1$, we have $\chi^{\widetilde{\vec{q}_1 + \vec{b}_1 - \vec{b}_2},1} = \chi^{\widetilde{\vec{q}_1 + \vec{b}_2 - \vec{b}_1},1}$. These terms also cancel, leaving \eqref{eq:Psi_3}. The derivation of \eqref{eq:Psi_4} (and the higher corrections) is involved but does not depend on any new ideas, and is therefore omitted.
\end{proof}
We give the explicit forms of $\Psi^5$-$\Psi^8$ in the Supplementary Material.
\begin{remark}
Written out, \eqref{eq:Psi_1} and \eqref{eq:Psi_2} become 
\begin{equation}
    \Psi^1 = - \sqrt{3} i \frac{1}{\sqrt{3 V}} {e}_2 \left( e^{i \vec{q}_1 \cdot \vec{r}} + e^{i \vec{q}_2 \cdot \vec{r}} + e^{i \vec{q}_3 \cdot \vec{r}} \right),
\end{equation}
and
\begin{equation}
    \Psi^2 = - i e^{i \phi} \frac{1}{\sqrt{3 V}} {e}_1 \left( e^{i \vec{b}_1 \cdot \vec{r}} + e^{i (\vec{b}_2 - \vec{b}_1) \cdot \vec{r}} + e^{- i \vec{b}_2 \cdot \vec{r}} \right) + i e^{- i \phi} \frac{1}{\sqrt{3 V}} {e}_1 \left( e^{i \vec{b}_2 \cdot \vec{r}} + e^{- i \vec{b}_1 \cdot \vec{r}} + e^{i (\vec{b}_1 - \vec{b}_2) \cdot \vec{r}} \right),
\end{equation}
which agree with equation (24) of Tarnopolsky et al. \cite{Tarnopolsky2019} up to an overall factor of $\sqrt{V}$ (this factor cancels in the Fermi velocity so there is no discrepancy).
\end{remark}
Using orthonormality of the chiral basis functions, it is straightforward to calculate the norms of each of the $\Psi^n$. We have
\begin{proposition} \label{prop:Psi_n_norms}
\begin{equation}
    \| \Psi^0 \| = 1, \| \Psi^1 \| = \sqrt{3}, \| \Psi^2 \| = \sqrt{2}, \| \Psi^3 \| = \frac{\sqrt{14}}{7}, \| \Psi^4 \| = \frac{\sqrt{258}}{42}, \| \Psi^5 \| = \frac{ \sqrt{1968837} }{3458}
\end{equation}
\begin{equation}
    \| \Psi^6 \| = \frac{ \sqrt{106525799} }{31122}, \| \Psi^7 \| = \frac{ 2\sqrt{2129312323981473} }{ 624696345 }, \| \Psi^8 \| = \frac{ \sqrt{183643119755214454} }{ 4997570760 }.
\end{equation}
\end{proposition}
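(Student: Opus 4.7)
The plan is to compute $\|\Psi^n\|$ for each $n$ by a direct term-by-term calculation using the chiral-basis expansion of $\Psi^n$ together with orthonormality of the chiral basis (Proposition \ref{prop:L2K11}). If $\Psi^n = \sum_j c_j \chi^{\widetilde{\vec{G}_j},1}$, where the labels $\widetilde{\vec{G}_j}$ are pairwise distinct modulo the $R_\phi^*$ equivalence, then $\|\Psi^n\|^2 = \sum_j |c_j|^2$ is a finite rational sum.

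For $n \leq 4$ the expansions are already given in Proposition \ref{prop:psi_alpha_expansion}, and the norms follow by inspection. For example, each of the two coefficients of $\Psi^2$ has modulus squared equal to $1$, giving $\|\Psi^2\|^2 = 2$; each coefficient of $\Psi^3$ satisfies $|c|^2 = \frac{1}{7 \cdot 196}(7 + 9 \cdot 21) = \frac{1}{7}$, giving $\|\Psi^3\|^2 = \frac{2}{7} = \frac{14}{49}$; the five coefficients of $\Psi^4$ contribute $\frac{1}{21}, \frac{1}{84}, \frac{1}{21}, \frac{1}{84}, \frac{4}{147}$, summing to $\frac{43}{294} = \frac{258}{42^2}$, in agreement with $\|\Psi^4\| = \frac{\sqrt{258}}{42}$.

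For $n = 5, 6, 7, 8$ I would generate the chiral-basis expansion of $\Psi^n$ by iterating the recursion $\Psi^n = - P^\perp (H^0)^{-1} P^\perp H^1 \Psi^{n-1}$ from Proposition \ref{prop:series_prop}, using the explicit hopping formulas \eqref{eq:H0_inv_H1_chi_0}--\eqref{eq:H0_inv_H1_chi_G_plus_q} from Proposition \ref{prop:H0_inv_H1}. Each application sends a chiral basis function to a sum over its three nearest neighbors in the momentum-space honeycomb lattice (Figure \ref{fig:H1_chi_G}), weighted by the corresponding phases and $\frac{1}{|\vec{G}|}$ or $\frac{1}{|\vec{G}+\vec{q}_1|}$ factors. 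Iterating eight times and collecting like terms yields the explicit expressions for $\Psi^5$--$\Psi^8$ listed in the supplementary material; one then reads off the coefficients and sums their squared moduli.

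The main obstacle is the bookkeeping. The chiral-basis label $\widetilde{\vec{G}}$ is invariant under $R_\phi^*$, so after each application of the recursion one must identify momenta lying on the same $R_\phi^*$-orbit (e.g., $\chi^{\widetilde{\vec{q}_1 + \vec{b}_1 - \vec{b}_2},1} = \chi^{\widetilde{\vec{q}_1 + \vec{b}_2 - \vec{b}_1},1}$, as noted in the derivation of $\Psi^3$) and sum coefficients \emph{before} taking squared magnitudes; treating these separately would both miss cross terms and over-count dimensions. For $n \geq 5$ the number of active basis functions grows combinatorially and cancellations proliferate, so the reductions are best carried out symbolically in exact rational arithmetic (for instance via Sympy, as the authors use elsewhere). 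The unusual integers under the square roots (such as $1968837$ or $106525799$) reflect precisely the outcome of this exact summation, and their correctness hinges entirely on performing the $R_\phi^*$-orbit identification faithfully at every hop.
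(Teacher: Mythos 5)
Your proposal is correct and is essentially the paper's own argument: the norms are obtained by reading off the chiral-basis coefficients of $\Psi^0$--$\Psi^8$ (Proposition \ref{prop:psi_alpha_expansion} and the supplementary material, generated by iterating the recursion of Proposition \ref{prop:H0_inv_H1} symbolically with Sympy) and summing squared moduli using orthonormality of the chiral basis, with the $R_\phi^*$-orbit identifications handled exactly as you describe. Your spot checks for $n\leq 4$ match the stated values.
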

\begin{remark}\label{rem:remark_on_PsiN_norms}
Note that the sequence of norms of the expansion functions grows much slower than the pessimistic bound $\| \Psi^{N+1} \| \leq 3 \| \Psi^N \|, N = 0,1,2,...$ guaranteed by Proposition \ref{prop:H1H0bound}. The reason is that the bounds \eqref{eq:H0_worst} and \eqref{eq:H1_worst} are never attained. As $N$ becomes larger, the bound \eqref{eq:H0_worst} is very pessimistic because $\Psi^N$ is mostly made up of eigenfunctions of $H^0$ with eigenvalues strictly larger than $1$. The bound \eqref{eq:H1_worst} is also very pessimistic because it is attained only at delta functions, which can only be approximated with a superposition of a large number of eigenfunctions of $H^0$. It seems possible that a sharper bound could be proved starting from these observations, but we do not pursue this in this work.
\end{remark}

We finally give the proof of Proposition \ref{prop:bound_H1psi8}.
\begin{proof}[Proof of Proposition \ref{prop:bound_H1psi8}]
Explicit computation using Proposition \ref{prop:H1_L2K1} and orthonormality of the chiral basis functions gives
\begin{equation}
    \| H^1 \Psi^8 \| = \frac{ \sqrt{4855076200233765642} }{14992712280} \approx 0.147 \leq \frac{3}{20}.
\end{equation}
\end{proof}

\section{Proof of Proposition \ref{prop:mu_choice}} \label{sec:prop_proof}

We choose $\Xi$ as
\begin{equation} \label{eq:Omega}
    \Xi := \left\{ \parbox{15em}{ $L^2_{K,1}$-eigenfunctions of $H^0$ with \\ eigenvalues with magnitude $\leq 4 \sqrt{3}$ } \right\} \bigcup \left\{ \chi^{\widetilde{\vec{q}_1 - 4 \vec{b}_1 + \vec{b}_2},\pm 1}, \chi^{\widetilde{\vec{q}_1 + \vec{b}_1 - 4 \vec{b}_2},\pm 1} \right\}.
\end{equation}
Part 1. of Proposition \ref{prop:mu_choice} follows immediately from observing that $\chi^{\widetilde{\vec{q}_1 - 2 \vec{b}_1 - 2 \vec{b}_2},\pm 1}$ is not in $\Xi$ but $|\vec{q}_1 - 2 \vec{b}_1 - 2 \vec{b}_2| = 7$. That $\mu = 7$ is optimal can be seen from Figure \ref{fig:H0_evals}.

Part 2. follows from the fact that $\psi^{8,\alpha}$ depends only on eigenfunctions of $H^0$ with eigenvalues with magnitude less than or equal to $4 \sqrt{3}$. The largest eigenvalue is $4 \sqrt{3}$, coming from dependence of $\Psi^8$ on $\chi^{\widetilde{- 4 \vec{b}_2},1}$, since $|- 4 \vec{b}_2| = 4 \sqrt{3}$.

Part 3. can be seen from Figure \ref{fig:H0_evals}.

\begin{figure}
\centering
\includegraphics[scale=.35]{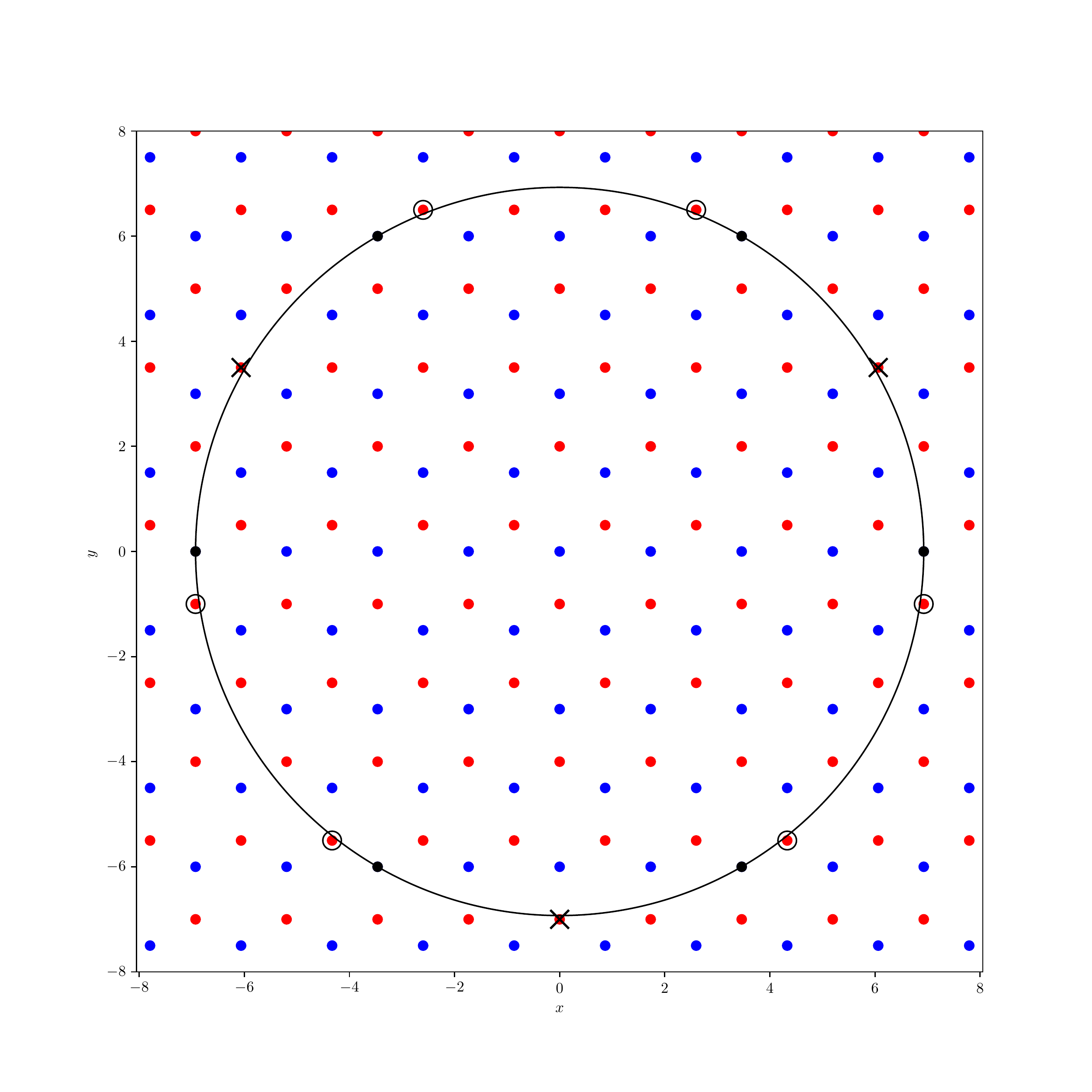}
\caption{
Illustration of $\Xi$ in the momentum space lattice. The circle has radius $4 \sqrt{3}$, so that every dot within the circle corresponds to two chiral basis vectors included in $\Xi$. Chiral basis vectors exactly $4 \sqrt{3}$ away from the origin, marked with black dots, are also included in $\Xi$. We also include in $\Xi$ the chiral basis vectors $\left\{ \chi^{\widetilde{\vec{q}_1 - 4 \vec{b}_1 + \vec{b}_2},\pm 1}, \chi^{\widetilde{\vec{q}_1 + \vec{b}_1 - 4 \vec{b}_2},\pm 1} \right\}$, which correspond to the dots marked with circles, which are distance $7$ (NB. $7 > 4 \sqrt{3}$) from the origin. We do not include the chiral basis vectors $\chi^{\widetilde{\vec{q}_1 - 2 \vec{b}_1 - 2 \vec{b}_2},\pm 1}$, marked with black crosses, which are also a distance $7$ from the origin. The reason for this is so that part 3 of Proposition \ref{prop:mu_choice} holds. With this choice, every dot in $\Xi$ has \emph{at most one} nearest neighbor lattice point outside of $\Xi$. It follows immediately from Propositions \ref{prop:H1_L2K1} and \ref{prop:H1_L2Kminus1} ($H^1$ acts by nearest neighbor hopping in the momentum space lattice) that $\| P_\Xi H^1 P_\Xi^\perp \| = 1$. Note that if we chose $\Xi$ to include $\chi^{\widetilde{\vec{q}_1 - 2 \vec{b}_1 - 2 \vec{b}_2},\pm 1}$ this would no longer hold because these basis functions would have two nearest neighbors outside $\Xi$, resulting in the worse bound $\| P_\Xi H^1 P_\Xi^\perp \| \leq \sqrt{2}$.
}
\label{fig:H0_evals}
\end{figure}

\section{Proof of Proposition \ref{as:PHP_gap}} \label{sec:verify_gap_assump}

\subsection{Proof of Theorem \ref{th:first_1}} \label{sec:prove_first_1}

We will prove Theorem \ref{th:first_1} starting from Theorem 11.5.1 of Parlett \cite{doi:10.1137/1.9781611971163}, where the proof can be found.
\begin{lemma} \label{lem:third}
Let $Q$ be an $n \times m$ matrix which satisfies $Q^\dagger Q = I_{m}$. Define $H = Q^\dagger A Q$ and $R = A Q - Q H$. Let $\{ \theta_j \}_{1 \leq j \leq m}$ denote the eigenvalues of $H$ (the Ritz values). Then $m$ of $A$'s eigenvalues $\{ \alpha_j \}_{1 \leq j \leq m}$ can be put into one-to-one correspondence with the $\{ \theta_j \}_{1 \leq j \leq m}$ in such a way that
\begin{equation}
    | \theta_j - \alpha_j | \leq \| R \|_2 \quad 1 \leq j \leq m.
\end{equation}
\end{lemma}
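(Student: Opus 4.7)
My plan is to prove Lemma \ref{lem:third} by reducing it to Weyl's perturbation inequality applied to a block-diagonal decomposition of $A$. First I would extend $Q$ to an $n \times n$ unitary matrix $[Q, Q_\perp]$, whose trailing $n - m$ columns $Q_\perp$ form an orthonormal basis of the orthogonal complement of $\ran(Q)$, and conjugate $A$ by this unitary to put it in block form
\begin{equation}
[Q, Q_\perp]^\dagger A [Q, Q_\perp] = \begin{pmatrix} H & B^\dagger \\ B & M \end{pmatrix}, \quad B := Q_\perp^\dagger A Q, \quad M := Q_\perp^\dagger A Q_\perp.
\end{equation}

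The key identity is then to relate the off-diagonal block $B$ to the residual $R$. Since $Q^\dagger Q = I_m$, we have $Q^\dagger R = Q^\dagger A Q - Q^\dagger Q H = H - H = 0$, so $R$ lies in $\ran(Q_\perp)$; combining this with $Q_\perp^\dagger Q = 0$ gives $B = Q_\perp^\dagger A Q = Q_\perp^\dagger(AQ - QH) = Q_\perp^\dagger R$, so $\|B\|_2 \leq \|R\|_2$. Writing the conjugated matrix as $D + E$, where $D$ is the block diagonal of $H$ and $M$ and $E$ carries only the off-diagonal blocks $B$ and $B^\dagger$, one checks $\|E\|_2 = \|B\|_2 \leq \|R\|_2$.

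The final step is Weyl's inequality for Hermitian matrices: the sorted eigenvalues of $D + E$ and $D$ differ pairwise by at most $\|E\|_2 \leq \|R\|_2$. The spectrum of $D$ is the multiset union of the Ritz values $\{\theta_j\}_{1 \leq j \leq m}$ and the eigenvalues of $M$, and $D + E$ is unitarily similar to $A$. Sorting both spectra in increasing order and recording the position $i_j$ at which each Ritz value $\theta_j$ appears in the sorted spectrum of $D$, I would define $\alpha_j$ to be the eigenvalue of $A$ at position $i_j$ in the sorted spectrum of $A$; Weyl's inequality then yields $|\theta_j - \alpha_j| \leq \|R\|_2$.

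I expect the only subtlety to be bookkeeping in the matching step when Ritz values coincide with eigenvalues of $M$ or when $H$ or $M$ have degenerate eigenvalues. This is purely combinatorial and is handled by fixing any consistent tie-breaking rule when sorting; with such a rule, distinct indices $j$ are assigned distinct positions $i_j$, so the correspondence $\theta_j \leftrightarrow \alpha_j$ is genuinely one-to-one. The main analytic inputs are just Weyl's perturbation inequality and the identity $B = Q_\perp^\dagger R$ linking the off-diagonal block to the residual.
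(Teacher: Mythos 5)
Your proof is correct. Note that the paper does not actually prove this lemma: it imports it verbatim as Theorem 11.5.1 of Parlett's \emph{The Symmetric Eigenvalue Problem} and refers the reader there. Your argument is a clean, self-contained replacement, and it is essentially the classical one: extend $Q$ to a unitary $[Q,\,Q_\perp]$, observe that $Q^\dagger R = Q^\dagger A Q - Q^\dagger Q H = 0$ forces the off-diagonal block to satisfy $B = Q_\perp^\dagger A Q = Q_\perp^\dagger R$, split the conjugated matrix into its block diagonal $D = \diag(H,M)$ plus the off-diagonal perturbation $E$ with $\|E\|_2 = \|B\|_2 \leq \|R\|_2$, and apply Weyl's inequality to the sorted spectra of $D$ and $D+E$. (In fact $R = Q_\perp B$ gives $\|B\|_2 = \|R\|_2$ exactly, though the inequality suffices.) Parlett's own proof is a mild variant of the same idea, perturbing $A$ by $-\left(RQ^\dagger + QR^\dagger\right)$ to make $\ran Q$ invariant and then invoking Weyl, so the two routes are equivalent in substance; yours has the advantage of making the role of the off-diagonal coupling block explicit. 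The matching step is handled correctly: sorting the multiset spectrum of $D$ assigns each $\theta_j$ its own slot, so the induced map onto eigenvalues of $A$ (counted with multiplicity) is genuinely one-to-one even under degeneracy.
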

\begin{proof}[Proof of Theorem \ref{th:first_1}]
Let $Q$ be the matrix whose columns are ${v}_1,...,{v}_m$. Using orthonormality of the $v_j$, $Q^\dagger Q = I_m$ and
\begin{equation}
\begin{split}
    H = Q^\dagger A Q = \diag({\lambda}_1 , ... , {\lambda}_m) + \begin{pmatrix} \ip{ {v}_1 }{ {r}_1 } & ... & \ip{ {v}_1 }{ {r}_m } \\ ... & ... & ... \\ \ip{ v_n }{ r_1 } & ... & \ip{ v_n }{ r_n } \end{pmatrix}.
\end{split}
\end{equation}
We now prove that the eigenvalues of $H$, denoted by $\theta_j$, are close to the $\lambda_j$s. By the Gershgorin circle theorem, we have
\begin{equation}
    | \theta_i - \left( {\lambda}_i + \ip{ v_i }{ r_i } \right) | \leq \sum_{j \neq i}^m | \ip{ {v}_i }{ {r}_j } |,
\end{equation}
which implies, using $\| v_j \|_2 = 1$,
\begin{equation}
    | \theta_i - \lambda_i | = | \theta_i - \lambda_i - \ip{v_i}{r_i} + \ip{v_i}{r_i} | \leq \sum_{j = 1}^m | \ip{ v_i }{ r_j } | \leq m \sup_{1 \leq i \leq m} \| r_i \|_2. 
\end{equation}
We can now use Lemma \ref{lem:third} to bound the difference between the $\lambda_j$s and exact eigenvalues $\alpha_j$
\begin{equation}
    | {\lambda}_j - \alpha_j | = | {\lambda}_j - \theta_j + \theta_j - \alpha_j | \leq m \sup_{1 \leq i \leq m} \| r_i \|_2 + \| R \|_2,
\end{equation}
where $R := A Q - Q H = ( I - Q Q^\dagger) A Q$. Since $Q Q^\dagger$ projects onto the $v_j$, $R$ simplifies to
\begin{equation}
    R = ( I - Q Q^\dagger) R', \quad R' := \begin{pmatrix} {r}_1 & ... & r_m \end{pmatrix}.
\end{equation}
Since $Q Q^\dagger$ is a projection, so is $I - Q Q^\dagger$, and hence $\| R \|_2 \leq \| R' \|_2$. To bound $\| R' \|_2$, note that for any $v$ with $\| v \|_2 = 1$ we have
\begin{equation}
    \| R' v \|_2 = \ip{e_1}{v} r_1 + ... + \ip{e_m}{v} r_m \leq m \sup_{1 \leq i \leq m} \| r_i \|_2,
\end{equation}
where $e_j$ denote the standard orthonormal basis vectors. The result now follows.
\end{proof}

\subsection{Proof of Theorem \ref{th:second_1}} \label{sec:proof_second_1}

\begin{proof}[Proof of Theorem \ref{th:second_1}]
We start with the following Lemma which guarantees that numerically computed approximately orthonormal sets can be approximated by exactly orthonormal sets.
\begin{lemma} \label{lem:orthonormal}
Let $\tilde{v}_1,...,\tilde{v}_m$ be $n$-dimensional vectors, let $\ip{ \tilde{v}_i }{ \tilde{v}_j }_{comp}$ for $1 \leq i, j \leq m$ denote their numerically computed inner products, let $\epsilon$ denote machine epsilon, and assume $n \epsilon < 0.01$. Define
\begin{equation} \label{eq:mu_2}
    \mu := (1.01) n^2 \epsilon \sup_{1 \leq i \leq m} \| \tilde{v}_i \|_\infty^2 + \sup_i | \ip{ \tilde{v}_i }{ \tilde{v}_i }_{comp} - 1 | + \sup_{i \neq j} | \ip{ \tilde{v}_i }{ \tilde{v}_j }_{comp} |.
\end{equation}
Then, as long as $m \mu < \frac{1}{2}$, there is a set of $n$-dimensional orthonormal vectors $\oldhat{v}_1,...,\oldhat{v}_m$ which satisfy
\begin{equation}
    \| \oldhat{v}_j - \tilde{v}_j \|_2 \leq 2^{-1/2} m \mu, \quad 1 \leq j \leq m.
\end{equation}
\end{lemma}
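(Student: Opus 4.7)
The plan is to control the perturbation of the Gram matrix of the $\tilde{v}_j$ from the identity, and then apply a Löwdin-type (symmetric) orthonormalization, namely $\oldhat{V} := V G^{-1/2}$ where $V$ has columns $\tilde{v}_j$ and $G_{ij} := \ip{\tilde{v}_i}{\tilde{v}_j}$ is the exact Gram matrix. The definition of $\mu$ is tailored so that $\|G - I\|_{\max} \leq \mu$, after which everything reduces to clean spectral estimates.

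First I would use a standard Wilkinson-style round-off analysis for floating-point dot products (e.g. $|\mathrm{fl}(\sum_{k=1}^n a_k b_k) - \sum_{k=1}^n a_k b_k| \leq \gamma_{n} \sum_{k=1}^n |a_k b_k|$ with $\gamma_n = n\epsilon / (1 - n\epsilon)$) applied to $\ip{\tilde{v}_i}{\tilde{v}_j}_{comp}$. Bounding $\sum_k |\tilde{v}_{i,k}||\tilde{v}_{j,k}| \leq n \sup_k \|\tilde{v}_k\|_\infty^2$ and using $n \epsilon < 0.01$ (so $1/(1-n\epsilon) \leq 1.01$) yields
\begin{equation}
    |\ip{\tilde{v}_i}{\tilde{v}_j} - \ip{\tilde{v}_i}{\tilde{v}_j}_{comp}| \leq (1.01)\, n^2 \epsilon \sup_{1 \leq k \leq m} \|\tilde{v}_k\|_\infty^2,
\end{equation}
which is exactly the first term in \eqref{eq:mu_2}. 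Combining with the triangle inequality and the definition of $\mu$ gives $\max_{i,j}|G_{ij} - \delta_{ij}| \leq \mu$, and hence (via $\|A\|_2 \leq \sqrt{\|A\|_1 \|A\|_\infty} \leq m \max_{i,j}|A_{ij}|$ for an $m \times m$ matrix) the operator-norm bound $\|G - I\|_2 \leq m\mu < \tfrac{1}{2}$. In particular $G$ is Hermitian positive-definite with spectrum in $[1-m\mu, 1+m\mu]$, so $G^{\pm 1/2}$ are well-defined.

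Setting $\oldhat{V} := V G^{-1/2}$, the columns $\oldhat{v}_j$ of $\oldhat{V}$ satisfy $\oldhat{V}^\dagger \oldhat{V} = G^{-1/2} G G^{-1/2} = I$, so they are orthonormal. The key identity comes from expanding
\begin{equation}
    (\oldhat{V} - V)^\dagger (\oldhat{V} - V) = I - G^{-1/2} G - G G^{-1/2} + G = I - 2 G^{1/2} + G = (I - G^{1/2})^2,
\end{equation}
whose $j$th diagonal entry is $\|\oldhat{v}_j - \tilde{v}_j\|_2^2$. Therefore
\begin{equation}
    \|\oldhat{v}_j - \tilde{v}_j\|_2^2 = \bigl((I - G^{1/2})^2\bigr)_{jj} \leq \|I - G^{1/2}\|_2^2.
\end{equation}
From the spectral bound on $G$ and the elementary inequalities $\sqrt{1+x} - 1 \leq x/2$ and $1 - \sqrt{1-x} \leq x/\sqrt{2}$ for $x \in [0, 2(\sqrt{2}-1)]$ (which contains $[0, 1/2]$), we conclude $\|I - G^{1/2}\|_2 \leq m\mu / \sqrt{2}$, giving the claimed bound $\|\oldhat{v}_j - \tilde{v}_j\|_2 \leq 2^{-1/2} m \mu$.

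The main potential obstacle is the first step: the Wilkinson estimate is typically stated for real arithmetic with a fixed accumulation order, whereas here the entries are complex and the order of summation in \texttt{numpy} is unspecified. This is easily patched because complex multiplication and addition introduce only a constant-factor-larger round-off than the real case, and the modest factor $(1.01)$ together with the hypothesis $n\epsilon < 0.01$ absorbs any such constants comfortably. Once Step 1 is settled, Steps 2--4 are entirely standard Hermitian linear-algebra estimates on an $m \times m$ matrix, with the sharp calculus inequality $1 - \sqrt{1-x} \leq x/\sqrt{2}$ being the only non-routine computation required to hit the stated constant $2^{-1/2}$.
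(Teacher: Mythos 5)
Your proposal is correct and follows essentially the same route as the paper's proof: a Wilkinson-style round-off bound on the computed Gram matrix, the entrywise-to-operator-norm estimate giving $\|G-I\|_2\leq m\mu<\tfrac12$, Löwdin orthonormalization $\oldhat{V}=VG^{-1/2}$, and the calculus inequality $|(1\pm x)^{1/2}-1|\leq 2^{-1/2}x$ for $x\leq\tfrac12$. The only cosmetic difference is that you bound the columnwise error via the exact identity $(\oldhat{V}-V)^\dagger(\oldhat{V}-V)=(I-G^{1/2})^2$, whereas the paper writes $\oldhat{Q}-\tilde{Q}=\oldhat{Q}\bigl(I-(\tilde{Q}^\dagger\tilde{Q})^{1/2}\bigr)$ and uses $\|\oldhat{Q}\|_2=1$; both yield the same constant.
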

\begin{proof}
Bounding the round-off error in computing inner products in the usual way (see, for example, Golub and Van Loan \cite{GoluVanl13} Chapter 2.7) and assuming that $n \epsilon < 0.01$ we have that for each $1 \leq i,j \leq m$, $\ip{ \tilde{v}_i }{ \tilde{v}_j } = \ip{ \tilde{v}_i }{ \tilde{v}_j }_{comp} + e_{ij}$
where $|e_{ij}| \leq (1.01) n \epsilon |\tilde{v}_i|^\top |\tilde{v}_j| \leq (1.01) n \epsilon \| \tilde{v}_i \|_2 \| \tilde{v}_j \|_2$.
Letting $\tilde{Q}$ denote the matrix whose columns are the $\tilde{v}_i$s, then $\tilde{Q}^\dagger \tilde{Q} - I_m = E$ where, for all $i \neq j$, $| E_{ij} | \leq | \ip{ \tilde{v}_i }{ \tilde{v}_j }_{comp} | + (1.01) n \epsilon \| \tilde{v}_i \|_2 \| \tilde{v}_j \|_2$, and, for all $i$, $| E_{ii} | \leq | 1 - \ip{ \tilde{v}_i }{ \tilde{v}_i }_{comp} | + (1.01) n \epsilon \| \tilde{v}_i \|_2^2$. Paying the price of factors of $\sqrt{n}$ to replace $\| \cdot \|_2$ norms by $\| \cdot \|_\infty$ norms, we can obtain a trivial bound on the maximal element of $E$, denoted $\| E \|_{max}$, by
\begin{equation}
    \| E \|_{max} \leq (1.01) n^2 \epsilon \left( \sup_i \| \tilde{v}_i \|_\infty \right)^2 + \sup_{i} |\ip{ \tilde{v}_i }{ \tilde{v}_i }_{comp} - 1| + \sup_{i,j} |\ip{ \tilde{v}_i }{ \tilde{v}_j }_{comp}|.
\end{equation}
Note this is nothing but $\mu$ in the statement of the theorem. Using the Gershgorin circle theorem we then have that the eigenvalues $\lambda$ of $\tilde{Q}^\dagger \tilde{Q}$ satisfy $| \lambda - 1 | \leq m \| E \|_{max}$. We claim that there are exact orthonormal vectors $\oldhat{v}_j$ near (in the $\| \cdot \|_2$-norm) to the $\tilde{v}_j$. To see this note that $\oldhat{Q} := \tilde{Q} ( \tilde{Q}^\dagger \tilde{Q} )^{-1/2}$ satisfies $\oldhat{Q}^\dagger \oldhat{Q} = I_m$, and
\begin{equation}
    \| \oldhat{Q} - \tilde{Q} \|_2 \leq \| \oldhat{Q} \|_2 \| ( \tilde{Q}^\dagger \tilde{Q} )^{1/2} - I_m \|_2 = \| ( \tilde{Q}^\dagger \tilde{Q} )^{1/2} - I_m \|_2.
\end{equation}
Let $\lambda_{max}$ and $\lambda_{min}$ denote the maximum and minimum eigenvalues of $\tilde{Q}^\dagger \tilde{Q}$ respectively. Then $\left\| ( \tilde{Q}^\dagger \tilde{Q} )^{1/2} - I_m \right\|_2 \leq \max\left\{ | \lambda_{min}^{1/2} - 1 |, | \lambda_{max}^{1/2} - 1 | \right\}$. Since $\lambda_{min}$ is bounded below by $1 - m \| E \|_{max}$ and $\lambda_{max}$ is bounded above by $1 + m \| E \|_{max}$ we have 
\begin{equation}
    \| ( \tilde{Q}^\dagger \tilde{Q} )^{1/2} - I_m \|_2 \leq \max \left\{ | ( 1 + m \| E \|_{max} )^{1/2} - 1 |, | ( 1 - m \| E \|_{max} )^{1/2} - 1 | \right\}.
\end{equation}
Using Taylor's theorem, for $|x| < \frac{1}{2}$ we have that $| ( 1 + x )^{1/2} - 1 | \leq 2^{-1/2} |x|$ and $| ( 1 - x )^{1/2} - 1 | \leq 2^{-1/2} |x|$. Since by assumption $m \| E \|_{max} < \frac{1}{2}$ we conclude
\begin{equation}
    \| \tilde{Q} - \oldhat{Q} \|_2 \leq \| ( \tilde{Q}^\dagger \tilde{Q} )^{1/2} - I_m \|_2 \leq 2^{-1/2} m \| E \|_{max}.
\end{equation}
Letting $\oldhat{v}_j$ denote the columns of $\oldhat{Q}$ and noting that $\| \oldhat{v}_j - \tilde{v}_j \|_2 \leq \| \oldhat{Q} - \tilde{Q} \|_2$ for all $1 \leq j \leq m$ the result is proved.
\end{proof}
Using Lemma \ref{lem:orthonormal}, we have that there exists an exactly orthonormal set $\{ \oldhat{v}_j \}_{1 \leq j \leq m}$ nearby to the set $\{ \tilde{v}_j \}_{1 \leq j \leq m}$. We now want to bound the residuals of the $\oldhat{v}_j$ in terms of numerically computable quantities.
We start with the following easy lemma whose proof is a straightforward manipulation.
\begin{lemma} \label{lem:easy}
Let $A$ be an $n \times n$ Hermitian matrix and suppose that $\oldhat{r} := ( A - \tilde{\lambda} I ) \oldhat{v}$ and $\tilde{r} := ( A - \tilde{\lambda} I ) \tilde{v}$. Then
\begin{equation}
    \| \oldhat{r} \|_2 \leq \left( \| A \|_2 + | \tilde{\lambda} | \right)  \| \oldhat{v} - \tilde{v} \|_2 + \| \tilde{r} \|_2.
\end{equation}
\end{lemma}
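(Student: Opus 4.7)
The plan is to exploit the linearity of the map $v \mapsto (A - \tilde{\lambda} I) v$ to reduce the bound to a pair of standard inequalities: operator-norm domination and the triangle inequality.

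First I would write the difference of residuals in closed form. Since both $\oldhat{r}$ and $\tilde{r}$ are images of the same affine operator $A - \tilde{\lambda} I$ applied to different vectors, subtracting gives $\oldhat{r} - \tilde{r} = (A - \tilde{\lambda} I)(\oldhat{v} - \tilde{v})$. Rearranging, $\oldhat{r} = \tilde{r} + (A - \tilde{\lambda} I)(\oldhat{v} - \tilde{v})$.

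Next I would apply the triangle inequality in the $\|\cdot\|_2$-norm to obtain $\|\oldhat{r}\|_2 \le \|\tilde{r}\|_2 + \|(A - \tilde{\lambda} I)(\oldhat{v} - \tilde{v})\|_2$, and then bound the second term by submultiplicativity: $\|(A - \tilde{\lambda} I)(\oldhat{v} - \tilde{v})\|_2 \le \|A - \tilde{\lambda} I\|_2 \|\oldhat{v} - \tilde{v}\|_2$. Finally, a further triangle inequality on the operator norm (or equivalently the fact that $A$ is Hermitian, so $\|A - \tilde{\lambda} I\|_2 \le \|A\|_2 + |\tilde{\lambda}|$) yields the stated bound.

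There is no real obstacle here; the lemma is a direct algebraic manipulation and the Hermiticity hypothesis is only used (mildly) to express $\|A - \tilde{\lambda} I\|_2$ in terms of $\|A\|_2$ and $|\tilde{\lambda}|$ via the triangle inequality for operator norms (which in fact does not require Hermiticity). The content of the lemma is that residuals are Lipschitz in the candidate eigenvector with Lipschitz constant $\|A\|_2 + |\tilde{\lambda}|$, which is precisely what is needed to transfer residual bounds from the numerically computed vectors $\tilde{v}_j$ to the nearby exactly orthonormal vectors $\oldhat{v}_j$ produced by Lemma \ref{lem:orthonormal} in the proof of Theorem \ref{th:second_1}.
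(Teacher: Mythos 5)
Your proof is correct and is precisely the ``straightforward manipulation'' the paper alludes to (the paper omits the details of this lemma entirely): write $\oldhat{r}=\tilde{r}+(A-\tilde{\lambda}I)(\oldhat{v}-\tilde{v})$, then apply the triangle inequality and $\|A-\tilde{\lambda}I\|_2\leq\|A\|_2+|\tilde{\lambda}|$. Nothing further is needed.
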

The following lemma quantifies the error in approximating exact residuals by numerically computed values.
\begin{lemma} \label{lem:residual}
Let $A$ be a Hermitian $n \times n$ matrix and let $\tilde{A}$ denote the matrix whose entries are those of $A$ evaluated as floating-point numbers. Let $\left[ \left( \tilde{A} - \tilde{\lambda} I \right) \tilde{v} \right]_{comp}$ denote the numerically computed value of $\left( \tilde{A} - \tilde{\lambda} I \right) \tilde{v}$ in floating-point arithmetic. Then $\tilde{r} := ( A - \tilde{\lambda} I ) \tilde{v}$ satisfies
\begin{equation}
    \| \tilde{r} \|_2 \leq 
    n^{1/2} \left\| \left[ (\tilde{A} - \tilde{\lambda} I) \tilde{v} \right]_{comp} \right\|_{max} + (1.01) n^{5/2} \epsilon \| \tilde{A} - \tilde{\lambda} I \|_{max} \| \tilde{v} \|_{\infty} + n \epsilon \| A \|_{max} \| \tilde{v} \|_\infty.
\end{equation}
\end{lemma}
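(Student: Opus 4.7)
The plan is the standard floating-point backward-error analysis, organized around a decomposition of $\tilde{r}$ into a part that is actually computed, a part that accounts for round-off in the matrix--vector multiplication, and a part that accounts for the difference between the exact matrix $A$ and its floating-point image $\tilde{A}$. First I would write
\[
    \tilde{r} = (A - \tilde{\lambda} I)\tilde{v} = (A - \tilde{A})\tilde{v} + \bigl[(\tilde{A} - \tilde{\lambda} I)\tilde{v}\bigr]_{comp} + e,
\]
where $e := (\tilde{A} - \tilde{\lambda} I)\tilde{v} - \bigl[(\tilde{A} - \tilde{\lambda} I)\tilde{v}\bigr]_{comp}$ is the accumulated round-off generated during the matrix--vector multiplication in floating point. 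Applying the triangle inequality and using $\|x\|_2 \leq n^{1/2}\|x\|_\infty$ reduces the lemma to one estimate for each of the three summands above.

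The first summand is handled trivially: the bound $\bigl\|[(\tilde{A}-\tilde{\lambda}I)\tilde{v}]_{comp}\bigr\|_2 \leq n^{1/2}\bigl\|[(\tilde{A}-\tilde{\lambda}I)\tilde{v}]_{comp}\bigr\|_{max}$ gives the first term in the stated bound. For the round-off term $e$, I would invoke the standard componentwise error model for floating-point matrix--vector multiplication: each entry $e_i$ is the accumulated round-off of $n$ floating-point multiplications and $n-1$ floating-point additions, so the usual telescoping estimate (see e.g.\ Golub--Van Loan Chapter 2.7) under the hypothesis $n\epsilon < 0.01$ yields $|e_i| \leq (1.01)\, n \epsilon \sum_{j} |\tilde{A}_{ij} - \tilde{\lambda}\delta_{ij}||\tilde{v}_j|$ and hence $\|e\|_\infty \leq (1.01)\, n^2 \epsilon \|\tilde{A}-\tilde{\lambda}I\|_{max}\|\tilde{v}\|_\infty$. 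Converting to the 2-norm via the factor $n^{1/2}$ produces the $n^{5/2}$ contribution. Finally, for $(A-\tilde{A})\tilde{v}$, the hypothesis $|A_{ij} - \tilde{A}_{ij}| \leq \epsilon |A_{ij}| \leq \epsilon \|A\|_{max}$ summed componentwise against $\tilde{v}$ gives $\|(A-\tilde{A})\tilde{v}\|_\infty \leq n \epsilon \|A\|_{max}\|\tilde{v}\|_\infty$, which furnishes the final contribution to the bound.

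The main obstacle is bookkeeping rather than ideas: one must track carefully which quantities are exact, which are floating-point-represented versions of exact quantities, and which are the outputs of numerical operations, and ensure that the $(1+\epsilon)^n \leq 1 + 1.01\, n \epsilon$ simplification is used consistently with the hypothesis $n\epsilon < 0.01$. The only analytic tools involved are the triangle inequality and the classical bound on the forward error of a floating-point inner product, and the argument parallels the one just carried out in the proof of Lemma \ref{lem:orthonormal}.
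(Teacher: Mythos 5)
Your proposal is correct and follows essentially the same route as the paper's proof: the identical three-term decomposition $\tilde{r} = (A-\tilde{A})\tilde{v} + \left[(\tilde{A}-\tilde{\lambda}I)\tilde{v}\right]_{comp} + e$, the same Golub--Van Loan componentwise round-off model for the matrix representation error and the matrix--vector product, and the same $\|\cdot\|_2 \leq n^{1/2}\|\cdot\|_\infty$ conversions. The only quibble is that bounding the last piece via $\|(A-\tilde{A})\tilde{v}\|_\infty \leq n\epsilon\|A\|_{max}\|\tilde{v}\|_\infty$ and then passing to the 2-norm literally costs an extra $n^{1/2}$, but the paper's own chain $\|F\tilde{v}\|_2 \leq \epsilon\,\|\,|A|\,|\tilde{v}|\,\|_2 \leq n\epsilon\|A\|_{max}\|\tilde{v}\|_\infty$ is no tighter on this point, so this is not a defect of your argument relative to the paper's.
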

\begin{proof}
For matrices $A$ and $B$ with entries $A_{ij}$ and $B_{ij}$ we will write $|A|$ to denote the matrix with entries $|A_{ij}|$ for all $i,j$, and $|A| \leq |B|$ if $|A_{ij}| \leq |B_{ij}|$ for all $i,j$. It is straightforward to see that (see Chapter 2.7 of Golub and Van Loan \cite{GoluVanl13}) $\tilde{A} = A + F$, where $|F| < \epsilon |A|$. Also, $( \tilde{A} - \tilde{\lambda} I ) \tilde{v} = \left[ ( \tilde{A} - \tilde{\lambda} I ) \tilde{v} \right]_{comp} + g$, where $|g| \leq (1.01) n \epsilon | (\tilde{A} - \tilde{\lambda} I) | | \tilde{v} |$. Now note that $(A - \tilde{\lambda} I) \tilde{v} = (\tilde{A} - \tilde{\lambda} I) \tilde{v} - F \tilde{v}$, so that 
\begin{equation}
    \| \tilde{r} \|_2 \leq \left\| \left[ (\tilde{A} - \tilde{\lambda} I) \tilde{v} \right]_{comp} \right\|_2 + \| g \|_2 + \| F \tilde{v} \|_2. 
\end{equation}
Noting that
\begin{equation}
    \| g \|_2 = (1.01) n \epsilon \| | \tilde{A} - \tilde{\lambda} I | | \tilde{v} | \|_2 \leq (1.01) n^{5/2} \epsilon \| \tilde{A} - \tilde{\lambda} I \|_{max} \| \tilde{v} \|_{\infty},
\end{equation}
\begin{equation}
    \| F \tilde{v} \|_2 \leq \epsilon \| |A| |\tilde{v}| \|_2 \leq n \epsilon \| A \|_{max} \| \tilde{v} \|_\infty,
\end{equation}
and
\begin{equation}
    \left\| \left[ (\tilde{A} - \tilde{\lambda} I) \tilde{v} \right]_{comp} \right\|_2 \leq n^{1/2} \left\| \left[ (\tilde{A} - \tilde{\lambda} I) \tilde{v} \right]_{comp} \right\|_{max},
\end{equation}
the result is proved.
\end{proof}
We now prove estimate \eqref{eq:bound_1}. Applying Lemma \ref{lem:orthonormal} to the set $\{ \tilde{v}_j \}_{1 \leq j \leq m}$ yields an orthonormal set $\{ \oldhat{v}_j \}_{1 \leq j \leq m}$ such that $\| \oldhat{v}_j - \tilde{v}_j \|_2 \leq 2^{-1/2} m \mu$ where $\mu$ is as in \eqref{eq:mu_2}. By Lemma \ref{lem:easy} we have that 
\begin{equation}
    \| \oldhat{r}_j \|_2 \leq 2^{-1/2} m \left( \| A \|_2 + |\tilde{\lambda}_j| \right) \mu + \| \tilde{r}_j \|_2, \quad 1 \leq j \leq m.
\end{equation}
The estimate now follows easily upon applying Lemma \ref{lem:residual} and taking the sup over $j$.
\end{proof}

\subsection{Proof of Theorem \ref{thm:Taylor_for_eigenvalues_1}} \label{sec:proof_Taylor_eigenvalues} 

\begin{proof}[Proof of Theorem \ref{thm:Taylor_for_eigenvalues_1}]
The proof is a simple consequence of the min-max characterization of eigenvalues of Hermitian matrices. By min-max (here $U$ denotes a subspace of $\field{C}^n$),
\begin{equation}
    |\lambda_j^\alpha - \lambda_j^{\alpha_0}| = \min_{\dim U = j} \max_{\substack{v \in U \\ v \neq 0}} \left| \frac{ \ip{ v }{ ( A^\alpha - A^{\alpha_0} ) v } }{ \ip{ v }{ v } } \right|,
\end{equation}
on the other hand, for any fixed $v \neq 0$ we have by Taylor's theorem
\begin{equation}
    \left| \frac{ \ip{ v }{ ( A^\alpha - A^{\alpha_0} ) v } }{ \ip{ v }{ v } } \right| \leq |\alpha - \alpha_0| \sup_{\beta \in [\alpha_0,\alpha]} \left| \de_\beta \frac{ \ip{ v }{ ( A^\beta - A^{\alpha_0} ) v } }{ \ip{ v }{ v } } \right| \leq |\alpha - \alpha_0| \sup_{\beta \in [\alpha_0,\alpha]} \| \de_\beta A^\beta \|_2,
\end{equation}
and the result follows immediately.
\end{proof}

\subsection{Proof of Proposition \ref{prop:bound_de_alpha_H}} \label{sec:proof_de_alpha_H} 

\begin{proof}[Proof of Proposition \ref{prop:bound_de_alpha_H}]
Differentiating $H^\alpha_{\Xi}$ yields
\begin{equation}
    \de_\alpha H^\alpha_{\Xi} = ( - \de_\alpha Q^\alpha ) P_{\Xi} H^\alpha P_{\Xi} Q^{\alpha,\perp} + Q^{\alpha,\perp} P_{\Xi} H^1 P_{\Xi} Q^{\alpha,\perp} + Q^{\alpha,\perp} P_{\Xi} H^\alpha P_{\Xi} ( - \de_\alpha Q^\alpha ).
\end{equation}
For $\alpha < 1$ we have $\| P_{\Xi} H^\alpha P_{\Xi} \|_2 \leq 10$, and $\| H^1 \|_2 \leq 3$. It remains only to estimate $\| \de_\alpha Q^\alpha \|_2$. Using Dirac notation to represent $L^2_{K,1}$-projections we have
\begin{equation}
    Q^\alpha = \ket{\Psi^{(8)}}\bra{\Psi^{(8)}} = \ket{ \sum_{m = 0}^8 \alpha^m \Psi^m }\bra{ \sum_{n = 0}^8 \alpha^n \Psi^n } = \sum_{m = 0}^8 \sum_{n = 0}^8 \alpha^{m + n} \ket{ \Psi^m } \bra{ \Psi^n },
\end{equation}
so that
\begin{equation}
    \de_\alpha Q^\alpha = \sum_{m = 0}^8 \sum_{n = 0}^8 (m + n) \alpha^{m + n - 1}  \ket{ \Psi^m }\bra{ \Psi^n }.
\end{equation}
Using $\| \ket{\Psi^m} \bra{\Psi^n} \|_2 \leq \| \Psi^m \|_2 \| \Psi^n \|_2$, and $\max_{0 \leq j \leq 8} \| \Psi^j \|_2 \leq \sqrt{3}$ by Proposition \ref{prop:Psi_n_norms}, we have, for $\alpha \leq 1$,
\begin{equation}
    \| \de_\alpha Q^\alpha \|_2 \leq 3 \sum_{m = 0}^8 \sum_{n = 0}^8 (m + n) = 1944. 
\end{equation}
Putting everything together we conclude
\begin{equation}
    \sup_{0 \leq \alpha \leq \frac{7}{10}} \| \de_\alpha H^\alpha_{\Xi} \|_2 \leq 2 \times 10 \times 1944 + 3 = 38883.
\end{equation}
\end{proof}

\section{Proof of Proposition \ref{prop:Fermi_v_expansion}} \label{sec:Fermi_v_expansion}

We can now prove Proposition \ref{prop:Fermi_v_expansion}. We start by proving \eqref{eq:denominator_exp}. 

\subsection{Proof of \eqref{eq:denominator_exp}}

We now prove \eqref{eq:denominator_exp}. It is straightforward to derive
\begin{equation} \label{eq:expansion_denom}
\begin{split}
    \ip{ \sum_{n = 0}^8 \alpha^n \Psi^n }{ \sum_{n = 0}^8 \alpha^n \Psi^n } = &\sum_{n = 0}^{8} \sum_{j = 0}^n \ip{\Psi^j}{\Psi^{n-j}} \alpha^n \\
    &+ \sum_{n = 0}^{7} \sum_{j = 0}^n \ip{ \Psi^{8-j} }{ \Psi^{8 - (n - j)} } \alpha^{16-n}.   \\
\end{split}
\end{equation}
We now make two observations which simplify the computation. First, recall that the operator $-P^\perp (H^0)^{-1} P^\perp H^1$ maps $L^2_{K,1,1,A} \rightarrow L^2_{K,1,1,B}$ and $L^2_{K,1,1,B} \rightarrow L^2_{K,1,1,A}$. It follows that $\Psi^0 \in L^2_{K,1,1,A}$, $\Psi^1 \in L^2_{K,1,1,B}$, $\Psi^2 \in L^2_{K,1,1,A}$, and so on, and hence
\begin{equation}
    \ip{ \Psi^{2 i} }{ \Psi^{2 j + 1} } = 0 \quad \forall i, j \in \{0,1,2,...\}.
\end{equation}
It follows that all terms in \eqref{eq:expansion_denom} with odd powers of $\alpha$ vanish. Second, note that since $\Psi^0 \in \ran P$ while $\Psi^n \in \ran P^\perp$ for all $n \geq 1$, we have that
\begin{equation}
    \ip{ \Psi^n }{ \Psi^0 } = \ip{ \Psi^0 }{ \Psi^n } = 0 \quad \forall n \in \{1,2,...\}.
\end{equation}
Deriving \eqref{eq:denominator_exp} is then just a matter of computation using the properties of the chiral basis. For the leading term, we have
\begin{equation}
    \ip{ \Psi^0 }{ \Psi^0 } = \ip{ \chi^{\widetilde{\vec{0}}} }{ \chi^{\widetilde{\vec{0}}} } = 1.
\end{equation}
For the $\alpha^2$ term the only non-zero term is 
\begin{equation}
    \ip{ \Psi^1 }{ \Psi^1 } = \ip{ - \sqrt{3} i \chi^{\widetilde{\vec{q}_1},1} }{ - \sqrt{3} i \chi^{\widetilde{\vec{q}_1},1} } = 3,
\end{equation}
using \eqref{eq:Psi_1}. For the $\alpha^4$ term, the possible non-zero terms are 
\begin{equation}
    \ip{ \Psi^3 }{ \Psi^1 } + \ip{ \Psi^2 }{ \Psi^2 } + \ip{ \Psi^1 }{ \Psi^3 },
\end{equation}
but $\Psi^3$ and $\Psi^1$ depend on orthogonal chiral basis vectors (see \eqref{eq:Psi_1} and \eqref{eq:Psi_3}) so we are left with
\begin{equation}
\begin{split}
    &\ip{ \Psi^2 }{ \Psi^2 } \\
    &= \ip{ \left(\frac{\sqrt{3} - i}{2}\right) \chi^{\widetilde{-\vec{b}_1},1} + \left( \frac{\sqrt{3} + i}{2} \right) \chi^{\widetilde{-\vec{b}_2},1} }{ \left(\frac{\sqrt{3} - i}{2}\right) \chi^{\widetilde{-\vec{b}_1},1} + \left( \frac{\sqrt{3} + i}{2} \right) \chi^{\widetilde{-\vec{b}_2},1} } = 2,
\end{split}
\end{equation}
using \eqref{eq:Psi_2} and orthgonality of $\chi^{\widetilde{-\vec{b}_1},1}$ and $\chi^{\widetilde{-\vec{b}_2},1}$. We omit the derivation of the higher terms since the derivations do not require any new ideas.

\subsection{Proof of \eqref{eq:numerator_exp}}

It is straightforward to derive
\begin{equation} \label{eq:expansion_num}
\begin{split}
    \ip{ \sum_{n = 0}^8 \alpha^n \Psi^{n*}(-\vec{r}) }{ \sum_{n = 0}^8 \alpha^n \Psi^n(\vec{r}) } = &\sum_{n = 0}^{8} \sum_{j = 0}^n \ip{\Psi^{j*}(-\vec{r})}{\Psi^{n-j}(\vec{r})} \alpha^n \\
    &+ \sum_{n = 0}^{7} \sum_{j = 0}^n \ip{ \Psi^{8-j*}(-\vec{r}) }{ \Psi^{8 - (n - j)}(\vec{r}) } \alpha^{16-n}.   \\
\end{split}
\end{equation}
We now note the following.
\begin{proposition} \label{prop:chiral_basis_conj}
Let $\chi$ be a chiral basis function in $L^2_{K,1,1}$. Then $\chi^*(-\vec{r}) = \chi(\vec{r})$.
\end{proposition}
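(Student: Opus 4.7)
The plan is to verify the identity directly by inspection of the explicit forms of the chiral basis functions belonging to $L^2_{K,1,1}$. Recall from Proposition \ref{prop:L2K11} and Definition \ref{def:A_B_momentum} that such basis functions come in exactly three families: the singleton $\chi^{\widetilde{\vec{0}}}$ given by \eqref{eq:A0_chiral_efuncs}, the functions $\chi^{\widetilde{\vec{G}},1}$ for $\vec{G}\in\Lambda^*\setminus\{\vec{0}\}$ given by the first half of \eqref{eq:A_chiral_efuncs}, and the functions $\chi^{\widetilde{\vec{G}+\vec{q}_1},1}$ for $\vec{G}\in\Lambda^*$ given by the first half of \eqref{eq:B_chiral_efuncs}.

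The key observation is that each of these functions has the common structural form
\begin{equation}
    \chi(\vec{r}) = c\, e_j \sum_{k=0}^{2} e^{i \vec{v}_k \cdot \vec{r}},
\end{equation}
where $c$ is a real positive constant ($1/\sqrt{V}$ or $1/\sqrt{3V}$), $e_j$ is either $e_1$ or $e_2$ (hence a real-valued constant column), and the $\vec{v}_k \in \field{R}^2$ are the three rotated copies of $\vec{G}$ or $\vec{G}+\vec{q}_1$ (with the convention that for $\chi^{\widetilde{\vec{0}}}$ the sum collapses to a single trivial term). The crucial point is that no complex phases like $e^{\pm i\phi}$ appear as coefficients in the sum — these only show up in the $-1$ eigenfunctions of $\mathcal{S}$, which do not lie in $L^2_{K,1,1}$. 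I would then compute
\begin{equation}
    \chi^*(-\vec{r}) = c\, e_j \sum_{k=0}^{2} e^{-i\vec{v}_k \cdot (-\vec{r})} = c\, e_j \sum_{k=0}^{2} e^{i\vec{v}_k \cdot \vec{r}} = \chi(\vec{r}),
\end{equation}
which settles all three cases in one stroke.

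There is essentially no obstacle here: the content of the proposition is structural, and the proof is a direct verification using the explicit formulas already established in Appendix \ref{sec:chiral}. The only subtle point worth flagging is the distinction between the $+1$ and $-1$ eigenfunctions of $\mathcal{S}$ — for instance, the function $\chi^{\widetilde{\vec{G}},-1}$ in \eqref{eq:A_chiral_efuncs} carries both the prefactor $\oldhat{z}_{\vec{G}}$ (which is not real) and phases $e^{-ik\phi}$ inside the sum, so the identity fails in $L^2_{K,1,-1}$. This is why the restriction to $L^2_{K,1,1}$ in the hypothesis is essential, and I would mention it briefly to explain why the same elementary argument does not extend to the full chiral basis.
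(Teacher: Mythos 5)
Your proof is correct and follows essentially the same route as the paper: both arguments reduce to the explicit forms \eqref{eq:A0_chiral_efuncs}--\eqref{eq:B_chiral_efuncs} of the $+1$-chirality basis functions together with the elementary identity $\left( e^{i \vec{k} \cdot (-\vec{r})} \right)^* = e^{i \vec{k} \cdot \vec{r}}$. Your additional remark on why the identity fails for the $-1$ eigenfunctions (the non-real prefactors $\oldhat{z}_{\vec{G}}$ and the phases $e^{-ik\phi}$) is a nice clarification but not something the paper includes.
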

\begin{proof}
The proof follows immediately from the explicit forms of the chiral basis functions in $L^2_{K,1,1}$ given by \eqref{eq:A0_chiral_efuncs}-\eqref{eq:A_chiral_efuncs}-\eqref{eq:B_chiral_efuncs} and the observation that for any $\vec{k} \in \field{R}^2$, $\left( e^{i \vec{k} \cdot (-\vec{r})} \right)^* = e^{i \vec{k} \cdot \vec{r}}$.
\end{proof}
Using Proposition \ref{prop:chiral_basis_conj} and the same two observations as in the previous section we have that the only non-zero terms in \eqref{eq:expansion_num} are those with even powers of $\alpha$, and that other than the leading term, terms involving $\Psi^0$ do not contribute. The calculation is then similar to the previous case. For the leading order term we have
\begin{equation}
    \ip{ \Psi^{0*}(-\vec{r}) }{ \Psi^0(\vec{r}) } = \ip{ \chi^{\widetilde{\vec{0}}} }{ \chi^{\widetilde{\vec{0}}} } = 1.
\end{equation}
The only non-zero $\alpha^2$ term is
\begin{equation}
    \ip{ \Psi^{1*}(-\vec{r}) }{ \Psi^1(\vec{r}) } = \ip{ \sqrt{3} i \chi^{\widetilde{\vec{q}_1},1} }{ - \sqrt{3} i \chi^{\widetilde{\vec{q}_1},1} } = - 3.
\end{equation}
The only non-zero $\alpha^4$ term is
\begin{equation}
\begin{split}
    &\ip{ \Psi^{2*}(-\vec{r}) }{ \Psi^2(\vec{r}) } \\
    &= \ip{ \left(\frac{\sqrt{3} + i}{2}\right) \chi^{\widetilde{-\vec{b}_1},1} + \left( \frac{\sqrt{3} - i}{2} \right) \chi^{\widetilde{-\vec{b}_2},1} }{ \left(\frac{\sqrt{3} - i}{2}\right) \chi^{\widetilde{-\vec{b}_1},1} + \left( \frac{\sqrt{3} + i}{2} \right) \chi^{\widetilde{-\vec{b}_2},1} }    \\
    &= \left(\frac{\sqrt{3} - i}{2}\right)^2 + \left(\frac{\sqrt{3} + i}{2}\right)^2 = 1.
\end{split}
\end{equation}
We omit the derivation of the higher terms since the derivations do not require any new ideas.

Proposition \ref{prop:series_prop} implies that the series expansion of $\psi^\alpha$ exists up to any order. We can therefore define formal infinite series by
\begin{equation} \label{eq:inf_num}
    \ip{ \sum_{n = 0}^\infty \alpha^n \Psi^{n*}(-\vec{r}) }{ \sum_{n = 0}^\infty \alpha^n \Psi^n(\vec{r}) }
\end{equation}
\begin{equation} \label{eq:inf_den}
    \ip{ \sum_{n = 0}^\infty \alpha^n \Psi^{n} }{ \sum_{n = 0}^\infty \alpha^n \Psi^n }.
\end{equation}
We then have the following.
\begin{proposition} \label{prop:approx_of_exp}
The expansions \eqref{eq:numerator_exp} and \eqref{eq:denominator_exp} approximate the formal series \eqref{eq:inf_num} and \eqref{eq:inf_den} up to terms of order $\alpha^{10}$.
\end{proposition}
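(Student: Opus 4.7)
The plan is to exploit the parity structure already invoked in the derivations of \eqref{eq:denominator_exp} and \eqref{eq:numerator_exp}: every odd power of $\alpha$ vanishes identically in both the truncated and the infinite inner product series, so the naive $O(\alpha^9)$ truncation error is automatically improved to $O(\alpha^{10})$.

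First I would establish the key parity fact. By Proposition \ref{prop:H0_inv_H1}, the recursion operator $-P^\perp (H^0)^{-1} P^\perp H^1$ interchanges $L^2_{K,1,1,A}$ and $L^2_{K,1,1,B}$. Since $\Psi^0 = \sqrt{V}\,\chi^{\widetilde{\vec{0}}} \in L^2_{K,1,1,A}$, a trivial induction gives $\Psi^m \in L^2_{K,1,1,A}$ for $m$ even and $\Psi^m \in L^2_{K,1,1,B}$ for $m$ odd. Orthogonality of these two momentum-sublattice subspaces then forces
\begin{equation}
    \ip{\Psi^m}{\Psi^n} = 0 \qquad \text{whenever $m+n$ is odd.}
\end{equation}
For the numerator series I would additionally invoke Proposition \ref{prop:chiral_basis_conj}, which asserts $\chi^*(-\vec{r}) = \chi(\vec{r})$ for every chiral basis function $\chi \in L^2_{K,1,1}$. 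Writing $\Psi^m$ as a finite linear combination of chiral basis vectors, $\Psi^{m*}(-\vec{r})$ is then a combination of the \emph{same} basis vectors (with conjugated coefficients), so $\Psi^{m*}(-\vec{r})$ lies in the same $A/B$ subspace as $\Psi^m$. Consequently
\begin{equation}
    \ip{\Psi^{m*}(-\vec{r})}{\Psi^n(\vec{r})} = 0 \qquad \text{whenever $m+n$ is odd.}
\end{equation}

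To conclude, I would write the difference between each infinite series \eqref{eq:inf_num}--\eqref{eq:inf_den} and its truncation \eqref{eq:numerator_exp}--\eqref{eq:denominator_exp} as a double sum over pairs $(m,n)$ with $m,n\ge 0$ and $\max(m,n)\ge 9$. The smallest value of $m+n$ appearing in this remainder is $0+9=9$, attained only by the pairs $(0,9)$ and $(9,0)$; since $m+n=9$ is odd, both pairings contribute zero by the identities above. Every remaining term in the remainder has $m+n\ge 10$, so the entire difference is $O(\alpha^{10})$, which is the claim.

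There is no substantive obstacle: the argument is purely a parity/bookkeeping check, and in particular requires no explicit computation of $\Psi^9$ or of the inner product coefficients at order $\alpha^{10}$. The only point to verify is that the only pairs $(m,n)$ with $m,n\ge 0$, $\max(m,n)\ge 9$, and $m+n=9$ are $(0,9)$ and $(9,0)$, which is immediate.
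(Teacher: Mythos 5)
Your proposal is correct and takes essentially the same route as the paper: the alternation of $\Psi^n$ between $L^2_{K,1,1,A}$ and $L^2_{K,1,1,B}$ (together with Proposition \ref{prop:chiral_basis_conj} for the numerator) kills every term with $m+n$ odd, so the only degree-$9$ remainder terms $(0,9)$ and $(9,0)$ vanish and the first possible discrepancy, coming from $\ip{\Psi^1}{\Psi^9}$, is at order $\alpha^{10}$. Your write-up is in fact slightly more explicit than the paper's about which pairs $(m,n)$ appear in the remainder, but there is no substantive difference.
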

\begin{proof}
The series agree exactly without any simplifications up to terms of $\alpha^9$. However, because the even and odd terms in the expansion of $\psi^\alpha$ are orthogonal (since they lie in $L^2_{K,1,1,A}$ and $L^2_{K,1,1,B}$ respectively), all terms with odd powers of $\alpha$ vanish in the expansions \eqref{eq:inf_num}-\eqref{eq:inf_den}. The series may disagree at order $\alpha^{10}$ because the infinite series includes terms arising from inner products of $\Psi^1$ and $\Psi^9$.
\end{proof}

\begin{acknowledgments}
M.L. and A.W.  were supported, in part, by ARO MURI Award No. W911NF-14-0247 and NSF DMREF Award No. 1922165.
\end{acknowledgments}

\section*{Data Availability}
The data that supports the findings of this study are available within the article and its supplementary material.

\bibliography{bibliography}

\section{Supplementary material}

\subsection{Chiral basis functions spanning the subspace $\Xi$} \label{sec:chiral_modes}

The chiral basis functions spanning the subspace $\Xi$ are as follows. We note which of the subspaces of $H^0$ acting on $L^2_{K,1}$ are spanned by the chiral basis vectors at the right.
\begin{align*}
    &\chi^{\widetilde{\vec{0}}} &\text{$0$ eigenspace}    \\
    &\chi^{\widetilde{\vec{q}_1},\pm 1} = \chi^{\widetilde{\vec{q}_1 + \vec{b}_1},\pm 1} = \chi^{\widetilde{\vec{q}_1 + \vec{b}_2},\pm 1} &\text{$\pm 1$ eigenspace}  \\
    &\chi^{\widetilde{-\vec{b}_1},\pm 1} = \chi^{\widetilde{\vec{b}_2},\pm 1} = \chi^{\widetilde{\vec{b}_1 - \vec{b}_2},\pm 1} &  \\
    &\chi^{\widetilde{-\vec{b}_2},\pm 1} = \chi^{\widetilde{\vec{b}_1},\pm 1} = \chi^{\widetilde{\vec{b}_2 - \vec{b}_1},\pm 1} &\text{$\pm \sqrt{3}$ eigenspace} \\
    &\chi^{\widetilde{\vec{q}_1 + \vec{b}_1 + \vec{b}_2},\pm 1} = \chi^{\widetilde{\vec{q}_1 + \vec{b}_1 - \vec{b}_2},\pm 1} = \chi^{\widetilde{\vec{q}_1 + \vec{b}_2 - \vec{b}_1},\pm 1} &\text{$\pm 2$ eigenspace}  \\
    &\chi^{\widetilde{\vec{q}_1 - \vec{b}_1},\pm 1} = \chi^{\widetilde{\vec{q}_1 + 2 \vec{b}_2},\pm 1} = \chi^{\widetilde{\vec{q}_1 + 2 \vec{b}_1 - \vec{b}_2},\pm 1} &   \\
    &\chi^{\widetilde{\vec{q}_1 - \vec{b}_2},\pm 1} = \chi^{\widetilde{\vec{q}_1 + 2 \vec{b}_1},\pm 1} = \chi^{\widetilde{\vec{q}_1 + 2 \vec{b}_2 - \vec{b}_1},\pm 1} &\text{$\pm \sqrt{7}$ eigenspace}   \\
    &\chi^{\widetilde{\vec{b}_1 + \vec{b}_2},\pm 1} = \chi^{\widetilde{\vec{b}_1 - 2 \vec{b}_2},\pm 1} = \chi^{\widetilde{\vec{b}_2 - 2 \vec{b}_1},\pm 1} & \\
    &\chi^{\widetilde{-\vec{b}_1 - \vec{b}_2},\pm 1} = \chi^{\widetilde{2 \vec{b}_2 - \vec{b}_1},\pm 1} = \chi^{\widetilde{2 \vec{b}_1 - \vec{b}_2},\pm 1} &\text{$\pm 3$ eigenspace} \\
    &\chi^{\widetilde{-2 \vec{b}_1},\pm 1} = \chi^{\widetilde{2 \vec{b}_2},\pm 1} = \chi^{\widetilde{2 \vec{b}_1 - 2 \vec{b}_2},\pm 1} &  \\
    &\chi^{\widetilde{-2 \vec{b}_2},\pm 1} = \chi^{\widetilde{2 \vec{b}_1},\pm 1} = \chi^{\widetilde{2 \vec{b}_2 - 2 \vec{b}_1},\pm 1} &\text{$\pm 2 \sqrt{3}$ eigenspace}  \\
    &\chi^{\widetilde{\vec{q}_1 + \vec{b}_1 - 2 \vec{b}_2},\pm 1} = \chi^{\widetilde{\vec{q}_1 - 2 \vec{b}_1 + 2 \vec{b}_2},\pm 1} = \chi^{\widetilde{\vec{q}_1 + 2 \vec{b}_1 + \vec{b}_2},\pm 1} & \\
    &\chi^{\widetilde{\vec{q}_1 + \vec{b}_2 - 2 \vec{b}_1},\pm 1} = \chi^{\widetilde{\vec{q}_1 - 2 \vec{b}_2 + 2 \vec{b}_1},\pm 1} = \chi^{\widetilde{\vec{q}_1 + 2 \vec{b}_2 + \vec{b}_1},\pm 1} &\text{$\pm \sqrt{13}$ eigenspace} \\
    &\chi^{\widetilde{\vec{q}_1 - \vec{b}_1 - \vec{b}_2},\pm 1} = \chi^{\widetilde{\vec{q}_1 - \vec{b}_1 + 3 \vec{b}_2},\pm 1} = \chi^{\widetilde{\vec{q}_1 + 3 \vec{b}_1 - \vec{b}_2},\pm 1} &\text{$\pm 4$ eigenspace} \\
    &\chi^{\widetilde{\vec{q}_1 - 2 \vec{b}_1},\pm 1} = \chi^{\widetilde{\vec{q}_1 + 3 \vec{b}_2},\pm 1} = \chi^{\widetilde{\vec{q}_1 + 3 \vec{b}_1 - 2 \vec{b}_2},\pm 1}& \\
    &\chi^{\widetilde{\vec{q}_1 - 2 \vec{b}_2},\pm 1} = \chi^{\widetilde{\vec{q}_1 + 3 \vec{b}_1},\pm 1} = \chi^{\widetilde{\vec{q}_1 + 3 \vec{b}_2 - 2 \vec{b}_1},\pm 1}&\text{$\pm \sqrt{19}$ eigenspace} \\
    &\chi^{\widetilde{-3\vec{b}_1 + \vec{b}_2},\pm 1} = \chi^{\widetilde{2 \vec{b}_1 - 3 \vec{b}_2},\pm 1} = \chi^{\widetilde{\vec{b}_1 + 2 \vec{b}_2},\pm 1}   \\
    &\chi^{\widetilde{-3 \vec{b}_1 + 2 \vec{b}_2},\pm 1} = \chi^{\widetilde{\vec{b}_1 - 3 \vec{b}_2},\pm 1} = \chi^{\widetilde{2 \vec{b}_1 + \vec{b}_2},\pm 1}  \\
    &\chi^{\widetilde{-\vec{b}_1- 2 \vec{b}_2},\pm 1} = \chi^{\widetilde{-2 \vec{b}_1 + 3 \vec{b}_2},\pm 1} = \chi^{\widetilde{3 \vec{b}_1 - \vec{b}_2},\pm 1}&   \\
    &\chi^{\widetilde{-\vec{b}_2- 2 \vec{b}_1},\pm 1} = \chi^{\widetilde{-2 \vec{b}_2 + 3 \vec{b}_1},\pm 1} = \chi^{\widetilde{3 \vec{b}_2 - \vec{b}_1},\pm 1}&\text{$\pm \sqrt{21}$ eigenspace} \\
    &\chi^{\widetilde{\vec{q}_1 + 2 \vec{b}_1 + 2 \vec{b}_2},\pm 1} = \chi^{\widetilde{\vec{q}_1 + 2 \vec{b}_1 -3 \vec{b}_2},\pm 1} = \chi^{\widetilde{\vec{q}_1 - 3 \vec{b}_1 + 2 \vec{b}_2},\pm 1}&\text{$\pm 5$ eigenspace}    \\
    &\chi^{\widetilde{-3 \vec{b}_1},\pm 1} = \chi^{\widetilde{3 \vec{b}_2},\pm 1} = \chi^{\widetilde{3 \vec{b}_1 - 3 \vec{b}_2},\pm 1}& \\
    &\chi^{\widetilde{-3 \vec{b}_2},\pm 1} = \chi^{\widetilde{3 \vec{b}_1},\pm 1} = \chi^{\widetilde{3 \vec{b}_2 - 3 \vec{b}_1},\pm 1}&\text{$\pm 3 \sqrt{3}$ eigenspace} \\
    &\chi^{\widetilde{\vec{q}_1 - 3 \vec{b}_1 + \vec{b}_2},\pm 1} = \chi^{\widetilde{\vec{q}_1 + 3 \vec{b}_1 - 3 \vec{b}_2},\pm 1} = \chi^{\widetilde{\vec{q}_1 + \vec{b}_1 + 3 \vec{b}_2},\pm 1}   \\
    &\chi^{\widetilde{\vec{q}_1 - 3 \vec{b}_1 + 3 \vec{b}_2}, \pm 1} = \chi^{\widetilde{\vec{q}_1 + \vec{b}_1 - 3 \vec{b}_2}, \pm 1}&\text{$\pm 2 \sqrt{7}$ eigenspace} \\
    &\chi^{\widetilde{\vec{q}_1 - 2 \vec{b}_1 - \vec{b}_2},\pm 1} = \chi^{\widetilde{\vec{q}_1 + 4 \vec{b}_1 - 2 \vec{b}_2},\pm 1} = \chi^{\widetilde{\vec{q}_1 - \vec{b}_1 + 4 \vec{b}_2},\pm 1}   \\
    &\chi^{\widetilde{\vec{q}_1 - 2 \vec{b}_1 + 4 \vec{b}_2},\pm 1} = \chi^{\widetilde{\vec{q}_1 - \vec{b}_1 - 2 \vec{b}_2},\pm 1} = \chi^{\widetilde{\vec{q}_1 + 4 \vec{b}_1 - \vec{b}_2}, \pm 1}&\text{$\pm \sqrt{31}$ eigenspace}    \\
    &\chi^{\widetilde{- 4 \vec{b}_1 + 2 \vec{b}_2}, \pm 1} = \chi^{\widetilde{2 \vec{b}_1 - 4 \vec{b}_2}, \pm 1} = \chi^{\widetilde{2 \vec{b}_1 + 2 \vec{b}_2}, \pm 1}  \\
    &\chi^{\widetilde{- 2 \vec{b}_1 - 2 \vec{b}_2}, \pm 1} = \chi^{\widetilde{4 \vec{b}_1 - 2 \vec{b}_2}, \pm 1} = \chi^{\widetilde{- 2 \vec{b}_1 + 4 \vec{b}_2}, \pm 1}&\text{$\pm 6$ eigenspace}    \\
    &\chi^{\widetilde{\vec{q}_1 - 3 \vec{b}_1}, \pm 1} = \chi^{\widetilde{\vec{q}_1 + 4 \vec{b}_1 - 3 \vec{b}_2}, \pm 1} = \chi^{\widetilde{\vec{q}_1 + 4 \vec{b}_2}, \pm 1}    \\
    &\chi^{\widetilde{\vec{q}_1 - 3 \vec{b}_1 + 4 \vec{b}_2}, \pm 1} = \chi^{\widetilde{\vec{q}_1 - 3 \vec{b}_2}, \pm 1} = \chi^{\widetilde{\vec{q}_1 + 4 \vec{b}_1}, \pm 1}&\text{$\pm \sqrt{37}$ eigenspace}   \\
    &\chi^{\widetilde{- 4 \vec{b}_1 + \vec{b}_2}, \pm 1} = \chi^{\widetilde{3 \vec{b}_1 - 4 \vec{b}_2}, \pm 1} = \chi^{\widetilde{\vec{b}_1 + 3 \vec{b}_2}, \pm 1}  \\
    &\chi^{\widetilde{- 4 \vec{b}_1 + 3 \vec{b}_2}, \pm 1} = \chi^{\widetilde{\vec{b}_1 - 4 \vec{b}_2}, \pm 1} = \chi^{\widetilde{3 \vec{b}_1 + \vec{b}_2}, \pm 1}  \\
    &\chi^{\widetilde{- 3 \vec{b}_1 - \vec{b}_2}, \pm 1} = \chi^{\widetilde{4 \vec{b}_1 - 3 \vec{b}_2}, \pm 1} = \chi^{\widetilde{-\vec{b}_1 + 4 \vec{b}_2}, \pm 1} \\
    &\chi^{\widetilde{- 3 \vec{b}_1 + 4 \vec{b}_2}, \pm 1} = \chi^{\widetilde{-\vec{b}_1 - 3 \vec{b}_2}, \pm 1} = \chi^{\widetilde{4 \vec{b}_1 - \vec{b}_2}, \pm 1}&\text{$\pm \sqrt{39}$ eigenspace}   \\
    &\chi^{\widetilde{\vec{q}_1 - 4 \vec{b}_1 + 2 \vec{b}_2}, \pm 1} = \chi^{\widetilde{\vec{q}_1 + 3 \vec{b}_1 - 4 \vec{b}_2}, \pm 1} = \chi^{\widetilde{\vec{q}_1 + 2 \vec{b}_1 + 3 \vec{b}_2}, \pm 1}    \\
    &\chi^{\widetilde{\vec{q}_1 - 4 \vec{b}_1 + 3 \vec{b}_2}, \pm 1} = \chi^{\widetilde{\vec{q}_1 + 2 \vec{b}_1 - 4 \vec{b}_2}, \pm 1} = \chi^{\widetilde{\vec{q}_1 + 3 \vec{b}_1 + 2 \vec{b}_2}, \pm 1}&\text{$\pm \sqrt{43}$ eigenspace}  \\
    &\chi^{\widetilde{-4\vec{b}_1}, \pm 1} = \chi^{\widetilde{4 \vec{b}_1 - 4 \vec{b}_2}, \pm 1} = \chi^{\widetilde{4 \vec{b}_2}, \pm 1}    \\
    &\chi^{\widetilde{- 4 \vec{b}_1 + 4 \vec{b}_2}, \pm 1} = \chi^{\widetilde{- 4 \vec{b}_2}, \pm 1} = \chi^{\widetilde{4 \vec{b}_1}, \pm 1}&\text{$\pm 4 \sqrt{3}$ eigenspace}.
\end{align*}
We finally add four out of the six modes which span the $\pm 7$ eigenspace
\begin{equation}
\begin{split}
    &\chi^{\widetilde{\vec{q}_1 - 4 \vec{b}_1 + \vec{b}_2}, \pm 1} = \chi^{\widetilde{\vec{q}_1 + 4 \vec{b}_1 - 4 \vec{b}_2}, \pm 1} = \chi^{\widetilde{\vec{q}_1 + \vec{b}_1 + 4 \vec{b}_2}, \pm 1}    \\
    &\chi^{\widetilde{\vec{q}_1 - 4 \vec{b}_1 + 4 \vec{b}_2}, \pm 1} = \chi^{\widetilde{\vec{q}_1 + \vec{b}_1 - 4 \vec{b}_2}, \pm 1} = \chi^{\widetilde{\vec{q}_1 + 4 \vec{b}_1 + \vec{b}_2}, \pm 1}.
\end{split}
\end{equation}

\subsection{Terms $\Psi^5$-$\Psi^8$ in the expansion} \label{sec:higher_terms}

Here we list terms $\Psi^5$-$\Psi^8$ in the expansion of $\psi^\alpha$ in powers of $\alpha$. The calculations were assisted by Sympy\cite{10.7717/peerj-cs.103}.
\begin{equation} \label{eq:Psi_5}
\begin{split}
    &\Psi^5 =    \\
    &\frac{ \sqrt{21} }{ 42 } \left( \frac{\sqrt{21} + 2 \sqrt{7} i}{7} \right) \chi^{\widetilde{\vec{q}_1 - \vec{b}_2},1} + \frac{ \sqrt{21} }{ 42 } \left( \frac{- \sqrt{21} + 2 \sqrt{7} i}{7} \right) \chi^{\widetilde{\vec{q}_1 - \vec{b}_1},1} \\
    &+ \frac{ 2 \sqrt{3} i }{ 21 } \chi^{\widetilde{\vec{q}_1 + \vec{b}_1 - \vec{b}_2},1} - \frac{ 4 \sqrt{3} i }{ 21 } \chi^{\widetilde{\vec{q}_1},1} - \frac{\sqrt{3} i}{42} \chi^{\widetilde{\vec{q}_1 - \vec{b}_2 - \vec{b}_1},1} \\
    &+ \frac{ \sqrt{273} }{ 546 } \left( \frac{ 5\sqrt{273} + 4\sqrt{91}i }{ 91 } \right) \chi^{\widetilde{\vec{q}_1 + \vec{b}_1 - 2 \vec{b}_2},1} + \frac{ \sqrt{399} }{ 798 } \left( \frac{2 \sqrt{399} - 11 \sqrt{133} i}{133} \right) \chi^{\widetilde{\vec{q}_1 - 2 \vec{b}_2},1} \\
    &+ \frac{ \sqrt{273} }{ 546 } \left( \frac{-5 \sqrt{273} + 4 \sqrt{91} i}{91} \right) \chi^{\widetilde{\vec{q}_1 + \vec{b}_2 - 2 \vec{b}_1},1} + \frac{ \sqrt{399} }{ 798 } \left( \frac{ -2\sqrt{399} - 11 \sqrt{133} i}{133} \right) \chi^{\widetilde{\vec{q}_1 - 2 \vec{b}_1},1},
\end{split}
\end{equation}
\begin{equation} \label{eq:Psi_6}
\begin{split}
    &\Psi^6 =    \\
    &\frac{\sqrt{91}}{42} \left( \frac{9 \sqrt{273} - 11 \sqrt{91} i}{182} \right) \chi^{\widetilde{-\vec{b}_1},1} + \frac{4 \sqrt{1729}}{5187} \left( \frac{-45\sqrt{5187} - 29\sqrt{1729}i}{3458} \right) \chi^{\widetilde{-2\vec{b}_1},1}    \\
    &+ \frac{\sqrt{91}}{42} \left( \frac{9 \sqrt{273} + 11\sqrt{91}i}{182} \right) \chi^{\widetilde{-\vec{b}_2},1} - \frac{\sqrt{3}}{26} \chi^{\widetilde{-2 \vec{b}_1 + \vec{b}_2},1} + \frac{\sqrt{133}}{2394} \left( \frac{9 \sqrt{399} - 17 \sqrt{133} i}{266} \right) \chi^{\widetilde{ - 3 \vec{b}_1 },1}   \\
    &+ \frac{ \sqrt{57} }{798} \left( \frac{59 \sqrt{19} - 9 \sqrt{57} i}{266} \right) \chi^{\widetilde{-2 \vec{b}_1 - \vec{b}_2},1} + \frac{ \sqrt{13} }{546} \left( \frac{- 17 \sqrt{39} - 41 \sqrt{13} i}{182} \right) \chi^{\widetilde{-3 \vec{b}_1 + \vec{b}_2},1}     \\
    &+ \frac{\sqrt{57}}{798} \left( \frac{59\sqrt{19} + 9 \sqrt{57} i}{266} \right) \chi^{\widetilde{-\vec{b}_1 - 2 \vec{b}_2},1} + \frac{4 \sqrt{1729}}{5187} \left( \frac{-45 \sqrt{5187} + 29 \sqrt{1729}i }{3458} \right) \chi^{\widetilde{-2\vec{b}_2},1}    \\
    &+ \frac{\sqrt{133}}{2394} \left( \frac{9 \sqrt{399} + 17 \sqrt{133} i}{266} \right) \chi^{\widetilde{-3 \vec{b}_2},1} + \frac{ \sqrt{13} }{546} \left( \frac{ -17 \sqrt{39} + 41 \sqrt{13} i }{ 182 } \right) \chi^{\widetilde{\vec{b}_1 - 3 \vec{b}_2},1},
\end{split}
\end{equation}
\begin{equation} \label{eq:Psi_7}
\begin{split}
    &\Psi^7 =   \\
    &\frac{ \sqrt{1032213} }{ 10374 } \left( \frac{-97\sqrt{1032213} - 562\sqrt{344071}i}{344071} \right) \chi^{\widetilde{\vec{q}_1 - \vec{b}_1},1} - \frac{\sqrt{3} i}{42} \chi^{\widetilde{\vec{q}_1},1} - \frac{2 \sqrt{3} i}{273} \chi^{\widetilde{\vec{q}_1 - \vec{b}_1 + \vec{b}_2},1} \\
    &+ \frac{ \sqrt{3549637} }{ 217854 } \left( \frac{-2621 \sqrt{3549637} + 1563\sqrt{10648911}i}{7099274} \right) \chi^{\widetilde{\vec{q}_1 - 2 \vec{b}_1},1} \\
    &+ \frac{ \sqrt{178087} }{ 24206 } \left( \frac{- 241\sqrt{178087} + 467 \sqrt{534261}i}{356174} \right) \chi^{\widetilde{\vec{q}_1 - 2 \vec{b}_1 + \vec{b}_2},1} \\
    &+ \frac{ \sqrt{1032213} }{10374} \left( \frac{97\sqrt{1032213} - 562\sqrt{344071}i}{344071} \right) \chi^{\widetilde{\vec{q}_1 - \vec{b}_2},1} \\
    &+ \frac{ \sqrt{178087} }{ 24206 } \left( \frac{ 241 \sqrt{178087} + 467 \sqrt{534261} i }{ 356174 } \right) \chi^{\widetilde{\vec{q}_1 - 2 \vec{b}_1 + 2 \vec{b}_2},1}    \\
    &+ \frac{\sqrt{4921}}{88578} \left( \frac{-53\sqrt{4921} - 75\sqrt{14763}i}{9842} \right) \chi^{\widetilde{\vec{q}_1 - 3 \vec{b}_1},1} \\
    &+ \frac{2 \sqrt{247}}{15561} \left( \frac{ -215 \sqrt{247} + 27 \sqrt{741}i }{3458} \right) \chi^{\widetilde{\vec{q}_1 - 3 \vec{b}_1 + \vec{b}_2},1} \\
    &+ \frac{ \sqrt{1767} }{24738} \left( \frac{-10\sqrt{1767} - 169\sqrt{589}i}{4123} \right) \chi^{\widetilde{\vec{q}_1 - 2 \vec{b}_1 - \vec{b}_2},1} + \frac{2 \sqrt{3} i }{2793} \chi^{\widetilde{\vec{q}_1 - \vec{b}_1 - \vec{b}_2},1}   \\
    &+ \frac{29 \sqrt{3} i}{19110} \chi^{\widetilde{\vec{q}_1 - 3 \vec{b}_1 + 2 \vec{b}_2},1} + \frac{\sqrt{1767}}{24738} \left( \frac{10 \sqrt{1767} - 169\sqrt{589}i}{4123} \right) \chi^{\widetilde{\vec{q}_1 - \vec{b}_1 - 2 \vec{b}_2},1}  \\
    &+ \frac{ \sqrt{3549637} }{ 217854 } \left( \frac{ 2621 \sqrt{3549637} + 1563\sqrt{10648911} i}{7099274} \right) \chi^{\widetilde{\vec{q}_1 - 2 \vec{b}_2},1} \\
    &+ \frac{\sqrt{4921}}{88578} \left( \frac{53 \sqrt{4921} - 75 \sqrt{14763} i}{ 9842 } \right) \chi^{\widetilde{\vec{q}_1 - 3 \vec{b}_2},1} \\
    &+ \frac{ 2 \sqrt{247} }{ 15561 } \left( \frac{215\sqrt{247} + 27\sqrt{741}i}{3458} \right) \chi^{\widetilde{\vec{q}_1 + \vec{b}_1 - 3 \vec{b}_2},1},
\end{split}
\end{equation}
\begin{equation} \label{eq:Psi_8}
\begin{split}
    &\Psi^8 =    \\
    &\frac{\sqrt{160797}}{10374} \left( \frac{ - 206\sqrt{53599} - 61\sqrt{160797}i }{ 53599 } \right) \chi^{\widetilde{-\vec{b}_1},1}    \\
    &+ \frac{\sqrt{1694251299}}{1307124} \left(\frac{16249\sqrt{564750433} - 10012\sqrt{1694251299}i}{564750433} \right) \chi^{\widetilde{-2 \vec{b}_1},1}    \\
    &+ \frac{317\sqrt{3}}{11466} \chi^{\widetilde{-\vec{b}_1 - \vec{b}_2},1} + \frac{\sqrt{160797}}{10374} \left( \frac{-206\sqrt{53599} + 61\sqrt{160797}i}{53599} \right) \chi^{\widetilde{-\vec{b}_2},1} \\
    &+ \frac{ 67\sqrt{3} }{ 16758 } \chi^{\widetilde{-2 \vec{b}_1 + \vec{b}_2},1} + \frac{ \sqrt{837273} }{620046} \left( \frac{-496\sqrt{279091} - 105\sqrt{837273}i}{279091} \right) \chi^{\widetilde{-3\vec{b}_1},1} \\
    &+ \frac{ \sqrt{997694607} }{20260422} \left( \frac{5849\sqrt{332564869} - 20785\sqrt{997694607}i}{665129738} \right) \chi^{\widetilde{-2\vec{b}_1 -\vec{b}_2},1} \\
    &+ \frac{ \sqrt{2667} }{13230} \left( \frac{-59\sqrt{889} - 5\sqrt{2667}i}{1778} \right) \chi^{\widetilde{-3 \vec{b}_1 + \vec{b}_2},1}    \\
    &+ \frac{ \sqrt{1694251299} }{1307124} \left( \frac{16249\sqrt{564750433} + 10012\sqrt{1694251299}i}{564750433} \right) \chi^{\widetilde{-2\vec{b}_2},1}  \\
    &+ \frac{ \sqrt{2667} }{13230} \left( \frac{-59\sqrt{889} + 5\sqrt{2667}i}{1778} \right) \chi^{\widetilde{-3 \vec{b}_1 + 2 \vec{b}_2},1}  \\
    &+ \frac{ \sqrt{14763} }{ 1062936 } \left( \frac{ 43 \sqrt{4921} - 32 \sqrt{14763} }{4921} \right) \chi^{\widetilde{-4 \vec{b}_1},1}  \\
    &+ \frac{ \sqrt{114919077} }{ 39454506 } \left( \frac{11413\sqrt{38306359} - 2767\sqrt{114919077}i}{76612718} \right) \chi^{\widetilde{-3 \vec{b}_1 - \vec{b}_2},1}   \\
    &+ \frac{ 2 \sqrt{57} }{ 46683 } \left( \frac{- 29\sqrt{19} - 31 \sqrt{57}i }{266} \right) \chi^{\widetilde{-4 \vec{b}_1 + \vec{b}_2},1}  \\
    &+ \frac{ 199 \sqrt{3} }{ 1038996 } \chi^{\widetilde{-2 \vec{b}_1 - 2 \vec{b}_2},1} - \frac{29 \sqrt{3}}{114660} \chi^{\widetilde{-4\vec{b}_1 + 2 \vec{b}_2},1}  \\
    &+ \frac{\sqrt{997694607}}{20260422} \left(\frac{5849\sqrt{332564869} + 20785\sqrt{997694607}i}{665129738}\right) \chi^{\widetilde{-\vec{b}_1 - 2 \vec{b}_2},1} \\
    &+ \frac{\sqrt{114919077}}{39454506} \left(\frac{11413\sqrt{38306359} + 2767\sqrt{114919077}i}{76612718}\right) \chi^{\widetilde{-\vec{b}_1-3 \vec{b}_2},1}   \\
    &+ \frac{\sqrt{837273}}{620046} \left( \frac{-496\sqrt{279091} + 105\sqrt{837273}i}{279091} \right) \chi^{\widetilde{-3\vec{b}_2},1}  \\
    &+ \frac{\sqrt{14763}}{1062936} \left( \frac{43\sqrt{4921} + 32\sqrt{14763}i}{4921} \right) \chi^{\widetilde{-4\vec{b}_2},1} + \frac{2 \sqrt{57}}{46683} \left( \frac{-29\sqrt{19} + 31\sqrt{57}i}{266} \right) \chi^{\widetilde{\vec{b}_1 - 4 \vec{b}_2},1}
\end{split}
\end{equation}

\subsection{Derivation of the TKV Hamiltonian from the Bistritzer-MacDonald model} \label{sec:TKV_from_BM}

The Bistritzer-MacDonald model of bilayer graphene, with relative twist angle $\theta$, is as follows\cite{Bistritzer2011}. Starting from two graphene layers laid exactly on top of each other (i.e., $AA$ stacking configuration), we rotate one layer (call this layer 1) clockwise by $\frac{\theta}{2}$, and the other layer (call this layer 2) counter-clockwise by $\frac{\theta}{2}$. Concentrating on layer 1 for a moment, and making the standard Dirac approximation for wavefunctions at the Dirac points, we have that when $\theta = 0$ there exist co-ordinate axes such that the effective Hamiltonian describing electrons near to the $K$-point of the graphene layers is $- i v_0 \vec{\sigma} \cdot \vec{\nabla}$, where $\vec{\sigma} = \left( \sigma_1, \sigma_2 \right)$ is the vector of Pauli matrices\cite{2009Castro-NetoGuineaPeresNovoselovGeim}. If we rotated the layer clockwise by $\theta$, the effective Hamiltonian would become $- i v_0 \vec{\sigma} \cdot \vec{\nabla}'$ where $\vec{\nabla}'$ is the gradient with respect to variables measured with respect to co-ordinate axes rotated clockwise by $\theta$, i.e.,
\begin{equation}
    \vec{\nabla}' = R_{\theta} \vec{\nabla}, \quad R_{\theta} = \begin{pmatrix} \cos \theta & - \sin \theta \\ \sin \theta & \cos \theta \end{pmatrix},
\end{equation}
and the effective Hamiltonian would be, in terms of the original variables, $- i v_0 \vec{\sigma}_{\theta} \cdot \vec{\nabla}$, where $\vec{\sigma}_{\theta} = e^{- i \frac{\theta}{2} \sigma_3} \vec{\sigma} e^{i \frac{\theta}{2} \sigma_3}$. We are thus lead to the following Hamiltonian describing electrons near to the ${K}$-points of the respective layers which are coupled through an ``inter-layer coupling potential'' $T(\vec{r})$
\begin{equation} \label{eq:bilayer_H}
    H = \begin{pmatrix} - i v_0 \vec{\sigma}_{\theta/2} \cdot \vec{\nabla} & T(\vec{r}) \\ T^\dagger(\vec{r}) & - i v_0 \vec{\sigma}_{-\theta/2} \cdot \vec{\nabla} \end{pmatrix},
\end{equation}
acting on $L^2(\field{R}^2;\field{C}^4)$ with domain $H^1(\field{R}^2;\field{C}^4)$. Note that $H$ ignores possible interactions between electrons with quasi-momentum away from the ${K}$-points of each layer, e.g., with the ${K}'$-points of each layer. Since the Fermi level occurs at the Dirac energy and interactions between ${K}$ and ${K}'$ points are small for small twist angles \cite{Catarina2019}, this is a reasonable simplification. The Hamiltonian \eqref{eq:bilayer_H} acts on wavefunctions
\begin{equation}
    \psi(\vec{r}) = \begin{pmatrix} \psi_1^A(\vec{r}), \psi_1^B(\vec{r}), \psi_2^A(\vec{r}), \psi_2^B(\vec{r}) \end{pmatrix}
\end{equation}
where $\psi^\sigma_\tau(\vec{r})$ represents the electron density near to the $K$ point (in momentum space) on sublattice $\sigma$ and on layer $\tau$. 

Under quite general assumptions, the inter-layer coupling has the following form \cite{Catarina2019}:
\begin{equation} \label{eq:T}
    T(\vec{r}) =
    \begin{pmatrix} w_{AA} ( e^{- i \vec{q}_1 \cdot \vec{r}} + e^{- i \vec{q}_2 \cdot \vec{r}} + e^{- i \vec{q}_3 \cdot \vec{r}} ) & w_{AB} ( e^{- i \vec{q}_1 \cdot \vec{r}} + e^{- i \vec{q}_2 \cdot \vec{r}} e^{- i \phi} + e^{- i \vec{q}_3 \cdot \vec{r}} e^{i \phi} )\\
    w_{AB} ( e^{- i \vec{q}_1 \cdot \vec{r}} + e^{- i \vec{q}_2 \cdot \vec{r}} e^{i \phi} + e^{- i \vec{q}_3 \cdot \vec{r}} e^{- i \phi} ) & w_{AA} ( e^{- i \vec{q}_1 \cdot \vec{r}} + e^{- i \vec{q}_2 \cdot \vec{r}} + e^{- i \vec{q}_3 \cdot \vec{r}} ) \end{pmatrix},
\end{equation}
where
\begin{equation}
    \vec{q}_1 = k_\theta \begin{pmatrix} 0, -1 \end{pmatrix}, \quad \vec{q}_{2,3} = \frac{ k_\theta }{ 2 } \begin{pmatrix} \pm \sqrt{3}, 1 \end{pmatrix}.
\end{equation}
Here $k_\theta = 2 k_D \sin(\theta/2)$ is the distance between the ${K}$ points of the different layers, and $k_D = |{K}_1| = |{K}_2|$ is the distance from the origin to the ${K}$ point of either layer. Let $\phi := \frac{2 \pi}{3}$, then $\vec{q}_2 = R_\phi \vec{q}_1$ and $\vec{q}_3 = R_\phi \vec{q}_2$ where $R_\phi$ is the matrix which rotates counterclockwise by $\phi$. Note that \eqref{eq:T} is written in such a way as to show clearly which couplings are between the $A$ lattices of the layers (proportional to $w_{AA}$ and occuring on the diagonal) and between the $A$ and $B$ lattices (proportional to $w_{AB}$ and occuring off the diagonal). 



\subsection{Translation and rotation symmetries of the Bistritzer-MacDonald model}

The operator $H$ essentially describes coupling on the scale of the bilayer moir\'e pattern. The moir\'e lattice vectors are
\begin{equation}
    \vec{a}_1 = \frac{2 \pi}{3 k_\theta} \begin{pmatrix} \sqrt{3}, 1 \end{pmatrix}, \quad \vec{a}_2 = \frac{2 \pi}{3 k_\theta} \begin{pmatrix} - \sqrt{3} , 1 \end{pmatrix}.
\end{equation}
We denote the moir\'e lattice generated by these vectors as $\Lambda$. 
It is straightforward to check that $H$ commutes with the ``phase-shifted'' moir\'e translation operators
\begin{equation}
    \tau_{\vec{v}} f(\vec{r}) := \diag( 1, 1, e^{i \vec{q}_1 \cdot \vec{v}}, e^{i \vec{q}_1 \cdot \vec{v}} ) \tilde{\tau}_{\vec{v}} f(\vec{r}), \quad \tilde{\tau}_{\vec{v}} f(\vec{r}) = f(\vec{r} + \vec{v}),
\end{equation}
for all $\vec{v} \in \Lambda$.

The operator also has rotational symmetry. Let $R_\phi$ be the matrix which rotates vectors by $\phi$ counter-clockwise
\begin{equation}
    R_\phi = \begin{pmatrix} - \frac{1}{2} & - \frac{\sqrt{3}}{2} \\ \frac{\sqrt{3}}{2} & - \frac{1}{2} \end{pmatrix}.
\end{equation}
Then $H$ commutes with the ``phase-shifted'' rotation operator
\begin{equation}
    \tilde{\mathcal{R}} f(\vec{r}) := \diag(1,e^{- i \phi},1,e^{- i \phi}) \mathcal{R} f(\vec{r}), \quad \mathcal{R} f(\vec{r}) = f( R_\phi \vec{r} ).
\end{equation}

\subsection{Deriving TKV from BM}

The first step to deriving Tarnopolsky-Kruchkov-Vishwanath's chiral model is to set $w_{AA} = 0$ in the Bistritzer-MacDonald model. Physically, this assumption is motivated by the observation that relaxation effects penalize the $AA$-stacking configuration, so that one expects \cite{PhysRevResearch.1.013001} $|w_{AA}| \ll |w_{AB}|$.

With this simplification, conjugating $H \rightarrow V_\theta H V_\theta^\dagger$ (here $\dagger$ represents the adjoint/Hermitian transpose) by 
\begin{equation}
    V_\theta := \diag( e^{i \theta/4}, e^{- i \theta/4}, e^{- i \theta/4}, e^{i \theta/4} )
\end{equation}
removes the explicit $\theta$ dependence of the Hamiltonian (although $H$ still depends on $\theta$ through $\vec{q}_1,\vec{q}_2,\vec{q}_3$) so that
\begin{equation}
    H = \begin{pmatrix} - i v_0 \vec{\sigma}_{\theta/2} \cdot \nabla & {T}_{AB}(\vec{r}) \\ T^\dagger_{AB}(\vec{r}) & - i v_0 \vec{\sigma}_{-\theta/2} \cdot \nabla \end{pmatrix}
\end{equation}
where
\begin{equation*}
    T_{AB} = \begin{pmatrix} 0 & w_{AB} ( e^{- i \vec{q}_1 \cdot \vec{r}} + e^{- i \vec{q}_2 \cdot \vec{r}} e^{- i \phi} + e^{- i \vec{q}_3 \cdot \vec{r}} e^{i \phi} ) \\
    w_{AB} ( e^{- i \vec{q}_1 \cdot \vec{r}} + e^{- i \vec{q}_2 \cdot \vec{r}} e^{i \phi} + e^{- i \vec{q}_3 \cdot \vec{r}} e^{- i \phi} ) & 0 \end{pmatrix}.
\end{equation*}

Conjugating once more $H \rightarrow \rho H \rho^\dagger$ by
\begin{equation}
    \rho = \begin{pmatrix} 1 & 0 & 0 & 0 \\ 0 & 0 & 1 & 0 \\ 0 & 1 & 0 & 0 \\ 0 & 0 & 0 & 1 \end{pmatrix}
\end{equation}
yields
\begin{equation} \label{eq:chiral_H}
    H = \begin{pmatrix} 0 & D^{\dagger} \\ D & 0 \end{pmatrix}, \quad D = \begin{pmatrix} - 2 i v_0 \overline{\de} & w_{AB} {U}(\vec{r}) \\ w_{AB} {U}(-\vec{r}) & - 2 i v_0 \overline{\de} \end{pmatrix},
\end{equation}
where $\overline{\de} = \frac{1}{2} ( \de_x + i \de_y )$ and ${U}(\vec{r}) = e^{- i \vec{q}_1 \cdot \vec{r}} + e^{i \phi} e^{- i \vec{q}_2 \cdot \vec{r}} + e^{- i \phi} e^{- i \vec{q}_3 \cdot \vec{r}}$. 

After changing variables $\vec{r} \rightarrow k_\theta \vec{r}$ and re-scaling the $\vec{q}_i \rightarrow \frac{ \vec{q}_i }{ k_\theta }, i = 1,2,3,$ we derive
\begin{equation} \label{eq:chiral_H_2}
    H = \begin{pmatrix} 0 & D^{\dagger} \\ D & 0 \end{pmatrix}, \quad D = \begin{pmatrix} - 2 i v_0 k_\theta \overline{\de} & w_{AB} {U}(\vec{r}) \\ w_{AB} {U}(-\vec{r}) & - 2 i v_0 k_\theta \overline{\de} \end{pmatrix}.
\end{equation}
Finally dividing by $v_0 k_\theta$ and defining
\begin{equation}
    \alpha := \frac{ w_{AB} }{ v_0 k_\theta }
\end{equation}
yields the TKV Hamiltonian stated in the main text.

\end{document}